\definecolor{mycitecolor}{rgb}{0.0, 0.45, 0.85}   
\newtcolorbox{shadedtheorem}{
  colback=gray!15,    
  colframe=white,     
  boxrule=0pt,        
  arc=0pt,            
  left=5pt,           
  right=5pt,
  top=5pt,
  bottom=5pt
}
\newcommand\LRvel{v_{\rm LR}}
\newcommand{\e}{\mathrm{e}}
\newcommand{\ii}{\mathrm{i}}
\newcommand\diam{\operatorname{diam}}
\newcommand\loc{\operatorname{loc}}
\newcommand\rng{\operatorname{rng}}
\newcommand\CLR{c_{\mathrm {LR}}}
\renewcommand\equiv{:=}
\renewcommand\epsilon{\varepsilon}
\newcommand\SW{\text{SW}}
\newcommand\tildemathcal[1]{\widetilde{\mathcal #1}}
\newcommand\Sint[1]{\mathrm S_{\mathrm{int}}^{(#1)}}
\newcommand\Sext[1]{\mathrm S_{\mathrm{ext}}^{(#1)}}
\newcommand\overlinebold[1]{\overline{\mathbb {#1}}}
\newcommand\resext{|_{\Sext{i-1}}}
\newcommand\isbig{\operatorname{Big}}
\newcommand\polylog{\operatorname{polylog}}
\newtheorem{thm}{Theorem}
\numberwithin{thm}{section}
\newtheorem{cor}[thm]{Corollary}
\newtheorem{lem}[thm]{Lemma}
\newtheorem{prop}[thm]{Proposition}
\newtheorem{defn}[thm]{Definition}
\newtheorem{rmk}[thm]{Remark}
\renewcommand{\thesection}{\arabic{section}}
\renewcommand{\thesubsection}{\thesection.\arabic{subsection}}
\renewcommand{\thesubsubsection}{\thesubsection.\arabic{subsubsection}}
\renewcommand{\p@subsection}{}
\renewcommand{\p@subsubsection}{}
\tikzstyle{densely dashed}= [dash pattern=on 4pt off 3pt]
\newcommand{\ad}{\operatorname{ad}}
\newcommand{\supp}{\operatorname{supp}}
\begin{document}

\title{Non-perturbatively slow spread of quantum correlations in non-resonant systems}

\author{Ben T. McDonough}
\email{ben.mcdonough@colorado.edu}
\affiliation{Department of Physics and Center for Theory of Quantum Matter, University of Colorado, Boulder CO 80309, USA}

\author{Marius Lemm}
\email{marius.lemm@uni-tuebingen.de}
\affiliation{Department of Mathematics, University of T\"ubingen, 72076 T\"ubingen, Germany}

\author{Andrew Lucas}
\email{andrew.j.lucas@colorado.edu}
\affiliation{Department of Physics and Center for Theory of Quantum Matter, University of Colorado, Boulder CO 80309, USA}

\begin{abstract}
Strong disorder often has drastic consequences for quantum dynamics.  This is best illustrated by the phenomenon of Anderson localization in non-interacting systems, where destructive quantum wave interference leads to the complete absence of particle and information transport over macroscopic distances.  In this work, we investigate the extent to which strong disorder leads to provably slow dynamics in many-body quantum lattice models.  We show that in any spatial dimension, strong disorder leads to a non-perturbatively small velocity for ballistic information transport under unitary quantum dynamics, almost surely in the thermodynamic limit, in every many-body state.  In these models, we also prove the existence of a ``prethermal many-body localized regime", where entanglement spreads logarithmically slowly, up to non-perturbatively long time scales.  More generally, these conclusions hold for all models corresponding to quantum perturbations to a classical Hamiltonian obeying a simple non-resonant condition. Deterministic non-resonant models are found, including spin systems in strong incommensurate lattice potentials. Consequently, quantum dynamics in non-resonant potentials is  asymptotically easier to simulate on both classical or quantum computers, compared to a generic many-body system.  
\end{abstract}

\maketitle

\tableofcontents

\section{Introduction}
Disorder profoundly affects quantum mechanics. The
most well-known example is the Anderson localization \cite{anderson,frohlich1983absence,aizenman1993localization,aizenman2015random} of single-particle wave functions in a disordered potential: almost surely, eigenstates of a single-particle Hamiltonian in a sufficiently disordered background are spatially localized.  Anderson localization represents a complete breakdown of ergodicity: the particle does not explore all of ``phase space" consistent with energy conservation.  

In recent years, cold atoms, trapped ions, and superconducting circuits are all able to accurately probe the quantum dynamics of states at finite energy density in isolation from a thermal bath.  Can such experiments observe a breakdown of ergodicity and statistical mechanics in \emph{many-body} systems?  One possible mechanism is \textit{many-body localization} (MBL), the analogue of Anderson localization for many interacting particles. The suggestion that localization persists in the presence of interactions was suggested in Anderson's original work \cite{anderson} and refined 50 years later with arguments in favor of Anderson localization in fermionic systems with short-range interactions \cite{Basko_MBL_foundations, Gornyi_MBL_foundations}. It has recently been established rigorously that in  many-body quantum models with few-body interactions and extensive energy barriers between low-energy states, eigenstate localization 
is possible and is robust to almost every perturbation \cite{Yin:2024jad}.  However, such models cannot be realized in any finite-dimensional lattice. Hence, the much more intriguing problem, which has not been rigorously resolved, is whether simple many-body quantum systems of locally-interacting spins or qudits on a $d$-dimensional lattice, which never have extensive energy barriers,  can nevertheless exhibit MBL without extreme fine-tuning of the Hamiltonian \cite{MBLColloquium}.  It is now expected \cite{deroeckhuveneers,thiery2018many,morningstar2020many} that if this is possible, a natural setting is in strongly disordered one-dimensional spin chains \cite{pal2010many}:
\begin{equation}\label{eq:Hintro}
    H = \sum_{i=1}^L h_i Z_i + \epsilon V
\end{equation}
where the $h_i$'s are disordered and $\epsilon \ll 1$, with \cite{alukin_entanglement, Choi_2016,Li:2025kje, hur2025stabilitymanybodylocalizationdimensions, Schreiber_2015, Smith2016, localization_2d_quasiperiodic,MBL_powerlaw, MBL_long_range} discussing additional possibilities.  Proving MBL  (naively) requires having exquisite control over the many-body spectrum, which is extremely sensitive to perturbations: see \cite{imbrie2016many,absenceofconduction} for recent efforts.    For certain 1D Hamiltonians, subdiffusive transport and/or slow dynamics has been proved by leveraging additional assumptions, e.g., sparsity of interactions and heavy-tailed disorder \cite{mastropietro2015localization,baldwingorshkov,gebert2022lieb,absenceofconduction,toniolo2025logarithmic,baldwin2025subballistic}.

In spatial dimension $d>1$, there exist heuristic arguments which strongly suggest that MBL is dynamically unstable \cite{deroeckhuveneers,thiery2018many,morningstar2020many}.  Even so, it is physically reasonable to  expect that the disorder still has a marked and observable slowing-down effect on the quantum dynamics. Indeed, an extensive literature has  explored the potential implications of MBL on observable physics, with a particularly striking and testable prediction being the slow generation of entanglement between distant qubits under unitary dynamics generated by $H$: for two qubits separated by distance $r$, it is expected that the time needed to entangle them in a fully MBL system is \cite{Serbyn_2013,huse2014phenomenology} \begin{equation}
    t \sim \mathrm{e}^{a\cdot r \log (1/\epsilon)}.  \label{eq:texpr}
\end{equation}
for an O(1) constant $a$.    This physics will persist to long time scales even if eigenstates do not actually localize \cite{Suntajs:2019lmb,Morningstar:2021pcy}, leading to a ``prethermal MBL" regime characterized by slow dynamics.

In this paper, we ask whether---independently of the fate of MBL---strong disorder will generically lead to extremely slow many-body quantum dynamics.   Under what conditions do many-body quantum lattice models exhibit \eqref{eq:texpr}?  Is dynamics slow at all times, or only on ``prethermal" time scales?  These questions are particularly relevant for experiments using quantum simulators  \cite{Choi_2016,Li:2025kje,hur2025stabilitymanybodylocalizationdimensions}, which cannot probe eigenstate localization anyway, but must instead look for signatures of ``MBL" on shorter time scales.

Our result is about many-body quantum models with Hamiltonian  $H=H_0+\epsilon V$, 
where $V$ again is spatially local in $d$ dimensions and contains few-body interactions, while $H_0$ is a sum of Pauli-$Z$ strings (thus commuting and solvable). We assume that $H_0$ obeys a non-resonant condition: bit flips on $r$ adjacent lattice sites change the energy of $H_0$ by \begin{equation}
    \Delta H_0(r) \gtrsim \e^{-b\cdot r^\xi } \label{eq:intrononresonant}
\end{equation}
for some $O(1)$ constants $b$ and $\xi \ge d$. Crucially, this non-resonance assumption only involves the exactly solvable part $H_0$ and can thus be analytically verified in examples, as we explain further below.
We prove that the full Hamiltonian $H$ then exhibits non-perturbatively slow dynamics.
First, we use \eqref{eq:intrononresonant} to prove that \eqref{eq:texpr} holds for distances \begin{equation}
    r < r_* \sim  \log^{\frac{1}{\xi}}(\epsilon^{-1}). \label{eq:main_rstar}
\end{equation}
For distances $r>r_*$, we find that any ballistic information spreading is limited to occur with maximal velocity \begin{equation}
    v \sim \epsilon^{r_*}\label{eq:vscalingintro}
\end{equation}
which is non-perturbatively small in the interaction strength, i.e. vanishes faster than any power law in $\epsilon$.  
These slow dynamics hold for \emph{arbitrary initial states}---even highly entangled initial states cannot change their pattern of entanglement over short time scales! We emphasize that our results hold for any spatial dimension $d$, in contrast to the expected sensitivity of MBL physics on spatial dimension \cite{deroeckhuveneers,thiery2018many,morningstar2020many}; hence, a proposed signature of MBL that could be observed by experiment is \emph{universal} for any non-resonant system, and is not sensitive to whether eigenstates are thermal or not.

To prove these results, we leverage a number of well-established methods in mathematical physics.  Lieb-Robinson bounds (LRBs) \cite{Lieb1972,AnthonyChen:2023bbe} show that local quantum lattice models can generate quantum correlations only at a finite velocity, starting from arbitrary initial states.  
The recently-developed irreducible path construction \cite{chen2021operator} provides a drastic improvement over standard Lieb-Robinson bounds \cite{Lieb1972} that is essential to our proof. 
However, existing LRBs on Hamiltonians such as \eqref{eq:Hintro} in the literature \cite{premont2010lieb,haah2021quantum,wanghazzard,baldwingorshkov,lemm2025enhanced}   only guarantee that $v\sim \epsilon$, far weaker than \eqref{eq:vscalingintro}, and only exploit commutativity, not disorder, of $H_0$. 

To prove stronger LRBs for a non-resonant model with bounded disorder in any dimension, we employ a local Schrieffer-Wolff transformation \cite{bravyi2011schrieffer} (operator perturbation theory) to account for the destructive interference that arises in many-body quantum dynamics in non-resonant models.  Similar techniques have previously established slow \emph{prethermal} dynamics in non-resonant models \cite{deroeck2023} as well as the stability of quantum topological order \cite{bravyi2010} and the slow decay of false vacua \cite{yin2023prethermalization,Yin:2024hjm}.  In the present paper, we marry these two techniques (Lieb-Robinson bounds and Schrieffer-Wolff transformations) and obtain non-perturbative bounds on quantum dynamics in non-resonant models.

Our calculations reveal a key difference between a typical ``prethermal" phenomenon, such as the metastability of a false vacuum, and non-resonant dynamics.  Intuitively in the former, rare quantum fluctuations nucleate a ``critical bubble" of true vacuum, which is energetically resonant with the original false vacuum state; on the prethermal time scale, these bubbles have engulfed the whole system.  The nucleation rate of critical bubbles is associated with a prethermal timescale $t_\ast$, but for times $t\gg t_*$ there is no remaining notion of metastability or slow dynamics, because signals can be sent quickly in the thermal phase (``true vacuum").   In contrast, \eqref{eq:vscalingintro} implies that non-resonant systems have slow dynamics on \emph{all time scales}---even after a ``resonance" has allowed for information to propagate a distance $r_*$, we must wait \emph{another} time $t_*$ for information to propagate another $r_*$ sites away.   Hence, prethermal MBL is a qualitatively distinct kind of ``prethermal" phenomenon than, e.g., the decay of a metastable state.\footnote{For this reason, previous authors \cite{deroeck2023} referred to such phenomena by other names such as ``quasilocalization".  But it appears that prethermal MBL is the commonly used name in the recent literature, so we will use it as well.}

In order for our theorems to be useful, it must be the case that \eqref{eq:intrononresonant} is realizable in reasonable models of $H_0$.  This is indeed the case. Notice that since $H_0$ is commuting, \eqref{eq:intrononresonant} reduces to a statement about lack of resonances among the fields $h_i$, and so it is simpler to verify than a many-body off-resonance condition as appeared in \cite{imbrie2016many}.  We prove that this criterion holds in the vast majority of spatial regions for disordered models, establishing a key signature \eqref{eq:texpr} of MBL physics over non-perturbatively long time scales, in every spatial dimension $d$.  Moreover, we also present two \emph{non-random} $H_0$ which obey \eqref{eq:intrononresonant} in sufficiently many regions for our main results to apply.   One of them simply corresponds to placing the qubits in an incommensurate lattice potential \cite{aubry1980analyticity,sarang_quasiperiodic, huse_quasiperiodic}, which may be more accessible in experiments using ultracold atoms trapped in optical lattices, where an incommensurate potential can be easier to introduce than a fully random one \cite{localization_2d_quasiperiodic, hur2025stabilitymanybodylocalizationdimensions, Schreiber_2015}.

\section{Motivation}
To understand heuristically why we should expect \eqref{eq:texpr} and \eqref{eq:vscalingintro} to hold, it is instructive to first consider the quantum mechanics of a single particle hopping in a non-resonant landscape.   For simplicity, let us consider the following one-dimensional model: \begin{equation}
    H = \sum_{n\in\mathbb{Z}} \left[ h_n |n\rangle\langle n| + \epsilon |n\rangle\langle n+1| +\epsilon |n+1\rangle\langle n|\right] .
\end{equation}
Suppose that we now take $h_n$ to be a generic periodic sequence with period $r_*$, i.e. $h_n=h_{n+r_*}$. Then the dynamics can then be understood by Bloch theory.  Arguing formally for simplicity, the   (non-normalizable) formal eigenstates of $H$ are  classified by a wave number $k\in [-\frac{\pi}{r_*},\frac{\pi}{r_*})$ and a set of $r_*$ numbers $c_0,\ldots, c_{r_*-1}$: \begin{equation}
    |E\rangle = \sum_{n\in\mathbb{Z}}c_{n\text{ (mod $r_*$)}}\mathrm{e}^{\mathrm{i}kn}|n\rangle.
\end{equation}  
If $\epsilon=0$, then exactly one $c_n \ne 0$ for any given eigenstate; if $c_m\ne 0$ then $E=h_m$.  There is a trivial dispersion relation \begin{equation}
    E_m(k)=h_m. \label{eq:trivialdispersion}
\end{equation} 

Now suppose that \begin{equation}
    \epsilon \ll |h_j-h_{k}| \label{eq:nonresonantsingleparticle}
\end{equation}for any $j\neq k$.   The Schr\"odinger equation $H|E\rangle = E|E\rangle$ reads \begin{equation}
   \left(\begin{array}{ccccc} h_0 &\ \epsilon \mathrm{e}^{\mathrm{i}k} &\ 0 &\ \cdots &\ \epsilon \mathrm{e}^{-\mathrm{i}k} \\ \epsilon \mathrm{e}^{-\mathrm{i}k} &\ h_1 &\ \epsilon \mathrm{e}^{\mathrm{i}k} &\cdots &\ 0  \\ 0 &\ \epsilon \mathrm{e}^{-\mathrm{i}k} &\ h_2 &\cdots &\ 0 \\ \vdots &\ \vdots &\ \vdots &\ &\ \vdots \\ \epsilon\mathrm{e}^{\mathrm{i}k} &\ 0 &\ 0 &\ \cdots &\  h_{r_*-1} \end{array}\right)\left(\begin{array}{c} c_0 \\ c_1 \\ c_2 \\ \vdots \\ c_{r_*-1} \end{array}\right)  =E \left(\begin{array}{c} c_0 \\ c_1 \\ c_2 \\ \vdots \\ c_{r_*-1} \end{array}\right)
\end{equation}
The dispersion relation $E_m(k)$ is found by solving for the roots of the $r_*$-order characteristic polynomial of the matrix above.  A short calculation reveals that 
\begin{equation}
    \det(E \cdot 1 - H) = f(E,h_0,\ldots,h_{r_*-1},\epsilon)  - (-1)^{r_*} 2 \cos (kr_*)\epsilon^{r_*},  \label{eq:detHsingleparticle}
\end{equation}
where the polynomial $f(E)$ has $r_*$ distinct roots for $\epsilon$ sufficiently small. Critically all $k$-dependence is in the last term alone.  The velocity at which the quantum wave packets propagate is the group velocity \begin{equation}
    v_{\mathrm{g},m}(k) := \frac{\mathrm{d}E_m(k)}{\mathrm{d}k}.
\end{equation}
For this derivative to be non-trivial, the solutions $E_m(k)$ to \eqref{eq:detHsingleparticle} must be corrected by the $\mathrm{O}(\epsilon^{r_*})$ term, which is the only $k$-dependent term.  We immediately deduce that for \emph{every} band in the system, $v_{\mathrm{g},m}\sim \epsilon^{r_*}$. The argument can be made rigorous and it extends to higher dimensions \cite{abdul2025sharp}.\footnote{If some $h_j=h_k$, then the above conclusion no longer holds in general.  The reason is that the polynomial $f$ has higher-order zeros at $\epsilon=0$, so the perturbative correction to the location of these zeros does not need to occur at $\mathrm{O}(\epsilon^{r_*})$.}  The intuition is that a higher period leads to flattening of the bands and thus to smaller curvature (group velocity), and a numerical example of this effect is shown in Fig.~\ref{fig:bandstructure}.

\begin{figure}
    \centering
    \def\svgwidth{0.45\linewidth} 
    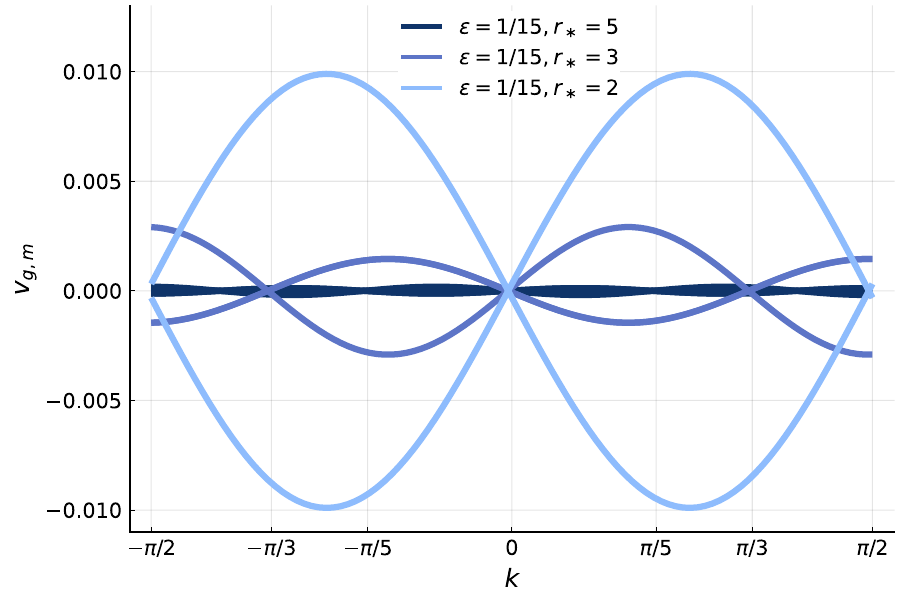
    \def\svgwidth{0.45\linewidth} 
    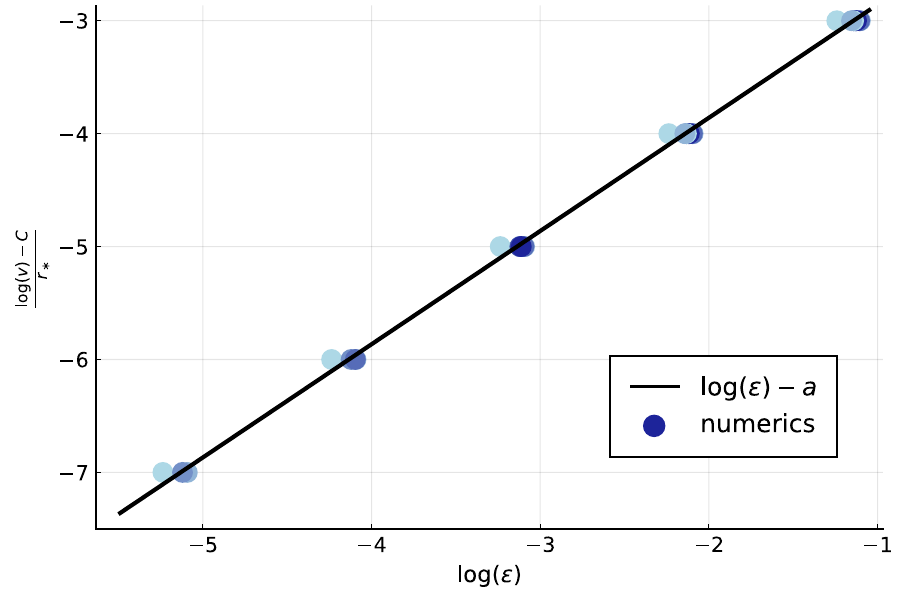
    \caption{\textbf{Left:} Group velocity of each band for $h_n = n/r_\ast$ where $\epsilon = \frac{1}{15}$ and $r_\ast = 2,3,5$. The bands become asymptotically flat as $r_\ast$ is increased. \textbf{Right:} Scaling of maximum group velocity with $\epsilon$ for different system sizes $r_\ast$, which we find scales with $v_g(r_\ast, \epsilon) \approx C(a\epsilon)^{r_\ast}$, where $C$ and $a$ are fitting parameters.}
    \label{fig:bandstructure}
\end{figure}
Our goal in this paper is to rigorously show that a similar result holds for non-resonant \emph{interacting many-body} quantum systems, where the simplifications due to band theory are no longer applicable.
 
\section{Main results}
We study interacting qubits, placed on the vertices of a finite bounded degree graph $\Lambda$, which we assume is $d$-dimensional in the sense that any vertex has at most $\sim r^d$ vertices within graph distance $r$.  Unitary dynamics is generated by a time-independent Hamiltonian $H$:
\begin{align}
H = H_0 + \sum_{r=1}^\infty\epsilon^r V_r \label{eq:H_main_sum}
\end{align}
where $H_0$ is a geometrically-local Hamiltonian that contains only Pauli $Z$s, and $V_r$ consists of a sum of bounded operators acting on sites which are at most  graph distance $r$ apart.  The precise definitions and assumptions can be found in the Supplementary Material.

Our goal is to derive Lieb-Robinson bounds on the resulting dynamics, which are usually phrased in terms of the operator norm (largest singular value of its argument) of a real-time commutator:
\begin{align}
\Vert [B_y, A_x(t)] \Vert \lesssim \lVert B_y\rVert \lVert A_x\rVert \times  \exp(\mu[\LRvel t - \mathsf d(x, y)])
\end{align}
where $A_x, B_y$ act nontrivially only on $x,y \in \Lambda$ respectively and $A(t)=e^{\mathrm{i}Ht}Ae^{-\mathrm{i}Ht}$ is Heisenberg-picture time evolution. The Lieb-Robinson bound defines an emergent ``lightcone" in the system with velocity $\LRvel$.  To understand why, it is useful to consider the following thought experiment:  take an arbitrary many-body state $|\psi\rangle$, and then consider a similar state perturbed locally near site $y$: $|\psi^\prime\rangle = \mathrm{e}^{\mathrm{i}\delta B_y}|\psi\rangle$ with $\delta$ a small parameter.  After evolving both states for time $t$ with Hamiltonian $H$ to $|\psi(t)\rangle := \mathrm{e}^{-\mathrm{i}Ht}|\psi\rangle$ and $|\psi'(t)\rangle := \mathrm{e}^{-\mathrm{i}Ht}|\psi'\rangle$, we ask whether an observable near site $x$ can detect the perturbation we made at $y$. Formally expanding to first order in $\delta$, one finds that
\begin{equation}
|\langle \psi^\prime(t)|A_x|\psi^\prime(t)\rangle - \langle \psi(t)|A_x|\psi(t)\rangle| \le     \delta \Vert [B_y, A_x(t)] \Vert + \cdots . 
\end{equation} Since $A_x$ and $B_y$ are arbitrary, we conclude that the Lieb-Robinson bound controls the generation of entanglement and quantum correlations \cite{bravyi2006lieb}.  Moreover, Lieb-Robinson bounds also control the clustering of correlations in a gapped ground state \cite{hastings2006spectral,exponential_clustering} and classical and quantum simulation complexity \cite{haah2021quantum,mcdonough2025, Osborne_2007, osborne2006}; see \cite{chen2023speed} for a recent review.

It is well-established that for the many-body systems of interest in this work, $v_{\mathrm{LR}}\lesssim \epsilon$.  At a high-level, one can gain intuition for this result by considering a formal series.  For simplicity, suppose that we have only nearest-neighbor interactions.  If $x$ and $y$ are far apart on the lattice, then if $A_x(t) = A_x + \mathrm{i}[H,A_x]t + \cdots $,  we will need to pull down at least $r=\mathsf{d}(x,y)$---the distance between $x$ and $y$---powers of $H$ in order for $A_x(t)$ to act non-trivially on site $y$, which is required in order for the commutator with an operator acting only on $y$ to be nonzero; the leading-order term in the Taylor series that contributes has a coefficient of $\epsilon^r t^r/r! \sim (\epsilon t/r)^r \ll  \mathrm{e}^{\epsilon t - r}$.  

Motivated by our single-particle example, it makes sense to postulate that a \emph{non-resonance} condition on $H_0$ may shrink $v_{\mathrm{LR}}$ to a higher order in $\epsilon$.  
More precisely, we introduce the following definition for a non-resonant classical Hamiltonian $H_0$ at scale $r_*$, formalizing \eqref{eq:intrononresonant}.

\begin{shadedtheorem}
\begin{defn}[Definition \ref{defn:noresonance}, schematic]\label{defn:schematic_noresonance}
A geometrically $k-$local Hamiltonian 
\[
H_0= \sum_{I:\diam(I) \leq k}h_IZ_I
\]
satisfies a $(h, r_\ast, \Delta)$ non-resonance condition if $\sup_{I} |h_I|\leq h$, and if the restriction of $H_0$ to any subset within any closed ball of radius $r_\ast > k$ has a minimal spectral gap $h\Delta$ anywhere in the spectrum. 
\end{defn}
\end{shadedtheorem}

This condition only concerns the explicitly solvable $H_0$ and thus avoids assumptions such as limited level repulsion from \cite{imbrie2016many}, which are not proven for any specific choices of $H$ of interest.  In contrast, we explicitly show in the appendices that there exist models that are non-resonant in the above sense. Indeed, we give a   deterministic construction of a non-resonant potential, the dyadic-triadic scheme, which avoids resonances by modular arithmetic arguments and ensures non-resonance \eqref{eq:intrononresonant} with $\xi=d$. Having established that non-resonant potentials exist, we now present our main result: 

\begin{shadedtheorem}
\begin{thm}[Thm.~\ref{thm:main_thm}, schematic]
\label{thm:schematic_main_thm}
On a finite graph $\Lambda$, consider the time evolution generated by $H$ given in \eqref{eq:H_main_sum}.
If $H_0$ is $k$-local and satisfies a $(h, \Delta, r_\ast)$ non-resonance condition, 
then for any $\alpha < \frac{1}{5k+1}$ there exist  constants $C, C'$ such that for any $S_0, S_1 \subseteq \Lambda$ and $r = \min(r_\ast+1, \mathsf d(S_0,S_1))$, we have
\begin{align}
\Vert [B_{S_0}, A_{S_1}(t)] \Vert \leq \Vert B_{S_0}\Vert \Vert A_{S_1}\Vert  \times C\min(|\partial S_0|,|\partial S_1|)\qty[\exp(C'ht \tilde \epsilon^{\alpha ( r/8 - k)})-1]\exp(-\tfrac{\alpha}{8}\mathsf d(S_0, S_1)) \label{eq:thmmaintext}
\end{align}
as long as $\tilde \epsilon \sim \frac{\epsilon V_\ast^2} {\Delta^2}<1$ where $V_\ast=\sup_{x\in \Lambda}|B_{r_\ast}(x)|$ and $B_{S_0},A_{S_1}$ are any bounded operators supported within $S_0, S_1 \subset \Lambda$ respectively.
\end{thm}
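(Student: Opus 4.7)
The plan is to combine a local Schrieffer-Wolff (SW) transformation with the irreducible-path Lieb-Robinson bound in order to trade a factor of $\epsilon$ for each order we eliminate off-diagonal interactions. I would first perform an iterative SW rotation $U = \e^{\ii S}$, built as a product of local generators $S = \sum_j S^{(j)}$, that block-diagonalizes $H$ with respect to the $H_0$-eigenbasis order by order in $\epsilon$. At order $n \le r_*$, the equation $[H_0, S^{(n)}] = -V^{(n)}_{\mathrm{OD}}$ must be solved with $V^{(n)}_{\mathrm{OD}}$ the off-diagonal part generated so far. Because every term in $V^{(n)}_{\mathrm{OD}}$ is supported in a ball of radius at most $n k \le r_* k$, the non-resonance hypothesis for $H_0$ restricted to that ball gives a spectral gap $h\Delta$, so $S^{(n)}$ can be solved for locally and term-by-term with $\|S^{(n)}\| \lesssim V_* / \Delta$ per local piece; the combinatorial cost of nesting commutators is what converts the bare parameter $\epsilon$ into $\tilde\epsilon \sim \epsilon V_*^2/\Delta^2$.

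After $r_*$ SW steps, the rotated Hamiltonian has the schematic form
\begin{equation}
\tilde H = U H U^\dagger = D + \tilde\epsilon^{\,r_*} W,
\end{equation}
where $D$ is diagonal in the $H_0$-basis (and therefore generates only on-site phases, which cannot transport any information between distinct $Z$-strings) and $W$ is a quasi-local perturbation inheriting geometric locality with exponentially decaying tails whose length-scale is controlled by $r_*$. The crucial property is that all truly off-diagonal couplings in $\tilde H$ carry a prefactor $\tilde\epsilon^{\,r_*}$, so from the point of view of Lieb-Robinson physics the effective hopping strength has been renormalized from $\epsilon$ down to $\tilde\epsilon^{\,r_*}$.

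Next I would apply the irreducible-path Lieb-Robinson bound of \cite{chen2021operator} to $\tilde H$. Because $D$ contributes nothing to any commutator between spatially separated operators (it commutes with every local $Z$-string and conjugates single-site Paulis only by phases that cancel in the irreducible-path sum), only $\tilde\epsilon^{\,r_*} W$ drives information transfer. Summing over irreducible paths of length $\ell$ in a $d$-dimensional graph gives a weight bounded by $(C' h t\,\tilde\epsilon^{\,\alpha(r/8 - k)})^\ell$ once one accounts for the quasi-local tails of $W$ (which cost a factor $\alpha < 1/(5k+1)$ in the exponent and account for the $r/8$ appearing in the statement). Exponentiating yields exactly the stretched-exponential factor in \eqref{eq:thmmaintext}, together with the spatial decay $\exp(-\alpha \mathsf d(S_0,S_1)/8)$. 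Finally I would undress the operators: write $\tilde A_{S_1}(t) = U A_{S_1}(t) U^\dagger$ etc., and observe that $U$ is supported within distance $\sim r_*$ of $S_1$, so $\|[B_{S_0}, A_{S_1}(t)]\|$ is bounded above by $\|[\tilde B_{S_0}, \tilde A_{S_1}(t)]\|$ with an $\mathrm O(1)$ multiplicative loss and a shift of $S_0,S_1$ by at most $r_*$, which is harmless as we take $r = \min(r_*+1, \mathsf d(S_0,S_1))$.

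The main obstacle is the bookkeeping at the SW stage: each iteration generates nested commutators whose supports grow and whose multiplicities proliferate, and one must track these explicitly to ensure that (i) the norm of $S^{(n)}$ stays controlled by $\tilde\epsilon < 1$, (ii) the resulting $W$ remains quasi-local with decay length set by $r_*$ rather than blowing up, and (iii) the constants in the irreducible-path estimate are compatible with the restriction $\alpha < 1/(5k+1)$. A secondary difficulty is that the non-resonance assumption is only imposed on balls of radius $r_*$, so the SW procedure must be terminated exactly at this scale and the residual off-resonant terms at order $\epsilon^{r_*}$ must be reabsorbed into $W$ without spoiling locality. Once these combinatorial and locality estimates are in place, the dispersion estimate and the final Lieb-Robinson packaging are essentially bookkeeping on top of the irreducible-path machinery.
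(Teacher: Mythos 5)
Your proposal captures the correct intuitive picture---local Schrieffer-Wolff followed by a Lieb-Robinson bound on the rotated Hamiltonian, which is exactly the heuristic the paper draws in Figure 2---but it skips the step that the paper's proof actually has to work hardest on, and if executed as written it would fail.

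The problem is the claim that, after rotating to $\tilde H = D + \tilde\epsilon^{r_*} W$, you may ``apply the irreducible-path Lieb-Robinson bound to $\tilde H$'' and that ``$D$ contributes nothing to any commutator \ldots phases that cancel in the irreducible-path sum.'' The irreducible-path bound takes operator norms of each factor, so any cancellation due to phases or commutativity is destroyed at the level of the bound. The diagonal $D$ contains $H_0$ plus all the $Z$-string corrections $Z_j$ with range up to $\sim r_*$, and these carry local strength $\mathrm O(h)$, \emph{not} $\tilde\epsilon^{r_*}$. A path of length $\ell$ can therefore consist almost entirely of $D$-terms, each contributing $\mathrm O(h)$ to the path weight, and the resulting velocity estimate is $\mathrm O(h)$ rather than $\mathrm O(\tilde\epsilon^{r_*})$. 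The physical fact that a commuting $D$ cannot move operator support is true, but the naive irreducible-path sum does not see it; you need a separate argument. This is precisely the content of Lemma~\ref{lem:small_terms}: the paper performs a layered Duhamel expansion within each ``jump'' of scale $r_*$ and shows, term by term, that conjugation by $Z_{\mathrm{int}}$ cannot move an operator toward the boundary (because any $Z$-component generated there commutes with $Z_{\mathrm{int}}$), so $Z_{\mathrm{int}}$ appears at most a bounded number of times per jump segment (yielding the $(1+\tilde h/s)^2$ factor rather than an uncontrolled geometric series). Your proposal contains no analogue of this step, and without it the approach does not close.

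Two further structural differences are worth noting. First, the paper never constructs a single global $U$: the non-resonance hypothesis is local, so it constructs one local $U^{\mathrm{SW}}$ \emph{per jump} inside the jump-path expansion (Definitions~\ref{def:generate_paths}, Lemma~\ref{lem:jump_paths}), and arranges the supports to be disjoint. Your single global $S = \sum_j S^{(j)}$ is heuristically the same object, but each $S^{(n)}$ can be solved only while $V^{(n)}_{\mathrm{OD}}$ fits inside a ball of radius $r_*$; since $\rng(T_q)\leq (5k+1)q - 3k$ (Prop.~\ref{prop:Tlocality}), this caps the order at $n\lesssim r_*/(5k+1)$, not $n\leq r_*$ as you wrote --- this is where the constraint $\alpha < \frac{1}{5k+1}$ in the theorem comes from. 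Second, even if one repairs the treatment of $D$ by passing to the interaction picture $W_I(t)=e^{\ii Dt}We^{-\ii Dt}$, conjugation by $e^{\ii Dt}$ grows the range of $W$ by up to $2\,\rng(D)\sim 2r_*$, which inflates $\|W_I\|_\kappa$ by a factor $e^{2\kappa r_*}\sim\tilde\epsilon^{-2n_*}$ at the natural choice of $\kappa$; controlling this requires further loss of exponent, and is not free. The paper's jump-path construction sidesteps all of this by keeping the SW strictly local and letting the path-counting machinery absorb the $Z_{\mathrm{int}}$ contributions. So: same physics, same heuristic picture, but the core technical obstacle---making the commutativity of the diagonal part visible to the Lieb-Robinson machinery---is identified in your last paragraph only as ``bookkeeping,'' when in fact it is the key new idea in the paper's proof.
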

\end{shadedtheorem}

Then as in \eqref{eq:intrononresonant}, if $\Delta \sim \e^{-b \cdot r_\ast^{\xi}}$, we can take $r_\ast \sim \log^{\frac{1}{\xi}}(\epsilon^{-1})$ to find that information can spread at most with a non-perturbatively slow velocity. The conclusion of Thm.~\ref{thm:schematic_main_thm} holds for potentials which only approximately obey Def.~\ref{defn:schematic_noresonance} up to modifying the constant $\frac{1}{8}$ in $ \epsilon^{\alpha ( r/8 - k)}$, which shows the result is robust and  allows us to apply it to non-fine-tuned examples.  At early times, \eqref{eq:thmmaintext} implies that correlations between two qubits separated by distance $r<r_*$ require a time $t\sim \epsilon^{-r}$ to be established, just as in prethermal MBL.  A Lieb-Robinson bound for this early-time phenomenon was recently derived in certain random spin chains \cite{elgart2024slow}.

The full result only requires the weaker assumption of partial non-resonance (formalized in Def.~\ref{defn:partialnr}), in which one allows the failure of Definition \ref{defn:schematic_noresonance} along a small fraction of paths on the lattice. This weakening of the non-resonance assumption is important because it allows us to cover several further examples of physical interest.  Focusing on the simplest case $H_0 = \sum_v h_v Z_v$ where we have strong on-site potentials that act independently on each qubit, we prove that if $h_v$ are independent and identically distributed Gaussian\footnote{We remark that the argument straightforwardly extends to non-Gaussian random variables with the property that i.i.d.\ sums have an explicit continuous distribution, e.g., uniform or exponential random variables. However, having a central limit theorem is not sufficient, as we need a fine anti-concentration bound that is much more precise than the Berry-Esseen error term.}
random variables, $H_0$ satisfies a partial non-resonance condition for any $r_*$ with high probability with a spectral gap \begin{equation}
    \Delta \sim \exp(-r_*^d).  \label{eq:Deltascaling}
\end{equation}
The non-resonant condition thus holds for $r_* \sim \log^{1/d}(\epsilon^{-1})$, i.e. we obtain \eqref{eq:main_rstar} with $\xi=d$.
We can easily see that \eqref{eq:Deltascaling} is optimal: in a ball of radius $r_*$, we must pack $\exp(\mathrm{O}(r_*^d))$ energy levels of $H_0$ into an interval of size $\mathrm{O}(r_*)$ on the real line.  
Another important example is where \begin{equation}
    h_{\mathbf{n}} = \cos(2\pi \boldsymbol{\alpha}\cdot \mathbf{n})
\end{equation}
where $\mathbf{n} \in \mathbb{Z}^d$ label the sites of a $d$-dimensional cubic lattice.  If  $\boldsymbol{\alpha}\in [0,1]^d$ is chosen uniformly at random, we prove that with high probability $h_{\vb n}$ satisfies the partial non-resonant condition in many regions with $\Delta \sim \exp(-r_*^{d+1})$.  This proof is based on algebraic complexity arguments (counting parameters of polynomials) and small-denominator estimates which are also relevant in KAM theory \cite{kolmogorov2005preservation}. We conjecture that the stronger result \eqref{eq:Deltascaling} holds for incommensurate lattices as well.

To illustrate the idea behind the proof, first consider a system of linear size $r_\ast$, a fixed minimal gap $\Delta$, and a nearest-neighbor perturbation $V_\epsilon = \epsilon V_1$. If $\epsilon \ll \Delta$, then perturbation theory converges to all orders. 
The bare eigenkets $\ket{n^{(0)}}$ of $H_0$ will receive a correction at each order $j$ in $\epsilon$ by $V^j$, which means that $\ket{n} = U\ket{n^{(0)}}$, where $U = 1 + \sum_{j=1}^\infty\epsilon^j O_j$, with $O_j$ acting on  $\le j+1$ sites.

Now, consider sites $x,y$ a distance $r \gg r_\ast$ apart. 
If $r = r_\ast n_\ast$, then we can divide the region separating $x$ from $y$ into $n_\ast$ blocks of size $r_\ast$. 
In the rotated basis, the coupling $V_{i,i+1}$ between two systems is transformed into $U^\dagger V_{i,i+1}U = \sum_{j=0}^\infty\epsilon^{j}V^{(i)}_j$, where $V_{j}^{(i)}$ is a $j+1$-local operator, as depicted in Fig.~\ref{fig:intro_proof_sketch}.
Formally expanding the time-evolution of $A_x(t)$, the lowest-order terms that contribute to the Taylor series are 
\begin{equation}
\frac{t^{n_\ast}}{n_\ast!}[\epsilon^{r_1}V^{(1)}_{r_1}, [\epsilon^{r_2} V^{(2)}_{r_2}, [\dots, [\epsilon^{r_{n_\ast}} V_{r_{n_\ast}}^{(n_\ast)}, A_x]]] \sim \frac{\epsilon^r t^{n_\ast}}{n_\ast!} \ll \e^{\epsilon^{r_\ast}t-r} \ ,
\end{equation}where $r_1 + r_2 + \dots + r_{n_\ast} = r$.
Thus, the Lieb-Robinson bound allows us to show that for non-perturbatively long times, 
the quantum dynamics are well-approximated by dephasing evolution under the commuting Hamiltonian $U^\dagger HU = H_0 + \sum_{j=1}^\infty\epsilon^j Z_j$ where $Z_j$ is also a $j$-local operator.  
This is exactly the expected mechanism leading to the phenomenology of MBL \cite{huse2014phenomenology}, and our result proves that this phenomenology manifests on prethermal time scales in any dimension. On the much longer time scales where MBL is expected to break down in dimensions larger than $1$, our result proves that one still has to wait a time $t_* \gtrsim n_\ast\e^{\log^{1+1/\xi}(1/\epsilon)}$ for information to propagate through all of the intervening regions, representing an anomalously slow ballistic spread of information, even if there is no outright localization.

\begin{figure}[t]
    \centering
    \def\svgwidth{0.95\linewidth} 
    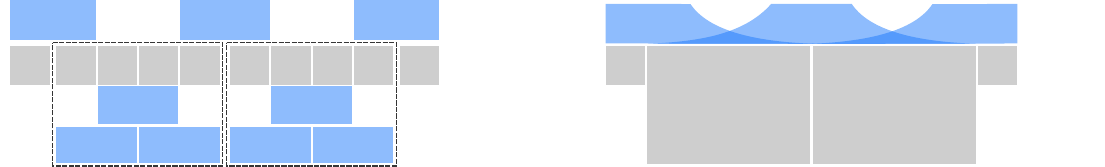
    \caption{This figure illustrates the idea behind the proof on a one-dimensional chain with 10 sites; After breaking the system into regions of size $r_\ast$, time-independent perturbation theory is applied to diagonalize each, which results in the terms from $V$ that couple the regions together spreading out with tails that decay as $\sim \epsilon^{r}$. Lieb-Robinson bounds can then be applied to the transformed system on the right to bound the coupling between sites $0$ and $9$.}
    \label{fig:intro_proof_sketch}
\end{figure}

\section{Implications}
Our results have a number of immediate, and perhaps surprising, implications about quantum dynamics in disordered systems.  As we emphasized in the introduction, there has been an enormous effort to characterize the proposed many-body localized regime, where strongly disordered one-dimensional models would fail to thermalize.  Of course, to truly check \emph{eigenstate localization} in the thermodynamic limit seems completely impossible experimentally, since the number of eigenstates is exponentially large and the time scales to distinguish similar eigenstates are exponentially long.  

Therefore, more practical signatures of MBL were proposed, such as the logarithmically slow dynamical growth of quantum entanglement \eqref{eq:texpr}. From the form of \eqref{eq:thmmaintext}, we immediately see that for $r_*$ sufficiently large, we indeed obtain \eqref{eq:texpr} up to distances \eqref{eq:main_rstar}, which diverge as $\epsilon \rightarrow 0$.  The timescales over which the physics will look like MBL are \begin{equation}
    t\lesssim t_* \sim \epsilon^{-r_*}\sim \epsilon^{-\log^{1/\xi}\epsilon^{-1}}.
\end{equation}
Note that $t_*$ grows superpolynomially as $\epsilon \rightarrow 0$.  Given the fairly short coherence times of quantum simulators, our results show that the (at most) logarithmically slow growth of entanglement is a \emph{universal} phenomenon in all non-resonant systems.

Excitingly, the relevant non-resonant condition holds both for models with strong disorder and in incommensurate lattice potentials, suggesting that both may exhibit similar quantum dynamics on short time scales, such as those accessible in experiments using quantum simulators.  Indeed, some experiments \cite{hur2025stabilitymanybodylocalizationdimensions} which looked for signatures of MBL replaced disorder with an incommensurate lattice, which is easier to realize, with \cite{sarang_quasiperiodic,huse_quasiperiodic} arguing that MBL may  be stable in $d = 2$ with incommensurate lattices. 
We have proved that---at least in the limit $\epsilon \rightarrow 0$---each class of models is guaranteed to exhibit the MBL phenomenology \eqref{eq:texpr} at early times, in any dimension.

Moving beyond MBL, our results also imply that non-resonant many-body systems are \emph{more tractable} for classical or quantum simulation than generic many-body systems, in a certain formal sense. Operator locality provides a quantifier of the error from truncating dynamics to a finite region, leading to a straightforward classical simulation algorithm \cite{chen2023speed, mcdonough2025}. In the appendices, we describe how many practical computations can be done in disordered models using $\exp[\mathrm{O}(\min(\log(t),\epsilon^{r_*}t)^d)]$ resources on classical computers.  The strong $\epsilon$-dependence in the bound shows that non-resonant systems are easier to simulate with provable accuracy than typical systems.  In this respect, while quantum advantage may be found both in the simulation of non-resonant or resonant many-body systems, our results do call into question the extent to which quantum supremacy can be demonstrated with a non-resonant model.  Of course, our theorem does not rule out that efficient approximations, such as hydrodynamics \cite{tiborhydro}, may give very accurate descriptions of typical chaotic systems, rendering them easier to simulate for many practical purposes.

It has been argued \cite{hartman} that in systems with a Lieb-Robinson velocity $v_{\mathrm{LR}}$, the time scale $\tau_{\mathrm{eq}}$ after which hydrodynamic diffusion (with diffusion constant $D$) onsets obeys $D \lesssim v_{\mathrm{LR}}^2\tau_{\mathrm{eq}}$.  Re-writing this inequality and using our rigorous bound that correlations over distance $r<r_*$ require time $\tau(r)\sim \epsilon^{-r}$ to be prepared, we deduce that unless $\tau_{\mathrm{eq}} \gg \epsilon^{-r_*}$, the diffusion constant $D \lesssim r_*^2 \epsilon^{r_*}$ is non-perturbatively small in any non-resonant model.  $D$ being non-perturbatively small as $\epsilon \rightarrow 0$ does seem reasonable, but we cannot rule out the possibility of a large diffusion constant if $\tau_{\mathrm{eq}}$ is exceptionally long. Interestingly, there are classical models where a non-resonance condition can be used to prove that diffusion constants are very small \cite{Huveneers_2013,De_Roeck_2014}.\footnote{While intuitively non-resonance in these classical models plays the same role as in our work, we note that in classical Hamiltonian mechanics, it is never possible for a non-resonance condition to hold in \emph{every state}; it can only hold in typical states.  The existence of strong Lieb-Robinson bounds (which hold in every state) for non-resonant models is a uniquely \emph{quantum} effect.}

\section{Outlook}
We have established rigorous bounds on \emph{many-body} quantum dynamics that show how destructive quantum interference slows down the spreading of quantum information in strongly disordered, and other non-resonant, systems. Previously observed dynamical ``signatures" of MBL on experimentally accessible time scales are in fact universal phenomena in non-resonant systems, so our conclusions may explain why many experimental results \cite{Choi_2016,Li:2025kje,hur2025stabilitymanybodylocalizationdimensions} look like MBL in regimes where it is not expected to exist without fine-tuning (e.g. $d=2$).   Constant prefactors in our bounds are not tight, so we expect that in practice the very slow dynamics implied by our main theorem persists to much larger values of $\epsilon$ than we can guarantee.\footnote{We remark that in the KAM Theorem, which guarantees the stability of integrable systems to small perturbations, there is a similarly vast gulf between mathematical guarantees and practical observations of how large a perturbation needs to be to destroy integrability.}

Looking forward, the methods we have developed should generalize to many more settings.  For example, a quantum CSS error-correcting code \cite{chaocode,vedikacode} can be interpreted as a many-body Hamiltonian, where each term in $H$ is a stabilizer of the codespace.  Weighting each of these terms by non-resonant coefficients, we conjecture that a similar slowdown of quantum dynamics arises in these models, which may help to protect quantum information.  In this example, and many others, a global symmetry may imply exact degeneracies of the non-resonant $H_0$, but following \cite{Yin:2024hjm}, only \emph{local} non-resonant conditions are expected to be necessary to generalize our theorem.   

Systems such as incommensurate lattices or disordered potentials are realizable in experiments, such as quantum simulators based on cold atoms. This opens broad experimental avenues for testing the slowing down of transport through non-resonance.  Another important generalization of our work will be to systems with power-law interactions, where slow dynamics due to non-resonant conditions have been found \cite{longrange_metastab_Ising,longrange_metastab_spec,lerose}.  Strong Lieb-Robinson bounds for systems with power-law interactions do exist \cite{Chen:2019hou,Kuwahara:2019rlw,Tran:2021ogo,lemm2025quantum}, but we expect that generalizing our results to power-law models with a non-resonance condition will be a formidable technical effort.

\section*{Acknowledgements}
We thank Wojciech de Roeck, Francois Huveneers and Rahul Nandkishore for useful discussions.  This work was supported in part by the Department of Energy under Quantum Pathfinder Grant DE-SC0024324 (BTM, AL), and by the Department of Defense  through the National Defense Science \& Engineering Graduate  Fellowship Program (BTM). The research of ML is supported by the DFG through the grant TRR 352 – Project-ID 470903074 and by the European Union (ERC Starting Grant MathQuantProp, Grant Agreement 101163620).\footnote{Views and opinions expressed are however those of the authors only and do not necessarily reflect those of the European Union or the European Research Council Executive Agency. Neither the European Union nor the granting authority can be held responsible for them.}

\begin{appendix}
\renewcommand{\thesubsection}{\thesection.\arabic{subsection}}
\renewcommand{\thesubsubsection}{\thesubsection.\arabic{subsubsection}}
\section{Mathematical preliminaries}
This appendix collects some established facts which will be useful in the proofs of our main results.

\subsection{Graphs}

In this paper, we are studying many-body quantum models of interacting qubits.   It is often useful to consider each qudit as associated with the vertices of an unoriented finite graph $(\Lambda,E)$, namely the Hilbert space of interest is $(\mathbb{C}^D)^{\otimes \Lambda}$ if $\Lambda$ is a finite set and $D$ is the qudit dimension. Although we restrict to a finite graph to avoid unbounded operators, our bounds will hold uniformly in the size of the graph. We will focus on models defined on a graph embeddable in $d$-spatial dimensions.

\begin{defn}[Graph distance]
Given $u,v\in \Lambda$, let $\mathsf{d}(u,v)$ denote the graph distance between two vertices on the edge set $E$.  Define the ball \begin{equation}
    B_r(u) := \lbrace v\in \Lambda : \mathsf{d}(u,v) \le r\rbrace. \label{eq:ballclosed}
\end{equation}
\end{defn}
\begin{defn}[$d$-dimensional graph] \label{defn:d-dim}
    We say that $(\Lambda,E)$ is a $d$-dimensional graph (or lives in $d$ spatial dimensions) if there exists $0<M<\infty$ such that for all $u\in \Lambda$, and all $r$, \begin{equation}
        |B_r(u)| \le Mr^d. \label{eq:Mdimension}
    \end{equation}
\end{defn}

\begin{defn}[$d$-dimensional square lattice]
    The $d$-dimensional cubic lattice has $\Lambda=\mathbb{Z}^d$ and $(x,y)\in E$ if and only if $|x-y|_1=1$, namely $x$ and $y$ differ in exactly one component by $\pm 1$.   This edge set is implicit henceforth when we refer to the lattice $\mathbb{Z}^d$.
\end{defn}

\begin{prop}\label{prop:Msquarelattice}
    For the $d$-dimensional lattice $\mathbb{Z}^d$, \begin{equation}
        M < \min[(d+1)(d+2), 3\times 2^{d-1}]. \label{eq:ZdMbound}
    \end{equation}
\end{prop}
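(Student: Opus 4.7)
The target is a uniform bound $V_d(r) := |B_r(0)| \le M r^d$ for the $\ell^1$ graph distance on $\mathbb{Z}^d$. Since ball sizes are translation-invariant, it suffices to take $u = 0$ and estimate $V_d(r)/r^d$ for $r \ge 1$ (the bound is vacuous at $r = 0$).

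The heart of the argument is the combinatorial inequality
\[
V_d(r) \le \binom{2r+d}{d},
\]
which I would establish by the coordinate-wise bijection $\phi\colon \mathbb{Z} \to \mathbb{Z}_{\ge 0}$ defined by $\phi(n) = 2n$ for $n \ge 0$ and $\phi(n) = -2n - 1$ for $n < 0$. In either case $\phi(n) \le 2|n|$, so applying $\phi$ componentwise injects $B_r(0)$ into the simplex $\{y \in \mathbb{Z}_{\ge 0}^d : y_1 + \cdots + y_d \le 2r\}$, whose cardinality is $\binom{2r+d}{d}$ by the standard stars-and-bars count.

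Both advertised upper bounds then follow from elementary factor-by-factor estimates on $\binom{2r+d}{d} = (2r+1)(2r+2)\cdots(2r+d)/d!$, valid for $r \ge 1$. For the bound $(d+1)(d+2)$, use $2r + j \le (j+2) r$ for all $j \ge 0$ (equivalent to $j(r-1) \ge 0$), so that
\[
\binom{2r+d}{d} \le \frac{r^d \prod_{j=1}^d (j+2)}{d!} = \frac{(d+2)!}{2\, d!}\, r^d = \frac{(d+1)(d+2)}{2}\, r^d.
\]
For the bound $3 \cdot 2^{d-1}$, use $2r + 1 \le 3r$ for $j = 1$ and $2r + j \le 2rj$ for $j \ge 2$ (equivalent to $2r(j-1) \ge j$, which follows from $2r(j-1) \ge 2(j-1) \ge j$). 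Multiplying these bounds gives $\prod_{j=1}^d (2r+j) \le 3r \cdot (2r)^{d-1} \cdot d!$, hence $\binom{2r+d}{d} \le 3 \cdot 2^{d-1}\, r^d$.

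There is no serious technical obstacle here: the argument is a single bijective embedding followed by two routine factor-by-factor product estimates. The first estimate in fact yields the sharper constant $(d+1)(d+2)/2$, which is strictly less than $(d+1)(d+2)$, so both advertised inequalities hold (up to the standing convention that one considers $r \ge 1$ in Definition~\ref{defn:d-dim}, without which the claim would fail trivially at $r=0$).
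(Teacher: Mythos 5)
Your proof is correct and takes a genuinely different, cleaner route. The paper folds $B_r(0)$ into the positive orthant, overcounting by all $2^d$ sign choices so that $N_d(r) < 2^d\binom{r+d}{d}$, then notes the resulting bound on $N_d(r)/r^d$ decreases in $r$, evaluates it at $r=2$ while checking $r=1$ separately, and invokes a distinct one-dimensional-slicing recursion for $d\le 4$ to recover the $3\cdot 2^{d-1}$ constant. Your coordinate-wise bijection $\phi\colon\mathbb{Z}\to\mathbb{Z}_{\ge 0}$ injects $B_r(0)$ directly into the simplex $\{y\in\mathbb{Z}_{\ge 0}^d:\sum_j y_j\le 2r\}$, giving the sharper $|B_r(0)|\le\binom{2r+d}{d}$ with no spurious factor of $2^d$, and both constants then fall out of factor-by-factor estimates that hold uniformly for all $r\ge 1$ --- no case split at small $r$ and no separate recursion, a genuine simplification. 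One caveat on strictness: your argument yields $|B_r(0)|/r^d\le\min\{(d+1)(d+2)/2,\,3\cdot 2^{d-1}\}$, which is strictly below $\min\{(d+1)(d+2),\,3\cdot 2^{d-1}\}$ only for $d\ge 3$; at $d=1,2$ the two quantities coincide, so you obtain only the non-strict inequality there, and indeed at $d=1$ the bound is saturated, $|B_1(0)|=3=3\cdot 2^{0}\cdot 1^{1}$, so the advertised strict inequality is actually false in that case. This edge case is not resolved by the paper's own proof either and would be repaired by weakening ``$<$'' to ``$\le$'' in the statement.
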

\begin{proof}
    This is a simple exercise in combinatorics.  Let \begin{equation}
        N_d(r_*):= |\lbrace x\in\Lambda : \mathsf{d}(x,0)\le r_* \rbrace |.
    \end{equation}
    Defining a slightly modified function \begin{equation}
        N^\prime_d(r_*):= |\lbrace x\in\Lambda : \mathsf{d}(x,0)\le r_* \rbrace, x_j \ge 0 |,
    \end{equation}
    we notice that \begin{equation}
       N^\prime_d(r) > 2^d N_d(r) 
    \end{equation}
    while we can use the method of generating functions to exactly compute $N^\prime_d(r_*)$: \begin{equation}
        G_d(x):=\sum_{r=0}^\infty x^r N^\prime_d(r) = \frac{1}{(1-x)^{d+1}}.
    \end{equation}
    Hence we see that \begin{equation}
        N_d(r) < 2^d \frac{(r+d)!}{r!d!},
    \end{equation}
    and in particular that \begin{equation}
        \frac{N_d(r)}{r^d} < \prod_{j=1}^d \left(\frac{2}{r}+\frac{2}{j}\right).
    \end{equation}
    This bound decreases as $r$ increases.  It is best to avoid using this bound at $r=1$ as it is quite loose and the exact answer $N_d(1)=2d+1$ is obvious.  But we already get a ``reasonable" bound for $r=2$, where $r^{-d}N_d(r)= (d+2)!/d!$, which leads to \eqref{eq:ZdMbound}. 
    
    When $d \leq 4$, a tighter bound on $M$ can be found via the following argument: we can decompose a $d$-dimensional ball into lower dimensional balls as
    \begin{align}
    N_d(r) = 2\sum_{q = 1}^r N_{d-1}(r-q) + N_{d-1}(r) \ .
    \end{align}
    Upper-bounding $N_d(r) \leq M_d r^d$ for some $M_d$, for $d > 1$ we readily find the bound $M_{d} \leq 2M_{d-1}$, with the base case $M_1 = 3$.
\end{proof}

\subsection{Operator locality}

With these notions of locality coming from the graph $(\Lambda,E)$, we now turn to characterizing local operators acting on the global Hilbert space $(\mathbb C^D)^{|\Lambda|}$.

\begin{defn} [Range and locality]
We say that an operator $O$ acting on $(\mathbb C^D)^{|\Lambda|}$  is strictly local if it can be written as $O = \sum_S O_S$, where each $O_S$ is supported within a subset $S$ and every $\diam(S)$ is smaller than a finite number. Given an operator $O$, we define $\rng(O) = \inf_{\{S\}}\sup_S \diam(S)$, where the infimum is taken over local decompositions $O = \sum_S O_S$. Similarly, we define the locality $\loc(O) = \inf_{\{S\}} \sup_S |S|$.
\end{defn}
For an operator that is not strictly local, operator locality can be quantified by the $\kappa$-norm, as defined in Ref.~\cite{chen2021operator}:
\begin{defn}
Let $O$ be an operator and $\kappa\geq 0$. The $\kappa$-norm is defined as
\begin{align}
\Vert O \Vert_{\kappa} = \inf_{\{O_S\}}\sup_{v \in \Lambda}  \sum_{S \ni v} \e^{\kappa \diam(S)}\Vert O_S\Vert \label{eq:kappanormdef}
\end{align}
where $\Vert \cdot \Vert$ is the operator norm (maximum singular value of the
argument) and the infimum is taken over decompositions $\sum_{S\subseteq \Lambda}O_S$, where $O_S$ is supported within $S$.
\end{defn}
We notice that the $\kappa$-norm of a local operator grows at most exponentially with $\kappa$:\footnote{We note here that the analysis of the locality of the generator also works with a stronger version of the $\kappa$-norm with volume-law scaling, defined $\Vert O \Vert_\kappa = \inf_{\{O_S\}} \sup_{v \in \Lambda}\sum_{ S \ni v}\e^{\kappa (|S|-1)} \Vert O_S \Vert$, which may prove useful for other applications. However, the weaker $\kappa$-norm above is sufficient for LR bounds and allows us to use perturbations of the form $\epsilon^{\rng(V)}V$ instead of $\epsilon^{\loc(V)}V$.  The volume-law tailed Lieb-Robinson bounds recently developed in \cite{mcdonough2025} would be necessary to generalize our analysis to account for these stronger tails.}
\begin{prop}
If $O$ is a local operator, then $\Vert O \Vert_{\kappa} \leq \e^{\kappa \rng(O)}\Vert O \Vert_{\kappa = 0}$.
\label{prop:kappa_norm_scaling}
\end{prop}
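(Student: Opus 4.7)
The plan is a two-step argument built around a single well-chosen decomposition of $O$. Fix $\delta>0$. By the definition of $\rng(O)$ as an infimum over decompositions of the maximum diameter, there exists a decomposition $O=\sum_S O_S$ with $\max_{S:\ O_S\neq 0}\diam(S)\leq \rng(O)+\delta$. For any such decomposition, the factor $\e^{\kappa\diam(S)}\leq \e^{\kappa(\rng(O)+\delta)}$ pulls out uniformly across every $S$, giving
\[
\sup_v \sum_{S\ni v}\e^{\kappa\diam(S)}\Vert O_S\Vert \;\leq\; \e^{\kappa(\rng(O)+\delta)}\sup_v\sum_{S\ni v}\Vert O_S\Vert.
\]
The left-hand side upper bounds $\Vert O\Vert_\kappa$ through the infimum definition \eqref{eq:kappanormdef}. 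It therefore suffices to arrange that the same decomposition also makes $\sup_v\sum_{S\ni v}\Vert O_S\Vert$ within $\delta$ of $\Vert O\Vert_{\kappa=0}$; sending $\delta\to 0$ will then yield $\Vert O\Vert_\kappa\leq \e^{\kappa\rng(O)}\Vert O\Vert_{\kappa=0}$.

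The required compatibility claim is that the infimum defining $\Vert O\Vert_{\kappa=0}$ is approachable by decompositions of range at most $\rng(O)+\delta$. Given any near-minimizer $\{O_S'\}$ of the $\kappa=0$-norm, any oversized piece $O_{S_0}'$ with $\diam(S_0)>\rng(O)+\delta$ must be compensated elsewhere, since $O$ itself admits a decomposition of range $\leq \rng(O)+\delta$. A vertex-wise triangle-inequality argument, carried out in a fixed operator basis such as the Pauli basis, shows that such mutually-cancelling oversized contributions can only inflate $\sup_v\sum_{S\ni v}\Vert O_S'\Vert$, so pruning them strictly improves the $\kappa=0$-norm while shrinking the maximum diameter. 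Iterating produces a near-minimizer of range at most $\rng(O)+\delta$, which when combined with the first step completes the proof.

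The main obstacle is the compatibility step: one must justify rigorously that restricting the $\kappa=0$-norm infimum to bounded-diameter decompositions does not change its value. The cleanest route is to work in the Pauli basis, where every operator has a unique expansion and the local contributions are unambiguously assigned to their natural support. In this basis, the canonical decomposition that simply groups Pauli strings by support automatically has range $\rng(O)$ and minimizes $\sup_v\sum_{S\ni v}\Vert O_S\Vert$ among all decompositions, since any alternative must introduce cancellations that increase the vertex sums by the triangle inequality. Once this bookkeeping is in place, the proposition follows from the single uniform exponential bound above.
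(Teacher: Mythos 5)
Your first step is solid: for any decomposition with $\max_S\diam(S)\le\rng(O)+\delta$, pulling out the uniform factor $\e^{\kappa\diam(S)}\le\e^{\kappa(\rng(O)+\delta)}$ gives $\Vert O\Vert_\kappa\le\e^{\kappa(\rng(O)+\delta)}\sup_v\sum_{S\ni v}\Vert O_S\Vert$, and you are right that the real question is whether a single decomposition can simultaneously approach the two infima defining $\rng(O)$ and $\Vert O\Vert_{\kappa=0}$. Your proposed resolution, however, rests on a false claim: the Pauli-string-by-support decomposition does not in general minimize $\sup_v\sum_{S\ni v}\Vert O_S\Vert$. Take $O=X_1+Y_1X_2$ on two adjacent sites. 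Grouping by Pauli support gives $\{X_1\},\{Y_1X_2\}$ with vertex-$1$ sum equal to $2$, whereas the single piece $O_{\{1,2\}}=O$ gives $\Vert O\Vert=\sqrt2$, since $X_1$ and $Y_1X_2$ anticommute. Merging Pauli terms introduces no cancellations yet strictly lowers the vertex sums, so the triangle-inequality heuristic behind your pruning step points the wrong way: discarding the oversized Pauli content of a piece can \emph{increase} its operator norm, and the ``pruned'' decomposition need not be a better $\kappa=0$ competitor, so the iteration you describe has no reason to terminate at a simultaneous near-minimizer.

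In fact the compatibility step cannot be salvaged, and the statement as literally written (with the infimum in both $\rng(O)$ and $\Vert O\Vert_{\kappa=0}$) appears to fail. On a three-site chain take $O=Z_1Z_2+Y_2Z_3$; the two Pauli strings anticommute, so $\Vert O\Vert_{\kappa=0}\le\Vert O\Vert=\sqrt2$ via the single piece $O_{\{1,2,3\}}=O$, which has diameter $2$, while $\rng(O)=1$. In any decomposition the $Z_1Z_2$ coefficient splits as $a$ in $O_{\{1,2\}}$ and $1-a$ in $O_{\{1,2,3\}}$ (and $b$, $1-b$ for $Y_2Z_3$), and since $\Vert M\Vert$ dominates every Pauli coefficient of $M$, the weighted sum at vertex $2$ is at least $\e^{\kappa}(|a|+|b|)+\e^{2\kappa}\max(|1-a|,|1-b|)\ge\min(\e^{2\kappa},2\e^{\kappa})$. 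For $\kappa>\log 2$ this gives $\Vert O\Vert_\kappa\ge 2\e^\kappa>\sqrt2\,\e^\kappa\ge\e^{\kappa\rng(O)}\Vert O\Vert_{\kappa=0}$. What the paper actually uses everywhere it invokes Proposition~\ref{prop:kappa_norm_scaling} is the fixed-decomposition version, which is exactly your step~1 and needs no compatibility lemma: once a local decomposition of $V_r$ (or of $H_0$) with range $\le r$ is specified, both $\kappa$- and $\kappa=0$-quantities are evaluated on that same decomposition, and $\sup_v\sum_{S\ni v}\e^{\kappa\diam(S)}\Vert O_S\Vert\le\e^{\kappa r}\sup_v\sum_{S\ni v}\Vert O_S\Vert$ is immediate. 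That is the correct and sufficient formulation; trying to prove the claim with both infima leads you to a statement that is not actually true.
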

Next, we show that the $\kappa$-norm is indeed a well-defined norm:
\begin{lem}
The $\kappa$-norm defines a norm on operators.
\end{lem}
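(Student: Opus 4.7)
The plan is to verify the three defining properties of a norm in turn: absolute homogeneity, the triangle inequality, and positive definiteness. Non-negativity of $\Vert O\Vert_\kappa$ is immediate from the definition, since every term $\e^{\kappa \diam(S)}\Vert O_S\Vert$ in the sum is non-negative.

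For absolute homogeneity, I would observe that for any $\lambda \neq 0$, the map $\{O_S\} \mapsto \{\lambda O_S\}$ is a bijection between decompositions of $O$ and decompositions of $\lambda O$. Scaling pulls an $|\lambda|$ out of the operator norm, and then out of both the sum and the supremum, so the infimum scales accordingly, yielding $\Vert \lambda O\Vert_\kappa = |\lambda|\,\Vert O\Vert_\kappa$. The case $\lambda = 0$ is handled by the trivial decomposition $O_S = 0$.

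For the triangle inequality, given any pair of decompositions $O_1 = \sum_S O_{1,S}$ and $O_2 = \sum_S O_{2,S}$, the sum $\{O_{1,S} + O_{2,S}\}$ is a valid decomposition of $O_1 + O_2$. Applying the operator-norm triangle inequality termwise, then splitting the sum over $S \ni v$, then using subadditivity of the supremum over $v$, bounds $\Vert O_1+O_2\Vert_\kappa$ above by the sum of the two functionals evaluated on the chosen decompositions. Because these decompositions are independent, I can take infima over each separately to conclude $\Vert O_1+O_2\Vert_\kappa \le \Vert O_1\Vert_\kappa + \Vert O_2\Vert_\kappa$.

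The main obstacle is positive definiteness: showing $\Vert O\Vert_\kappa = 0 \Rightarrow O = 0$. For this I would start from any admissible decomposition $O = \sum_S O_S$ and bound the operator norm of $O$ itself via the triangle inequality on operator norms, $\Vert O\Vert \le \sum_S \Vert O_S\Vert$. The issue is that the $\kappa$-norm controls only the quantity $\sup_v \sum_{S \ni v} \e^{\kappa \diam(S)} \Vert O_S\Vert$, not $\sum_S \Vert O_S\Vert$ directly. To reconcile these, I would use that each non-empty $S$ contains at least one vertex to write
\begin{equation}
\sum_S \Vert O_S\Vert \;\le\; \sum_{v \in \Lambda} \sum_{S \ni v} \Vert O_S\Vert \;\le\; |\Lambda| \sup_{v \in \Lambda} \sum_{S \ni v} \e^{\kappa \diam(S)} \Vert O_S\Vert.
\end{equation}
Taking the infimum of the right-hand side over decompositions yields $\Vert O\Vert \le |\Lambda|\,\Vert O\Vert_\kappa$. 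Hence $\Vert O\Vert_\kappa = 0$ forces $\Vert O\Vert = 0$ and therefore $O = 0$. This step crucially uses the finiteness of $\Lambda$ assumed in the paper; the main theorems are stated so that the resulting bounds are uniform in $|\Lambda|$, so this prefactor is inessential for the applications but needed here to ensure the $\kappa$-norm separates points.
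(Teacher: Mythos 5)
Your proof is correct and follows essentially the same route as the paper's, but is noticeably more careful at the two points the paper treats tersely. The paper disposes of positive definiteness in a single clause (``follows directly from the same properties of the operator norm''), whereas you supply the actual estimate $\Vert O\Vert \le |\Lambda|\,\Vert O\Vert_\kappa$; this is exactly what needs checking, since the $\kappa$-norm only controls the vertex-local sums $\sum_{S\ni v}\e^{\kappa\diam(S)}\Vert O_S\Vert$ and does not \emph{a priori} dominate the global operator norm without the volume factor coming from finiteness of $\Lambda$. For the triangle inequality you correctly thread the $\sup_v$ through the argument via subadditivity of the supremum, a step the paper's displayed inequality chain silently elides; the paper also phrases the argument via near-optimal decompositions with tolerances $\delta_1,\delta_2\to 0$, which is equivalent to your direct infimum over arbitrary decompositions. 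One shared caveat worth flagging: both proofs implicitly require the decomposition to range over nonempty $S$ only --- otherwise a nonzero multiple of the identity, assigned to $S=\emptyset$, would have vanishing $\kappa$-norm --- and your parenthetical ``each non-empty $S$ contains at least one vertex'' is where this assumption enters.
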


\begin{proof}
Positive-definiteness and absolute-value homogeneity follow directly from the same properties of the operator norm. To prove the triangle inequality, let $A = \sum_{S \in \mathcal C_1}A_S$ and $B = \sum_{S \in \mathcal C_2}B_S$ be the decompositions which realize the $\kappa$-norms within a tolerance $\delta_1, \delta_2$. This gives a decomposition $A+B = \sum_{S \in \mathcal C_1}A_S + \sum_{S \in \mathcal C_2}B_S$, so we have
\begin{align}
\Vert A + B \Vert_{\kappa} \leq \sum_{S \in \mathcal C_1}\e^{\kappa \diam(S)}\Vert A_S\Vert + \sum_{S \in \mathcal C_2}\e^{\kappa\diam(S)}\Vert B_S \Vert \leq \Vert A \Vert_{\kappa} + \Vert B \Vert_{\kappa} + \delta_1 + \delta_2
\end{align}
Then taking $\delta_1, \delta_2 \to 0$ proves the triangle inequality.
\end{proof}
Central to our bounds is the notion of submultiplicativity. Many norms do not
satisfy this property, with the most notable example being the Frobenius
norm $\Vert O \Vert_{\rm F} = \sqrt{\frac{1}{d}\sum_{ij}|O_{ij}|^2}$.
We find that the $\kappa$-norm is \emph{almost} submultiplicative, in that we
incur an additional multiplicative factor of system volume (much smaller than
the Hilbert space dimension of the system):
\begin{lem}
Let $A, B$ be two operators. Then 
\begin{align}
\Vert A B \Vert_{\kappa} \leq [\loc(A)+\loc(B)]\Vert A \Vert_{\kappa} \Vert B \Vert_{\kappa}
\end{align}
\label{lem:submultiplicative}
\end{lem}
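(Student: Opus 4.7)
The plan is to build a near-optimal local decomposition of the product $AB$ out of near-optimal decompositions of $A$ and $B$, and then bound $\Vert AB\Vert_\kappa$ by manipulating the resulting double sum over pairs of sets. First, I fix $\eta>0$ and choose decompositions $A=\sum_S A_S$ and $B=\sum_T B_T$ whose $\kappa$-weights lie within $\eta$ of the infima defining $\Vert A\Vert_\kappa$ and $\Vert B\Vert_\kappa$, with all $|S|\le \loc(A)$ and $|T|\le \loc(B)$. The natural induced decomposition of the product is $(AB)_U := \sum_{(S,T):\,S\cup T=U} A_S B_T$, which is supported on $U$ and sums to $AB$, so it can be fed into the definition of the $\kappa$-norm.

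The key geometric input is that if $S\cap T\neq \emptyset$, then any two points of $S\cup T$ are connected through a common element, so $\diam(S\cup T)\le \diam(S)+\diam(T)$ and the exponential weight factorizes as $\e^{\kappa\diam(S\cup T)}\le \e^{\kappa\diam(S)}\e^{\kappa\diam(T)}$. The triangle inequality applied to this decomposition gives
\begin{equation*}
\Vert AB\Vert_\kappa \le \sup_{v\in\Lambda} \sum_{(S,T):\,v\in S\cup T} \e^{\kappa\diam(S\cup T)}\,\Vert A_S\Vert\,\Vert B_T\Vert.
\end{equation*}
I then use the union bound $\mathbf{1}[v\in S\cup T]\le \mathbf{1}[v\in S]+\mathbf{1}[v\in T]$ to split the right-hand side into two symmetric pieces. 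For the first piece I parametrize the intersection constraint via $\mathbf{1}[S\cap T\neq\emptyset]\le \sum_{w\in S}\mathbf{1}[w\in T]$, then pull out the factor $|S|\le \loc(A)$ to obtain
\begin{equation*}
\sum_{S\ni v}\e^{\kappa\diam(S)}\Vert A_S\Vert \cdot |S| \cdot \sup_{w\in\Lambda}\sum_{T\ni w}\e^{\kappa\diam(T)}\Vert B_T\Vert \;\le\; \loc(A)\,(\Vert A\Vert_\kappa+\eta)(\Vert B\Vert_\kappa+\eta).
\end{equation*}
The second piece is symmetric in $A$ and $B$ and contributes $\loc(B)(\Vert A\Vert_\kappa+\eta)(\Vert B\Vert_\kappa+\eta)$. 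Adding the two pieces and sending $\eta\to 0$ yields the claim.

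I expect the main technical hurdle to be the disjoint case $S\cap T=\emptyset$, where $\diam(S\cup T)$ can exceed $\diam(S)+\diam(T)$ by the separation $\mathsf d(S,T)$ and the factorization step fails outright. Here the idea is to exploit that $A_S B_T=A_S\otimes B_T$ is a genuine tensor product on disjoint supports: rather than being assigned to the large set $S\cup T$ in the decomposition of $AB$, such a term can be reabsorbed into the $A$- and $B$-decompositions themselves (reassigning its contribution to smaller-diameter sets) so that, after reorganization, every surviving pair in the double sum satisfies $S\cap T\neq\emptyset$ and the argument above applies. Making this reorganization work cleanly — in particular ensuring that the resulting decomposition of $AB$ is still a valid sum whose $\kappa$-weights are controlled by $(\Vert A\Vert_\kappa+\eta)(\Vert B\Vert_\kappa+\eta)$ — is the step I expect to require the most care.
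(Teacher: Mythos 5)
Your proposal follows the paper's own argument essentially verbatim on the overlapping case: expand $AB$ through the induced decomposition $(AB)_U=\sum_{S\cup T=U}A_S B_T$, use $\diam(S\cup T)\le \diam S+\diam T$ when $S\cap T\neq\emptyset$, split the supremum via $\mathbf 1[v\in S\cup T]\le \mathbf 1[v\in S]+\mathbf 1[v\in T]$, and count the intersections with the factor $|S|\le\loc(A)$. That part is correct and is exactly what the paper does.

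However, the ``hurdle'' you flag at the end is not a hurdle to be finessed away---it is a genuine gap, both in your sketch and in the paper's proof, which silently restricts the first inequality to pairs with $S_1\cap S_2\neq\emptyset$ without saying what happens to the disjoint pairs. Your proposed repair (``reabsorb'' the disjoint term into the $A$- and $B$-decompositions) cannot work in general: for disjoint $S,T$ the product $A_SB_T=A_S\otimes B_T$ has a traceless-on-both-factors component that is irreducibly supported on $S\cup T$, so no rearrangement can push it onto smaller-diameter sets. In fact the lemma as stated is false: take $A=X_1$ and $B=X_2$ on two qubits at distance $L$. Then $\Vert A\Vert_\kappa=\Vert B\Vert_\kappa=\loc(A)=\loc(B)=1$ while $\Vert AB\Vert_\kappa=\e^{\kappa L}$, which exceeds the claimed bound of $2$ for $\kappa L>\log 2$. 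What actually saves the paper's downstream applications is that the lemma is only ever used on nested commutators, and for a commutator $[A_S,B_T]=0$ whenever $S\cap T=\emptyset$, so the offending disjoint pairs never arise. The clean fix is therefore to state and prove the submultiplicativity bound directly for commutators, $\Vert[A,B]\Vert_\kappa\le 2[\loc(A)+\loc(B)]\Vert A\Vert_\kappa\Vert B\Vert_\kappa$, where the restriction to intersecting pairs is automatic---rather than passing through $\Vert AB\Vert_\kappa$ and $\Vert BA\Vert_\kappa$ via the triangle inequality, which is where the false intermediate claim enters.
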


\begin{proof}
Let $u$ be the vertex which maximizes $\Vert  A B\Vert_{\kappa}$.
\begin{align}
\Vert AB\Vert_{\kappa} &\leq \sum_{S_1 \ni u}\sum_{S_2: S_2 \cap S_1 \neq \emptyset}\e^{\kappa \diam(S_2 \cup S_1)} \Vert A_{S_1}\Vert \Vert B_{S_2}\Vert + \sum_{S_2 \ni u}\sum_{S_1: S_1 \cap S_2 \neq \emptyset}\e^{\kappa \diam(S_1 \cup S_2)}\Vert \Vert A_{S_1}\Vert \Vert B_{S_2}\Vert \notag \\
&\leq \sum_{S_1 \ni u}\e^{\kappa \diam(S_1)
}\Vert A_{S_1} \Vert \sum_{v \in S_1}\sum_{S_2 \ni v}\e^{\kappa \diam(S_2)}\Vert B_{S_2}\Vert + \sum_{S_2 \ni u} \e^{\kappa \diam(S_2)}\Vert B_{S_2}\Vert\sum_{v \in S_2}\sum_{S_1 \ni v}\e^{\kappa \diam(S_1)}\Vert A_{S_1}\Vert \notag \\
&\leq \sum_{S_1 \ni u}\e^{\kappa \diam(S_1)
}\Vert A_{S_1} \Vert |S_1|\Vert B \Vert_{\kappa} + \sum_{S_2 \ni u} \e^{\kappa \diam(S_2)}\Vert B_{S_2}\Vert|S_2| \Vert A\Vert_{\kappa} \notag \\
&\leq \Vert B \Vert_{\kappa}\loc(A)\sum_{S_1 \ni u}\e^{\kappa \diam(S_1 
)}\Vert A_{S_1} \Vert  + \Vert A\Vert_{\kappa}\loc(B)\sum_{S_2 \ni u} \e^{\kappa \diam(S_2)}\Vert B_{S_2}\Vert \notag \\
&\leq [\loc(A) + \loc(B)]\Vert A\Vert_\kappa \Vert B \Vert_{\kappa}
\end{align}
In the first line we used the freedom to choose any local decomposition to sum over $S_1, S_2$ in the decomposition which realizes the $\kappa-$norms of $A$ and $B$.
\end{proof}

 In particular, if $A$ and $B$ are both supported on a region of size $V$, then $\Vert AB \Vert_{\kappa} \leq 2V \Vert A \Vert_\kappa \Vert B \Vert_\kappa$.
Since the minimum gap $\Delta$ must scale at best exponentially with
$V$, this extra factor of $V$ will be of
little consequence. 

\subsection{Lieb-Robinson bounds\label{sec:lrbound_duhamel}}
\begin{prop}[Duhamel identity]
Let $A$ and $B$ be bounded linear operators. Then
\begin{align}
\e^{(A + B)t} = \e^{At} + \int\limits_0^t \dd s \e^{(A+B)(t-s)}B \e^{A s}
 \label{eq:duhamel}
\end{align}
\end{prop}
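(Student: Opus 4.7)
The Duhamel identity is a classical fact; I would prove it by reducing it to the fundamental theorem of calculus applied to a cleverly chosen interpolating function. Since the paper's broader theorems will be applied to bounded operators on a finite-dimensional Hilbert space, the operator exponentials are entire and there are no convergence subtleties to worry about.

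The plan is to fix $t \geq 0$ and define the auxiliary one-parameter family
\begin{equation}
h(s) := \e^{(A+B)(t-s)}\, \e^{A s}, \qquad s \in [0,t].
\end{equation}
Since $A$ and $A+B$ commute with their own exponentials, differentiating under the product rule gives
\begin{equation}
h'(s) = -\e^{(A+B)(t-s)}(A+B)\e^{A s} + \e^{(A+B)(t-s)} A \e^{A s} = -\e^{(A+B)(t-s)} B \e^{A s}.
\end{equation}
The key point is that the $A$ coming down from differentiating the right factor exactly cancels against the $A$ piece of $A+B$ coming from the left factor, leaving only $-B$ sandwiched in between; crucially, no commutator between $A$ and $B$ ever appears because each exponential is differentiated against its own generator. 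Integrating from $0$ to $t$ and using $h(0)=\e^{(A+B)t}$ and $h(t)=\e^{A t}$ yields
\begin{equation}
\e^{A t} - \e^{(A+B)t} = -\int_0^t \e^{(A+B)(t-s)} B \e^{A s}\, \dd s,
\end{equation}
which rearranges to the claimed identity.

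There is essentially no obstacle: the only subtle point is the legitimacy of differentiating the operator-valued product and exchanging the derivative with the integral, both of which are standard for bounded operators on a finite-dimensional space (the exponentials are norm-continuous and norm-differentiable with the expected derivatives). As a sanity check, one could alternatively prove the identity by observing that both sides, viewed as functions of $t$, solve the same linear ODE $F'(t) = (A+B)F(t)$ with $F(0)=I$ — differentiating the right-hand side via the Leibniz rule reproduces $(A+B)$ acting on the whole expression — and then invoking uniqueness of solutions to linear ODEs. Either argument is short; the first is marginally cleaner because it avoids recomputing a boundary term.
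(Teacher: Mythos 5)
Your proof is correct. The paper itself does not write out a proof of the time-independent identity; it only remarks that it ``can easily be proven by showing the two sides satisfy the same differential equation,'' which is precisely the alternative you mention at the end (and then the paper devotes its actual effort to a Taylor/Dyson-series proof of the \emph{time-dependent} version). Your primary argument---applying the fundamental theorem of calculus to the interpolant $h(s)=\e^{(A+B)(t-s)}\e^{As}$---is a standard and, as you note, slightly cleaner variant: the cancellation of $A$ against the $A$ in $A+B$ happens once inside $h'(s)$, rather than resurfacing as a boundary term when differentiating the integral in the ODE route. Both approaches rest on the same key observation that each exponential commutes with its own generator, so no $[A,B]$ commutator ever intervenes. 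Nothing is missing, and on a finite-dimensional Hilbert space (or more generally for bounded $A,B$) the analytic justifications you invoke are indeed automatic.
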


The above identity can easily be proven by showing the two sides satisfy the same differential equation. However, it is also illuminating to look at the Taylor series expansion to understand the relationship between LR bounds and so-called ``irreducible paths" of terms in the Hamiltonian. To expound on this, we will work with time-dependent operators instead. We write $\mathcal T$ for the usual time-ordering operator.

\begin{prop}[Time-dependent Duhamel identity]
Let $A:\mathbb R \to \mathcal B(\mathcal H)$ and $B:\mathbb R \to \mathcal B(\mathcal H)$ be time-dependent operators valued in $\mathcal B(\mathcal H)$, the space of bounded linear operators on a Hilbert space $\mathcal H$. Then the following identity holds:
\begin{align}
\mathcal T\e^{\int_0^t \dd s (A(s) + B(s))} = \mathcal T\e^{\int_0^t \dd sA(s)} + \int\limits_0^t \dd s \mathcal T\e^{\int_s^t \dd u (A(u)+B(u))}B(s) \mathcal T\e^{\int_0^s \dd uA(u)}
\end{align}
\end{prop}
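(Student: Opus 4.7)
The plan is to prove both sides satisfy the same linear initial value problem (IVP) and conclude by uniqueness, which is the cleanest argument once the machinery of two-parameter propagators is set up. Concretely, introduce the propagators
\begin{equation}
U(t,s) \equiv \mathcal T\e^{\int_s^t \dd u\,(A(u)+B(u))}, \qquad V(t,s) \equiv \mathcal T\e^{\int_s^t \dd u\,A(u)},
\end{equation}
so that the claim is
\begin{equation}
U(t,0) \;=\; V(t,0) \;+\; \int_0^t \dd s\, U(t,s)\, B(s)\, V(s,0).
\end{equation}
Both sides obviously equal the identity at $t=0$, so matching initial conditions is automatic.

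Next I would establish the two structural facts about time-ordered exponentials needed in the argument: $\partial_t U(t,s) = (A(t)+B(t))\,U(t,s)$ with $U(s,s)=\mathds 1$, and the analogous statement for $V(t,s)$. These follow directly from the defining Dyson series and the fact that each additional factor in the series is pulled out of the time-ordering on the upper endpoint. For bounded $A(\cdot),B(\cdot)$ the series converges absolutely in operator norm on any finite interval, so differentiation under the integral is unproblematic.

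With those identities in hand, differentiate the right-hand side, call it $W(t)$, with respect to $t$. Using Leibniz and $U(t,t)=\mathds 1$,
\begin{align}
\dot W(t) &= A(t)V(t,0) + U(t,t)B(t)V(t,0) + \int_0^t \dd s\,\partial_t U(t,s)\,B(s)V(s,0) \nonumber \\
&= A(t)V(t,0) + B(t)V(t,0) + (A(t)+B(t))\!\int_0^t \dd s\,U(t,s)B(s)V(s,0) \nonumber \\
&= (A(t)+B(t))\,W(t).
\end{align}
Thus $W$ satisfies the same first-order linear ODE as $U(t,0)$ with the same initial condition, and so by the standard Picard/Gr\"onwall uniqueness argument for bounded generators on $\mathcal B(\mathcal H)$, $W(t) = U(t,0)$, which is the desired identity.

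The only mildly technical step is justifying $\partial_t U(t,s)=(A(t)+B(t))U(t,s)$ in the time-ordered sense; this is where one must not accidentally commute operators past the time-ordering symbol. The cleanest way is to expand $U(t,s)$ as its Dyson series $\sum_n \int_{s\le t_1\le\cdots\le t_n\le t} (A+B)(t_n)\cdots(A+B)(t_1)\,\dd t_1\cdots\dd t_n$, differentiate termwise in $t$, and observe that the derivative at order $n$ produces exactly the order-$(n-1)$ series with an extra $(A(t)+B(t))$ on the left. Everything else is bookkeeping. (An alternative, even shorter route is to Taylor-expand both sides of the identity in powers of $B$: the RHS manifestly reorganizes the Dyson series of $U(t,0)$ around powers of $B$, with each $B(s)$ insertion sandwiched between an $A$-only block on its right and a full $(A+B)$-block on its left, matching the combinatorics term by term. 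I would mention this as a sanity check but prefer the IVP proof for rigor.)
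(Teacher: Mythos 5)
Your proof is correct, but it takes a genuinely different route from the paper's. You prove the identity by showing that both sides solve the same first-order linear IVP with the same initial condition, then invoke uniqueness for bounded generators --- the standard, clean argument. The paper, in contrast, deliberately proves the identity by expanding the Dyson series term by term, reindexing each sequence of $A$'s and $B$'s by the position of its first $B$, and resumming. The paper even acknowledges up front that the ODE proof is easier (``can easily be proven by showing the two sides satisfy the same differential equation''), but chooses the combinatorial route because it exposes the path-counting structure that becomes the engine of the irreducible-path Lieb-Robinson bound (Prop.~\ref{prop:irreducible_paths}) proved immediately afterward --- which is essentially the ``alternative'' you relegate to a parenthetical sanity check. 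So: your argument buys rigor and brevity; the paper's argument buys the pedagogical/structural insight that the Duhamel iteration is a systematic bookkeeping of term-sequences, which is reused repeatedly (Lem.~\ref{lem:jump_paths}, Lem.~\ref{lem:small_terms}) throughout the appendices. Both are valid proofs of the proposition; the paper's choice is motivated by downstream use rather than by necessity.
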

\begin{proof}
We begin by expanding the Taylor series of the left-hand side. First, for a time-dependent operator $F$, let 
\begin{align}
F^{(n)}(t,s) \equiv \int_s^t \dd s_n \int_{0}^{s_1}\dd s_2 \dots \int_0^{s_{n-1}}\dd s_n F(s_n)\dots F(s_2)F(s_1)
\end{align}
with $F^{(0)}(t) = 1$.
By definition, $U(t) = \mathcal T\exp(\int_0^t \dd s F(s))$ solves the differential equation $\dv{t} U = F(t)U(t)$. By repeatedly integrating this equation, we get the classic Dyson series solution:
\begin{align}
\mathcal T\exp(\int_s^t \dd u F(u)) = \sum_{n=0}^\infty F^{(n)}(t,s)
\end{align}
Applying this to the LHS of \eqref{eq:duhamel}, we have
\begin{align}
\mathcal T\e^{\int_0^t \dd s A(s) + B(s)} = \sum_{n\geq 0}\sum_{\vec X}\int_{0}^{t}\dd s_n \dots \int_0^{s_2}\dd s_1 X_n(s_n)\dots X_1(S_1)
\end{align}
where $\vec X = (X_1, \dots, X_n) \in \{A,B\}^{ n}$ index all possible sequences of $A, B$ of length $n$.

Now, we rewrite the sum by sorting the sequences by the index of the first $B$:
\begin{align}
\mathcal T\e^{\int_0^t \dd s A(s) + B(s)} &=  \sum_{n}A^{(n)}(t,0)+ \sum_{n,m}\sum_{\vec X}\int_{0}^{t}\dd s_n \dots \int_0^{s_{m+1}}\dd s_m X_n(s_n)\dots X_{m+1}(s_{m+1})B(s_m)A^{(m)}(s_m)
\end{align}
The first term corresponds to all of the sequences in which no $B$ ever appears, and the sum over $m$ in the second indexes the first place in the sequence $B$ is found. We can continue to manipulate the second term:
\begin{align}
\sum_{n,m}\sum_{\vec X}\int_{0}^{t}\dd s_n \dots &\int_0^{s_{m+1}}\dd s_m X_n(s_n)\dots X_{m+1}(s_{m+1})B(s_m)A^{(m)}(s_m) \notag \\
&=\sum_{n}\sum_{\vec X}\int_{0}^{t}\dd s_n \dots \int_0^{s_2}\dd s_1 \int_{0}^{s_1} \dd s X_n(s_n)\dots X_{1}(s_1)B(s)\mathcal T\e^{\int_0^{s} \dd u A(u)} \notag \\
&=\sum_{n}\sum_{\vec X}\int_{0}^{t}\dd s \qty(\int_{s}^t \dd s_n \dots \int_{s}^{s_{2}}\dd s_{1} X_n(s_n)\dots X_{1}(s_1))B(s)\mathcal T\e^{\int_0^{s} \dd u A(u)} \notag \\
&=\int_{0}^{t}\dd s \mathcal T \e^{\int_0^{s}\dd u A(u)+B(u)}B(s)\mathcal T\e^{\int_0^{s} \dd u A(u)},
\end{align}
which completes the proof.
\end{proof}
What this proof reveals is that the Duhamel expansion provides a systematic way
to count sequences of terms in the Taylor series of $\e^{-\ii H t}$ when $H =
\sum_{S}H_S$.
To avoid the proliferation of integrals, we introduce some new
notation. First, given an operator $F$, let $\mathcal F \equiv [F, \cdot]$ be
the adjoint action of $F$. Let $\ast$ represent convolution, and $\e(\mathcal F)(t)
= \exp(\ii \mathcal F t)$ represent exponentiation. In this notation, the
time-independent Duhamel identity
may be repackaged as
\begin{align}
 \e(A+B)  = \e(A) +\e(A+B) \ast B \e(A) = \e(A)+ \e(A+B)B \ast \e(A)
\end{align}
which makes transparent the function of the Duhamel identity as a way to count paths.
Using this fact, we will now prove the following:\footnote{Although a Lieb-Robinson bound of this form is not a new result, the proof presented here is new and contains an intuitive perspective, which will be very helpful for us in this paper.}

\begin{prop}[Theorem 3 in \cite{chen2021operator}]
\label{prop:irreducible_paths}
Suppose that $H = \sum_{S}H_S$ is a local Hamiltonian. The supports $S$ of the terms $H_{S}$ form the vertices of a graph, with edges between $S_1$ and $S_2$ if they intersect nontrivially. Given operators $A_x, B_y$ as above, we have the following bound
\begin{align}
\Vert [B_y, A_x(t)] \Vert \leq \Vert A_x \Vert \Vert B_y \Vert\sum_{\vec S \in \Gamma(x \to y)}\frac{(2t)^{|\Gamma|}}{|\Gamma|!}\omega(\vec S)
\end{align}
where the sum is taken over ``irreducible," or non-self-intersecting, paths $\Gamma(x \to y)$ through the graph for which the initial set contains $x$ and the final set contains $y$. The function $\omega$ is the weight of the path, defined as
\begin{align}
\omega(\vec S) = \prod_{i}\Vert H_{S_i} \Vert
\end{align}
\end{prop}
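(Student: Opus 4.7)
The plan is to bound the commutator by an iterated Duhamel expansion that at each step extracts one new support of $H$, with the remaining terms absorbed into a Heisenberg evolution that preserves operator norm. The accounting is arranged so that each newly extracted support must genuinely extend the growing path; this forces the resulting sum to be over irreducible (non-self-intersecting) paths rather than all chains, reproducing the improved combinatorial factor.

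As a first step, split $H = H_{(x)} + H_{(\bar x)}$, where $H_{(x)} = \sum_{S \ni x} H_S$. Since $A_x$ commutes with every term of $H_{(\bar x)}$, the (time-independent) Duhamel identity applied to the superoperator $e^{i[H,\cdot]t}$ yields
\begin{equation}
A_x(t) = A_x + i\int_0^t ds_1 \sum_{S_1 \ni x} e^{iH(t-s_1)}[H_{S_1}, A_x]\, e^{-iH(t-s_1)}.
\end{equation}
Taking the commutator with $B_y$ kills the first term (since $x \neq y$) and leaves a sum over first steps $S_1 \ni x$. Using $[B_y,\, e^{iH\tau} X\, e^{-iH\tau}] = e^{iH\tau}[B_y(-\tau), X]\, e^{-iH\tau}$ reduces the problem to bounding $\|[B_y(-\tau_1),\, D^{(1)}_{S_1}]\|$, where $D^{(1)}_{S_1} := i[H_{S_1}, A_x]$ is supported in $S_1$ with $\|D^{(1)}_{S_1}\| \le 2\|H_{S_1}\|\|A_x\|$. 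This has the same structural form as the starting commutator, opening an induction.

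In the inductive step, suppose at level $k$ we are bounding $\|[B_y(-\tau_k),\, D^{(k)}_{\vec S}]\|$, with $D^{(k)}_{\vec S}$ a nested commutator supported in $\Sigma_k := S_1 \cup \cdots \cup S_k$ and of norm at most $2^k \|A_x\| \prod_{i=1}^k \|H_{S_i}\|$. Decompose $H = H^{\mathrm{fix}}_{\vec S} + H^{\mathrm{new}}_{\vec S}$, where $H^{\mathrm{fix}}_{\vec S}$ collects terms $H_S$ that either have $S$ disjoint from $\Sigma_k$ or coincide with one of $S_1,\ldots,S_k$, and $H^{\mathrm{new}}_{\vec S}$ collects the remaining terms, namely those $H_S$ with $S \cap \Sigma_k \neq \emptyset$ and $S \notin \{S_1,\ldots,S_k\}$. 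The critical observation is that $H^{\mathrm{fix}}_{\vec S}$-evolution preserves containment of the support in $\Sigma_k$: the disjoint-support terms commute with $D^{(k)}_{\vec S}$, while the terms $H_{S_i}$ act inside $\Sigma_k$. Hence, as long as $y \notin \Sigma_k$, Duhamel followed by the commutator-swap extracts a new support $S_{k+1}$ that is both adjacent to $\Sigma_k$ and distinct from all $S_1,\ldots,S_k$, producing the next link of an irreducible path together with a fresh time integration. Whenever $y \in \Sigma_k$, the boundary term of Duhamel survives and contributes a length-$k$ path to the final sum.

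Collecting contributions: each Duhamel step supplies one nested time integration and one commutator factor, so after $k$ iterations the simplex of nested integrals yields $t^k/k!$ while the $k$ commutator estimates $\|[A,B]\| \le 2\|A\|\|B\|$ combine with $\prod_i \|H_{S_i}\|$ to reproduce the factor $(2t)^k/k!\,\omega(\vec S)$ (up to constant prefactors absorbed into $\|A_x\|\|B_y\|$). The main technical obstacle — and the reason this refines the naive chain expansion — is precisely the choice to put the already-used terms $H_{S_i}$ into $H^{\mathrm{fix}}_{\vec S}$ rather than excluding them: excluding them would permit re-extracting an already-used support in a later Duhamel step and destroy irreducibility. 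Verifying that $H^{\mathrm{fix}}_{\vec S}$-evolution preserves $\Sigma_k$-support (so that the ``no new support'' branch of Duhamel still vanishes upon commutation with $B_y$) is the combinatorial mechanism that replaces the Dyson-series sum over all chains by the tighter sum over non-self-intersecting ones.
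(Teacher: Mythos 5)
Your Duhamel induction is internally consistent and does yield a valid Lieb--Robinson-type upper bound, but the class of sequences you sum over is strictly larger than the irreducible paths of the proposition, so you have proved a genuinely weaker statement. The culprit is the split $H = H^{\mathrm{fix}}_{\vec S} + H^{\mathrm{new}}_{\vec S}$: $H^{\mathrm{new}}$ contains every term whose support intersects the accumulated region $\Sigma_k = S_1\cup\cdots\cup S_k$, so the newly extracted $S_{k+1}$ only needs to touch $\Sigma_k$ \emph{somewhere}. It may therefore attach to an early $S_j$ while being disjoint from $S_k$ --- with nearest-neighbor supports on $\mathbb Z^2$, the sequence $\{(0,0),(1,0)\},\ \{(1,0),(2,0)\},\ \{(0,0),(0,1)\}$ is allowed by your induction but is not a path in the interaction graph, since the second and third sets are disjoint --- or it may even lie entirely inside $\Sigma_k$ (any $S\subsetneq\Sigma_k$ with $S\notin\{S_1,\ldots,S_k\}$ lands in $H^{\mathrm{new}}$), a ``filler'' step that does not advance toward $y$. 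Such sequences are self-avoiding cluster growths, not paths, and they violate the condition $S_k\cap S_{k+1}\neq\emptyset$ demanded by the statement. The degradation is not cosmetic: the number of cluster growths of length $m$ scales roughly like $\prod_{k<m}|\partial\Sigma_k|\sim m!$, which cancels the $1/m!$ from the time simplex and destroys the factorial convergence underlying the lightcone.

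Nor can you simply replace ``intersects $\Sigma_k$'' by ``intersects $S_k$'': then $H^{\mathrm{fix}}$ would contain terms straddling $\Sigma_k$ at an earlier set, so $H^{\mathrm{fix}}$-evolution would no longer preserve $\Sigma_k$-support and the Duhamel boundary term would no longer vanish. The paper's proof avoids this tension entirely by peeling from the $B_y$ side: after extracting $S_1\ni y$, each step decomposes the remaining generator $H_m'$ into the terms that touch the most recently extracted set $S_m$ (the candidates for $S_{m+1}$) and a remainder $H_{m+1}'$ from which every term touching $S_1,\ldots,S_m$ or $y$ has been permanently discarded. The invariant tracked there is different: evolution of $A_x$ under $H_{m+1}'$ can never reach $S_m$, so the boundary term vanishes without ever needing to control the support of a growing cluster. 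That one-sided peeling is precisely what enforces $S_{k+1}\cap S_k\neq\emptyset$ (and, in the paper's version, even $S_{k+1}\cap S_j=\emptyset$ for $j<k$), yielding the genuine graph paths the proposition requires.
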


\begin{proof}
Given an operator $F$, let $\mathcal F \equiv [F, \cdot]$. First, consider the decomposition $H = H_1 + H_1'$, where $H_1 = \sum_{S \ni y}H_S$ and $H_1' = \sum_{S \not\ni y}H_S$. We then apply the Duhamel expansion:
\begin{align}
  \mathcal B_y\e(\mathcal H)A_x &=  \mathcal B_y\qty[\e(\mathcal H_1') + \ii\e(\mathcal H)\ast \mathcal H_{1}\e(\mathcal H_1')]A_x 
                                  = \ii\mathcal B_y\sum_{S_1 \ni y}\e(\mathcal H)\ast\mathcal H_{S_1}\e(\mathcal H_1')A_x 
\end{align}
The first term vanishes because $[\e(\mathcal H_1')A_x](t)$ is not supported on $y$, so the commutator with $B_y$ vanishes. Next, for each $S_1$ in the sum above, we expand $H_1^\prime = H_2' + H_2$, where $H_2 = \sum_{S:S \cap S_1 \neq \emptyset}(H_1')_S$ and $H_2' = \sum_{S : S\cap S_2 = \emptyset}(H_1')_S$. Then we have
\begin{align}
\mathcal B_y\sum_{S_1 \ni y}\e(\mathcal H )\ast\mathcal H_{S_1}\e(\mathcal H_1')A_x  &= \mathcal B_y\sum_{S_1 \ni y}\e(\mathcal H)\ast\mathcal H_{S_1}\qty[\e(\mathcal H_2') +  \ii\e(\mathcal H_1')\ast\mathcal H_2 \e(\mathcal H_2')]A_x  \notag\\
&= \ii\mathcal B_y\sum_{S_1 \ni y, S_2 \cap S_1 \neq \emptyset} \e(\mathcal H )\ast \mathcal H_{S_1}\e(\mathcal H_1')\ast \mathcal (\mathcal H_1')_{S_2} \e(\mathcal H_2')A_x
\end{align}
We can continue to iterate this process, choosing $H_m^\prime = H_{m+1} + H_{m+1}'$ where $H_{m+1} = \sum_{S_{m+1} \cap S_m \neq \emptyset}H_{S_{m+1}}$ and $H_{m+1}' = \sum_{S_{m+1} \cap S_m = \emptyset}H_{S_{m+1}}$
at each step, until $y \in S_{m+1}$, at which point the first term in the expansion no longer vanishes. Thus we obtain
\begin{align}
[B_y, A_x(t)] = \mathcal B_y\sum_{\substack{\vec S \in \Gamma(x \to y) \\|\vec S| = m}}\ii^m \e(\mathcal H)\ast \mathcal H_{S_1}\e(\mathcal H_1')\ast (\mathcal H_1')_{S_2} \dots  \e(\mathcal H_{m-1}') \ast (\mathcal H_{m-1}')_{S_m}\e(\mathcal H_m's_m)A_x
\label{eq:Schwinger_Karplus}
\end{align}
where the notation $\Gamma(x \to y)$ denotes the set of sequences $(S_1 \ni y, \dots, S_m \ni x)$ of subsets such that $S_k \cap S_{k+1} \neq \emptyset$. Because of the way $H_m$ is constructed from $H_{m-1}'$, no subset can appear in the sequence twice, so the paths involves are the \textit{irreducible paths} in $\Gamma(x \to y)$.

Taking norms on both sides,
\begin{align}
  \Vert [B_y, A_x(t)] \Vert &\leq \Vert B_y\Vert\Vert A_x \Vert\sum_{\substack{\vec S \in \Gamma(x \to y) \\|\vec S| = m}}2^m\Vert H_{S_1}\Vert \Vert H_{S_2}\Vert \dots \Vert H_{S_m}\Vert 1^{\ast m}(t)
  \notag = \Vert B_y\Vert\Vert A_x \Vert\sum_{\substack{\vec S \in \Gamma(x \to y) \\|\vec S| = m}}\frac{(2t)^m}{m!}\omega(\vec S)
  \label{eq:irreducible_path_bound}
\end{align}
where $1^{\ast m}$ denotes convolving $1$ with itself $m$ times.
In the first step, we repeatedly used the triangle inequality, the unitary invariance of the norm, and the submultiplicativity of the operator norm.
\end{proof}
Note that this works for open systems as well if we replace the superoperator $\mathcal H$ with the appropriate Lindbladian $\mathcal L$. This makes it convenient to define the following norm on super-operators:
\begin{defn}
\label{defn:inftyinftynorm}
Given a super-operator $\mathcal L$, define the $\infty-\infty$ norm to be the operator norm induced on $\mathcal L$ by the spectral norm on the space of (bounded) operators on the base Hilbert space:
\begin{align}
\Vert\mathcal L\Vert_{\infty-\infty} = \sup_{\Vert O \Vert = 1}\Vert \mathcal L O \Vert
\end{align}
\end{defn}
To generalize to open systems, we replace $\Vert \mathcal H \Vert_{\infty-\infty} \leq 2 \Vert H \Vert$ with $\Vert \mathcal L \Vert_{\infty-\infty}$ and observe that $\Vert \e^{\mathcal Lt}\Vert_{\infty-\infty} \leq 1$.

Now we will establish a basic Lieb-Robinson (LR) bound to illustrate the manipulations involved.
\begin{prop}
Suppose that $H$ is a $k-$local Hamiltonian with local terms $\Vert H_S \Vert \leq h$ for some $h$. Let $G$ be the graph on subsets $S$ obtained by adding an edge between $S_1, S_2$ whenever $H_{S_1}, H_{S_2}$ are terms in the Hamiltonian and $S_1 \cap S_2 \neq \emptyset$. Then we have
\begin{align}
\Vert [B_y, A_x(t)]\Vert \leq \Vert A_x \Vert \Vert B_y\Vert\qty(\frac{2\mathrm{e}hkt}{\mathsf d(x,y)})^{\mathsf d(x,y)}
\end{align}
where $\mathsf d(x,y) = \min_{S_1 \ni x, S_2 \ni y}\mathsf d(S_1, S_2)$ is the distance induced by the graph distance on $G$.
\end{prop}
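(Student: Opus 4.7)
The strategy is to directly apply the irreducible-path bound of Proposition~\ref{prop:irreducible_paths} and then estimate the resulting sum over paths using Stirling's inequality, with the key input being a bound on the branching factor in the intersection graph $G$.

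First, I invoke Proposition~\ref{prop:irreducible_paths} to obtain
\begin{align}
\|[B_y, A_x(t)]\| \leq \|A_x\|\|B_y\|\sum_{m\ge 0}\sum_{\substack{\vec S\in\Gamma(x\to y)\\|\vec S|=m}}\frac{(2t)^m}{m!}\omega(\vec S).
\end{align}
Since each $\|H_S\|\le h$, the weight of any length-$m$ path satisfies $\omega(\vec S)\le h^m$. Next I bound the cardinality of the sum by counting irreducible paths of length $m$ in $G$. Because the Hamiltonian is $k$-local, the number of neighbors in $G$ of any vertex $S$ is at most $k$ (after appropriately absorbing into $k$ both $|S|\le k$ and the assumed uniform bound on the number of terms through any site). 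The path must end at some $S_m\ni x$, and since the first set $S_1\ni y$ is fixed by the commutator structure, there are at most $k^m$ paths of length $m$, but more importantly, by definition of the induced graph distance $\mathsf d(x,y)$, no irreducible path from a set containing $y$ to a set containing $x$ can have length less than $\mathsf d(x,y)$. Hence only terms with $m\ge \mathsf d(x,y)$ contribute, giving
\begin{align}
\|[B_y,A_x(t)]\|\le \|A_x\|\|B_y\|\sum_{m\ge \mathsf d(x,y)}\frac{(2hkt)^m}{m!}.
\end{align}

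To turn this into the stated single-term estimate, I apply Stirling's inequality $m!\ge (m/e)^m$, which yields $(2hkt)^m/m!\le (2ehkt/m)^m$. The function $(a/m)^m$ is decreasing in $m$ for $m\ge a/e$, so provided $\mathsf d(x,y)\ge 2hkt$ (which is exactly the regime where any Lieb-Robinson statement is nontrivial), the tail sum is geometrically dominated by its leading term, and one obtains
\begin{align}
\sum_{m\ge \mathsf d(x,y)}\Big(\tfrac{2ehkt}{m}\Big)^m\le \Big(\tfrac{2ehkt}{\mathsf d(x,y)}\Big)^{\mathsf d(x,y)}
\end{align}
up to a harmless constant factor that can be absorbed into $e$. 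Combining these steps yields the claim.

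The routine part is the application of Proposition~\ref{prop:irreducible_paths} and the Stirling manipulation. The only mildly delicate step is the counting of irreducible paths: one must make the branching bound of $k$ at each step precise using $k$-locality and the intersection structure of $G$, and keep track of the fact that a path is determined not by a free walk but by a constrained non-self-intersecting sequence. Since any such path must have length at least $\mathsf d(x,y)$, and $\sum_m k^m h^m(2t)^m/m!$ is convergent, this bookkeeping is the main obstacle but is standard for local lattice models.
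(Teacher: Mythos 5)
Your argument tracks the paper's up to the reduction to the tail sum $\sum_{m\ge d}\frac{(2hkt)^m}{m!}$ with $d=\mathsf d(x,y)$; the irreducible-path expansion, the bounds $\omega(\vec S)\le h^m$ and $k^m$ paths of length $m$, and the observation that only $m\ge d$ contribute all match. The gap is in the final summation. After Stirling gives $\sum_{m\ge d}(a/m)^m$ with $a=2\mathrm{e}hkt$, you claim that because the terms decrease for $m\ge a/\mathrm{e}=2hkt$, the tail is "geometrically dominated by its leading term." Monotone decay is not geometric decay: the ratio of consecutive terms is $\frac{a}{(m+1)(1+1/m)^m}$, which is $\approx 1$ near $m\approx a/\mathrm{e}$, so just outside the lightcone $d\approx 2hkt$ there are $\Theta(a)$ comparable terms and the sum greatly exceeds the first term. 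Even the tightest repair of your route---$m!\ge(d/\mathrm{e})^m$ for $m\ge d$, giving $\sum_{m\ge d}(a/d)^m=\frac{(a/d)^d}{1-a/d}$---needs $d>a=2\mathrm{e}hkt$ (stronger than your stated $d\ge 2hkt$ by a factor of $\mathrm{e}$) and carries the prefactor $(1-2\mathrm{e}hkt/d)^{-1}$, which diverges as $d$ approaches $2\mathrm{e}hkt$ from above and so cannot be "absorbed into $\mathrm{e}$" uniformly in $d$ and $t$. At $d=2\mathrm{e}hkt$ the proposition requires $\sum_{m\ge d}\frac{(2hkt)^m}{m!}\le 1$, which your estimate does not deliver.

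The paper closes this with a Chernoff-style reweighting rather than Stirling: for any $\alpha>1$ and $m\ge d$, $\frac{(2hkt)^m}{m!}\le\alpha^{m-d}\frac{(2hkt)^m}{m!}$, hence $\sum_{m\ge d}\frac{(2hkt)^m}{m!}\le\alpha^{-d}\mathrm{e}^{2\alpha hkt}$, and optimizing at $\alpha=d/(2hkt)$ yields exactly $\left(\frac{2\mathrm{e}hkt}{d}\right)^d$ with no extra prefactor. The restriction $\alpha>1$ (i.e.\ $d>2hkt$) is benign because for $d\le 2hkt$ one has $\left(\frac{2\mathrm{e}hkt}{d}\right)^d\ge\mathrm{e}^d\ge 2$, so the trivial commutator bound already implies the claim inside the lightcone. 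Replace your last step with this reweighting, or else you must both sharpen the regime restriction to $d>2\mathrm{e}hkt$ and carry (and justify discarding) the divergent prefactor.
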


\begin{proof}
We can bound $\omega(\vec S) \leq h^{|\vec S|}$ by assumption, so
\begin{align}
\Vert [B_y, A_x(t)] \Vert \leq \Vert A_x \Vert \Vert B_y \Vert \sum_{m=\mathsf d(x,y)}^{\infty}\sum_{\substack{\vec S \in \Gamma(x\to y) \\ |\vec S| = m}}\frac{(2t)^m}{m!}w(\vec S) \leq \Vert A_x \Vert \Vert B_y \Vert \sum_{m=\mathsf d(x,y)}^{\infty}\sum_{\substack{\vec S \in \Gamma(x\to y) \\ |\vec S| = m}}\frac{(2th)^m}{m!} 
\end{align}
The size of $\{\vec S \in \Gamma(x\to y):|\vec S| = m\}$ is bounded by $k^m$, where we loosen the bound to drop the non-self-intersecting requirement and use the fact that $H$ is $k-$local. Thus we have
\begin{align}
\sum_{m=\mathsf d(x,y)}^{\infty}\sum_{\substack{\vec S \in \Gamma(x\to y) \\ |\vec S| = m}}\frac{(2t)^m}{m!}w(\vec S) \leq  \sum_{m=\mathsf d(x,y)}^{\infty}\frac{(2hkt)^m}{m!} 
\end{align}
Suppose $\alpha > 1$ is arbitrary. Then $\alpha^{m-\mathsf d(x,y)} > 1$ for any $m \geq \mathsf d(x,y)$, so
\begin{align}
\sum_{m=\mathsf d(x,y)}^{\infty}\frac{(2hkt)^m}{m!}  \leq \alpha^{-\mathsf d(x,y)}\sum_{m=0}^{\infty}\frac{(2\alpha hk t)^m}{m!} = \alpha^{-\mathsf d(x,y)}\e^{2\alpha hk t}
\end{align}
Choosing $\alpha = \frac{\mathsf d(x,y)}{2hkt}$ (restricting the validity of the bound to outside the lightcone, but since the trivial bound is better inside the lightcone, this is inconsequential) we have
\begin{align}
\alpha^{-\mathsf d(x,y)}\e^{2hk\alpha t} = \qty(\frac{2\mathrm{e}hkt}{\mathsf d(x,y)})^{\mathsf d(x,y)}
\end{align}
\end{proof}
The above defines an emergent LR velocity because it depends on the dimensionless ratio $\frac{vt}{\mathsf d(x,y)}$, where $v = 2\mathrm{e}hk$, and so the commutator can only be appreciable once $vt \sim \mathsf d(x,y)$.

This version of the LR bound was for a strictly local evolution, but it can also
be established for quasilocal evolutions as well.
The key ingredient in extending the proof to the quasi-local case is assuming a
reproducing condition on the terms in the Hamiltonian. First, it is convenient to introduce some notation to talk about the norm of the operator restricted to large, connected subsets:
\begin{defn}
  \label{def:resnorm}
Given any $u,v \in \Lambda$ and a fixed local decomposition $A = \sum_{S
  \subseteq \Lambda}A_S$, define the restricted norm (although not properly a norm or semi-norm itself) of $A$ to
subsets containing $u,v$ as 
\begin{equation}
  \Vert A\Vert_{u,v} = \sum_{S \ni u,v}\Vert A_S\Vert
\end{equation}
\end{defn}
Although the $\kappa-$norm is a convenient way to package locality into two
numbers, in order to prove quasi-local LR bounds, we frequently want to place
restrictions on $\Vert H\Vert_{u,v}$, if $H$ is the system Hamiltonian. In
particular, for some local decomposition, we want the map $(u,v) \mapsto \Vert H\Vert_{u,v}$ to be a
\emph{reproducing} function:
\begin{defn}
Let $f: \Lambda \times \Lambda \to \mathbb R^+$ be a positive function. Then $f$ is called reproducing with parameter $K$ if
\begin{align}
\sum_{z \in \Lambda}f(x,z)f(z,y) \leq K f(x,y)
\end{align}
for any $x,y \in \Lambda$.
\end{defn}

In this way, we will see that a Hamiltonian $H$ satisfies a Lieb-Robinson bound with exponential tails if $\Vert H_{S}\Vert_{u,v} \leq hf(x,y)$ (where $h$ is a constant) for any $x,y\in \Lambda$ if $f$ is both reproducing and decays at most exponentially in $\mathsf d(x,y)$ (although this is sufficient but not necessary; LR bounds in systems with long-range interactions are known \cite{Chen:2019hou,Kuwahara:2019rlw,Tran:2021ogo}. To see how this definition is operationalized, consider the problem of summing over irreducible path weights. If $v_0, v_n \in \Lambda$ are arbitrary, then we define $\vec S \in \Gamma^{(m)}(v_0 \to v_n)$ as sequences of $m$ sets satisfying $S_k \cap S_{k+1} \neq \emptyset$. We can over-bound this sum by choosing one point $v_k$ in $S_k \cap S_{k+1}$ and summing over $v_k \in \Lambda, S_k, S_{k+1} \ni v_k$:
\begin{align}
  \sum_{\vec S \in \Gamma^{(m)}(v_0 \to v_n)} w(\vec S) &\leq \sum_{v_1, \dots, v_{n-1}}\sum_{S_1 \ni v_0,v_1}\dots \sum_{S_n \ni v_{n-1}, v_n}\Vert H_{S_1} \Vert \dots \Vert H_{S_n}\Vert \notag\\
                          &\leq 
\sum_{v_1, \dots, v_{n-1}}\Vert H\Vert_{v_0, v_1}\dots \Vert H\Vert_{v_{n-1}, v_n} 
\leq h^n\sum_{v_1, \dots, v_{n-1}}f(v_0, v_1)\dots f(v_{n-1}, v_n) \leq \frac{(hK)^n}{K}f(v_0,v_n)
  \label{eq:set-intersection}
\end{align}
Using the assumption $f(x,y) \leq \e^{-\mu \mathsf d(x,y)}$ for some $\mu$, we have
\begin{align}
\sum_{m=1}^{\infty}\frac{(2t)^m}{m!}\sum_{\vec S \in \Gamma^{(m)}(x \to y)}w(\vec S) \leq \frac{1}{K}\sum_{m=1}^{\infty}\frac{(2hKt)^m}{m!}f(x,y) \leq \frac{1}{K}\e^{-\mu\mathsf d(x,y)}(\e^{2hKt}-1)
\end{align}
We can then show that such a requirement is naturally satisfied by any $H$ with a bounded $\kappa-$norm:
\begin{lem}
\label{lem:kappa-norm-is-local}
Let $\Vert H \Vert_{\kappa} = h$. Considering the local decomposition which
realizes this norm, for any $\mu < \kappa$, $\beta > d+1$, and $c$ dependent on
$\mu, \beta$, we have
\begin{equation}
\Vert H\Vert_{u,v} \leq \frac{ch\e^{-\mu \mathsf d(u,v)}}{\mathsf d(u,v)^\beta}
\end{equation}
\end{lem}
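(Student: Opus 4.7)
The plan is to exploit two simple observations: first, any set $S$ that contains two distinct sites $u,v \in \Lambda$ must have $\diam(S) \geq \mathsf d(u,v)$, and second, the strict inequality $\mu < \kappa$ gives a positive amount of exponential decay that can be spent to generate a polynomial prefactor. Since the $\kappa$-norm already packages an exponential weight $e^{\kappa \diam(S)}$ inside the sum over terms, this weight will pay for the stretched exponential, and the remaining slack will pay for the $1/\mathsf d(u,v)^\beta$ factor.

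Concretely, fix a local decomposition $H = \sum_S H_S$ that realizes $\Vert H \Vert_\kappa$ within an arbitrarily small tolerance (which can be sent to zero). Writing $r \equiv \mathsf d(u,v)$ and inserting $1 = e^{-\kappa \diam(S)} e^{\kappa \diam(S)}$, the inequality $\diam(S) \geq r$ for every $S \ni u,v$ yields
\begin{align*}
\Vert H \Vert_{u,v} = \sum_{S \ni u,v} \Vert H_S \Vert \leq e^{-\kappa r} \sum_{S \ni u,v} e^{\kappa \diam(S)} \Vert H_S \Vert \leq e^{-\kappa r} \sum_{S \ni u} e^{\kappa \diam(S)} \Vert H_S \Vert \leq h \, e^{-\kappa r}.
\end{align*}
The first step discards the constraint $S \ni v$ (it can only increase the sum), and the final step invokes the definition of $\Vert H \Vert_\kappa$ as the supremum over sites of such sums.

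To convert the pure exponential $e^{-\kappa r}$ into the stated stretched form, I split the decay as $e^{-\kappa r} = e^{-\mu r} \cdot e^{-(\kappa-\mu) r}$. Since $\kappa - \mu > 0$, the function $r \mapsto r^\beta e^{-(\kappa-\mu) r}$ is bounded on $r \geq 0$ by its maximum at $r_\ast = \beta/(\kappa - \mu)$, which equals $c(\mu,\beta) := (\beta/(\kappa-\mu))^\beta e^{-\beta} < \infty$. Thus $e^{-(\kappa - \mu) r} \leq c / r^\beta$, and combining this with the previous display gives the claimed bound $\Vert H \Vert_{u,v} \leq c h \, e^{-\mu \mathsf d(u,v)}/\mathsf d(u,v)^\beta$. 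The restriction $\beta > d+1$ plays no role in the proof itself; it is imposed because this is precisely the regime in which the resulting decay function is reproducing on a $d$-dimensional graph, which is what subsequent Lieb-Robinson arguments will require.

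There is no real obstacle here: this is a bookkeeping lemma that re-expresses the exponential weight hidden inside the $\kappa$-norm as a separate stretched-exponential decay bound on off-diagonal sums. The only subtlety is to be careful that the decomposition realizing the infimum in the definition of $\Vert H \Vert_\kappa$ need not be unique or even achieved exactly, but working up to a tolerance and then sending it to zero, as already done elsewhere in the text, makes this harmless.
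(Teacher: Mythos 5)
Your proposal is correct and follows essentially the same route as the paper's (very terse) proof: both use $\diam(S)\geq\mathsf d(u,v)$ for $S\ni u,v$ to pull out a factor $\e^{-\kappa\mathsf d(u,v)}$, invoke the definition of the $\kappa$-norm, and then trade the surplus exponential decay $\e^{-(\kappa-\mu)\mathsf d(u,v)}$ for the polynomial prefactor. If anything you are slightly more careful than the printed argument, which drops both the $S\ni u$ and $S\ni v$ constraints before invoking the $\kappa$-norm when it should retain one of them.
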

\begin{proof}
Notice that 
\begin{align}
\sum_{S \ni u,v} \Vert H_S \Vert \leq \e^{-\kappa \mathsf d(u,v)}\sum_{S}\e^{\kappa \diam(S)}\Vert H_S \Vert \leq  h\e^{-\kappa \mathsf d(u,v)} \leq \frac{ch\e^{-\mu \mathsf d(u,v)}}{\mathsf d(u,v)^\beta}
\end{align}
\end{proof}

\begin{lem}[Def. 12 in \cite{hastings_mobilitygap}]
The function $f(u,v) = \e^{-\mu \mathsf d(u,v)}/\mathsf d(u,v)^\beta$ for $\mu >
0, \beta > d+1$ is reproducing. We will call the reproducing parameter $K$.
\end{lem}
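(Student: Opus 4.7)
The plan is to decouple the exponential factor of $f$ from the polynomial factor. By the triangle inequality $\mathsf d(x,z) + \mathsf d(z,y) \geq \mathsf d(x,y)$, we have
\begin{equation*}
\e^{-\mu \mathsf d(x,z)} \e^{-\mu \mathsf d(z,y)} \leq \e^{-\mu \mathsf d(x,y)},
\end{equation*}
so the exponential already reproduces itself. It therefore suffices to produce a constant $K = K(\mu,\beta,M,d)$ satisfying the purely polynomial inequality
\begin{equation*}
\sum_{z \in \Lambda} \frac{1}{\mathsf d(x,z)^\beta \, \mathsf d(z,y)^\beta} \leq \frac{K}{\mathsf d(x,y)^\beta}.
\end{equation*}
As usual, the coincident terms $z \in \{x,y\}$ should be interpreted via a mild regularization such as replacing $\mathsf d$ by $\mathsf d \vee 1$; this is harmless in the intended application of \Cref{lem:kappa-norm-is-local}, where $f$ is only used to upper bound a genuinely finite quantity.

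Next, I would split $\Lambda = A \cup B$ with $A = \{z : \mathsf d(x,z) \geq \tfrac{1}{2}\mathsf d(x,y)\}$ and $B = \{z : \mathsf d(z,y) \geq \tfrac{1}{2}\mathsf d(x,y)\}$. The triangle inequality forces every $z$ into at least one of these sets, since otherwise $\mathsf d(x,y) \leq \mathsf d(x,z) + \mathsf d(z,y) < \mathsf d(x,y)$. On $A$ we have $\mathsf d(x,z)^{-\beta} \leq 2^\beta /\mathsf d(x,y)^\beta$, so the sought prefactor factors out of the sum and reduces the $A$-contribution to a multiple of $\sum_{z}\mathsf d(z,y)^{-\beta}$. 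The symmetric argument handles $B$.

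For the remaining one-point tail, the $d$-dimensional assumption $|B_r(u)| \leq Mr^d$ of \Cref{defn:d-dim} crudely bounds the number of $z$ with $\mathsf d(z,y) = r$ by $|B_r(y)| \leq Mr^d$, which gives
\begin{equation*}
\sum_{z \neq y} \frac{1}{\mathsf d(z,y)^\beta} \leq M \sum_{r=1}^\infty \frac{1}{r^{\beta - d}}.
\end{equation*}
This $p$-series converges precisely when $\beta > d+1$, which is the stated hypothesis and the source of that otherwise slightly mysterious threshold (a sharper surface estimate $|\partial B_r(y)| \lesssim r^{d-1}$ would yield convergence for all $\beta > d$, but the coarse bound is both cleaner and sufficient). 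Combining the estimates gives, for instance, $K = 2^{\beta+1} M \sum_{r \geq 1} r^{-(\beta-d)}$, uniformly in $x,y$. I do not anticipate a genuine obstacle here; the only subtlety is the bookkeeping at the diagonal, which is dispatched by the regularization convention noted above.
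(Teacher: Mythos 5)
Your proof is correct and follows essentially the same route the paper gestures at (decouple the exponential via the triangle inequality, split $z$ into the ``closer-to-$x$'' and ``closer-to-$y$'' halves, and then use the $d$-dimensionality bound $|B_r| \leq Mr^d$ to make the resulting one-point sum a convergent $p$-series, which requires precisely $\beta > d+1$). The paper itself only sketches this in one sentence — ``bounding the sum with an integral'' — so your shell-count version is the natural discrete realization of the same idea, and your flagging of the diagonal regularization $\mathsf d \vee 1$ correctly handles the only wrinkle.
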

The above is proven by bounding the sum with an integral, which is where the
finite dimensionality of the lattice comes into play. 
Putting the last few assertions together leads to the following bound, which we will employ heavily in the remainder of the paper:
\begin{lem}[Thm. 3.7, \cite{chen2023speed}]
\label{lem:standardLRbound}
Suppose that $O_S$ is an operator supported on $S$ which is disjoint from another $B \subset \Lambda$. Further suppose that $H$ is a Hamiltonian satisfying $\Vert H \Vert_{\kappa} \leq h$ for some constant $h$. Then for any $\mu < \kappa$ and $\beta > d+1$ there exists a constant $\CLR$ such that
\begin{align}
\Vert O_S(t)\Vert_{S, B} \leq \CLR\min(|\partial S|,|\partial B|)\Vert O_S \Vert\frac{\e^{-\mu \mathsf d(S, B) + vt}}{\mathsf d(S,B)^\beta}
\end{align}
where $c$ is a constant dependent on $\mu, \beta$ and $v = 2Kh$.
\end{lem}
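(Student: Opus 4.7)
The plan is to upgrade the strictly local Lieb--Robinson argument shown above to quasi-local Hamiltonians by combining three tools already assembled in this appendix: the irreducible-path representation of Heisenberg evolution (Proposition \ref{prop:irreducible_paths} and the Schwinger--Karplus identity \eqref{eq:Schwinger_Karplus}), the conversion of a $\kappa$-norm hypothesis into a reproducing two-point bound (Lemma \ref{lem:kappa-norm-is-local} together with the reproducing-function lemma that immediately follows it), and the telescoping sum over paths carried out in \eqref{eq:set-intersection}.

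First I would fix a local decomposition $H=\sum_T H_T$ saturating $\Vert H\Vert_\kappa \le h$ up to an arbitrary tolerance, so that Lemma \ref{lem:kappa-norm-is-local} yields $\Vert H\Vert_{u,v}\le h\,f(u,v)$ with $f(u,v) = c\,e^{-\mu \mathsf d(u,v)}/\mathsf d(u,v)^\beta$ reproducing with parameter $K$. Iterating the Schwinger--Karplus expansion gives a path-indexed decomposition of $O_S(t)$ whose terms are nested commutators labelled by irreducible sequences $\vec T=(T_1,\dots,T_m)$ with $T_1\cap S\neq\emptyset$ and consecutive sets intersecting. The restricted norm $\Vert O_S(t)\Vert_{S,B}$ is then controlled by those paths whose terminal set $T_m$ meets $B$.

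For a path with designated anchor vertices $x\in T_1$ and $y\in T_m$, the telescoping computation \eqref{eq:set-intersection} controls the sum over intermediate sets by $h^m K^{m-1} f(x,y)$, and summing the $(2t)^m/m!$ factors over $m$ gives $\tfrac{1}{K}(e^{2hKt}-1)f(x,y)\le \tfrac{1}{K}e^{\LRvel t}f(x,y)$ with $\LRvel=2hK$. To produce the geometric prefactor, note that because $T_m$ meets $B$ while $T_1$ intersects $S$ and $S\cap B=\emptyset$, the first set $T_1$ must straddle $\partial S$, so the anchor $x$ can be chosen in $\partial S$. Summing $f(x,y)$ over such $x$, together with the decay $f(x,y)\le c\,e^{-\mu \mathsf d(S,B)}/\mathsf d(S,B)^\beta$ absorbed into the constant $\CLR$, yields a multiplicative factor $|\partial S|$. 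Reversing the roles of $S$ and $B$ in the path expansion (the argument is symmetric in the two endpoints) gives the analogous bound with $|\partial B|$, and taking the smaller gives $\min(|\partial S|,|\partial B|)$.

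The main technical hurdle will be the careful bookkeeping that links the restricted norm $\Vert\cdot\Vert_{S,B}$ to the set-theoretic structure of irreducible paths: one must choose a local decomposition of $O_S(t)$ consistent with the Dyson expansion, verify that each surviving term genuinely has support straddling $S$ and $B$ without double-counting anchor vertices, and check that the constraint ``$T_m$ reaches $B$'' really forces $T_1$ to touch $\partial S$ rather than lie deep inside $S$. Once this combinatorial accounting is set up, the remaining reproducing-function estimates and exponential sums are routine and mirror the corresponding steps in the strictly local proof above.
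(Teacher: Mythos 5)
Your proposal follows the same irreducible-path plus reproducing-function strategy that the paper attributes to \cite{chen2023speed}, and the ingredients you identify---Proposition~\ref{prop:irreducible_paths}, Lemma~\ref{lem:kappa-norm-is-local}, the reproducing-function lemma, and the telescoping sum in \eqref{eq:set-intersection}---are exactly the right ones, with the exponential-over-factorial resummation producing the $e^{2hKt}-1$ factor and the summability of $f$ (at the cost of shrinking $\mu$ slightly) absorbing the sum over the free endpoint, so that only one boundary's worth of anchors survives in the prefactor. The one step you correctly flag as a hurdle is in fact mis-justified as stated: ``$T_m$ meets $B$'' does \emph{not} force $T_1$ to straddle $\partial S$, since $T_1$ can sit entirely inside $S$ with the path only crossing $\partial S$ at a later step; the standard fix, and the one implicit in the paper's own Duhamel setup (see the simulation corollary), is to split $H$ into terms internal to $S$, external to $S$, and boundary-straddling, and iterate Duhamel so that the boundary-straddling term is peeled off first, which then guarantees the anchor can be taken in $\partial S$ (and symmetrically in $\partial B$).
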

The prefactor comes from summing over the possible origins of each path in
$\partial S$ and endpoints in $\partial B$, leading to a prefactor of $|\partial
S||\partial B|$ which can be further improved to $\min(|\partial B|, |\partial
S|)$ by recognizing that all paths of length $\mathsf d(S,B)$ are being counted,
so only origin or final points of the paths need to be summed
over \cite{chen2023speed}.

\section{Local Schrieffer-Wolff transformation}
The Schrieffer-Wolff transformation is an operator version of perturbation
theory (i.e. in the Heisenberg picture). To begin, we consider a system of the form $H = H_0 + V_{\epsilon}$ acting on $V_\ast$ sites, where $H_0$ satisfies the following non-resonance condition:
\begin{defn}[Non-resonant potentials]\label{defn:noresonance}
Let $\Lambda$ denote the vertex set of a graph.  We say that a geometrically $k-$local Hamiltonian
\begin{equation}
  H_0 = \sum_{I:\diam(I) \leq k}h_IZ_I
\end{equation}
where $Z_I$ denotes a product of Pauli-Z operators\footnote{This works for commuting Hamiltonians in general, not just those diagonal in the computational basis. However, non-trivial examples of such Hamiltonians, such as quantum error correcting codes, feature symmetries leading to degeneracies that violate the naive non-resonance condition. Since local operators cannot have matrix elements between states related by such symmetries, we expect these results to extend to such systems, but more analysis is required.}acting on
$I$, obeys an $(h, r_\ast,\Delta)$-non-resonant condition if $|h_I|\leq h$ and
for any $x$, $ S \subseteq B_{r_\ast}(x)$, we have
\begin{equation}
  \Delta_{S}(H_0) \geq h\Delta
\end{equation}
where $\Delta_S(H_0)$ represents the minimal gap of $H_0|_{S} = \sum_{I \subseteq S}h_IZ_I$.
\end{defn}

The perturbation $V_{\epsilon}$ we consider takes the
generic form
\begin{align}
V_{\epsilon} = \sum_{r \geq 1}^{\infty}\epsilon^rV_r \ \text{ with } \ \rng(V_r) \leq r \ \text{ and } \ \sum_r \Vert V_r\Vert_{\kappa = 0}  \leq h
\end{align}
which generalizes the perturbations considered in \cite{absenceofconduction}.

The goal is to look for an anti-Hermitian generator \begin{equation}
    T = \sum_{q =1}^\infty \epsilon^q T_q
\end{equation} such that $\mathrm{e}^T(H_0 + \epsilon V)\mathrm{e}^{-T}$ is diagonal in the eigenbasis of $H_{0}$ at all orders in $\epsilon$. 
The Schrieffer-Wolff transformation for many-body systems is developed in
\cite{bravyi2011schrieffer}, and focuses on
\emph{block-diagonalizing} $H$ between low- and high-energy subspaces
on an \emph{infinite} system to a \emph{finite} order in $\epsilon$ when $H_0$ has a \emph{single gap} in its spectrum. We
will adapt this approach to \emph{diagonalize} $H$ to \emph{all} orders in $\epsilon$ when
$H_0$ is \emph{non-degenerate} and acts on a \emph{finite} system. 
The main purpose of this section will be to prove the following:
\begin{thm}
\label{thm:quasilocal_generator}
Let $H = H_0 + V_{\epsilon}$ be restricted to the terms acting within a region of volume $V_\ast$, and let $H_0$ be
non-degenerate with minimal gap $\Delta$, $\rng(H_0) = k$, and $\Vert H \Vert_{\kappa = 0} = h$ (this is to say that $H = \sum_{I \subseteq \Lambda}H_I$ where $\sup_{I}\Vert H_I\Vert \leq h$, or in other words, $H_0$ has bounded local strength). 
Further suppose that $\rng(H_0) = k$ and $
\sum_{r = 1}^\infty\Vert V_r\Vert_{\kappa=0} \leq h$ with $\rng(V_r) \leq r$.
Then for
any $\tilde\epsilon \equiv A \epsilon \leq 1$,
\begin{equation}
\Vert T \Vert_{\kappa} < 1 \text{, where}\hspace{.2cm} A = 32\pi^2\qty(\frac{V_\ast h}{\Delta})^2 \ ,
\end{equation}
$\kappa = \frac{1}{5k + 1}\log(\tilde \epsilon^{-1})$ 
and $T$ is the
antihermitian generator of a unitary transformation $U_{\mathrm{SW}} = \e^{T}$
diagonalizing $H$.
\end{thm}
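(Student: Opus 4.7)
The plan is to construct $T=\sum_{q\ge 1}\epsilon^q T_q$ iteratively, order-by-order in $\epsilon$, so that $\e^T H \e^{-T}$ is diagonal in the eigenbasis of $H_0$. The key tool is the Liouvillian $\mathcal L_0\equiv [H_0,\cdot]$ which is invertible on the space of operators whose matrix elements vanish on the diagonal of $H_0$. Two facts about $\mathcal L_0^{-1}$ are central: first, the non-resonance assumption $\Delta_{S}(H_0)\ge h\Delta$ implies $\|\mathcal L_0^{-1} X\|\le \frac{c}{h\Delta}\|X\|$ for some numerical constant $c$ (arising, e.g., from an integral/filter representation $\mathcal L_0^{-1}X=\int g(t)\,\e^{\ii H_0 t}X\e^{-\ii H_0 t}dt$ with $\|g\|_1\lesssim 1/(h\Delta)$); second, because $H_0$ is a sum of commuting Pauli-$Z$ strings of diameter $\le k$, evaluating $\mathcal L_0^{-1}X$ matrix-element by matrix-element in the computational basis shows that $\rng(\mathcal L_0^{-1}X)\le \rng(X)+2k$. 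Combined with Proposition~\ref{prop:kappa_norm_scaling}, this converts into a $\kappa$-norm bound $\|\mathcal L_0^{-1}X\|_\kappa \le \frac{c}{h\Delta}\e^{2k\kappa}\|X\|_\kappa$.

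Next I would derive the recursion for $T_q$ by expanding $\e^T H \e^{-T}=\sum_{n\ge 0}\frac{1}{n!}\ad_T^n H$ and demanding that the off-diagonal part vanish at every order in $\epsilon$. Collecting terms of order $\epsilon^q$, the requirement is
\begin{equation}
[H_0,T_q]=-\bigl(R_q\bigr)_{\mathrm{off}}, \qquad T_q=\mathcal L_0^{-1}\bigl(R_q\bigr)_{\mathrm{off}},
\end{equation}
where $R_q$ is a sum of nested commutators $\ad_{T_{j_1}}\!\cdots\ad_{T_{j_m}} V_r$ and $\ad_{T_{j_1}}\!\cdots \ad_{T_{j_m}} D_{r}$ with $j_1+\cdots+j_m+r=q$ (the $m=0$ case being $V_q$), and the diagonal projection defines the diagonal correction $D_q$ on the same stroke. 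This gives a combinatorial tree-structured expression for each $T_q$.

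I would then bound $\|T_q\|_\kappa$ by induction on $q$. Every nested commutator incurs a factor of $2V_*$ from Lemma~\ref{lem:submultiplicative} (the finite-region hypothesis forces $\loc(A)+\loc(B)\le 2V_*$), every application of $\mathcal L_0^{-1}$ contributes $\frac{c}{h\Delta}\e^{2k\kappa}$, and the range of each newly formed $T_q$ grows by at most $(m+1)k$ from the accumulated commutators with $H_0$-valued pieces plus the final $\mathcal L_0^{-1}$. A generating-function/bootstrap estimate on the BCH tree then yields
\begin{equation}
\|T_q\|_\kappa \le \frac{1}{\epsilon}\,(A\epsilon)^{q}\,\e^{C k q\kappa}
\end{equation}
for suitable $C\le 5$ and $A$ proportional to $V_*^2 h/\Delta^2$ (the extra power of $V_*h/\Delta$ compared to a single inversion comes from pairing the submultiplicativity factor with one $\mathcal L_0^{-1}$ and separately accounting for the resolvent-bound in the inductive step). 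Summing $\|T\|_\kappa\le\sum_q\epsilon^q\|T_q\|_\kappa$ yields a geometric series in $\tilde\epsilon\,\e^{Ck\kappa}$, which converges to something strictly less than $1$ precisely when $\tilde\epsilon\,\e^{Ck\kappa}<\frac{1}{2}$, say. Choosing $\kappa=\frac{1}{5k+1}\log\tilde\epsilon^{-1}$ arranges $\tilde\epsilon\,\e^{5k\kappa}=\tilde\epsilon^{1/(5k+1)}\le 1$, which closes the bound.

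The main obstacle is the combinatorial explosion of the BCH tree at high orders together with the simultaneous growth of operator range: one must show that the $\e^{\kappa\cdot(\text{range})}$ weight in the $\kappa$-norm does not overwhelm the $\tilde\epsilon^q$ suppression. The trade-off is resolved precisely by the choice of $\kappa$ above, which is tuned so that each power of $\epsilon$ can absorb the worst-case range growth $5k$ per order plus the "$+1$" cost of the final resolvent application. A book-keeping subtlety to watch is that nested commutators with the diagonal corrections $D_j$ must also be controlled inductively; however $D_j=(R_j)_{\mathrm{diag}}$ admits the same $\kappa$-norm and range bounds as $(R_j)_{\mathrm{off}}$ up to factors already absorbed in the constant $A$, so the same induction closes on $\{T_q,D_q\}$ jointly.
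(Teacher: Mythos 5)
Your architecture matches the paper's proof closely: a local inverse Liouvillian (the paper's $L_S$, bounded via the Hadamard-product/Schur-multiplier estimate $\Vert L_S(X_S)\Vert\le\pi\Vert X_S\Vert/(\Delta_S\sqrt 3)$ in place of your filter-function representation), the range growth $\rng(L(X))\le\rng(X)+2k$ (Prop.~\ref{prop:LSlocality}), the $\kappa$-norm estimate (Prop.~\ref{prop:kappaLS}), the order-by-order recursion $T_q=L(V^{(q-1)})$ with the BCH-tree structure, the almost-submultiplicativity factor $2V_*$, and the choice of $\kappa$ tuned so that range growth per order is absorbed by powers of $\tilde\epsilon$. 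So the approach is essentially the same.

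The step that does not close as written is the final convergence argument. Your asserted inductive bound $\Vert T_q\Vert_\kappa\lesssim A^q\e^{Ckq\kappa}$ does not propagate through the recursion: the right side involves a sum over ordered compositions $\vec\lambda\vdash q$ of products $\mu_{\lambda_1}\cdots\mu_{\lambda_l}$ weighted by $b^l/l!$, and inserting a geometric ansatz $\mu_j\le\rho^j$ gives back $\rho^q\sum_l\binom{q-1}{l-1}b^l/l!$, whose $l$-sum grows with $q$ (roughly like $\e^{2\sqrt{bq}}$), so the ansatz does not reproduce itself. The paper handles this by working with the generating function $\mu(\epsilon)=\sum_q\mu_q\epsilon^q$ itself, which satisfies the functional equation $\widetilde\mu=(1+\overline\epsilon)\e^{b\widetilde\mu}-b\widetilde\mu-1$; the implicit function theorem locates the branch point $\overline\epsilon_{\mathrm{crit}}$ and shows $\widetilde\mu\le 1/[b(b+1)]\le 1/b^2$ up to it, whence $\Vert T\Vert_\kappa\le b\mu<1/b<1$. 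Your subsequent geometric-series closure also fails at the stated endpoint: you require $\tilde\epsilon\,\e^{5k\kappa}<\tfrac12$, but the prescribed $\kappa=\tfrac{1}{5k+1}\log\tilde\epsilon^{-1}$ only gives $\tilde\epsilon\,\e^{5k\kappa}=\tilde\epsilon^{1/(5k+1)}\le 1$, so $\sum_q(\tilde\epsilon\e^{5k\kappa})^q$ diverges as $\tilde\epsilon\to 1$. Reaching the full range $\tilde\epsilon\le 1$ is exactly what the implicit-function-theorem analysis buys; a straight geometric series cannot. (A minor slip: you write $A\propto V_*^2h/\Delta^2$, missing a power of $h$ relative to the paper's $A=32\pi^2(V_*h/\Delta)^2$.)
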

To prove the statement above, we will construct an absolutely convergent series
$T = \sum_{q=1}^{\epsilon} \epsilon^q T_q$. Stronger than this is that $T$ is a \textit{local} operator with locality controlled by $\epsilon$, which is captured by the $\kappa-$norm.

\subsection{Constructing the generator}
\begin{defn}
If $S \subseteq \Lambda$ and $X_S$ is an operator supported within $S$, define
\begin{equation}
L_S(X_S) = \sum_{E_i \neq E_j}\ketbra{i}{j} \frac{\bra{i}X_S\ket{j}}{E_i - E_j}
\end{equation}
where $\{\ket{i}\}$ are the eigenstates of of $H_S$, defined as the sum of the terms in $H_0$ acting non-trivially in $S$. This operator is well-defined in the case when $H_0$ is non-degenerate, which will follow from the non-resonance condition.

\end{defn}

\begin{prop} \label{prop:H0LScommutator}
For any $X_S$, we have $[H_0, L_S(X_S)] = O_S(X_S)$, where 
\begin{equation}
O_S(X_S) = \sum_{E_i \neq E_j}\ketbra{i}{j}\bra{i}X_S \ket{j} \ ,
\end{equation}
i.e. $O_S$ projects onto the off-diagonal matrix elements between eigenstates of $H_S$.
\end{prop}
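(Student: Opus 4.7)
The identity $[H_0, L_S(X_S)] = O_S(X_S)$ reduces to a direct spectral calculation. I would first decompose $H_0 = H_S + R$, where $H_S = \sum_{I : I \cap S \neq \emptyset} h_I Z_I$ is (by the definition in the statement) the sum of the $H_0$-terms acting non-trivially in $S$, and $R = \sum_{I : I \cap S = \emptyset} h_I Z_I$ is supported on $\Lambda \setminus S$. Because every $Z_I$ is diagonal in the computational basis, so are $H_S$ and $R$; in particular the eigenstates $\{|i\rangle\}$ of $H_S$ may be taken to be computational basis states, and $R |i\rangle = F_i |i\rangle$, where $F_i$ depends only on the restriction of $|i\rangle$ to $\Lambda \setminus S$. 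By linearity I split $[H_0, L_S(X_S)] = [H_S, L_S(X_S)] + [R, L_S(X_S)]$ and handle the two pieces separately.

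For the first piece, the elementary identity $[H_S, |i\rangle\langle j|] = (E_i - E_j)|i\rangle\langle j|$, applied termwise to the definition of $L_S(X_S)$, cancels the denominators and yields $[H_S, L_S(X_S)] = \sum_{E_i \neq E_j} \langle i|X_S|j\rangle \, |i\rangle\langle j| = O_S(X_S)$. For the second piece, the key observation is that every surviving term in $L_S(X_S)$ carries the factor $\langle i|X_S|j\rangle$, which vanishes unless $|i\rangle$ and $|j\rangle$ agree on $\Lambda \setminus S$, because $X_S$ acts as the identity there. For any such pair the eigenvalues $F_i$ and $F_j$ of $R$ coincide, so in
\begin{equation}
[R, L_S(X_S)] \;=\; \sum_{E_i \neq E_j} (F_i - F_j)\, \frac{\langle i|X_S|j\rangle}{E_i - E_j}\, |i\rangle\langle j|
\end{equation}
the termwise prefactor $(F_i - F_j)$ is zero. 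Hence $[R, L_S(X_S)] = 0$, and the two contributions combine to give the claim.

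\textbf{Main obstacle (mild).} The only real subtlety is reading ``the sum of the terms in $H_0$ acting non-trivially in $S$'' as $\sum_{I \cap S \neq \emptyset} h_I Z_I$ rather than $\sum_{I \subseteq S} h_I Z_I$. With the latter reading $R$ would include terms straddling $S$ and $\Lambda \setminus S$, its diagonal values $F_i$ would also depend on $|i\rangle|_S$, and the termwise vanishing would fail. With the intended reading $R$ is genuinely supported off $S$, which is exactly what the argument needs. Beyond this, no non-resonance input is used apart from ensuring the denominators $E_i - E_j$ in $L_S(X_S)$ are non-zero for the pairs that appear, which is guaranteed by the non-degeneracy of $H_0$ (and hence of $H_S$) assumed in the surrounding Schrieffer--Wolff setup.
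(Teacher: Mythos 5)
Your proof is correct and takes essentially the same approach as the paper: decompose $H_0 = H_S + R$ with $H_S=\sum_{I\cap S\neq\emptyset}h_IZ_I$ (the reading you rightly flag as the intended one), telescope the numerator against the denominator in $[H_S, L_S(X_S)]$ to get $O_S(X_S)$, and show $[R, L_S(X_S)] = 0$. If anything, your justification for the second step---that $\langle i|X_S|j\rangle \neq 0$ forces the $R$-eigenvalues of $|i\rangle$ and $|j\rangle$ to coincide---is more careful than the paper's one-line support-disjointness claim, which is strictly speaking imprecise since $L_S(X_S)$ extends into the $k$-neighborhood of $S$ and can therefore overlap the support of terms in $R$, even though the commutator does vanish for exactly the reason you give.
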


\begin{proof}
Since the support of $L_S(X_S)$ does not intersect with the support of any term in $H_0$ outside of $S$, we have
\begin{equation}
[H_0, L_S(X_S)] = [H_S, L_S(X_S)] = \sum_{E_i \neq E_j}\qty[H_S\ketbra{i}{j}\frac{\bra{i}X_S \ket{j}}{E_i-E_j} - \ketbra{i}{j}H_S\frac{\bra{i}X_S \ket{j}}{E_i-E_j}] = O_S(X_S)
\end{equation}
This proves the claim.
\end{proof}

\begin{lem}
The operator norm of $L_S(X_S)$ is bounded by
\begin{align}
\Vert L_S(X_S) \Vert \leq \frac{\pi\Vert X_S \Vert}{\Delta_S\sqrt{3}} \label{eq:LSXSbound}
\end{align}
where $\Delta_S$ is the minimum spectral gap of $H_S$.
\end{lem}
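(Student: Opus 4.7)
The plan is to represent $L_S(X_S)$ as an integral of the Heisenberg-evolved operator against a real scalar kernel and then bound its operator norm by the $L^1$ norm of the kernel. Expanding $X_S$ in an eigenbasis of $H_S$ and computing in the energy eigenbasis gives
\[
\int_{-\infty}^{\infty} f(t)\, e^{\ii H_S t} X_S e^{-\ii H_S t}\, dt \;=\; \sum_{i,j}\hat f\bigl(-(E_i-E_j)\bigr)\,\langle i|X_S|j\rangle\,|i\rangle\!\langle j|,
\]
where $\hat f(\omega) = \int f(t)\, e^{-\ii\omega t}\,dt$. Comparing with the definition of $L_S$, it is enough to choose $f$ with $\hat f(\omega) = -1/\omega$ on every nonzero spectral gap $\omega = E_i - E_j$, together with $\hat f(0)=0$ to kill contributions from the $E_i=E_j$ blocks. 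By the non-resonance hypothesis, every nonzero gap lies in $\{|\omega|\geq \Delta_S\}$, so the values of $\hat f$ on the open window $(-\Delta_S,\Delta_S)$ are free parameters, to be chosen to minimize $\|f\|_{L^1}$. Given any such representation, unitary invariance of the operator norm and the triangle inequality immediately yield $\|L_S(X_S)\| \leq \|X_S\|\,\|f\|_{L^1}$.

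The remaining task is to construct an admissible $f$ with $\|f\|_{L^1} \leq \pi/(\sqrt 3\,\Delta_S)$. I would use Cauchy--Schwarz with a polynomial weight: for any $b>0$,
\[
\|f\|_{L^1}^2 \;\leq\; \Bigl(\int \tfrac{dt}{1+bt^2}\Bigr)\Bigl(\int(1+bt^2)|f(t)|^2\,dt\Bigr) \;=\; \tfrac{\pi}{\sqrt b}\bigl(\|f\|_{L^2}^2 + b\,\|t f\|_{L^2}^2\bigr).
\]
Plancherel then converts the two $L^2$ norms on the right into integrals of $|\hat f|^2$ and $|\hat f'|^2$ over $\mathbb R$. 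The pieces on $|\omega|\geq \Delta_S$ are fully determined by the constraint $\hat f(\omega)=-1/\omega$ and evaluate to explicit powers of $1/\Delta_S$; on $|\omega|<\Delta_S$ I would pick a continuous odd interpolation---the linear choice $\hat f(\omega) = -\omega/\Delta_S^2$ already suffices to make $\hat f'$ lie in $L^2$---and then optimize over $b$. This reduces the problem to a one-variable calculus problem whose minimizer produces the constant $\pi/(\sqrt 3\,\Delta_S)$.

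The main obstacle is not the $1/\Delta_S$ scaling, which is immediate from the above, but the precise prefactor. The admissible interpolation of $\hat f$ on $(-\Delta_S,\Delta_S)$ and the weight in Cauchy--Schwarz jointly determine this constant, so both must be picked carefully to avoid a loose factor; in particular, any extension of $\hat f$ with a jump at $\pm\Delta_S$ produces a delta in $\hat f'$ and forces $\|tf\|_{L^2}=\infty$, invalidating the Cauchy--Schwarz step. An alternative route that yields essentially the same constant is to write $f$ directly as a principal-value representation of $-1/\omega$ mollified by a compactly supported symmetric bump whose width is optimized; either way, the $\pi/\sqrt 3$ constant drops out of a single-variable minimization and the conclusion of the lemma follows.
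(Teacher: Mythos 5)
Your proof is correct, but it takes a genuinely different route from the paper, so the comparison is worth recording. The paper writes $L_S(X_S)=A\circ X_S$ as a Hadamard (Schur) product with $A_{ij}=(E_i-E_j)^{-1}$ off-diagonal, applies the Schur-multiplier factorization bound $\Vert A\Vert_{\mathrm S}\leq\max_i\Vert A\ket{i}\Vert_2$, and then sums a Basel series using $|A_{ij}|\leq 1/(|i-j|\Delta_S)$; this is purely algebraic and yields $\pi/\sqrt 3$ directly from $\sum n^{-2}=\pi^2/6$. You instead realize $L_S$ as a filter: $L_S(X_S)=\int f(t)\,\e^{\ii H_S t}X_S\e^{-\ii H_S t}\,\dd t$ with $\hat f(\omega)=-1/\omega$ on $|\omega|\geq\Delta_S$, so that $\Vert L_S(X_S)\Vert\leq\Vert X_S\Vert\Vert f\Vert_{L^1}$, and then you control $\Vert f\Vert_{L^1}$ via Cauchy--Schwarz against $(1+bt^2)^{-1}$ plus Plancherel. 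This is the quasi-adiabatic-continuation style argument and is perfectly sound; your representation uses only that every nonzero gap of the commuting $H_S$ is at least $\Delta_S$, exactly as the paper's estimate $|A_{ij}|\leq 1/(|i-j|\Delta_S)$ does.

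Two small remarks. First, you should note explicitly that with the linear interpolant $\hat f(\omega)=-\omega/\Delta_S^2$ on the window, both $\hat f$ and $\hat f'$ are in $L^2$, so $f$ and $tf$ are in $L^2$ by Plancherel, and hence $f\in L^1$ by your Cauchy--Schwarz step — this legitimizes the integral representation before you ever try to bound it. Second, the constant you claimed is not what actually falls out: with the linear interpolant (and $\Delta_S=1$ after rescaling) one has $\int|\hat f|^2=\int|\hat f'|^2=8/3$, so
\begin{equation}
\Vert f\Vert_{L^1}^2\leq\frac{1}{2\sqrt b}\qty(\tfrac{8}{3}+b\cdot\tfrac{8}{3}),
\end{equation}
which is minimized at $b=1$ giving $\Vert f\Vert_{L^1}\leq 2\sqrt 2/\sqrt 3\approx 1.63$. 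That is \emph{smaller} than $\pi/\sqrt 3\approx 1.81$, so your method actually yields a slightly sharper bound than the paper's; in particular the lemma follows a fortiori, but the assertion that the single-variable minimization ``produces the constant $\pi/(\sqrt 3\,\Delta_S)$'' should be corrected — it produces $2\sqrt 2/(\sqrt 3\,\Delta_S)$, and you simply note $2\sqrt 2<\pi$.
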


\begin{proof}
Let $A$ be a matrix defined element-wise by \begin{equation}
    A_{ij} = \left\lbrace \begin{array}{ll} (E_i-E_j)^{-1} &\ i\ne j\\ 0 &\ i = j\end{array}\right..
\end{equation} Then we observe that
\begin{align}
L_S(X_S) = \sum_{E_i \neq E_j}\ketbra{i}{j} \frac{\bra{i}X_S\ket{j}}{E_i - E_j} = A \circ X_S
\end{align}
where $A \circ X_S$ denotes the Hadamard product, defined by $(A \circ B)_{ij} =
A_{ij}B_{ij}$. The norm \begin{equation}
    \Vert A \Vert_{\mathrm{S}} \equiv \sup_{\Vert B \Vert = 1}\Vert A \circ B
  \Vert
\end{equation} is commonly called the Schur multiplier norm. The Cauchy-Schwartz inequality implies  $\Vert A \Vert_{\mathrm{S}} \leq \max_i \Vert A
  \ket{i}\Vert_2$ 
  \cite{shurmultipliers}, where $\Vert \cdot \Vert_2$ is the 2-norm on vectors. Since
  $\Delta$ is the minimum gap between \emph{adjacent} energy levels, we can see that $|A_{ij}| \leq 1/(|i-j|\Delta)$, and so we have
  \begin{align}
 \Vert L_S(X_S) \Vert^2 \leq  2\Vert X_S \Vert^2\sum_{n=1}^{\infty}\frac{1}{n^2\Delta^2}  = \frac{\pi^2\Vert X_S \Vert^2}{3\Delta^2},
  \end{align}
  which implies \eqref{eq:LSXSbound}.
\end{proof}

\begin{defn}
Fix a tolerance $\delta>0$ and a local decomposition $X = \sum_S X_S$ achieving the $\kappa$-norm up to this tolerance. We define
\begin{align}
L(X) = \sum_S L_S(X_S).
\end{align}
We will leave the aforementioned dependence of $L$ on $\delta$ and the chosen local decomposition implicit in the notation.
\end{defn}
Strictly speaking, we always need to work with a local decomposition achieving a norm that is $\delta$-close to the $\kappa$-norm. However, this will not affect the structure of our proofs, so we will leave the dependence on $\delta$ implicit and take $\delta \to 0$ at the end of the computation.

\begin{prop}
If $\rng(H_0) = k$, then $\rng[L(X)] \leq 2k + \rng(X)$.
\label{prop:LSlocality}
\end{prop}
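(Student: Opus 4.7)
The plan is to decompose $L(X)$ inherited from the fixed local decomposition $X = \sum_S X_S$ that realizes $\Vert X \Vert_\kappa$ up to the tolerance $\delta$, and to show that each summand $L_S(X_S)$ is supported inside the closed $k$-neighborhood $B_k(S) = \{v \in \Lambda : \mathsf d(v, S) \leq k\}$ of $S$. Since $\diam(B_k(S)) \leq \diam(S) + 2k$, this will immediately give a decomposition of $L(X)$ in which every piece has diameter $\leq \rng(X) + 2k + \delta$, and the claim then follows by sending $\delta \to 0$ in the definition of $\rng$.

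The heart of the argument is locating the support of $L_S(X_S)$. By the construction of $L_S$, this operator is expressed in the spectral resolution of $H_S$, which is the sum of those terms $h_I Z_I$ of $H_0$ whose support meets $S$. The hypothesis $\rng(H_0) = k$ ensures $\diam(I) \leq k$ for every such $I$, and any $I$ with $I \cap S \neq \emptyset$ then satisfies $I \subseteq B_k(S)$ (fix $y \in I \cap S$ and apply $\diam(I) \leq k$ to bound the distance from any $x \in I$ to $y$, hence to $S$). Therefore $H_S$ acts nontrivially only on $B_k(S)$, and its eigenvectors $\{\ket{i}\}$ can be taken as eigenvectors of $H_S|_{B_k(S)}$ tensored with the identity on the complement. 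Since $X_S$ is supported in $S \subseteq B_k(S)$, the matrix elements $\bra{i} X_S \ket{j}$ depend only on the $B_k(S)$ factor, so each term $\ketbra{i}{j}\bra{i}X_S\ket{j}/(E_i - E_j)$ in $L_S(X_S)$, and hence $L_S(X_S)$ itself, is supported in $B_k(S)$.

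The remaining step is a short geometric bound: for any $u, v \in B_k(S)$, choose $u', v' \in S$ with $\mathsf d(u, u'), \mathsf d(v, v') \leq k$, so the triangle inequality yields $\mathsf d(u, v) \leq k + \diam(S) + k = \diam(S) + 2k$. Putting everything together, $L(X) = \sum_S L_S(X_S)$ is a valid local decomposition whose support sets have diameter $\leq \diam(S) + 2k \leq \rng(X) + 2k + \delta$. Taking the infimum over decompositions of $X$ and letting $\delta \to 0$ yields $\rng(L(X)) \leq \rng(X) + 2k$, as claimed. There is no serious obstacle in the proof; the only delicate point is recognizing that the diagonalizing block $H_S$ actually reaches out to $B_k(S)$ rather than staying within $S$, which is precisely what produces the extra $2k$ in the bound.
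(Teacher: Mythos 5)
Your proof is correct, and it reaches the paper's conclusion ($L_S(X_S)$ is supported in a $k$-neighborhood of $S$) by a genuinely different and somewhat more elementary route. The paper proceeds constructively: it decomposes the superoperator $L_S$ through explicit phase-averaging projectors $\mathbb P^{\Delta\vec\sigma}_n$ and rewrites $L_S(X_S)$ as a finite-dimensional integral over conjugations by stabilizer exponentials (their Eq.~(B14)); the support bound is then read off directly from that formula, and the formula itself is reused later (e.g.\ in the footnote to their Prop.~\ref{prop:Tlocality} to argue that stabilizer-growth in the interstitial region commutes). You instead observe abstractly that $H_S$, being a sum of $Z_I$ with $\diam(I)\leq k$ and $I\cap S\neq\emptyset$, acts only within $B_k(S)$; hence its spectral projections (and therefore $L_S(X_S) = \sum_{E\neq E'}(E-E')^{-1}P_E X_S P_{E'}$) factor as operators on $B_k(S)$ tensored with the identity on the complement. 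This is cleaner if one only wants the locality bound, at the cost of not producing the integral representation that the paper leans on elsewhere. One small phrasing wrinkle: "eigenvectors $\ldots$ tensored with the identity on the complement" should be "tensored with an arbitrary orthonormal basis of the complement" (or, better, phrase it via spectral projections as above, which also sidesteps the fact that $H_S$ is highly degenerate on the full Hilbert space and $L_S$ must be understood in a basis-independent way). Your final geometric step ($\diam B_k(S)\leq\diam S + 2k$) and the $\delta\to0$ limit are the same bookkeeping the paper performs, and both implicitly use that the decomposition fixed in the definition of $L$ can be taken to realize $\rng(X)$ within tolerance $\delta$.
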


\begin{proof}
By assumption, $H_0 = \sum_{n}h_nZ_n$, where $Z_n, Z_{n'}$ are local products of Pauli-$Z$ operators. The energy levels of $H_0$ are given by $E_{\vec \sigma} = \vec \sigma \cdot
\vec h$, and so the level splittings are given by
\begin{align}
\Delta_{\sigma \sigma'} = E_{\vec \sigma} - E_{\vec \sigma'} = (\vec \sigma - \vec \sigma') \cdot \vec h = \prod_n \Delta\sigma_nh_n
\end{align}
where $\Delta \sigma_n$ measures the change in the eigenvalue of $Z_n$.
Therefore, given $n$ fixed, we can write down a projector
\begin{align}
\mathbb P^{\Delta \vec \sigma}_n X_S = \frac{1}{2\pi}\int_0^{2\pi} \dd t\e^{\ii\Delta\sigma_n t}\e^{-\ii Z_n t}X_S\e^{\ii Z_n t}
\label{eq:offdiagonalproj}
\end{align}
onto operators that change the eigenvalue of $Z_n$ by $\Delta\sigma_n$.
Let $A = \{k : \supp(Z_k) \cap S \neq \emptyset\}$. 
If $k \notin A$, then
we clearly have $\e^{-\ii Z_n t}X_S \e^{\ii Z_n t} = X_S$, so $\mathbb P^{\Delta\vec \sigma}_{n}X_S$ is only nonzero if $\Delta\sigma_k = 0$ for all $k \in A$.
Let $E =
\{\Delta\vec \sigma : \Delta\vec \sigma \cdot \vec h > 0\}$. Thus we may write the superoperator projecting onto the lower-right triangle as
\begin{align}
P^+X_S = \sum_{\substack{\vec \sigma \in E \\ \sigma_k = 0 \  \forall k \notin A}}\prod_{k \in S}\mathbb P^{\vec \sigma}_{k}X_S
\end{align}
Then, if $X_S$ is Hermitian, we can write $L_S(X_S)$ as
\begin{align}
&L_S(X_S)  = \int_{0}^{\infty} \dd t \e^{-H_0t}P^+X_S \e^{H_0t}  - \text{h.c.}
  \notag\\
  &= 
\sum_{\substack{\vec \sigma\in E: \\ \sigma_k = 0 \ \forall k \ \notin A}}
\frac{1}{(2\pi)^{|A|}}\int_0^{2\pi}\dd ^{|A|}\vec\tau\int_0^\infty \dd t \e^{\ii \vec \sigma \cdot \vec \tau}\exp(-\sum_{k \in A}(h_kt + \ii\tau_k\sigma_k)Z_k) X_S \exp(\sum_{k' \in A}(h_{k'}t + \ii \tau_{k'}\sigma_{k'})Z_{k'}) - \text{h.c.}
\label{eq:commutatorexp}
\end{align}
This expresses $L_S(X_S)$ as an operator supported within $A' = \bigcup_{k \in A}\supp(Z_k)$, and $\diam(A')
\leq \diam(S) + 2k$.  
\end{proof}

\begin{prop}
Suppose that $\Delta$ is the minimal gap of $H_0$ and $k = \rng(H_0)$
\begin{align}
\Vert L(X) \Vert_{\kappa} \leq \frac{\pi}{\Delta\sqrt{3}}\e^{2\kappa k}\Vert X \Vert_{\kappa}
\end{align}
\label{prop:kappaLS}
\end{prop}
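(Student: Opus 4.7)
The plan is to estimate the $\kappa$-norm of $L(X)$ directly, using the local decomposition $L(X) = \sum_S L_S(X_S)$ induced by the decomposition $X = \sum_S X_S$ realizing $\Vert X\Vert_\kappa$ up to tolerance $\delta$. Two ingredients already established in the excerpt do almost all of the work: the Schur-multiplier bound $\Vert L_S(X_S)\Vert \leq \pi\, \Vert X_S\Vert/(\Delta_S\sqrt 3)$, and the support-enlargement estimate $\diam(\tilde S) \leq \diam(S) + 2k$ from Prop.~\ref{prop:LSlocality}, where $\tilde S \supseteq S$ denotes the support of $L_S(X_S)$. Under the standing non-degeneracy assumption that $H_0$ has minimum gap $\Delta$ on the finite region of interest, every restriction $H_S$ satisfies $\Delta_S \geq \Delta$, so the $\Delta_S$ in the Schur-multiplier bound may be replaced by $\Delta$ uniformly in $S$.

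First I would insert both bounds into the defining expression for the $\kappa$-norm of $L(X)$, assigning each summand $L_S(X_S)$ to its natural support set $\tilde S$:
\begin{equation*}
\Vert L(X)\Vert_{\kappa}
\;\leq\; \sup_v \sum_{S:\, \tilde S \ni v} e^{\kappa \diam(\tilde S)} \Vert L_S(X_S)\Vert
\;\leq\; \frac{\pi\, e^{2\kappa k}}{\Delta \sqrt 3}\, \sup_v \sum_{S:\, \tilde S \ni v} e^{\kappa \diam(S)} \Vert X_S\Vert.
\end{equation*}
The prefactor $\pi\, e^{2\kappa k}/(\Delta\sqrt 3)$ matches the proposition exactly, so it only remains to bound the residual double sum by $\Vert X\Vert_\kappa$ for each $v$; after sending $\delta \to 0$, this yields the claim.

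The main obstacle will be justifying that last bound cleanly, because the condition $\tilde S \ni v$ is strictly weaker than $S \ni v$: any $v$ lying in $\supp(Z_n)$ for a term of $H_0$ that touches $S$ qualifies, and such a $v$ need not lie in $S$ itself. A brute-force estimate $\{S : \tilde S \ni v\} \subseteq \bigcup_{v' \in B_k(v)}\{S : v' \in S\}$ therefore produces an unwanted factor $|B_k(v)|$. To avoid this I would choose for each $S$ a canonical anchor $v_S \in S$ and reassign $L_S(X_S)$ to be charged to $v_S$ rather than to an arbitrary point of $\tilde S$; the diameter weight $e^{\kappa \diam(\tilde S)}$ then accounts for the displacement of at most $k$ between $v_S$ and any vertex of $\tilde S$, so that the rearranged sum for each $v$ ranges precisely over the sets $S$ that contain $v$ as in the definition of $\Vert X\Vert_\kappa$. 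This re-anchoring is the delicate bookkeeping that the $e^{2\kappa k}$ factor is expected to absorb.
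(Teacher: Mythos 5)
Your opening chain of inequalities reproduces the paper's argument: insert the Schur--multiplier bound $\Vert L_S(X_S)\Vert \leq \pi\Vert X_S\Vert/(\Delta\sqrt 3)$ and the support-growth estimate $\diam(\tilde S)\leq \diam(S)+2k$ to arrive at
\begin{equation*}
\Vert L(X)\Vert_\kappa \;\leq\; \frac{\pi\,\e^{2\kappa k}}{\Delta\sqrt 3}\,\sup_v\sum_{S:\,\tilde S \ni v}\e^{\kappa\diam(S)}\Vert X_S\Vert .
\end{equation*}
To your credit, you notice the point the paper silently elides: the final sum runs over $\{S:\tilde S\ni v\}$, which is strictly larger than $\{S:S\ni v\}$, so the last expression is not literally $\frac{\pi\e^{2\kappa k}}{\Delta\sqrt 3}\Vert X\Vert_\kappa$ as the paper's display claims. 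That observation is correct.

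However, the re-anchoring device you propose does not close the gap. In the $\kappa$-norm $\sup_v\sum_{S'\ni v}\e^{\kappa\diam(S')}\Vert O_{S'}\Vert$, each term $O_{S'}$ in a local decomposition must count at \emph{every} vertex of the set it is assigned to --- you are not free to declare that $L_S(X_S)$ is ``charged'' only to some designated anchor $v_S\in S$, because whichever subset you assign it to (at minimum $\tilde S$), it appears in the inner sum for all $v$ in that subset, and the sup picks the worst vertex. The weight $\e^{\kappa\diam(\tilde S)}$ is a multiplicative factor, not a translation budget; it cannot shrink $\{S:\tilde S\ni v\}$ back to $\{S:S\ni v\}$. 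And the factor really is there: on a one-dimensional chain with $k=1$ and $X=\sum_n X_n$ supported on singletons, each $\tilde S(\{n\})=\{n-1,n,n+1\}$ and one finds $\sup_v\sum_{S:\tilde S\ni v}\Vert X_S\Vert \approx 3\Vert X\Vert_\kappa$. The honest resolution is exactly the ``brute-force'' bound you set aside: $\{S:\tilde S\ni v\}\subseteq\bigcup_{v'\in B_k(v)}\{S:S\ni v'\}$ gives $\Vert L(X)\Vert_\kappa\leq \frac{\pi\,|B_k|\,\e^{2\kappa k}}{\Delta\sqrt 3}\Vert X\Vert_\kappa$ with $|B_k|\leq Mk^d$. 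This extra prefactor is polynomial in the fixed constants $M,k,d$ and so is harmless downstream --- it only rescales the constant $b$ and hence $\tilde\epsilon$ in Theorem~\ref{thm:quasilocal_generator}, never the $\kappa$-scaling. For onsite $H_0$ ($k=0$) one has $\tilde S=S$ and the bound is exactly as stated. The paper's own proof passes over this same point without comment, so you should not expect to make the factor disappear: you should absorb it.
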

\begin{proof}
Using Prop. \ref{prop:LSlocality},
\begin{align}
\Vert L(X) \Vert_{\kappa} &\leq \sum_{S} \e^{\kappa(\diam(S)+2k)}\Vert L_S(X_S)\Vert \notag \\
&=  \e^{2k\kappa}\sum_{S} \e^{\kappa \diam(S)}\Vert L_S(X_S)\Vert \notag \\
&\leq  \frac{\pi}{\Delta\sqrt{3}}\e^{2\kappa k}\sum_{S} \e^{\kappa \diam(S)}\Vert X_S\Vert =  \frac{\pi}{\Delta\sqrt{3}}\e^{2\kappa k}\Vert X \Vert_{\kappa}
\end{align}
where we chose the decomposition (or limit of decompositions) $X = \sum_S X_S$ which realize the $\kappa$-norm of $X$.
\end{proof}

Given a positive integer $N$ and an ordered tuple of non-negative integers $\vec{\lambda}$, we write $\vec{\lambda}\vdash [N]$ to express that $\vec{\lambda}$ partitions $N$, i.e., the components of $\vec{\lambda}$ sum to $N$.

\begin{lem}
  Define $T_q, V^{(q)}$ by $V^{(0)} = V_1$ and $T_q = L(V^{(q-1)})$, and
  \label{def:Vq}
\begin{align} \label{eq:lemB6eq}
V^{(q-1)} = \sum_{r=0}^{q}\sum_{\substack{\vec \lambda \vdash [q-r]\\ (r,|\lambda|) \neq (0,1)}}\frac{1}{|\lambda|!}\mathcal T_{\lambda}V_r
\end{align}
where for notational compactness, we denote $\mathcal T_\lambda \equiv \mathcal
T_{\lambda_1} \dots \mathcal T_{\lambda_l}$, $V_0 \equiv H_0$, and $\mathcal
T_{\lambda} = 1$ if $\lambda \vdash [0]$.
Then if the series converges, $\e^TH\e^{-T}$ is diagonal in the eigenbasis of $H_0$.
\end{lem}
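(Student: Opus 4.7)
The plan is to apply the Hadamard (BCH) lemma
\[
\e^{T}H\e^{-T}=\sum_{n\geq 0}\frac{1}{n!}\mathcal T^{n}H,\qquad \mathcal T:=[T,\cdot],
\]
and match the coefficient of each power of $\epsilon$ against a manifestly $H_{0}$-diagonal operator. Since $T=\sum_{q\geq 1}\epsilon^{q}T_{q}$, we have $\mathcal T=\sum_{q\geq 1}\epsilon^{q}\mathcal T_{q}$ with $\mathcal T_{q}:=[T_{q},\cdot]$, and hence
\[
\mathcal T^{n}=\sum_{\substack{\vec\lambda:\,\lambda_{i}\geq 1\\ |\lambda|=n}}\epsilon^{\lambda_{1}+\cdots+\lambda_{n}}\,\mathcal T_{\lambda_{1}}\cdots\mathcal T_{\lambda_{n}}.
\]
Writing $H=\sum_{r\geq 0}\epsilon^{r}V_{r}$ with $V_{0}\equiv H_{0}$ and exchanging the sums, the coefficient of $\epsilon^{q}$ in $\e^{T}H\e^{-T}$ is
\[
\sum_{r=0}^{q}\sum_{\vec\lambda\vdash[q-r]}\frac{1}{|\lambda|!}\mathcal T_{\vec\lambda}V_{r},
\]
which agrees with the definition of $V^{(q-1)}$ in \eqref{eq:lemB6eq} except for the single excluded configuration $(r,|\lambda|)=(0,1)$, namely $\vec\lambda=(q)$. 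That excluded term is precisely $\mathcal T_{q}H_{0}=[T_{q},H_{0}]$.

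Next I would substitute the defining recursion $T_{q}=L(V^{(q-1)})$ and apply \cref{prop:H0LScommutator} summand-by-summand in the local decomposition $V^{(q-1)}=\sum_{S}V^{(q-1)}_{S}$ used to construct $L$. This yields
\[
[T_{q},H_{0}]=\sum_{S}[L_{S}(V^{(q-1)}_{S}),H_{0}]=-\sum_{S}O_{S}(V^{(q-1)}_{S})=:-O(V^{(q-1)}),
\]
where $O$ denotes projection onto the off-diagonal part in the eigenbasis of $H_{0}$. The coefficient of $\epsilon^{q}$ therefore reduces to
\[
V^{(q-1)}+[T_{q},H_{0}]=V^{(q-1)}-O(V^{(q-1)}),
\]
which is the diagonal part of $V^{(q-1)}$ in the $H_{0}$ eigenbasis. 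Adding the $q=0$ contribution (simply $H_{0}$), every order in $\epsilon$ of $\e^{T}H\e^{-T}$ is diagonal in the eigenbasis of $H_{0}$, provided the rearrangements of infinite sums are justified by the assumed convergence.

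The main obstacle is the combinatorial bookkeeping: one must verify that the $1/|\lambda|!$ produced by $\tfrac{1}{n!}\mathcal T^{n}$ coincides with the weight appearing in \eqref{eq:lemB6eq}, and that the sole excluded configuration $(r,|\lambda|)=(0,1)$ is exactly the commutator term needed to cancel the off-diagonal part of $V^{(q-1)}$ via \cref{prop:H0LScommutator}. A secondary subtlety is that $O_{S}$ is defined relative to the eigenbasis of the local $H_{S}$ rather than of $H_{0}$; however, because $V^{(q-1)}_{S}$ acts trivially outside $S$ and the non-resonance assumption from \cref{defn:noresonance} makes $H_{S}$ non-degenerate on its support, matrix elements that are off-diagonal for $H_{S}$ remain off-diagonal for $H_{0}$, so $\sum_{S}O_{S}(V^{(q-1)}_{S})=O(V^{(q-1)})$ with the same local decomposition used throughout. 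Absolute convergence of $T=\sum_{q}\epsilon^{q}T_{q}$ in the $\kappa$-norm, to be established in \cref{thm:quasilocal_generator}, then legitimizes all of these formal series manipulations.
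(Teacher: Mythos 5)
Your proof is correct and follows the paper's argument: expand $\e^{\mathcal T}$ as a power series in $\epsilon$, identify the excluded $(r,|\lambda|)=(0,1)$ term in the $\epsilon^q$ coefficient as $[T_q,H_0]$, and invoke Proposition~\ref{prop:H0LScommutator} so that this commutator cancels the off-diagonal part of $V^{(q-1)}$. You additionally make explicit a point the paper glosses over---that the $H_S$-off-diagonal projector in $O_S$ yields the $H_0$-off-diagonal part---though the cleaner reason is not non-degeneracy of $H_S$ but rather that $H_0-H_S$ is supported on $S^{\mathrm{c}}$, so $E^{H_0}_i-E^{H_0}_j=E^{H_S}_i-E^{H_S}_j$ whenever $\ket{i},\ket{j}$ differ only on $S$, which is the only case in which $\bra{i}V^{(q-1)}_S\ket{j}\neq 0$.
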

\begin{proof}
Explicitly expanding $\e^{\mathcal T}$, where $\mathcal T \equiv \ad_{T}$, gives
\begin{align}
\e^{\mathcal T} &= \sum_{n}\frac{1}{n!}\qty(\sum_{q=1}^\infty \epsilon^q \mathcal T_q)^n
= \sum_{q}\epsilon^q \sum_{\vec \lambda \vdash [q]}\frac{1}{|\lambda|!}\mathcal T_{\lambda_1} \dots \mathcal T_{\lambda_{|\lambda|}}
\end{align}
Applying this to $H = \sum_{r=0}^\infty \epsilon^{r}V_r$, where we denote $H_0$ by $V_0$ for compactness, we have
\begin{align}
\e^{\mathcal T}(H_0 + V_{\epsilon}) = H_0+\sum_{q=1}\epsilon^q([T_q, H_0] + V^{(q-1)})
  \label{eq:localdecomp}
\end{align}
where $V^{(q-1)}$
is a function of $V_{r \leq q}, H_0$, and $T_{k < q}$. Since
\begin{align}
[T_q, H_0]+V^{(q-1)} = \sum_S \qty[V^{(q-1)}_S - O_S(V^{(q-1)}_S)] = \sum_S D_S(V^{(q-1)}_S) \ ,
\end{align}
which follows from our definition of $T_q$ and Proposition \ref{prop:H0LScommutator}, this choice diagonalizes $H$.
\end{proof}
Although we will not use this explicitly, we can bound the locality of $T$ at
each order:
\begin{prop}
  Suppose that $\rng(H_0) = k$. Then $\rng(T_q) \leq (1+5k)q-3k$. 
  \label{prop:Tlocality}
\end{prop}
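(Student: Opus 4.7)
The plan is strong induction on $q$, applying Proposition~\ref{prop:LSlocality} --- which gives $\rng(L(X))\leq 2k+\rng(X)$ --- together with subadditivity of range under nested commutators. For the base case $q=1$, the only summand in $V^{(0)}$ permitted by the exclusion $(r,|\lambda|)\neq(0,1)$ is $V_1$ itself, so $\rng(T_1)=\rng(L(V_1))\leq 2k+1=(1+5k)\cdot 1-3k$, matching the claim.

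For the inductive step, assume the bound for all $p<q$. The key estimate I would record is that for any local operators $A_1,\dots,A_m,B$,
\begin{equation}
\rng\bigl([A_1,[A_2,\dots,[A_m,B]\dots]]\bigr) \leq \sum_{i=1}^{m} \rng(A_i)+\rng(B),
\end{equation}
which follows because $[X,Y]$ is nonzero only on sets of the form $S_X\cup S_Y$ with $S_X\cap S_Y\neq\emptyset$, and $\diam(S_X\cup S_Y)\leq \diam(S_X)+\diam(S_Y)$ in that overlapping case. Applied to a generic summand $\tfrac{1}{|\lambda|!}\mathcal T_{\lambda}V_r$ of \eqref{eq:lemB6eq} with $\vec\lambda\vdash[q-r]$, this yields $\rng(\mathcal T_\lambda V_r)\leq \sum_i \rng(T_{\lambda_i})+\rng(V_r)$. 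The exclusion $(r,|\lambda|)\neq(0,1)$ removes precisely the only scenario where some $\lambda_i$ could equal $q$, so in every allowed summand each $\lambda_i<q$, making the induction hypothesis $\rng(T_{\lambda_i})\leq (1+5k)\lambda_i-3k$ applicable.

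It remains to verify $\rng(V^{(q-1)})\leq (1+5k)q-5k$, which combined with Proposition~\ref{prop:LSlocality} gives the desired $\rng(T_q)\leq 2k+(1+5k)q-5k=(1+5k)q-3k$. Two cases arise depending on $r$. When $r=0$, one has $V_0=H_0$ with $\rng(V_0)\leq k$, so $\rng(\mathcal T_\lambda H_0)\leq (1+5k)q-3k|\lambda|+k$; the exclusion forces $|\lambda|\geq 2$ here (and the empty partition is not allowed since $q-r=q\geq 1$), producing at worst $(1+5k)q-5k$. When $r\geq 1$, one has $\rng(V_r)\leq r$, giving $\rng(\mathcal T_\lambda V_r)\leq (1+5k)(q-r)-3k|\lambda|+r=(1+5k)q-5kr-3k|\lambda|\leq (1+5k)q-5k$ because $5kr\geq 5k$ already. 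The uniform bound then holds across all summands.

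The only real obstacle is bookkeeping around the exclusion $(r,|\lambda|)\neq(0,1)$: that single omitted summand corresponds to $[T_q,H_0]$, the ``self-referential'' term absorbed into the very definition of $T_q$, and it is exactly the case whose removal makes the induction close. The split into $r=0$ versus $r\geq 1$ is also necessary because $V_0=H_0$ contributes $k$ to the range rather than $0$, and it is precisely this extra $k$ that the $|\lambda|\geq 2$ lower bound in the $r=0$ case is needed to absorb. Beyond these two structural points, the argument is purely combinatorial and does not rely on any of the analytic convergence estimates proved in Theorem~\ref{thm:quasilocal_generator}.
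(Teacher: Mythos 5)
Your proposal is correct and takes essentially the same route as the paper's proof: strong induction, with $\rng(T_q)\le 2k+\rng(V^{(q-1)})$ from Proposition~\ref{prop:LSlocality} and subadditivity of range under nested commutators applied to \eqref{eq:lemB6eq}, arriving at the same worst case $r=0$, $|\lambda|=2$. (The paper states the induction hypothesis on $\rng(V^{(q'-1)})$ with an inconsistent constant, but its displayed estimate actually uses the bound on $\rng(T_{\lambda_i})$ exactly as you do, so the two arguments coincide.)
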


\begin{proof}
Since $V^{(0)} = V_1$, we clearly have $\rng(V^{(0)}) = 1$. Assume
$\rng(V^{(q'-1)}) \leq (1+5k)q'-3k$ for all $q' < q$.  Using \eqref{eq:lemB6eq}, we see that
\begin{align}
 \rng(V^{(q-1)}) &\leq
  \max_{\substack{r+p = q \\ \lambda \vdash [q-r] \\ |\lambda| = l, (r, l) \neq (0,1)}}\rng(\mathcal T_{\lambda}V_{r}) \leq \max_{(r,l) \neq (0,1)}(1+5k)(q-r)-3kl + r + \delta_{r0}k
\end{align}
This is clearly maximized when $r = 0, l = 2$, in which case we find $\rng(V^{(q-1)})
\leq (1+5k)q-5k$, as desired. Thus we have $\rng(T_q) \leq 2k+ \rng(V^{(q-1)})
\leq (5k+1)q-3k$.\footnote{With further analysis, this could be improved to $(1+3k)q-k$ by observing that $[L(X_{S_1}), L(X_{S_2})] = 0$ if $\mathsf d(S_1, S_2) \geq k$. From \eqref{eq:commutatorexp}, the operators grow into the interstitial region by stabilizers, which commute with each other, so the support of $L(X_{S_1})$ must overlap $S_2$ or vice-versa. We will also not treat the case $\rng(V_{\epsilon}) = 0$ separately, which would result in a mild improvement to $3kq$.}
\end{proof}

\subsection{Locality bounds on $T$}
We have constructed $T = \sum_{q = 1}^\infty \epsilon^q T_q$ just in case the
series converges. 
We will therefore choose the largest $\kappa$ such that $\Vert T \Vert_{\kappa}$ is finite,
and apply traditional LR bounds to $U_{\text{SW}} = \mathrm{e}^{T}$ to show that a local operator transformed under $U_{\text{SW}}$ is again quasilocal.

\begin{proof}[Proof of Theorem~\ref{thm:quasilocal_generator}]
We will bound $\Vert T \Vert_{\kappa}$ by repeated applying Prop.~\ref{prop:kappaLS}, following the strategy of \cite{bravyi2011schrieffer}.
Let $k = \rng(H_0)$. Assume that $V_{\epsilon} = \sum_{r}\epsilon^r V_r$, where
$\diam(V_r) = r$, and $\sum_{r=1}^\infty\Vert V_r \Vert_{0} \leq h$.
First, given operators $A$,$B$, the triangle inquality clearly bounds $\Vert [A, B]\Vert_{\kappa} \leq \Vert A B \Vert_{\kappa} + \Vert B A \Vert_{\kappa}$. Since we assume in the theorem hypothesis that $H$ acts on a system of volume $V_\ast$, we can then apply the almost-submultiplicativity of the $\kappa$-norm (Lemma~\ref{lem:submultiplicative}) to find $\Vert A B \Vert_{\kappa} \leq 2V_{\ast} \Vert A \Vert_{\kappa}\Vert B \Vert_{\kappa}$. Applying this iteratively, we have
\begin{align}
\label{eq:recursivebound}
\Vert V^{(q-1)} \Vert_{\kappa} &\leq 
\Vert H_0 \Vert_{\kappa}\sum_{\substack{\vec \lambda \vdash [q] \\ |\lambda|  = l \neq 1}}(4V_\ast)^l\frac{1}{l!}\Vert T_{\lambda_1}\Vert_{\kappa} \dots \Vert T_{\lambda_l}\Vert_{\kappa}+\sum_{r=1}^{q}\Vert V_r \Vert_{\kappa} \sum_{\substack{\vec \lambda \vdash [q-r] \\ |\lambda|  = l \neq 1}}\qty(4V_\ast)^l\frac{1}{l!}\Vert T_{\lambda_1}\Vert_{\kappa} \dots \Vert T_{\lambda_{l}}\Vert_{\kappa} \notag \\
& \leq 
he^{k\kappa}\sum_{\substack{\vec \lambda \vdash [q] \\ |\lambda|  = l \neq 1}}\qty(\frac{4\pi V_\ast \e^{2\kappa}}{\Delta \sqrt{3}})^l\frac{1}{l!}\Vert V^{(\lambda_1-1)}\Vert_{\kappa} \dots \Vert V^{(\lambda_l-1)}\Vert_{\kappa} \notag
\\&+\sum_{r=1}^{q}\e^{\kappa r}\Vert V_r \Vert_{0} \sum_{\substack{\vec \lambda \vdash [q-r] \\ |\lambda|  = l \neq 1}}\qty(\frac{4\pi V_\ast \e^{2\kappa}}{\Delta \sqrt{3}})^l\frac{1}{l!}\Vert V^{(\lambda_1-1)}\Vert_{\kappa} \dots \Vert V^{(\lambda_{l}-1)}\Vert_{\kappa}
\end{align}
where in the second line we bounded the $\kappa-$norm of $T_{\lambda_i} = L(V^{(\lambda_i - 1)})$ using Prop.~\ref{prop:kappaLS}, and we applied the $\kappa-$norm estimate for local operators in Prop.~\ref{prop:kappa_norm_scaling} to $V_r$ and $H_0$.

Let $h \e^{\kappa k}\mu_q$ saturate the inequality above, i.e.
\begin{align}
\mu_q = 
\sum_{\substack{\vec \lambda \vdash q \\ |\lambda|  = l \neq 1}}\frac{b^l}{l!}\mu_{\lambda_1} \dots \mu_{\lambda_l} + \frac{\e^{-\kappa k}}{h}\sum_{r= 1}^q\e^{\kappa r}\Vert V_r \Vert\sum_{\substack{\vec \lambda \vdash [q-r] \\ |\lambda| = l}}\frac{b^l}{l!}\mu_{\lambda_1}\dots \mu_{\lambda_l}
\end{align}
for $q \geq 1$, where we defined \begin{equation}
    b = \frac{4\pi}{\sqrt{3}}V_\ast\e^{3\kappa
  k}\frac{h}{\Delta}. \label{eq:define_b}
\end{equation} We then have $\Vert V^{(q-1)}\Vert \leq h\e^{\kappa k}\mu_q$, and thus
\begin{align}
\Vert T \Vert_{\kappa}\leq \frac{\pi \e^{\kappa k} h}{\sqrt{3}\Delta}\mu(\epsilon) \leq b\mu(\epsilon)  \equiv b\sum_{q=1}^{\infty}\epsilon^q \mu_q
\label{eq:generating_function}
\end{align}
Then we can derive an explicit equation for $\mu(\epsilon)$:
\begin{align}
\mu(\epsilon) &= \sum_{q=1} \epsilon^q\qty[\sum_{\vec \lambda \vdash [q]}\frac{b^l}{l!}\mu_{\lambda_1}\dots \mu_{\lambda_l} - b\mu_q + \frac{\e^{-k\kappa}}{h}\sum_{r = 1}^q\sum_{\vec \lambda \vdash [q-r]}\frac{b^l}{l!}\mu_{\lambda_1} \dots \mu_{\lambda_l}] \notag \\
&= \qty(1+\frac{\e^{-k\kappa}}{ h}\Vert V_{\epsilon} \Vert_{\kappa})\exp[b\mu(\epsilon)] - b\mu(\epsilon) -1
  \label{eq:equation_for_mu}
\end{align}
This abuses notation by writing $\Vert V_{\epsilon} \Vert_{\kappa} = \sum_{r = 1}\e^{\kappa r}\epsilon \Vert V_r \Vert_0$, which by assumption satisfies $\Vert V_\epsilon \Vert_{\kappa} \leq
\e^{\kappa}\epsilon h$. Put $\overline \epsilon = \e^{(1-k)\kappa}\epsilon$.
Consider the related function 
\begin{align}
\widetilde\mu(\overline \epsilon) &=  (1+\overline\epsilon)\exp[b\widetilde\mu(\overline \epsilon)] - b\widetilde\mu(\overline \epsilon) -1
\label{eq:mumax}
\end{align}
Let $f(\overline\epsilon) \equiv \e^{-k\kappa}\Vert V_{\epsilon}\Vert_{\kappa}/h$.  Comparing \eqref{eq:equation_for_mu} and \eqref{eq:mumax} we observe that \begin{equation}
    \mu(\overline \epsilon) = \widetilde\mu(f(\overline \epsilon)).
\end{equation} 
By
hypothesis, $f(\overline\epsilon) \leq \overline\epsilon$, so $f$ is a
contraction mapping. From e.g. \eqref{eq:equation_for_mu} we observe that $\mu$ and $\widetilde{\mu}$ have a Taylor series with positive coefficients since each term in the power series expansion of $\lVert V_\epsilon\rVert_\kappa$ is non-negative.  Hence $\mu$ is a monotonically increasing function, and
and $\mu(\overline \epsilon) \leq \widetilde \mu(\overline \epsilon)$ holds on the domain where the Taylor expansion converges.

Now we can find the radius of
convergence of $\widetilde\mu$. By the implicit function theorem, $\widetilde\mu$ is given
by a smooth function in a neighborhood of any point where $\widetilde \mu$ is
defined. This implies that the only non-analyticity that can occur is a
singularity in the first derivative. Differentiating implicitly, we have
\begin{align}
  \pdv{\widetilde\mu}{\overline \epsilon} &= \frac{\e^{b\widetilde\mu}}{(1+b)-b\e^{b\widetilde\mu}(1+\overline \epsilon)}
\end{align}
this shows that the derivative is bounded unless the denominator vanishes.
Since $\widetilde \mu$ is monotonically increasing, it achieves a maximum at the point of non-analyticity. Setting the denominator to zero and plugging this into \eqref{eq:mumax}, we have
\begin{align}
\widetilde \mu_{\rm{max}} = \frac{1}{b(b+1)} \leq \frac{1}{b^2} 
\end{align}
Then solving for $\overline \epsilon_{\rm{crit}}$ where this maximum is achieved, we have
\begin{align}
  \overline \epsilon_{\rm{crit}} = \frac{b+1}{b}\exp(-\frac{1}{1+b})-1
\end{align}
We can get a simpler bound using the following observation: 
\begin{align}
 \overline \epsilon_{\rm{crit}} = \frac{b+1}{b}\exp(-\frac{1}{1+b})-1 > \frac{1}{2(2\sqrt{\e}-1)}\qty(\frac{1}{b(b+1)}) > \frac{1}{3}\mu_{\rm{max}} > \frac{1}{6b^2}
\end{align}
where we have used that $b > 1$.  After all, $V_\ast > 1$ for any non-trivial non-resonance condition and $h/\Delta \geq \frac{1}{2}$ (which follows from the spectral gap from flipping a single bit), so $b>1$ follows immediately from \eqref{eq:define_b}.

Thus, $\mu(\epsilon) < \frac{1}{b} $ for all $\overline \epsilon \leq
\frac{1}{6b^2}$. Define $\tilde \epsilon$ via
\begin{align}
  \tilde \epsilon = 32\pi^2\qty(\frac{V_\ast h }{\Delta}) ^2\epsilon \label{eq:epsilontilde}
\end{align}
Then plugging in \eqref{eq:define_b}, 
\begin{align}
6b^2 \overline \epsilon = 32\pi^2\qty(\frac{V_\ast h \e^{3k\kappa}}{\Delta})^2 \e^{(1-k)\kappa} = \tilde \epsilon \e^{(5k+1)\kappa}
\end{align}
Let\footnote{Note that this matches the locality estimate from Prop.~\ref{prop:Tlocality}, and similarly, a more careful analysis could be used to tighten the locality to $\sim (3k+1)q$ here as well. In addition, the special case $\rng(V_{\epsilon}) = 0$ would result in a slight improvement to $\kappa \sim \frac{1}{3k}\log(\tilde \epsilon^{-1})$.}
\begin{equation}\kappa = \frac{1}{5k+1}\log(\tilde \epsilon^{-1})\end{equation} 
so that from \eqref{eq:generating_function} and $\mu(\widetilde\epsilon) < \frac{1}{b}$, we have
\begin{align}
\Vert T \Vert_{\kappa} \leq  b\mu_{\rm{max}} < 1
\end{align}
for all $\widetilde \epsilon \leq 1$.
\end{proof}

\subsection{Locality of transformed operators}
The ultimate goal is to perform SW transformations on finite regions, and then
stitch them together. In order to model how the locality of operators changes at the
boundary of the stitch, we have the following lemma;
\begin{lem}
\label{lem:SW_transformed_LRbound}
Suppose that $O_S$ is an operator supported on a subset $S$, and let
$U_{\mathrm{SW}}$ be the transformation constructed in the previous section.
Denote $O_S^{\mathrm{SW}} \equiv U_{\mathrm{SW}}^\dagger O_S U_{\mathrm{SW}}$.
Then for any $\alpha < \frac{1}{5k + 1}$ and $\beta > d+1$ there exists a constant $c$ dependent on $\alpha$ such that for any $x$, 
\begin{align}
\Vert  O_S^{\mathrm{SW}}\Vert_{S,x} \leq c|\partial S|\Vert O_S \Vert \frac{\tilde \epsilon^{\alpha  \mathsf{d}(S,x)}}{\max[1,\mathsf{d}(S, x)]^\beta}
\label{eq:SW_transformed_LRbound}
\end{align}
\end{lem}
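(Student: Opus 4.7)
The plan is to recognize $O_S^{\mathrm{SW}} = e^{-T} O_S e^{T}$ as the ``time-one'' Heisenberg evolution of $O_S$ generated by the antihermitian operator $T$, and then to invoke the quasilocal Lieb-Robinson bound of Lemma~\ref{lem:standardLRbound} with $T$ playing the role of the Hamiltonian. Concretely, setting $O_S(s) := e^{-sT} O_S e^{sT}$ gives $O_S(0) = O_S$, $O_S(1) = O_S^{\mathrm{SW}}$, and $\dv{s}O_S(s) = -[T, O_S(s)] = \ii[\ii T, O_S(s)]$. Since $\ii T$ is Hermitian and $e^{sT}$ is unitary, this is the standard Heisenberg equation with effective Hamiltonian $H_{\mathrm{eff}} := \ii T$. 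The proof of Lemma~\ref{lem:standardLRbound} uses Hermiticity of $H$ only to ensure unit $\infty$--$\infty$ norm of the adjoint flow $\e^{s\mathcal{H}}$, so every ingredient (irreducible-path counting, the reproducing-function estimate, and the assumption $\Vert H\Vert_\kappa \le h$) carries over verbatim.

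Given this reduction, I would proceed as follows. By Theorem~\ref{thm:quasilocal_generator}, with the choice $\kappa = \frac{1}{5k+1}\log(\tilde\epsilon^{-1})$ one has $\Vert H_{\mathrm{eff}}\Vert_\kappa = \Vert T\Vert_\kappa < 1$, so I may take $h = 1$ in the Lieb-Robinson input. Fix any $\alpha < \frac{1}{5k+1}$ and set $\mu := \alpha \log(\tilde\epsilon^{-1})$, which satisfies $\mu < \kappa$ as required. Applying Lemma~\ref{lem:standardLRbound} at evolution time $t = 1$ with $B = \{x\}$ yields
\begin{equation}
\Vert O_S^{\mathrm{SW}}\Vert_{S,x} \le \CLR\,\min(|\partial S|, |\partial\{x\}|)\,\Vert O_S\Vert\,\frac{\e^{-\mu \mathsf d(S,x)+2K}}{\max[1,\mathsf d(S,x)]^\beta},
\end{equation}
where $K = K(\mu, \beta)$ is the reproducing constant. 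The factor $\e^{2K}$ depends only on $\alpha, \beta$ and is absorbed into $c$, and then using $\min(|\partial S|, |\partial\{x\}|) \le |\partial S|$ together with $\e^{-\mu \mathsf d(S,x)} = \tilde\epsilon^{\alpha \mathsf d(S,x)}$ delivers \eqref{eq:SW_transformed_LRbound}.

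The only potential obstacle is the degenerate regime $\mathsf d(S,x) \lesssim \rng(T)$, where the quasilocal Lieb-Robinson estimate formally has little content: here one falls back on the trivial bound $\Vert O_S^{\mathrm{SW}}\Vert_{S,x} \le \Vert O_S^{\mathrm{SW}}\Vert = \Vert O_S\Vert$ (by unitary invariance), which is absorbed into the $\max[1,\mathsf d(S,x)]^\beta$ denominator after enlarging $c$ if needed. In short, the hard work has already been done in Theorem~\ref{thm:quasilocal_generator}: once $T$ is known to be quasilocal with decay rate $\kappa\sim\log(\tilde\epsilon^{-1})$, the content of the lemma reduces to a black-box application of the quasilocal Lieb-Robinson machinery to the antihermitian generator $T$.
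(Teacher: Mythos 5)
Your proposal is correct and takes essentially the same route as the paper's one-line proof: treat the antihermitian generator $T$ as an effective Hamiltonian with $\Vert T\Vert_\kappa < 1$ (Theorem~\ref{thm:quasilocal_generator}), apply Lemma~\ref{lem:standardLRbound} at fictitious time $t=1$, and identify $\e^{-\mu\mathsf d(S,x)} = \tilde\epsilon^{\alpha\mathsf d(S,x)}$ for $\mu = \alpha\log(\tilde\epsilon^{-1}) < \kappa$. Your additional observations — that Hermiticity only enters via the unit $\infty$--$\infty$ norm of the adjoint flow, and that the short-distance regime is handled by the trivial bound $\Vert O_S^{\mathrm{SW}}\Vert_{S,x}\le\Vert O_S\Vert$ together with the $\max[1,\mathsf d(S,x)]^\beta$ normalization — are correct and simply make explicit what the paper leaves implicit.
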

\begin{proof}
This follows by applying the standard LR bound (Lem.~\ref{lem:standardLRbound}) with $\Vert T \Vert_{\kappa} \leq 1$, $\kappa = \frac{1}{5k + 1}\log(\tilde \epsilon^{-1})$, and a fictitious time $t = 1$. 
\end{proof}

Next, since $T$ is quasilocal, $H^{\text{SW}}$ will be quasilocal as well. We
must specifically bound the decomposition of $H^\SW$ into a sum of commuting terms. We will
see that this follows naturally from our bound on $\mu(\epsilon)$:
\begin{lem}
  Consider the decomposition of $H^{\text{SW}} = \sum_{S\subseteq \Lambda} Z_S$ into local, diagonal terms given in
  \eqref{eq:localdecomp}. This local decomposition realizes a $\kappa-$norm
  bounded by 
\begin{align}
\Vert H^{\text{SW}}-H_0\Vert_{\kappa}^{\{S\}} \leq h
\end{align}
where again $\kappa = \log(\tilde \epsilon^{-1})$. We wrote $\Vert \cdot
\Vert^{\{S\}}_\kappa$ to emphasize that this differs from the $\kappa-$norm
because the local decomposition is fixed.
\end{lem}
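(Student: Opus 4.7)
The plan is to read off the result directly from the machinery already assembled in the proof of Thm.~\ref{thm:quasilocal_generator}. From Lemma~\ref{def:Vq} and the identity $[T_q,H_0]+V^{(q-1)} = \sum_S D_S(V^{(q-1)}_S)$ used in that proof, we have the explicit local decomposition
\begin{align}
H^{\text{SW}} - H_0 = \sum_{q\ge 1}\epsilon^q\sum_S D_S\bigl(V^{(q-1)}_S\bigr),
\end{align}
where each $D_S(V^{(q-1)}_S)$ is supported within the same set $S$ that appears in the decomposition of $V^{(q-1)}$. So first I would fix the local decomposition of $H^{\text{SW}}-H_0$ to be exactly the one inherited (term by term) from the decompositions of the $V^{(q-1)}$ that realize their $\kappa$-norms.

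Next, I would observe that the superoperator $D_S = \mathrm{id} - O_S$ (retain the part diagonal in the eigenbasis of $H_S$) is contractive in operator norm, since it acts as a projection on matrix entries, i.e.\ $\Vert D_S(X)\Vert \le \Vert X\Vert$ for any $X$ supported in $S$. Term-by-term this gives
\begin{align}
\Vert H^{\text{SW}} - H_0\Vert^{\{S\}}_\kappa
\le \sum_{q\ge 1}\epsilon^q \sum_S e^{\kappa\diam(S)}\Vert D_S(V^{(q-1)}_S)\Vert
\le \sum_{q\ge 1}\epsilon^q\Vert V^{(q-1)}\Vert_\kappa.
\end{align}
I would then insert the central estimate obtained during the proof of Thm.~\ref{thm:quasilocal_generator}, namely $\Vert V^{(q-1)}\Vert_\kappa \le h e^{\kappa k}\mu_q$, so that
\begin{align}
\Vert H^{\text{SW}}-H_0\Vert^{\{S\}}_\kappa \le h e^{\kappa k}\sum_{q\ge 1}\epsilon^q\mu_q = h e^{\kappa k}\mu(\epsilon).
\end{align}

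Finally, I would invoke the already-proved convergence estimate $\mu(\tilde\epsilon) < 1/b$ together with the definition $b = \tfrac{4\pi}{\sqrt 3}V_\ast e^{3\kappa k}h/\Delta$ from \eqref{eq:define_b}, giving
\begin{align}
h e^{\kappa k}\mu(\epsilon) \;<\; \frac{h e^{\kappa k}}{b} \;=\; \frac{\sqrt 3\,\Delta}{4\pi V_\ast e^{2\kappa k}}.
\end{align}
Since $h\ge \Delta/2$ (this is a single-bit-flip gap, used implicitly in the $b>1$ argument) and $V_\ast e^{2\kappa k}\ge 1$, the right-hand side is bounded by $h$, completing the proof. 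The main ``obstacle'' is really only the bookkeeping of making sure the fixed decomposition $\{S\}$ in the $\kappa$-norm is precisely the one inherited from the decompositions realizing $\Vert V^{(q-1)}\Vert_\kappa$ (so that we may use the estimates from Thm.~\ref{thm:quasilocal_generator} verbatim) and that $D_S$ does not enlarge supports; everything else follows from inequalities already derived.
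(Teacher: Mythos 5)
Your approach mirrors the paper's: decompose $H^{\text{SW}}-H_0 = \sum_{q\geq 1}\epsilon^q \sum_S D_S(V^{(q-1)}_S)$, use the contractivity of $D_S$ in operator norm, and plug in the estimate $\Vert V^{(q-1)}\Vert_\kappa \leq h\e^{\kappa k}\mu_q$ together with the convergence bound on $\mu$ from the proof of Thm.~\ref{thm:quasilocal_generator}. However, you assert that ``$D_S$ does not enlarge supports,'' i.e.\ that $D_S(V^{(q-1)}_S)$ is supported within the same $S$. That is false, and it's a genuine (if ultimately non-fatal) gap in the reasoning. The projector $D_S = \mathrm{id}-O_S$ is defined with respect to the eigenbasis of $H_S$, which by construction involves \emph{all} Pauli-$Z$ strings of $H_0$ whose support intersects $S$; as \eqref{eq:offdiagonalproj} makes explicit, it is implemented by conjugation with $\e^{\pm \ii Z_n t}$ for those strings, and so $D_S(X_S)$ is generically supported on the enlarged region $A' = \bigcup_{n:\supp(Z_n)\cap S\neq\emptyset}\supp(Z_n)\cup S$ with $\diam(A')\leq \diam(S)+2k$. (A two-site toy example with $H_0 = Z_0 + Z_0 Z_1$ and $X_S = X_0$ already exhibits this: $D_S(X_0) = (X_0 - X_0 Z_1)/2$.) Accordingly, the correct term-by-term bound is
\begin{align}
\sum_S \e^{\kappa\diam(A'(S))}\Vert D_S(V^{(q-1)}_S)\Vert \leq \e^{2k\kappa}\sum_S \e^{\kappa\diam(S)}\Vert V^{(q-1)}_S\Vert = \e^{2k\kappa}\Vert V^{(q-1)}\Vert_\kappa,
\end{align}
which introduces an extra $\e^{2k\kappa}$ compared to your line. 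The final inequality $\leq h$ nonetheless survives, because the $\e^{3k\kappa}$ built into $b$ in \eqref{eq:define_b} absorbs exactly this factor (and more): one gets $h\e^{3k\kappa}\mu(\epsilon) < h\e^{3k\kappa}/b = \sqrt{3}\Delta/(4\pi V_\ast) \leq \sqrt{3}h/(2\pi) < h$. So the conclusion is safe, but you should correct the support-growth bookkeeping; as written, the $\kappa$-norm estimate for the fixed decomposition $\{A'(S)\}$ is not what you wrote.
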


\begin{proof}
We have
\begin{align}
  H^{\text{SW}}-H_0 = \sum_{q = 1}^{q}\epsilon^q\sum_S \mathcal D_S(V^{(q-1)}_S) 
\end{align}
The superoperator $\mathcal D_S$ is a dilation \cite{bhatia2000pinching}, and by
Eq.~\eqref{eq:offdiagonalproj}, it increases the locality by at most $2k$, so we have
\begin{align}
\sum_S \e^{\kappa \diam(S)} \Vert\mathcal D_S(V^{(q-1)}_S)  \Vert \leq \sum_S \e^{\kappa [2k +\diam(S)]}\Vert V^{(q-1)}_S \Vert = \e^{2k\kappa}\Vert V^{(q-1)} \Vert_{\kappa}
\end{align}
and therefore 
\begin{equation}
\Vert H^{\text{SW}}-H_0 \Vert^{\{S\}}_{\kappa} \leq h\e^{2k\kappa}\mu_{\rm{max}} < h
\end{equation}
\end{proof}
This immediately implies the following bound on the decomposition of $H^\SW$ over distances:
\begin{cor}
\label{cor:HSW_locality}
There exists a constant $c$ (from Lem.~\ref{lem:kappa-norm-is-local}) such that
  \begin{align}
  \Vert H^{\SW} \Vert_{u,v} \leq 2hck^\beta\tilde \epsilon^{-\alpha k}\frac{\tilde \epsilon^{\alpha \mathsf
      d(u,v)}}{\max[1,\mathsf d(u,v)]^\beta}
   \end{align}
\end{cor}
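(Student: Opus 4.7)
The plan is to combine the immediately preceding lemma, which gives $\Vert H^{\SW}-H_0\Vert_{\kappa}^{\{S\}} \leq h$ for the SW-generated local decomposition, with the general kappa-norm-to-distance conversion of Lem.~\ref{lem:kappa-norm-is-local}, and then to account for the bare $H_0$ piece separately. The proof splits into three short steps.

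First, I would apply Lem.~\ref{lem:kappa-norm-is-local} to the operator $H^{\SW}-H_0$, using the fixed local decomposition provided by \eqref{eq:localdecomp} that realizes the bound $\Vert H^{\SW}-H_0\Vert_{\kappa}^{\{S\}} \leq h$. Choosing $\mu = \alpha \log(\tilde\epsilon^{-1})$ and $\beta > d+1$, the assumption $\alpha < \tfrac{1}{5k+1}$ guarantees $\mu < \kappa$ (where $\kappa$ is the value implicit in the preceding lemma and in Thm.~\ref{thm:quasilocal_generator}), so the hypotheses of Lem.~\ref{lem:kappa-norm-is-local} are met. This produces
\begin{align}
\Vert H^{\SW}-H_0\Vert_{u,v} \;\leq\; \frac{c h\,\tilde\epsilon^{\alpha \mathsf{d}(u,v)}}{\max[1,\mathsf{d}(u,v)]^{\beta}}.
\end{align}

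Second, I would treat $H_0 = \sum_{I:\,\diam(I)\leq k} h_I Z_I$ by hand. Since every term is supported on a set of diameter at most $k$ and obeys $|h_I|\leq h$, the quantity $\Vert H_0\Vert_{u,v}$ receives contributions only from terms whose support contains both $u$ and $v$, which forces $\mathsf{d}(u,v)\leq k$. A crude bound $\Vert H_0\Vert_{u,v}\leq h\,\mathbb{1}[\mathsf{d}(u,v)\leq k]$ (or a mildly more careful count of how many $k$-local $Z$-strings can contain a given pair) is enough.

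Third, I would combine the two bounds by the triangle inequality $\Vert H^{\SW}\Vert_{u,v}\leq \Vert H^{\SW}-H_0\Vert_{u,v}+\Vert H_0\Vert_{u,v}$ and check that both pieces are dominated by the single envelope $2hck^{\beta}\tilde\epsilon^{-\alpha k}\,\tilde\epsilon^{\alpha \mathsf{d}(u,v)}/\max[1,\mathsf{d}(u,v)]^{\beta}$. The key observation is that for $\mathsf{d}(u,v)\leq k$ one has $\tilde\epsilon^{-\alpha k+\alpha \mathsf{d}(u,v)}\geq 1$ (because $\tilde\epsilon\leq 1$), and the factor $k^{\beta}/\max[1,\mathsf{d}(u,v)]^{\beta}\geq 1$; together these absorb the $H_0$ contribution into the envelope. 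For $\mathsf{d}(u,v)>k$ the $H_0$ term vanishes, and the prefactor $\tilde\epsilon^{-\alpha k}$ merely loosens an already valid bound. A small bookkeeping check, e.g.\ taking $c$ larger than $1$ without loss of generality, closes the argument.

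There is no substantive obstacle here; the content is purely combining the already-proven kappa-norm estimate with the standard decay translation and absorbing the short-range $H_0$ piece into the long-range envelope. The only place one has to be slightly careful is ensuring that the same $\kappa$ is used throughout and that the case split at $\mathsf{d}(u,v)=k$ is performed in a way that preserves the stated constant $2hck^{\beta}$.
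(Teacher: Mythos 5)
Your proposal is correct and follows essentially the same route as the paper's proof: split $H^{\SW} = (H^{\SW}-H_0) + H_0$, apply Lem.~\ref{lem:kappa-norm-is-local} to the fixed SW decomposition of $H^{\SW}-H_0$ with $\mu = \alpha\log(\tilde\epsilon^{-1})$, bound $\Vert H_0\Vert_{u,v}\leq h\,\mathds 1[\mathsf d(u,v)\leq k]$, and absorb the short-range piece into the envelope by the triangle inequality. Your case analysis near $\mathsf d(u,v)\leq k$ is slightly more explicit than the paper's one-line chain of inequalities, but the content is identical.
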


\begin{proof}
  \begin{align}
\Vert H^{\SW} \Vert_{u,v} \leq \Vert H^{\SW} - H_0 \Vert_{u,v} + \Vert H_0
\Vert_{u,v} \leq h\delta_{\mathsf d(u,v) \leq k} + ch\frac{\tilde
  \epsilon^{-\alpha \mathsf d(u,v)}}{\mathsf d(u,v)^{\beta}} \leq \frac{2ch k^\beta \tilde
  \epsilon^{-\alpha (\mathsf d(u,v) - k)}}{\mathsf d(u,v)^\beta}
    \end{align}
\end{proof}

\section{Lieb-Robinson bounds for non-resonant models}
Now that we have established the construction of the local Schrieffer-Wolff transformation when $H_0$ is non-resonant up to a length $r_\ast$, we can show how to formulate Lieb-Robinson bounds with a velocity that is suppressed up to an order proportional to $r_\ast$ in $\epsilon$. Our goal is to develop an equivalence-class expansion similar to the one introduced in Sec.~\ref{sec:lrbound_duhamel}.

\subsection{Non-resonance at fixed scale $r_*$}
The strategy underlying our new Lieb-Robinson bounds leverages the Duhamel identity to construct irreducible paths with ``jumps" which are at least as large as $r_*$, where we can then use the Schrieffer-Wolff transformations locally to show how non-resonance forbids fast dynamics.  The set of all irreducible paths of interest is $\Gamma(S_0 \to S_1)$, and is constructed explicitly in the following definition.  See  Fig.~\ref{fig:examplepaths} for an illustration. 

\begin{defn}
\label{def:generate_paths}
Considering a local decomposition $H = \sum_{S \subseteq \Lambda}H_S$ and subsets $S_0, S_1$ with $S_0\cap S_1=\emptyset$, 
let
$\Gamma(S_0 \to S_1)$ be the set of paths constructed as follows. 

Put $n_\ast = \lfloor (\mathsf d(S_0, S_1)-1)/(r_\ast+1)\rfloor$. Let $\Gamma = (\Gamma_1, \Gamma_2, \dots, \Gamma_l)$ such that $l = n_\ast+1$ or $i=l$ is the smallest $i$ with $\min_{y \in S_0}\max_{x \in \Gamma_i}\mathsf d(x,y) \geq n_\ast r_\ast$.  Now define $\Sint{-1} = \emptyset$ and $\Sint{0} = S_0$, and inductively, we will require $\Gamma_i \subseteq \Lambda$ to satisfy the conditions $\Gamma_i \in \Sint{i-2}(\Gamma)^{\mathrm{c}}$, $\Gamma_{i} \cap \Sint{i-1}(\Gamma) \neq \emptyset$, and $\Gamma_{i} \cap \Sint{i-1}(\Gamma)^{\mathrm{c}} \neq \emptyset$. 

Given $\Gamma$ with $\Gamma_i$ satisfying the above, we define $\Sint{i}$ as follows: if $\min_{x \in \partial \Sint{i-1}(\Gamma)}\max_{y \in \Gamma_{i}\cap \Sint{i-1}(\Gamma)^{\mathrm{c}}} > r_\ast$ then we say $\Gamma_i$ is ``big", otherwise $\Gamma_i$ is ``small". If $\Gamma_i$ is big, then put $\Sint{i}(\Gamma) = \Sint{i-1} \cup \Gamma_i$, otherwise pick the least (with respect to some total ordering) $x \in \partial \Sint{i-1}(\Gamma)^{\mathrm{c}}$ such that $\Gamma_{i+1}\cap \Sint{i-1}(\Gamma)^{\mathrm{c}} \subseteq B_{r_\ast}(x)$ and put $\Sint{i}(\Gamma) = \Sint{i-1}(\Gamma) \cup B_{r_\ast}(x)$. 

We will also denote $\Gamma(S_0 \to S_1) = \Gamma^{(n_\ast)}(S_0)$, because $\Gamma(S_0 \to S_1)$ only depends on $S_1$ through $n_\ast$.
\end{defn}

\begin{figure}[t]
    \def\svgwidth{0.95\linewidth} 
    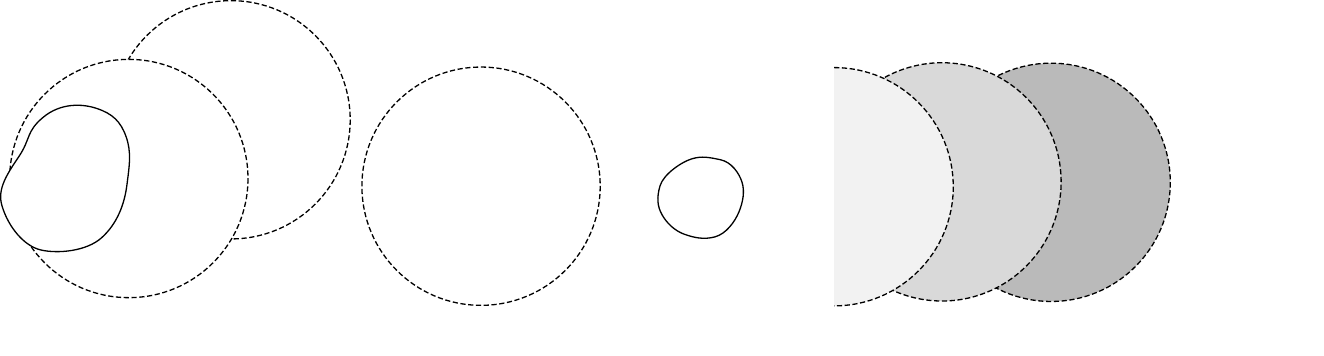
  \caption{\textbf{Left:} A coarse-grained picture of an example path $\Gamma = (\Gamma_1, \dots, \Gamma_5)$. \textbf{Right:} Depiction of the relationship among $\Sint{i-1}$, $\Sint{i}$, and $\Gamma_i$. The darkening shades of gray show the addition of subsequent regions to $\mathrm S_{\mathrm{int}}$ (or equivalently, removal from $\mathrm S_{\mathrm{ext}}$).}
  \label{fig:examplepaths}
\end{figure}

Then as in \eqref{eq:irreducible_path_bound}, we can expand the commutator into an irreducible path bound. We generalize the irreducible path bound by way of the following definitions.

\begin{defn}
Given the Hilbert space of bounded operators $\mathcal B = \mathcal B_{S} \otimes \mathcal B_{S^{\mathrm{c}}}$ where $S \subseteq \Lambda$, define $\overlinebold P_{S}$ to be the orthogonal projector onto the space $1_{S} \otimes \mathcal B_{S^{\mathrm{c}}}$, and let $\mathbb P_S$ be its complement projector.
\end{defn}

\begin{defn}
Let
  \begin{align}
    C_{i}^{(\Gamma)}(t) = \left\Vert \mathcal H_{\Gamma_{i}}\e\qty(\overline{\mathbb P}_{S_{\mathrm{int}}^{(i-1)}(\Gamma)}\mathcal H)(t)\overline{\mathbb P}_{S_{\mathrm{int}}^{(i)}(\Gamma)\backslash\Gamma_{i+1}}\right\Vert_{\infty-\infty}
   \end{align}
Furthermore, let \begin{equation}
    \widetilde C_{i}^{(\Gamma)}(s) := \int_0^\infty \dd s \; 
   \e^{-st}C_{i}^{(\Gamma)}(t)
\end{equation} be the Laplace transform of $C_{i}^{(\Gamma)}$.\footnote{Since we can
     easily bound $C_i^{(\Gamma)}(t)$ with an exponential function (e.g. with
     standard LR bounds) the Laplace transform and its inverse are both
     well-defined. However, the Laplace transform is mostly a convenient
     accounting method, and one may verify that the manipulations can also be
     carried out directly in real time.}
\end{defn}

\begin{lem}
\label{lem:jump_paths}
  If $\Gamma(S_0 \to S_1)$ is the set of paths constructed in Def.~\ref{def:generate_paths}, then we have
\begin{align}
\Vert [B_{S_0}, A_{S_1}(t)]\Vert \leq 2\Vert A_{S_1}\Vert \Vert B_{S_0}\Vert \sum_{\Gamma \in \Gamma(S_0 \to S_1)} [1 \ast C^{(\Gamma)}_1 \ast \dots \ast C^{(\Gamma)}_{|\Gamma|}](t)
\end{align}
In terms of the Laplace transform,
\begin{align}
\mathcal L\{\left\Vert\mathcal B_{S_0}\e(\mathcal H)(t)A_{S_1}\right\Vert\} \leq 2\Vert A_{S_1}\Vert \Vert B_{S_0}\Vert\frac{1}{s}\sum_{\Gamma \in \Gamma(S_0 \to S_1)} \prod_{i =1}^{|\Gamma|}\widetilde C_{i}^{(\Gamma)}(s)
  \label{eq:comm}
\end{align}
\end{lem}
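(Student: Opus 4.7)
The plan is to iterate the Duhamel identity in a block-by-block fashion, following the structure of the path construction in Definition~\ref{def:generate_paths}. Each application of Duhamel extracts one jump $\Gamma_i$, replacing a projected evolution by the difference between the full evolution and a more restricted evolution, in direct analogy with the proof of Proposition~\ref{prop:irreducible_paths}.

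First I would initialize by writing $\mathcal{B}_{S_0}\e(\mathcal{H})(t)A_{S_1}$ and applying the Duhamel identity to split $\mathcal{H}$ into the ``exterior'' part $\overline{\mathbb{P}}_{\Sint{0}}\mathcal{H}$ (whose evolution preserves support in $\Sext{0}$) and its complement. Since $\e(\overline{\mathbb{P}}_{\Sint{0}}\mathcal{H})(t)A_{S_1}$ stays supported in $\Sext{0} = S_0^{\mathrm c}$, its commutator with $\mathcal{B}_{S_0}$ vanishes, and only the Duhamel remainder contributes. I would then decompose that remainder as a sum over all admissible first jumps $\Gamma_1$, which corresponds to choosing which collection of Hamiltonian terms crosses the boundary $\partial \Sint{0}$; the ``big/small'' bookkeeping in Definition~\ref{def:generate_paths} then determines how $\Sint{1}$ is constructed from $\Gamma_1$ and $\Sint{0}$.

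At the $i$-th step of the iteration, I have an expression of the form $\mathcal{B}_{S_0}\e(\mathcal{H}) \ast \mathcal{H}_{\Gamma_1}\e(\overline{\mathbb{P}}_{\Sint{0}}\mathcal{H}) \ast \cdots \ast \mathcal{H}_{\Gamma_{i-1}}\e(\overline{\mathbb{P}}_{\Sint{i-2}}\mathcal{H})(t)A_{S_1}$, and I apply Duhamel again to the innermost evolution with the decomposition $\mathcal{H} = \overline{\mathbb{P}}_{\Sint{i-1}}\mathcal{H} + \mathbb{P}_{\Sint{i-1}}\mathcal{H}$. Restricting to non-self-intersecting choices of $\Gamma_i$ is justified because terms that would re-enter $\Sint{i-2}$ are annihilated by the projector left over from the previous step. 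The iteration terminates after at most $n_\ast$ steps because each jump covers a distance of at most $r_\ast+1$, so $n_\ast$ jumps suffice to exhaust $\mathsf{d}(S_0,S_1)$. Reading off the norm, the product structure of the resulting convolution yields the factors $C_i^{(\Gamma)}(t)$; the trailing projector $\overline{\mathbb{P}}_{\Sint{i}\setminus \Gamma_{i+1}}$ inside $C_i^{(\Gamma)}$ arises from the fact that at the next step of the iteration only the component of the state not already in $\Sint{i}\setminus \Gamma_{i+1}$ can contribute through $\mathcal{H}_{\Gamma_{i+1}}$. A final triangle inequality bounds the outer commutator $\Vert [B_{S_0},\cdot]\Vert \le 2\Vert B_{S_0}\Vert \Vert \cdot \Vert$, giving the factor of $2$, and the Laplace transform then trades convolution for multiplication while producing the $1/s$ from the outermost time integration (which in the real-time version is the convolution with the constant function~$1$).

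The main obstacle will be the bookkeeping: verifying that the combinatorics of the iterated Duhamel expansion lines up exactly with the set $\Gamma(S_0 \to S_1)$ defined in Definition~\ref{def:generate_paths}, particularly the ``big vs.\ small'' dichotomy (which governs whether $\Sint{i}$ is enlarged by $\Gamma_i$ itself or by an entire $r_\ast$-ball around it), and the irreducibility conditions $\Gamma_i \subseteq \Sint{i-2}^{\mathrm c}$, $\Gamma_i \cap \Sint{i-1} \neq \emptyset$, and $\Gamma_i \cap \Sint{i-1}^{\mathrm c} \neq \emptyset$. Each of these conditions must be traced back to an identity of the form $\overline{\mathbb{P}}_S \mathcal{H}_T = 0$ or $\mathcal{H}_T \overline{\mathbb{P}}_{S'} = \overline{\mathbb{P}}_{S'} \mathcal{H}_T$ for appropriate $S, S', T$, so that the contributions from paths outside $\Gamma(S_0 \to S_1)$ drop out of the Duhamel expansion cleanly rather than having to be bounded separately.
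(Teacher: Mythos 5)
Your proposal is correct and follows essentially the same iterated-Duhamel strategy as the paper: extract one boundary-crossing jump $\Gamma_i$ per Duhamel step, prune the non-contributing pieces via support arguments, cap the iteration at $n_\ast+1$ jumps, and read off the bound via submultiplicativity of the $\infty$-$\infty$ norm and the Laplace transform. The only cosmetic difference is that the paper uses a three-way boundary/interior/exterior decomposition of $H$ relative to the newly added shell $\Sint{l}\setminus\Sint{l-1}$ and invokes $[\mathcal H_{\mathrm{int}},\mathcal H_{\mathrm{ext}}]=0$ to simplify the inner exponential, whereas your two-way projector split achieves the same outcome since the ``interior'' terms annihilate the inner evolution of $A_{S_1}$ anyway.
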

Note that this lemma reduces to the earlier Lieb-Robinson bound, Proposition \ref{prop:irreducible_paths}, based on irreducible paths if we use the trivial $\widetilde C_{i}^{(\Gamma)}(s)
\leq \frac{2}{s}\Vert H_{\Gamma_i} \Vert$.  The proof of this result is similar to the proof of Proposition \ref{prop:irreducible_paths}, while making use of the detailed properties in Definition \ref{def:generate_paths}; we present it in Section \ref{sec:proof_jumppaths}.

Our bound will then follow from proving that $C_{i}^{(\Gamma)}(s) \lesssim s^{-1} \epsilon^{r_\ast}$ (schematically).
If $\Gamma_i$ is ``big," then this is a direct consequence of $\lVert V_r\rVert_\kappa \lesssim \epsilon^r$ with $r\ge r_*$. If $\Gamma_i$ is ``small", then we can construct the Schrieffer-Wolff transformation around the corresponding $x_i$, and show that the growth to the boundary of $B_{r_\ast}(x_i)$ is similarly suppressed. 

The big advantage of our approach that makes it readily applicable to disordered potentials and incommensurate lattices is robustness to failures of the non-resonance condition. Essentially, within some choices of $x_i$ we are not able to construct an SW transformation, and we are forced to use the trivial $C_{i}^{(\Gamma)}(s) \lesssim \Vert H_{\Gamma_i} \Vert / s$. 
With the strategy for constructing irreducible paths explained, we are now in a position state precisely the weaker notion of partial non-resonance allowing for these failures. Loosely, the definition below ensures that the non-resonance condition fails with probability $\le \zeta$ in any region along a path in the lattice.

\begin{defn}\label{defn:partialnr}
We will say that a potential $H$ satisfies an $(h, r_\ast, \Delta, n_\ast, \zeta)$ partial non-resonance condition at $x \in \Lambda$ if for any $\Gamma \in \Gamma^{(n_\ast)}(x)$, there is no subsequence $i_1, \dots, i_k$ with $k > (1-\zeta)n_\ast$ for which $\Gamma_{i_j}$ is small and the non-resonance condition fails in $B_{r_\ast}(x_{i_j})$ (with $x_{i_j}$ chosen as in Def.~\ref{def:generate_paths}). 
\label{def:partialnonresonance}
\end{defn}

Here we remark that this definition of partial non-resonance is key to our proof strategy, because we are able to explicitly forbid the existence of a percolating path of non-resonant regions.

\subsection{Main theorem}
In this section, we will formally combine the SW transform with the Lieb-Robinson formalism to establish the following main theorem:
\begin{thm}
  \label{thm:main_thm}
Let $H = H_0 + V_{\epsilon}$, where $H_0$ is an onsite potential with local
strength $h$ and $\rng(H_0) = k$. Consider two subsets $S_0,S_1\subset\Lambda$. Assume that $H_0$ satisfies a $(h, r_\ast, \Delta, n_\ast, \zeta)-$non-resonance
condition at every $x \in \partial S_0$ if $|\partial S_0| \leq |\partial S_1|$ and at every $y \in \partial S_1$ otherwise on a graph in $d$ dimensions.  Let $r = \min(r_\ast +1, \mathsf d(S_0,
S_1))$. Suppose that $A_{S_1}$ is an operator
supported on $S_1$ and $B_{S_0}$ is an operator supported on 
$S_0$, and let $n_\ast = \lfloor (\mathsf d(S_0,S_1)-1)/(r_\ast + 1)\rfloor$.
For any $\alpha < 1$, there exist constants $C,C'$ such that if $\tilde \epsilon < \e^{-1}$, we have
\begin{align}
\Vert [B_{S_1}, A_{S_0}(t)] \Vert \leq
    C\min(|\partial S_0|, |\partial S_1|)\Vert A_{S_0} \Vert \Vert B_{S_1} \Vert\qty[\exp(C'h(\e\tilde \epsilon)^{\alpha[\zeta r/8-k]}t)-1]\exp(-\frac{\alpha \zeta}{8}\mathsf d(S_0,S_1))
\end{align}
for any $r > 8k/\zeta$.
\end{thm}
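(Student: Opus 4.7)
The overall strategy is to combine the jump-path expansion of Lemma~\ref{lem:jump_paths} with the local Schrieffer--Wolff diagonalization of Theorem~\ref{thm:quasilocal_generator}. Taking the Laplace transform of $\Vert\mathcal B_{S_0}\e(\mathcal H)(t)A_{S_1}\Vert$ via \eqref{eq:comm}, the task reduces to bounding each factor $\widetilde C_i^{(\Gamma)}(s)$ in the product by a quantity of order $\tilde\epsilon^{\alpha r_*}/s$ whenever possible, then summing over all paths $\Gamma\in\Gamma(S_0\to S_1)$ and inverting the Laplace transform. Since $\Vert[B_{S_0},A_{S_1}(t)]\Vert=\Vert[A_{S_1},B_{S_0}(-t)]\Vert$ and the envelope depends only on $|t|$, I may assume $|\partial S_0|\le|\partial S_1|$, producing the prefactor $\min(|\partial S_0|,|\partial S_1|)$ after summing the starting vertex of each path over $\partial S_0$.

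For a fixed path $\Gamma$, I control each step $\Gamma_i$ by a trichotomy derived from Definition~\ref{def:generate_paths}. \textbf{(A) Big step.} If $\Gamma_i$ is ``big'', then $\rng(\Gamma_i)\ge r_*$, so the relevant Hamiltonian term belongs to $V_r$ with $r\ge r_*$ and $\Vert H_{\Gamma_i}\Vert\le\epsilon^{r_*}h$, giving the direct Laplace bound $\widetilde C_i^{(\Gamma)}(s)\le 2\epsilon^{r_*}h/s$. \textbf{(B) Small step, non-resonant.} If $\Gamma_i$ is ``small'' and the non-resonance condition holds in $B_{r_*}(x_i)$, I apply the SW unitary $U^{(i)}_{\SW}$ localized to $B_{r_*}(x_i)$ from Theorem~\ref{thm:quasilocal_generator}. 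Rotating by $U^{(i)}_{\SW}$ inside the projected evolution converts $H|_{B_{r_*}(x_i)}$ into a diagonal part, which commutes with every operator supported in $B_{r_*}(x_i)$ and is inert inside the commutator, plus quasi-local boundary corrections with locality parameter $\kappa=\tfrac{1}{5k+1}\log(\tilde\epsilon^{-1})$. The only channel by which $\mathcal H_{\Gamma_i}$ can propagate from $\partial\Sint{i-1}$ into $\Gamma_{i+1}$ is through these quasi-local corrections, whose restricted norm is bounded via Lemma~\ref{lem:SW_transformed_LRbound} and Corollary~\ref{cor:HSW_locality} by $\lesssim\tilde\epsilon^{\alpha(r_*-k)}/s$. \textbf{(C) Small step, non-resonance fails.} I fall back on the trivial bound $\widetilde C_i^{(\Gamma)}(s)\le 2h/s$.

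By Definition~\ref{defn:partialnr}, along any path in $\Gamma(S_0\to S_1)$ at most $(1-\zeta)n_*$ of the $n_*$ steps fall into case (C), so at least $\zeta n_*$ steps contribute a suppression of at least $\tilde\epsilon^{\alpha(r_*-k)}$ (the even smaller $\epsilon^{r_*}$ from case (A) is absorbed). Multiplying the per-step bounds yields a total weight for each path of at most $(\tilde\epsilon^{\alpha(r_*-k)})^{\zeta n_*}\cdot(2h)^{n_*+1}/s^{n_*+1}$. Next I sum over paths: by Definition~\ref{defn:d-dim}, the number of jump paths of length $n_*$ through a fixed starting vertex grows at most exponentially in $n_*$, and summing the starting vertex over $\partial S_0$ produces the prefactor $\min(|\partial S_0|,|\partial S_1|)$. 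Absorbing the combinatorial growth by shrinking $\alpha$ away from its maximum value $\tfrac{1}{5k+1}$ and using $n_*(r_*+1)\asymp\mathsf d(S_0,S_1)$ produces the spatial decay $\exp(-\tfrac{\alpha\zeta}{8}\mathsf d(S_0,S_1))$; the constant $\tfrac{1}{8}$ is what remains after sharing the $\alpha$ budget among path counting, Lemma~\ref{lem:SW_transformed_LRbound}, and Corollary~\ref{cor:HSW_locality}, while the condition $r>8k/\zeta$ ensures this budget is positive. Laplace-inverting $1/s^{n_*+1}$ turns the product into $t^{n_*}/n_*!$, and summing over $n_*\ge 1$ exponentiates into the advertised envelope $[\exp(C'h\tilde\epsilon^{\alpha[\zeta r/8-k]}t)-1]$.

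\textbf{Main obstacle.} Case (B) is the most delicate. Inserting $U^{(i)}_{\SW}$ into the nested superoperator convolution of Lemma~\ref{lem:jump_paths} is not automatic: the projector $\overline{\mathbb P}_{\Sint{i-1}}$ couples different steps, and the SW unitary for $B_{r_*}(x_i)$ does not exactly commute with that projector because the rotation introduces tails that, although suppressed by $\tilde\epsilon^{\alpha r}$, are only quasi-local rather than strictly local. Tracking how these SW tails leak from one ball $B_{r_*}(x_i)$ into neighboring balls, and reconciling them with the outer path-counting bookkeeping --- essentially rerunning a Schwinger--Karplus-like expansion \eqref{eq:Schwinger_Karplus} inside each $B_{r_*}(x_i)$ and rebundling with the outer expansion --- is where the precise numerical constants ($\tfrac{1}{8}$, $5k+1$, and the allowed range of $\alpha$) originate, and where the bulk of the technical effort will reside.
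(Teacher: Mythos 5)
Your high-level strategy exactly matches the paper's: expand via the jump-path Duhamel formula (Lemma~\ref{lem:jump_paths}), split each step $\Gamma_i$ into big / small-non-resonant / small-resonant, apply a local Schrieffer--Wolff rotation on each non-resonant ball $B_{r_*}(x_i)$, use the partial-non-resonance assumption to cap the number of resonant steps, then sum over paths and Laplace-invert. The symmetry trick $\|[B_{S_0},A_{S_1}(t)]\| = \|[A_{S_1},B_{S_0}(-t)]\|$ to produce $\min(|\partial S_0|,|\partial S_1|)$ is also the paper's.

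However, there is a genuine gap, and you flag it yourself in the ``Main obstacle'' paragraph: the Case (B) estimate. You assert the per-step bound for a small non-resonant $\Gamma_i$ is $\lesssim \tilde\epsilon^{\alpha(r_*-k)}/s$, but this is not the form that actually comes out of inserting $U^{(i)}_{\rm SW}$ into the projected evolution, and it is not strong enough to close the path sum. The paper's Lemma~\ref{lem:small_terms} proves instead that
\begin{equation}
\widetilde C_i^{(\Gamma)}(s) \leq \frac{2K\|H_{\Gamma_i}\|}{s}\left(1+\frac{\widetilde h}{s}\right)^{2}\sum_{u\in\Gamma_i,\,v\in\Gamma_{i+1}} f(u,v),
\end{equation}
where $f(u,v)=\tilde\epsilon^{\alpha\mathsf d(u,v)/2}/\mathsf d(u,v)^\beta$ is a \emph{reproducing kernel}. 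Two things matter here. First, the sum $\sum_{u,v}f(u,v)$ is essential: the spatial decay must be carried as a function of endpoints so that the path sum can be closed by the set-intersection argument of Section~\ref{sec:lrbound_duhamel}. Your replacement ``the number of jump paths grows at most exponentially in $n_*$, absorb by shrinking $\alpha$'' drops the position dependence entirely and will not produce the $\exp(-\tfrac{\alpha\zeta}{8}\mathsf d(S_0,S_1))$ envelope from the scalar per-step bounds you write; in particular the purely-resonant steps $\le 2h/s$ give no spatial decay at all, so a purely combinatorial count of the $x_i$ choices (each of which ranges over $\sim r_*^d$ sites) need not be beaten by the surviving suppression without the reproducing-kernel bookkeeping. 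Second, the extra factor $(1+\widetilde h/s)^2$ is not cosmetic: it arises because rotating by $U^{(i)}_{\rm SW}$ converts $H|_{B_{r_*}(x_i)}$ into a diagonal $Z_{\rm int}$ plus quasi-local tails, and the path from $\Gamma_i$ to $\Gamma_{i+1}$ can route through up to two extra superoperator convolutions (the $(\mathrm{IB})$, $(\mathrm{IIA})$, $(\mathrm{IIB})$ terms of Section~\ref{sec:proof_C4}). Consequently the Laplace-domain object is a degree-$3n_*$ polynomial in $1/s$, not $1/s^{n_*+1}$, and the paper's inversion proceeds through the bound $x^n(1+x)^{2n}\le(4x)^n$ and a sum up to $3n_*+1$. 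Your step ``Laplace-inverting $1/s^{n_*+1}$ turns the product into $t^{n_*}/n_*!$'' therefore misrepresents the structure; the factor $1/8$ in the final exponent is produced precisely by redistributing the suppression across this larger range of $n$ together with the $1/2$ from the diameter argument, not by ``sharing the $\alpha$ budget'' in the way you describe.

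The other missing ingredient is why the iterated Duhamel expansion \emph{inside} the rotated ball terminates cleanly: the paper repeatedly uses that $Z_{\rm int}$ is commuting, so any attempted second hop through a $Z_{\rm int}$ term vanishes (the crucial cancellation after~\eqref{eq:exp3prime}). Without this observation the nested expansion you propose to ``rerun'' would not truncate, and each small step would generate an uncontrolled series. In short: you have correctly reconstructed the skeleton, including the final optimization over $r_*$ and the origin of the constraint $r>8k/\zeta$, but the statement you call ``the main obstacle'' is not a routine technicality — it is Lemma~\ref{lem:small_terms}, which carries the real content, and your sketched form of the bound in Case (B) is both too weak and in the wrong functional form for the argument to close.
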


\begin{proof}
As explained above, we first bound $C_{\Gamma_i}$, and then bound the sum over all such paths using a set-intersection counting argument similar to the one in Sec.~\ref{sec:lrbound_duhamel}. For the first task, if $\diam(\Gamma_i) > r_\ast$, then the trivial bound gives $\widetilde
C^{(\Gamma)}_i(s) \leq s^{-1}\Vert
H_{\Gamma_i}\Vert \leq s^{-1}\epsilon^{\diam(\Gamma_i)}$, so this case is easy. The
hard case will be $\diam(\Gamma_i) \leq r_\ast$, and we will defer this to a
technical lemma proved in Section \ref{sec:proof_C4}:

\begin{lem}
\label{lem:small_terms}
Let $\Gamma \in \Gamma(S_0 \to S_1)$. Then if $\Gamma_i$ is ``small", we have
\begin{align}
\widetilde C_{i}^{\Gamma} \leq  \frac{2K\Vert H_{\Gamma_i}\Vert}{s}\qty(1+\frac{\widetilde h}{s})^2\sum_{u \in \Gamma_{i-1}}\sum_{v \in \Gamma_i}\frac{\tilde\epsilon^{\alpha\mathsf d(u,v)/2}}{\mathsf d(u,v)^\beta}
\end{align}
where \begin{equation}
    \widetilde h = 4hK(1+k)^{2\beta}(r_\ast+1)^{\beta}\tilde \epsilon^{-\alpha k}. \label{eq:widetilde_h}
\end{equation}
\end{lem}

The summation over $u \in \Gamma_{i-1}, v \in \Gamma_{i}$ is an accounting tool similar to the ideas in Sec.~\ref{sec:lrbound_duhamel} to bound set intersections.
The central idea of the lemma is that a ``small" term $\Gamma_i$  can be captured within a ball of radius $r_\ast$, within which we can construct an SW transformation, which will suppress the growth of operators to the boundary. 

Given a path $\Gamma$, each small $\Gamma_i$ corresponds to an $x_i$ as chosen in Def.~\ref{def:generate_paths}. By choosing $x_i \in \partial \Sint{i-1} \cap \Gamma_{i}$ when $\Gamma_i$ is big, we can sort the paths by the points $x_1, \dots, x_{n}$ for $n\leq n_\ast+1$.  Let \begin{equation}f(u,v) = \frac{\tilde \epsilon^{\alpha \mathsf d(u,v)/2}}{\mathsf d(u,v)^{\beta}}\end{equation}
be reproducing with parameter $K$. Let $\isbig(\Gamma, i)$ denote whether $\Gamma_i$ is big, and let
$\operatorname{NR}(x_i)$ indicate whether the non-resonance condition is obeyed
at $x_i$. Applying Lem.~\ref{lem:small_terms}, we have

\begin{align}
  \sum_{\Gamma \in \Gamma(S_0\to S_1)}\prod_{i=1}^{|\Gamma|} C_i^{(\Gamma)} &\leq
  \sum_{n=1}^{n_\ast+1}\qty(\frac{2}{s})^n\sum_{x_1, \dots, x_{n+1} \in \Lambda}\sum_{\substack{u_1, \dots, u_n \in \Lambda \\ v_1, \dots, v_n \in \Lambda}}\prod_{i=1}^{n}\qty(K\qty[1+\frac{\widetilde h}{s}]^2f(u_i, v_i)\delta_{I_1}  + \delta_{I_2} + \delta_{I_3}) \notag\\
  &\times\sum_{\Gamma_i \ni x_i, u_i, v_{i-1}}\Vert H_{\Gamma_i}\Vert 
  \label{eq:sumoverpaths}
\end{align}
where $\delta_{I_{1,2,3}}$ denotes an indicator function on the sets $I_1, I_2, I_3$,
defined as follows:
\begin{subequations}
\begin{equation}
  I_1 = \{(x_i, u_i, v_i) | \neg \isbig(\Gamma,i), i < n+1, \mathrm{NR}(x_i), \mathsf d(x_i, x_{i+1}) = r_\ast+1\}\\
\end{equation}
\begin{equation}
  I_2 = \{(x_i, u_i, v_i) | \isbig(\Gamma, i)\text { or } i = n+1 , u_i = x_i, v_i = x_{i+1}\} \\
\end{equation}
\begin{equation}
  I_3 = \{(x_i, u_i, v_i) | \neg\isbig(\Gamma, i),i < n+1, \neg\mathrm{NR}(x_i), \mathsf d(x_i, x_{i+1}) = r_\ast + 1, u_i = x_i, v_i = x_{i+1}\}
\end{equation}
\end{subequations}
These conditions represent the three possible ways the path can grow: ($I_1$) via a non-resonant bubble, ($I_2$) via a ``non-local" (diameter $\ge r_*)$ term in the potential which occurs at a high order in $\epsilon$ or one that occurs at the end of the sequence, or ($I_3$) via a resonant bubble: see Fig.~\ref{fig:placeholder_2}.
\begin{figure}[t]
    \centering
    \def\svgwidth{0.7\linewidth} 
    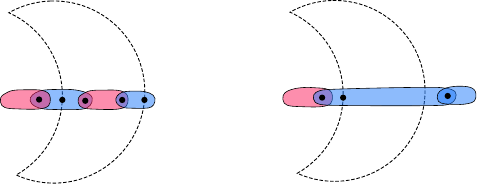
    \caption{Depiction of the summation strategy. \textbf{Left:} A ``small" term $\Gamma_i$. The red represents the factor of $f(u_i, v_i)$ from Lem.~\ref{lem:small_terms}, which we can add if $B_{r_\ast}(x_i)$ is non-resonant. The dotted circle represents the condition $\mathsf d(x_i, x_{i+1}) = r_\ast+1$, reflected in $I_1$ and $I_3$.
    \textbf{Right:} If $\Gamma_i$ is big, then the summations over $u_i, v_i$ can be removed, as reflected in condition $I_2$.}
    \label{fig:placeholder_2}
\end{figure}

To estimate the contribution from the sum over $I_1$, if $x_i, u_i, v_{i-1} \in \Gamma_i$ then we have
\begin{equation}\diam(\Gamma_i) \geq \max(\mathsf d(x_i, u_i), \mathsf d(x_i, v_{i-1})) \geq \frac{1}{2}[\mathsf d(x_i, u_i)+\mathsf d(x_i,
v_{i-1})]
\ ,
\end{equation}which implies that\footnote{Taking this into account in $d>1$ is what leads to the
  $r_\ast/8$ in the exponent, and in $d=1$ or if $V$ is 2-local, this technique yields an improved
  exponent of $r_\ast/4$. In the interest of describing qualitative behavior, we treat the most general case.} 
\begin{align}
  &\sum_{\Gamma_i \ni x_i, u_i, v_{i-1}}\Vert H_{\Gamma_i} \Vert \leq h\delta_{\max(\mathsf d(u_i, x_i), \mathsf d(v_{i-1}, x_i)) \leq k}+ \delta_{\max(\mathsf d(u_i, x_i), \mathsf d(v_{i-1}, x_i)) > k}h\epsilon^{\frac{1}{2}[\mathsf d(u_i, x_i) + \mathsf d(v_{i-1}, x_i)]} \notag \\
 &\leq h(1+k)^{2\beta}\tilde \epsilon^{-\alpha k}f(u_i, x_i)f(v_{i-1}, x_i)
 \label{eq:est1}
\end{align}
where $(1+k)^{2\beta}$ is a constant with respect to $k, d$, which are taken to be fixed. Now we turn to the terms in $I_2$.  Let
\begin{equation}f^{(n)}_\ast(x,y) = \frac{\tilde \epsilon^{\alpha\max(n(r_\ast + 1), \mathsf d(x,y))/2}}{\mathsf d(x,y)^\beta}\end{equation} and notice that $f_{\ast}^{(n)}$ has the reproducing property
\begin{equation}
\sum_{u \in \Lambda}f^{(n)}_\ast(x, u) f^{(m)}_\ast(u, y) \leq K f^{(n+m)}_\ast(x,y)
\end{equation}
If $i < n+1$ then $\Gamma_i$ is big, which implies that there is a point $y \in \Gamma_i$ such that $\mathsf d(x_i, y) > r_\ast$. Since
\begin{equation}
\diam(\Gamma_i) \geq \frac{1}{2}[\mathsf d(v_{i-1}, x_i) + \max(\mathsf d(x_i, y), \mathsf d(x_i, x_{i+1}))] \ ,
\end{equation}
we can bound
\begin{align}
  &\sum_{\Gamma_i \ni v_{i-1}, x_i, x_{i+1}}\Vert H_{\Gamma_i} \Vert\delta_{I_2} \leq \epsilon^{\frac{1}{2}[\mathsf d(v_{i-1}, x_i)+ \max(\mathsf d(x_i, y), \mathsf d(x_i, x_{i+1})]} \leq hf^{(1)}_\ast(x_{i}, x_{i+1})f(v_{i-1}, x_i)
    \label{eq:est2}
\end{align}

To incorporate the terms in $I_3$, we observe that
\begin{equation}
 \delta_{I_3} \leq (r_\ast+1)^\beta\tilde \epsilon^{-\alpha (r_\ast+1)/2}f_{\ast}^{(1)}(x_{i},x_{i+1})\delta_{\neg\operatorname{NR}(x_i)}
  \label{eq:est3}
\end{equation}
Thus for each ball in the path in which the non-resonance condition fails, we
obtain a factor $\epsilon^{-\alpha (r_\ast+1)/2}$, which forces us to decrease
$\lambda$ in the bound.

Putting \eqref{eq:est1}, \eqref{eq:est2}, and \eqref{eq:est3} together, we can
now carry out the sum over $u_i, v_i$ first. Define \begin{equation}
    h' = K^3(1+2k)^\beta \tilde
\epsilon^{-\alpha k}h,
\end{equation}
with which we can uniformly bound all three terms in \eqref{eq:sumoverpaths}:
\begin{align}
  &\sum_{\substack{u_1, \dots, u_n \\ v_1, \dots, v_n}}\prod_{i=1}^n\sum_{\Gamma_i \ni x_i, u_i, v_{i-1}}\Vert H_{\Gamma_i}\Vert \qty(K\qty[1+\frac{ \widetilde h}{s}]^2 \delta_{I_1} f(u_i, v_i) + \delta_{I_2} + \delta_{I_3}) \notag\\
  &\leq h'f(x_{n}, x_{n+1})\prod_{i=1}^{n-1}h'f_\ast(x_{i}, x_{i+1})\qty(\qty[1+\frac{\widetilde h}{s}]^2\delta_{\mathrm{NR}(x_i)} + (r_\ast+1)^\beta \tilde \epsilon^{-\alpha r_\ast/2}\delta_{\neg\mathrm{NR}(x_i)})
\end{align}
Since condition $\neg\operatorname{NR}(x_i)$ is met for at most $(1-\zeta)
n_{\ast}$ choices of $x_i$ by the assumption of condition~\ref{def:partialnonresonance}, we can then bound
the sum over $x_1, \dots, x_{n+1}$ in \eqref{eq:sumoverpaths} by
\begin{align}
  &\sum_{x_1, \dots, x_{n+1}}h'f(x_n, x_{n+1})\prod_{i=1}^{n-1}h'f_\ast(x_{i}, x_{i+1})\qty(\qty[1+\frac{ \widetilde h}{s}]^2\delta_{\mathrm{NR}(x_i)} + r_\ast^\beta \tilde \epsilon^{-\alpha (r_\ast+1)/2}\delta_{\neg\mathrm{NR}(x_i)})\notag\\
  &\leq \epsilon^{-\alpha (r_\ast+1) (1-\zeta)n_\ast/2}(K^2\widetilde h)^{n}\qty(1+\frac{\widetilde h}{s})^{2(n-1)}\sum_{x_1, \dots, x_{n+1}}\prod_{i=1}^{n-1}f_\ast^{(1)}(x_{i}, x_{i+1})
\end{align}
where the factor of $(1+r_\ast)^\beta$ is included in $\widetilde h$. By construction, if $\Gamma \in \Gamma^{(n_\ast)}(S_0)$ terminates at $|\Gamma| < n_\ast+1$, we must have $\mathsf d(x_1, x_n) \geq
n_\ast(r_\ast+1)$. This implies
\begin{align}
\sum_{x_1, \dots, x_{n+1}}\prod_{i=1}^{n-1}f_\ast^{(1)}(x_{i}, x_{i+1})
&\leq K^{n-1}\sum_{x_1 \in \partial S_0,x_{n+1}}f^{(n_\ast)}_{\ast}(x_1, x_{n+1}) \leq C |\partial S_0|\tilde \epsilon^{\alpha n_\ast(r_\ast+1)/2}
\end{align}
where we introduced a constant $C$ such that the factor of
$\mathsf d(x_1, x_{n+1})^{-\beta}$ in $f_\ast^{(n_\ast)}(x_1, x_{n+1})$
cancels with the sum over final points $x_{n+1}$.
Plugging this into the sum over $n$ from \eqref{eq:sumoverpaths}, we can bound
\begin{align}
\sum_{n=1}^{n_\ast+1}\qty(\frac{2K^3\widetilde h}{s})^{n}\qty(1+\frac{\widetilde h}{s})^{2(n-1)}
&\leq \sum_{n=1}^{3n_\ast+1}\qty(\frac{8K^3\widetilde h}{s})^{n}
\end{align}
where we used the fact that the sum of coefficients in the polynomial
$x^n(1+x)^{2n}$ are bounded by $4^n$, which follows from setting $x = 1$. Using
the assumption that $\tilde \epsilon < \e^{-1}$, we find
\begin{align}
  \epsilon^{\alpha (r_\ast+1) \zeta n_\ast/2}\sum_{n=1}^{3n_\ast+1}\qty(\frac{8K^3\widetilde h}{s})^{n}
  &\leq\sum_{n=1}^{3n_\ast+1}\tilde\epsilon^{\alpha (r_\ast+1) \zeta (3n_\ast-n)/8}\qty(\frac{8K^3\widetilde h \epsilon^{\alpha \zeta (r_\ast+1)/8}}{s})^{n} \notag \\
  &\leq \e^{-\alpha \zeta\mathsf d(S_0, S_1) /8}\sum_{n=1}^{3n_\ast+1}\qty(\frac{8K^3\widetilde h (\e\epsilon)^{\alpha \zeta (r_\ast+1)/8}}{s})^{n}
  \label{eq:before_Laplace_transform}
\end{align}
where we used $n_\ast(r_\ast+1) \geq \frac{1}{3}\mathsf d(S_1, S_0)$.
Multiplying by $1/s$ [from \eqref{eq:comm}] we take the upper bound of the sum to $\infty$ and perform the inverse Laplace transform to find
\begin{align}
\Vert[B_{S_0}, A_{S_1}(t)]\Vert \leq \frac{2C}{K}\Vert A_{S_1} \Vert\Vert B_{S_0} \Vert |\partial S_0|\qty[\exp(8K^3\widetilde h (\e\tilde \epsilon)^{\alpha \zeta(r_\ast+1)/8})-1]\exp(- \frac{\alpha \zeta}{8}\mathsf d(S_0, S_1))
\end{align}
Then to finish the claim, recall the definition \eqref{eq:widetilde_h} and the assumption that $\zeta r_\ast/8 \geq k$.  For any $\alpha' < \alpha$ we can choose a constant
$C'$ such that
\begin{equation}
\widetilde h(\e\tilde\epsilon)^{\alpha \zeta (r_{\ast}+1)/8} < C'h(\e\tilde \epsilon)^{\alpha' (\zeta(r_\ast+1)/8-k)}
\end{equation}
To get the prefactor $\min(|\partial S_0|, |\partial S_1|)$, we only need to
note that $\Vert [B_{S_0}, A_{S_1}(t)]\Vert = \Vert[A_{S_1}, B_{S_0}(-t)]\Vert$,
and the same argument can be applied. To write the bound in terms of $r =
\min(r_\ast + 1, \mathsf d(S_0, S_1))$, note that we do not have to use the largest possible $r_\ast$
given to us by the non-resonance condition, and in fact if $\mathsf d(S_0, S_1)< r_*$ we may take any alternative value $\tilde r_\ast <
\mathsf d(S_0, S_1)-1$ freely in the bound.
\end{proof}

\subsection{Proof of Lemma \ref{lem:jump_paths}}\label{sec:proof_jumppaths}

The relatively straightforward proof follows from the iterated
Duhamel expansion:
\begin{proof}[Proof of Lem.~\ref{lem:jump_paths}]
  First, we will show that
  \begin{align}
  \label{eq:jump_paths}
    \mathcal B_{S_0}\e(\mathcal H) A_{S_1} = \sum_{\Gamma \in \Gamma(S_0\to S_1)}(\ii)^{|\Gamma|}\mathcal B_{S_0}\e(\mathcal H)\ast \mathcal H_{\Gamma_1}\e(\overlinebold P_{\Sint{0}(\Gamma)}\mathcal H)
    \ast\mathcal H_{\Gamma_2}\e(\overlinebold P_{\Sint{1}(\Gamma)}\mathcal H) \ast \dots  \ast \mathcal H_{\Gamma_l}\e(\overlinebold P_{\Sint{l-1}(\Gamma)}\mathcal H)A_{S_1}
  \end{align}
This is essentially identical to the expansion in \eqref{eq:Schwinger_Karplus} with the paths constructed in Def.~\ref{def:generate_paths}.

We will prove this inductively. Since $n_\ast \geq 1$ by assumption, our base case is $n_\ast = 1$, where we have
\begin{equation}\Gamma^{(1)}(S_0) = \{\Gamma \subseteq \Lambda: \Gamma \cap S_0 \neq \emptyset \text{ and  } \Gamma \cap S_0^{\mathrm{c}} \neq \emptyset\}
\end{equation}
Since $S_0 \cap S_1 = \emptyset$, a single Duhamel expansion with respect to the decomposition \begin{equation}H = \sum_{\Gamma \in \Gamma^{(1)}(S_0)}H_\Gamma + \sum_{\Gamma \notin \Gamma^{(1)}(S_0)}H_{\Gamma} = \mathbb P_{S_0}\mathbb P_{S_0^{\mathrm{c}}}H + (\overlinebold P_{S_0}\mathbb P_{S_0^{\mathrm{c}}} + \mathbb P_{S_0} \overlinebold P_{S_0^{\mathrm{c}}} + \overlinebold P_{S_0} \overlinebold P_{S_0^{\mathrm{c}}})H \end{equation} yields
\begin{align}
 \mathcal B_{S_0}\e(\mathcal H) A_{S_1} &= \mathcal B_{S_0}\e([\overlinebold P_{S_0}\mathbb P_{S_0^{\mathrm{c}}} + \mathbb P_{S_0} \overlinebold P_{S_0^{\mathrm{c}}} + \overlinebold P_{S_0} \overlinebold P_{S_0^{\mathrm{c}}}]\mathcal H)A_{S_1} + \ii\sum_{\Gamma \in \Gamma^{(1)}}\mathcal B_{S_0} \e(\mathcal H) \mathcal H_{\Gamma} \e([\overlinebold P_{S_0}\mathbb P_{S_0^{\mathrm{c}}} + \mathbb P_{S_0} \overlinebold P_{S_0^{\mathrm{c}}} + \overlinebold P_{S_0} \overlinebold P_{S_0^{\mathrm{c}}}]\mathcal H)A_{S_1} \notag \\
 &= \ii\sum_{\Gamma \in \Gamma^{(1)}}\mathcal B_{S_0} \e(\mathcal H) \mathcal H_{\Gamma} \e(\overlinebold P_{S_0}\mathcal H)A_{S_1}
\end{align}
Since $\Sint{0} = S_0$, this completes the base case. 

For the inductive step, let $X = \Sint{l}\backslash\Sint{l-1}(\Gamma)$ be the region we wish to decouple from the rest of the
system, as depicted in Fig.~\ref{fig:examplepaths}.
Consider the decomposition
\begin{align}
H \equiv \sum_{S:\substack{S \cap X \neq \emptyset\\ S \cap X^{\mathrm{c}}\neq \emptyset}}H_{S} + \sum_{S:S \subseteq X}H_S + \sum_{S:S \subseteq X^{\mathrm{c}}}H_S = H_{\mathrm{bdy}} + H_{\mathrm{int}} + H_{\mathrm{ext}}
\end{align}
Then performing the Duhamel expansion on $\exp(\overlinebold P_{\Sint{i-1}(\Gamma)}\mathcal H)$ gives
\begin{align}
  &\mathcal H_{\Gamma_l}\e(\overlinebold P_{\Sint{l-1}(\Gamma)}\mathcal H)A_{S_1} \notag\\
  &= \mathcal H_{\Gamma_l}[\e(\overlinebold P_{\Sint{l-1}(\Gamma)}[\mathcal H_{\mathrm{ext}}+\mathcal H_{\mathrm{int}}]) + \ii\e(\overlinebold P_{\Sint{l-1}(\Gamma)}\mathcal H) \overlinebold P_{\Sint{l-1}(\Gamma)}\mathcal H_{\mathrm{bdy}} \ast\e(\overlinebold P_{\Sint{l-1}(\Gamma)}[\mathcal H_{\mathrm{int}} + \mathcal H_{\mathrm{ext}}])]A_{S_1} \notag\\
  &= \ii\sum_{S:\substack{S \cap \Sint{l-1}(\Gamma) = \emptyset \\ S\cap  \partial X^{\mathrm{c}} \neq \emptyset}}\mathcal H_{\Gamma_l}\e(\overlinebold P_{\Sint{l-1}(\Gamma)}\mathcal H)\mathcal H_{S} \ast \e(\overlinebold P_{\Sint{l}(\Gamma)}\mathcal H_{\mathrm{ext}})A_{S_1}
\end{align}
The first term in the second line vanishes because $\overlinebold
P_{\Sint{l-1}(\Gamma)}H_{\mathrm{ext}}$ and $\overlinebold P_{\Sint{l-1}(\Gamma)}H_{\mathrm{int}}$
act on disjoint regions containing $S_1$ and $\Gamma_l$ by construction.
Similarly, $\e(\overlinebold P_{\Sint{l-1}(\Gamma)}\mathcal H_{\mathrm{int}})A_{S_1} = A_{S_1}$ and $[\mathcal H_{\mathrm{int}}, \mathcal H_{\mathrm{ext}}] = \ad_{[H_{\mathrm{int}}, H_{\mathrm{ext}}]} = 0$.
Thus, put $X = \Sint{l}(\Gamma)$.
We can write the above compactly as
\begin{align}
  \mathcal H_{\Gamma_l}\e(\overlinebold P_{\Sint{l}(\Gamma)}\mathcal H)A_{S_1}
  &=
  \ii\sum_{(\Gamma, \Gamma_{l+1}) \in \Gamma^{(l+1)}(S_0)}\mathcal H_{\Gamma_l}\e(\overlinebold P_{\Sint{l-1}(\Gamma)}\mathcal H)\mathcal H_{\Gamma_{l+1}} \ast \e(\overlinebold P_{\Sint{l}(\Gamma)}\mathcal H)A_{S_1}
\end{align}
This finishes the induction. 

By construction, $\e(\overlinebold P_{\Sint{i-1}}\mathcal H)\mathcal H_{\Gamma_{i+1}}\e(\overlinebold P_{\Sint{i}}\mathcal H)\dots A_{S_1}$ is supported within $\Sint{i-1}(\Gamma)^{\mathrm{c}}$, which means that after applying $\mathcal H_{\Gamma_i}$, the operator $\mathcal H_{\Gamma_i}\e(\overlinebold P_{\Sint{i-1}}\mathcal H)\mathcal H_{\Gamma_{i+1}}\e(\overlinebold P_{\Sint{i}}\mathcal H)\dots A_{S_1}$ is supported within $\Sint{i-1}(\Gamma)^{\mathrm{c}} \cup \Gamma_i$. Therefore, we can insert projectors into \eqref{eq:jump_paths} in the following way:
\begin{align}
  \mathcal B_{S_0}\e(\mathcal H)A_{S_1} &= \ii\sum_{\Gamma \in \Gamma^{(n_\ast)}(S_0)}
  \qty[\mathcal B_{S_0}\e(\mathcal H)] \ast \qty[\mathcal H_{\Gamma_1} \e(\overlinebold P_{\Sint{0}(\Gamma)}\mathcal H) \overlinebold P_{\Sint{1}(\Gamma)\backslash \Gamma_2}] \ast \dots \\
  &\ast \qty[\mathcal H_{\Gamma_{l-1}}\e(\overlinebold P_{\Sint{l-2}}\mathcal H)\overlinebold P_{\Sint{l-1}(\Gamma)\backslash \Gamma_l}] \ast \qty[\mathcal H_{\Gamma_l}\e(\overlinebold P_{\Sint{l-1}(\Gamma)}\mathcal H)\overlinebold P_{\Sint{l}\backslash S_1}]A_{S_1}
\end{align}
Using the submultiplicativity of the $\infty-\infty$ norm (Def.~\ref{defn:inftyinftynorm}) gives
\begin{align}
\Vert \mathcal B_{S_0} \e(\mathcal H) A_{S_1}\Vert &\leq 2\Vert B_{S_0}\Vert \Vert A_{S_1}\Vert\sum_{\Gamma \in \Gamma(S_0 \to S_1)} 1 \ast \Vert  \mathcal H_{\Gamma_1}\e(\overlinebold P_{\Sint{0}(\Gamma)}\mathcal H)\overlinebold P_{\Sint{1}\backslash \Gamma_{2}}\Vert_{\infty-\infty} \ast \notag\\
&\dots 
    \ast \Vert \mathcal H_{\Gamma_l}\e(\overlinebold P_{\Sint{l-1}(\Gamma)}\mathcal H)\overlinebold P_{\Sint{l}(\Gamma)\backslash S_1}\Vert_{\infty-\infty}
\end{align}
This completes the proof.
\end{proof}

\subsection{Proof of Lemma \ref{lem:small_terms}}\label{sec:proof_C4}

\begin{proof}[Proof of Lem.~\ref{lem:small_terms}]
  First, we may write
  \begin{align}
  C_{i}^{(\Gamma)}(s) = \sup_{\Vert O\Vert = 1}\Vert \mathcal H_{\Gamma_i}\e(\overlinebold P_{\Sint{i-1}(\Gamma)}\mathcal H)\overlinebold P_{\Sint{i}(\Gamma)\backslash \Gamma_{i+1}}O \Vert = \sup_{\Vert O_{\mathrm{bdy}}\Vert = 1}\Vert \mathcal H_{\Gamma_i}\e(\overlinebold P_{\Sint{i-1}(\Gamma)}\mathcal H)O_{\mathrm{bdy}}\Vert 
\end{align}
where $O_{\mathrm{bdy}}$ is an operator supported strictly within
$\Sext{i}(\Gamma) \cup \Gamma_{i+1}$. For shorthand, we will drop the $\Gamma$
argument of $\Sint{i}$ and $\Sext{i}$, and for any subset $S$ we will write $\overlinebold P_{S}H = H|_{S^{\mathrm{c}}}$.
Now by hypothesis $\Gamma_i \cap \Sext{i-1} \subseteq B_{r_{\ast}}(x_i)$, so
consider $U^\SW$ constructed on $B_{r_\ast}(x_i) \cap \Sext{i-1}$. It is important that $U^\SW$ is
supported within $\Sext{i-1} \cap \Sint{i}$ so that no two $U^\SW$ constructed with the same
$\Gamma$ for different choices of $i$ ever share support.
Using the invariance of the operator norm, we have 
\begin{align}
  \Vert \mathcal H_{\Gamma_i}\e(\mathcal H \resext)O_{\mathrm{bdy}}\Vert &= 
  \Vert \mathcal H^\SW_{\Gamma_i} \e(\mathcal H^{\SW} \resext)O_{\mathrm{bdy}}^{\SW} \Vert \ ,
\label{eq:preexpansion}
\end{align}
where $O^{\SW} \equiv U_{\SW}^\dagger O U_{\SW}$ for an operator $O$. Then take local decompositions $\mathcal H^{\SW}_{\Gamma_i} = \sum_{S_2}\tildemathcal
H_{S_2}$ and $O_{\mathrm{bdy}}^\SW = \sum_{S_3}\widetilde O_{S_3}$. We can decompose
\begin{align}
  \mathcal H_{\Gamma_i}^\SW \e(\mathcal H^{\SW}\resext)O^\SW_{\mathrm{bdy}} &=
  \underbrace{\sum_{\mathcal S_2 \cap S_3 \neq \emptyset}\tildemathcal H_{S_2}\e(\mathcal H^\SW\resext)\widetilde O_{S_3}}_{(\rm{IA})}
  + \sum_{S_2 \cap S_3 = \emptyset}\tildemathcal H_{S_2}\e(\mathcal H^\SW\resext)\widetilde O_{S_3}
  \label{eq:exp1}
\end{align}
The first term is already optimal, because $S_2 \cup S_3$ contains a path from
$\Gamma_i$ to $\Sext{i}\cup \Gamma_{i+1}$, so we have labeled it (IA).

Now we expand the second term. First consider the decomposition
\begin{align}
H = \mathbb P_{\Sint{i}} \overlinebold P_{\Sext{i}}H + \mathbb P_{\Sint{i}}\mathbb P_{\Sext{i}}H + \mathbb P_{\Sext{i}} \overlinebold P_{\Sint{i}}H  = H_{\mathrm{int}} + H_{\mathrm{bdy}} + H_{\mathrm{ext}}
\end{align}
We will denote $H_{\mathrm{int}}^\SW|_{\Sext{i-1}} = Z_{\mathrm{int}}$ to emphasize that it is
commuting. Since $\Sext{i}$ does not intersect $B_{r_\ast}(x_i)$ by
construction, this means that every term in $\mathbb P_{\Sext{i}}H$ has support
on at least one site at a distance $r_\ast+1$ from $x_i$.

Now for each $S_2$ with $S_2 \cap S_3 = \emptyset$, we decompose $H^{\text{SW}}\resext = H' + \mathbb P_{S_2}H_{\mathrm{bdy}}^\SW$ where 
$H' = Z_{\mathrm{int}} + \overline{\mathbb P}_{S_2}H^\SW_{\mathrm{bdy}} + H_{\mathrm{ext}}$.
Using the Duhamel expansion on $\e(\mathcal H^{\SW}\resext)$ in the second term of Eq.~\eqref{eq:exp1}, we have
\begin{align}
&\sum_{S_2 \cap S_3 = \emptyset}\tildemathcal H_{S_2}\e(\mathcal H^{\SW})\widetilde O_{S_3} = \sum_{S_2 \cap S_3 = \emptyset}\tildemathcal H_{S_2}\e( \mathcal H')\widetilde O_{S_3} +\underbrace{\ii\sum_{S_2 \cap S_3 = \emptyset}\sum_{S_3' \cap S_2 \neq \emptyset}\tildemathcal H_{S_2}\e(\mathcal H^\SW\resext )\ast (\mathcal H_{\mathrm{bdy}}^\SW)_{S_3'}\e(\mathcal H')\widetilde O_{\mathrm{bdy}})_{S_3}}_{(\rm{IB})}
\label{eq:exp2}
\end{align}
The second term is again optimal, because $S_2 \cup S_3'$ contains a
path connecting $\Gamma_i$ and $\Sext{i} \cup \Gamma_{i+1}$, so we continue to expand the first term.

Consider the decomposition $H' = H'' + \mathbb P_{S_2}Z_{\mathrm{int}}$, where $H'' = \overline {\mathbb P}_{S_2}Z_{\mathrm{int}} + \overline {\mathbb P}_{S_2}H_{\mathrm{bdy}}^\SW+H_{\mathrm{ext}}$. Applying the Duhamel identity,
\begin{align}
  &\sum_{S_2 \cap S_3 = \emptyset}\mathcal H_{S_2}\e(\mathcal H')\widetilde O_{S_3}\notag\\
  &=\ii\underbrace{\sum_{S_2 \cap S_3 = \emptyset}\sum_{S_2':\substack{S_2' \cap S_2 \neq \emptyset \\ S_2' \cap S_3 \neq \emptyset}}\tildemathcal H_{S_2}\e(\mathcal{H}')\ast (\mathcal Z_{\mathrm{int}})_{S_2'}\e(\mathcal{H}'')\widetilde O_{S_3}}_{(\rm{IIA})}
+\ii\sum_{S_2 \cap S_3 = \emptyset}\sum_{S_2':\substack{S_2' \cap S_2 \neq \emptyset \\ S_2' \cap S_3 = \emptyset}}\tildemathcal H_{S_2}\e(\mathcal{H}')\ast (\mathcal Z_{\mathrm{int}})_{S_2'}\e(\mathcal{H}'')\widetilde O_{S_3}
\label{eq:exp3}
\end{align}
The first term in the Duhamel expansion vanishes:
\begin{align}
\ii\sum_{S_3 \cap S_2 = \emptyset}\tildemathcal H_{S_2}\e(\mathcal H'')\widetilde O_{S_3} = 0
\end{align}
because $H''$ is not supported on $S_2$. The first term has a path from $\Gamma_i$ to
$\Sext{i} \cup \Gamma_{i+1}$ contained in $S_2 \cup S_2' \cup S_3$.

Lastly,
we apply the Duhamel identity to the last part of the second term in Eq.~\eqref{eq:exp3}.
Expanding $H'' = H''' + \mathbb P_{S_2'}\overline{\mathbb P}_{S_2}\mathcal
H_{\mathrm{bdy}}^{\text{SW}}$, where $H''' = \overline{\mathbb P}_{S_2'}\overline{\mathbb
  P}_{S_2}H_{\mathrm{bdy}}^\SW+\overline{\mathbb P}_{S_2}Z_{\mathrm{int}} +H_{\mathrm{ext}}$, we have
\begin{align}
(\mathcal Z_{\mathrm{int}})_{S_2'}\e(\mathcal H'')\widetilde O_{S_3} &= 
\ii\sum_{S_3' \cap S_2' \neq \emptyset}(\mathcal Z_{\mathrm{int}})_{S_2'}\e(\mathcal H'')\ast (\mathcal H_{\mathrm{bdy}}^\SW)_{S_3'}\e(\mathcal H''')\widetilde O_{S_3}
\label{eq:exp3prime}
\end{align}
Dropping the first term from the Duhamel expansion in the equation above is where we use the fact that $Z_{\mathrm{int}}$ is commuting. This term is the following:
\begin{align}
 (\mathcal Z_{\mathrm{int}})_{S_2'}\e(\mathcal H''')\widetilde O_{S_3} = 0
\end{align}
The only term in $H'''$ with support overlapping  $S_2'$ is
$\overline{\mathbb P}_{S_2}Z_{\mathrm{int}}$. Since $S_3 \cap S_2' = \emptyset$, any
operator in the Taylor expansion of $\e(\mathcal H''')\widetilde O_{S_3}$ supported on $S_2'$ must be a Pauli-Z operator, and so commutes with $(Z_{\mathrm{int}})_{S_2'}$.

Now inserting Eq.~\eqref{eq:exp3prime} into Eq.~\eqref{eq:exp3}, we get the last term
\begin{align}
&\ii\sum_{S_2 \cap S_3 = \emptyset}\sum_{S_2':\substack{S_2' \cap S_2 \neq \emptyset \\ S_2' \cap S_3 = \emptyset}}\tildemathcal H_{S_2}\e(\mathcal{H}')\ast (\mathcal Z_{\mathrm{int}})_{S_2'}\e(\mathcal{H}'')\widetilde O_{S_3} \notag\\
&= \underbrace{-\sum_{S_3 \cap S_2 = \emptyset}\sum_{S_2':\substack{S_2' \cap S_2 \neq \emptyset \\ S_2' \cap S_3 = \emptyset}}\sum_{\substack{S_3' \cap S_2' \neq \emptyset \\ S_3' \cap S_2 = \emptyset}}\tildemathcal H_{S_2}\e(\mathcal H') \ast (\mathcal Z_{\mathrm{int}})_{S_2'}\e(\mathcal H'')\ast (\mathcal H_{\mathrm{bdy}}^\SW)_{S_3'}\e(\mathcal H''')\widetilde O_{S_3}}_{(\rm{IIB})}
\label{eq:exp4}
\end{align}
Since $S_2 \cup S_2' \cup S_3'$ contains a path from $\Gamma_i$ to $\Sext{i} \cup \Gamma_{i+1}$, we
have now completed the expansion. While the computation is tedious, the interpretation in terms of paths to the boundary is quite simple; This is illustrated pictorially in Fig.~\ref{fig:legos_placeholder}.
\begin{figure}[bt!]
    \centering
    \def\svgwidth{0.95\linewidth} 
    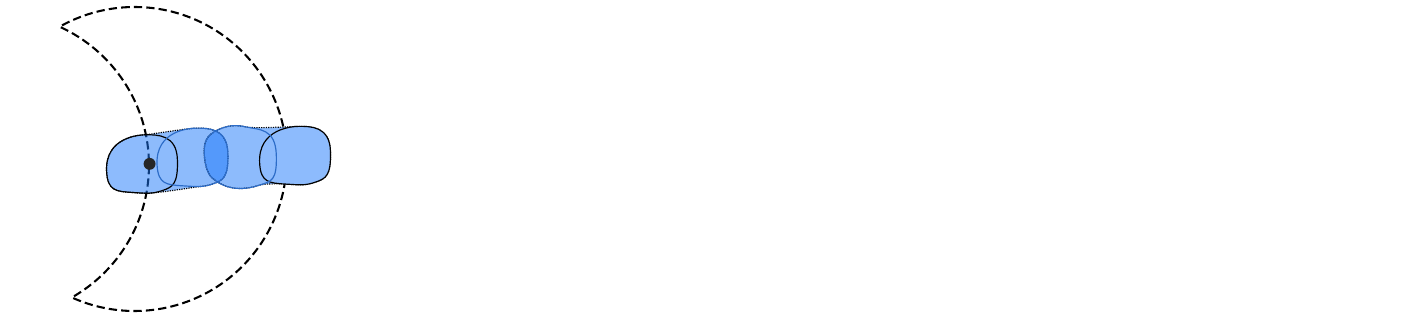
    \caption{Figure illustrating the interpretation of the conclusion of the lemma as a sum over irreducible paths, sorted into four different types of paths. The sets $S_2$ and $S_3$ result from applying the quasilocal SW transform within $B_{r_\ast}(x_i)$ to $H_{\Gamma_i}$ and $H_{\Gamma_{i+1}}$. In (IB) and (IIB), the path ends at the boundary, but not at $\Gamma_{i+1}$, so we sum over endpoints $v$, accruing a factor of $A_{\ast}$ which will be absorbed into $\tilde \epsilon^{\alpha r_\ast/2}$. The red sets illustrate where a term from $Z_{\mathrm{int}}$ occurs in the path, and because it is commuting, we can design the paths such that only one term from $Z_{\mathrm{int}}$ becomes an irreducible path component.}
    \label{fig:legos_placeholder}
\end{figure}

In terms of $\widetilde C_{i}^{(\Gamma)}(s)$, this expansion leads to the following bound:
\begin{align}
  \widetilde C_{i}^{(\Gamma)}(s) \leq
  \Vert \mathrm{(IA)} \Vert + \Vert \mathrm{(IB)} \Vert + \Vert \mathrm{(IIA)} \Vert + \Vert \mathrm{(IIB)} \Vert
  \label{eq:recursion}
\end{align}
where the terms labeled (IA), (IB), (IIA), and (IIB) come from equations
\eqref{eq:exp1},\eqref{eq:exp2}, \eqref{eq:exp3}, \eqref{eq:exp4} respectively. 

Let
$f(x,y) \equiv c\tilde\epsilon^{\alpha \mathsf d(x,y)/2}/\mathsf d(x,y)^\beta$
be reproducing with parameter $K$.  Starting with (IA),
\begin{align}
  \Vert \mathrm{(IA)} \Vert &\leq 
 \frac{2}{s}\sum_{\mathcal S_2 \cap S_3 \neq \emptyset}\Vert \widetilde H_{S_2}\Vert\Vert \widetilde O_{S_3}\Vert
\end{align}
Then the sum over subsets is bounded using the same set-intersection-counting
argument as explain in Sec.~\ref{sec:lrbound_duhamel}. Importantly, since
the support of $U^\SW$ only intersects the support of $O_{\mathrm{bdy}}$ at
$\Gamma_{i+1}$, when we apply Lem.~\ref{lem:SW_transformed_LRbound} we can use
$S = \Gamma_{i+1}$. A way to visualize this is that $U^\SW$ can only `grow'
$O_{\mathrm{bdy}}$ into the interior of $B_{r_\ast+1}(x_i)$.
This results in the following:
\begin{align}
  \sum_{\mathcal S_2 \cap S_3 \neq \emptyset}\Vert \widetilde H_{S_2}\Vert\Vert \widetilde O_{S_3}\Vert
  &\leq \sum_{v \in \Lambda}\Vert H^{\SW}_{\Gamma_i}\Vert_{S_2,v}\Vert \widetilde O_{\mathrm{bdy}}\Vert_{v, S_3}
 \leq \Vert H_{\Gamma_i}\Vert K\sum_{\substack{u \in S_2 \\ w \in S_3}}f(u, w)
\end{align}
Then (IB),
\begin{align}
\Vert \mathrm{(IB)} \Vert &\leq 
  \frac{4}{s^2}\sum_{S_2 \cap S_3 = \emptyset}\sum_{S_3' \cap S_2 \neq \emptyset}\Vert \widetilde H_{S_2}\Vert\Vert(H_{\mathrm{bdy}}^\SW)_{S_3'}\Vert\Vert \widetilde O_{S_3}\Vert
\end{align}
Counting set intersections and applying
Lem.~\ref{lem:SW_transformed_LRbound} gives us
\begin{align}
\sum_{S_2 \cap \Gamma_i \neq \emptyset}\sum_{S_3' \cap S_2 \neq \emptyset}\Vert \widetilde H_{S_2}\Vert\Vert(H_{\mathrm{bdy}}^\SW)_{S_3'}\Vert &\leq
\Vert H_{\Gamma_i}\Vert\sum_{S_3}\Vert (H_{\mathrm{bdy}})_{S_3}\Vert K\sum_{\substack{u \in \Gamma_i \\ w \in S_3}}f(u, w) \notag \\
&\leq \Vert H_{\Gamma_i}\Vert K\sum_{u \in \Gamma_i}\sum_{v \in \partial B_{r_\ast+1}(x_i)}\sum_{w \in \Lambda}f(u, w)\Vert H_{\mathrm{bdy}}\Vert_{w, v} \notag\\
  &\leq 2\Vert H_{\Gamma_i}\Vert K^2hA_\ast(1+k)^{\beta}\tilde \epsilon^{-\alpha k}\sum_{u \in S_i}f(u, \partial B_{r_\ast+1}(x)) \notag\\
  &\leq \Vert H_{\Gamma_i}\Vert 2K\widetilde h\sum_{u \in \Gamma_i}\sum_{v \in \Gamma_{i+1}}f(u, v)
\end{align}
where $A_{\ast} = |\partial B_{r_\ast+1}(x)|$. In the second step, we used the
fact that every term in $H_{\mathrm{bdy}}$ is supported on $\Sext{i}$. In the third step, we observed that
\begin{align}
  \Vert H_{\mathrm{bdy}} \Vert_{v, w} &\leq \Vert H_0 \Vert_{v,w} + \sum_{r=1}^{\infty}\epsilon^r \Vert V_{r} \Vert _{v, w}
\leq \mathds 1[\mathsf d(v, w) \leq k]\sum_{S \ni w}\Vert (H_0)_S \Vert + \sum_{r = \mathsf d(v, w)}^{\infty}\epsilon^r \sum_{S \ni w}\Vert (V_r)_S \Vert \notag\\
  &\leq 2h\epsilon^{\mathsf d(v,w)-k} \leq 2h (1+k)^\beta \tilde \epsilon^{-\alpha k/2} f(v, w) \leq \frac{\widetilde h}{K} f(v,w)
\end{align}
where we have also used the definition for $\widetilde h$ in \eqref{eq:widetilde_h}.  Then we summed over the endpoints $w$, and for convenience, over-bounded this
with a sum over endpoints in $\Gamma_{i+1}$ in the last step.

The remaining terms (IIA) and (IIB) exactly mirror the manipulations for (IA)
and (IB). To account for the additional term from $Z_{\mathrm{int}}$, we apply the
estimate from Cor.~\ref{cor:HSW_locality}, which gives $\Vert Z_{\mathrm{int}}\Vert_{u,v}
\leq \frac{\widetilde h}{K} f(u,v)$:
\begin{align}
\Vert \mathrm{(IIA)} \Vert &\leq 
 \frac{4}{s^2}\sum_{S_2 \cap S_3 = \emptyset}\sum_{S_2':\substack{S_2' \cap S_w \neq \emptyset \\ S_2' \cap S_3 \neq \emptyset}}\Vert \widetilde H_{S_2}\Vert \Vert (Z_{\mathrm{int}})_{S_2'}\Vert\Vert\widetilde O_{S_3}\Vert\leq \Vert H_{\Gamma_i}\Vert\frac{2K\widetilde h}{s^2}\sum_{\substack{u \in \Gamma_i \\ v \in \Gamma_{i+1}}}f(u, v)
\end{align}
Lastly:
\begin{align}
\Vert \mathrm{(IIB)} \Vert &\leq 
  \frac{8}{s^3} \sum_{S_3 \cap S_2 = \emptyset}\sum_{S_2':\substack{S_2' \cap S_2 \neq \emptyset \\ S_2' \cap S_3 = \emptyset}}\sum_{\substack{S_3' \cap S_2' \neq \emptyset \\ S_3' \cap S_2 = \emptyset}}\Vert\widetilde H_{S_2}\Vert\Vert (Z_{\mathrm{int}})_{S_2'}\Vert \Vert (H_{\mathrm{bdy}}^\SW)_{S_3'}\Vert  \Vert \widetilde O_{S_3}\Vert \leq \Vert H_{\Gamma_i}\Vert\frac{2K\tilde h^2}{s^3}\sum_{u \in \Gamma_i, v \in \Gamma_{i+1}}f(u, v)
\end{align}
Inserting this into \eqref{eq:recursion},
\begin{align}
  \widetilde C_{i}^{(\Gamma)}(s) \leq \frac{2K\Vert H_{\Gamma_i}\Vert}{s}\qty(1+\frac{\widetilde h}{s})^2\sum_{u \in \Gamma_i,v \in \Gamma_{i+1}}f(u,v)
  \label{eq:smalliter}
\end{align}
which completes the proof.
\end{proof}

\subsection{Simulation of non-resonant systems}

As an immediate corollary, we find that non-resonant dynamics are well-approximated by local dynamics up to much later times than a generic Hamiltonian, simply because non-resonance makes the Lieb-Robinson velocity so small:

\begin{cor}
Let $H$ satisfy the hypothesis of Thm.~\ref{thm:main_thm}.  Given $S \subseteq \Lambda$, let $S_r = \{x : \mathsf d(S, x) \leq r\}$ be the ``thickening" of $S$ by $r$ sites. Given an operator $A_{S}(t)$ where $S$ is connected, let \begin{equation}A^{(r)}_S(t) \equiv \e(\overlinebold P_{S_r^{\mathrm{c}}}H)(t) A_S \end{equation}
be a local approximant for $A_S(t)$. Then for any $\mu < \frac{\alpha \zeta}{8}$ there exists a constant $C$ such that the approximation error is bounded by 
\begin{align}
\Vert A_S(t) - A^{(r)}_S(t) \Vert \leq C|S|\Vert A_{S} \Vert \qty[\exp(C'h(\e \tilde \epsilon)^{\alpha[\zeta R/8-k]}t)-1]\exp(-\mu r)
\end{align}
where $R = \min(r_\ast +1, r)$, so long as $R > \frac{8k}{\zeta}$.
\end{cor}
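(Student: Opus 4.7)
The plan is to use a standard Duhamel interpolation between the full evolution under $H$ and the truncated evolution under $H_{\mathrm{in}} \equiv \overlinebold P_{S_r^{\mathrm{c}}}H$, and then apply Theorem~\ref{thm:main_thm} to the truncated dynamics to control each boundary commutator. Writing $H = H_{\mathrm{in}} + H_{\mathrm{out}}$, where $H_{\mathrm{out}} = \sum_{X:\,X\cap S_r^{\mathrm{c}}\neq\emptyset} H_X$ collects the terms of $H$ crossing $\partial S_r$, differentiating the interpolant $f(s) = \e^{\ii Hs}\bigl(\e^{\ii H_{\mathrm{in}}(t-s)} A_S \e^{-\ii H_{\mathrm{in}}(t-s)}\bigr)\e^{-\ii Hs}$ in $s$ and integrating from $0$ to $t$ yields the standard identity
\begin{equation}
A_S(t) - A^{(r)}_S(t) = \ii\int_0^t \dd s\, \e^{\ii Hs}\bigl[H_{\mathrm{out}}, A^{(r)}_S(t-s)\bigr]\e^{-\ii Hs}.
\end{equation}
Taking operator norms eliminates the outer unitaries, so the task reduces to bounding $\Vert [H_X, A^{(r)}_S(\tau)]\Vert$ for each $X$ crossing $\partial S_r$, uniformly in $\tau\leq t$, and then integrating in time.

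Next I would apply Theorem~\ref{thm:main_thm} to the $H_{\mathrm{in}}$-dynamics. The key observation is that $H_{\mathrm{in}}$ retains exactly the same local classical part $H_0$ inside $S_r$, so the $(h, r_\ast,\Delta, n_\ast,\zeta)$ partial non-resonance condition at every point of $\partial S$ is inherited from $H$ verbatim, provided $r\geq r_\ast$: then every ball $B_{r_\ast}(x)$ with $x\in\partial S$ sits safely inside the truncation region and is untouched. Only terms $H_X$ with $X\cap S_r\neq\emptyset$ contribute, and for those $\mathsf d(X\cap S_r, S)\geq r - \rng(H_X)$. Applying Theorem~\ref{thm:main_thm} to the truncated evolution with $S_0 = X\cap S_r$ and $S_1 = S$ then delivers
\begin{equation}
\Vert [H_X, A^{(r)}_S(\tau)]\Vert \leq C\min(|\partial S|,|X|)\Vert H_X\Vert\Vert A_S\Vert\bigl[\exp(C'h(\e\tilde\epsilon)^{\alpha[\zeta R/8 - k]}\tau) - 1\bigr]\exp\bigl(-\tfrac{\alpha\zeta}{8}\mathsf d(X\cap S_r, S)\bigr).
\end{equation}

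The remaining steps are standard bookkeeping. Organizing the sum over $X$ crossing $\partial S_r$ by a base point $x\in\partial S_r$ with $\mathsf d(x,S)\geq r$, and using $\sum_{X\ni x}\Vert H_X\Vert\leq h$ (from $\Vert H\Vert_{\kappa=0}\leq h$), the polynomial shell volume $|\partial S_r|\lesssim |\partial S|\, r^{d-1}$ is comfortably absorbed by the exponential tail $\e^{-\alpha\zeta\mathsf d(x,S)/8}$ at the price of lowering the decay rate to any $\mu<\alpha\zeta/8$. The $O(1)$ correction $\rng(H_X)$ to the distance produces only a constant shift of the exponent. Since the time-dependent factor $\exp(C'h(\e\tilde\epsilon)^{\alpha[\zeta R/8-k]}\tau)-1$ is monotone in $\tau$, integrating over $\tau\in[0,t]$ preserves its form up to redefinition of $C,C'$, and finally the loose bound $|\partial S|\leq |S|$ gives the claimed inequality.

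The main obstacle I anticipate is justifying that Theorem~\ref{thm:main_thm} transfers from $H$ to $H_{\mathrm{in}}$ with essentially the same constants. This hinges on the locality of Definition~\ref{defn:noresonance}: since non-resonance at $x$ depends only on $H_0$ inside $B_{r_\ast}(x)$, truncating at $\partial S_r$ leaves it intact at every $x\in\partial S$ whenever $R\geq r_\ast$, which is precisely the content of the hypothesis $R>8k/\zeta$ together with $R=\min(r_\ast+1,r)$. Once this local stability is in hand, the rest of the argument is a routine LR-style sum that follows the template used throughout the paper.
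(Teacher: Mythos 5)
Your Duhamel interpolation $A_S(t) - A^{(r)}_S(t) = \ii\int_0^t \dd s\, \e^{\ii Hs}[H_{\mathrm{out}}, A^{(r)}_S(t-s)]\e^{-\ii Hs}$ is correct and is indeed the operator-level version of the expansion the paper uses. The gap is in the final step. After bounding $\Vert[H_X, A^{(r)}_S(\tau)]\Vert$ by Theorem~\ref{thm:main_thm}, you must integrate over $\tau$, and
\begin{equation}
\int_0^t\qty[\exp(v\tau)-1]\,\dd\tau = \frac{\exp(vt)-1}{v}-t,
\end{equation}
which carries an extra factor of $1/v$ (equivalently, the cruder bound $t\cdot[\exp(vt)-1]$ carries an extra $t$). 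Here $v = C'h(\e\tilde\epsilon)^{\alpha[\zeta R/8-k]}$ is precisely the non-perturbatively small quantity the corollary is about, so $1/v$ is superpolynomially large in $\epsilon^{-1}$; the alternative extra factor of $t$ is unbounded. Neither can be absorbed by redefining $C,C'$: for small $vt$ one has $\frac{\exp(vt)-1}{v}-t\sim \tfrac12 vt^2$, which is not $O(\exp(C''vt)-1)\sim O(C''vt)$ uniformly in $t$. Your assertion that integrating ``preserves its form up to redefinition of $C,C'$'' is exactly the point that fails.

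The paper's proof sidesteps the time integral entirely. Rather than reducing to a commutator, it treats the Duhamel decomposition $\e(\mathcal H) - \e(\overlinebold P_{S_r^{\mathrm{c}}}\mathcal H) = \ii\sum_X\e(\mathcal H)\ast\mathcal H_X\e(\overlinebold P_{S_r^{\mathrm{c}}}\mathcal H)$ as the \emph{first} step of the irreducible-path expansion of Lemma~\ref{lem:jump_paths}, with $\Sint{0}=S_r^{\mathrm{c}}$ and $\Gamma_1=X$, and then continues expanding $\e(\overlinebold P_{S_r^{\mathrm{c}}}\mathcal H)$. This produces precisely $\sum_{\Gamma\in\Gamma(S_r^{\mathrm{c}}\to S)} 1\ast C_1^{(\Gamma)}\ast\cdots\ast C_{|\Gamma|}^{(\Gamma)}(t)$ — a single outer convolution with $1$, coming from $\Vert\e(\mathcal H)\Vert_{\infty-\infty}\le 1$, rather than two. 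That is the same object already bounded inside the proof of Theorem~\ref{thm:main_thm}, so the $[\exp(\cdot\,t)-1]$ form comes out directly. One benefit of this route, which you correctly flagged as the delicate point of your approach, is that one never needs to re-verify the hypotheses of Theorem~\ref{thm:main_thm} for the truncated Hamiltonian $H_{\mathrm{in}}$: the $C_i$ always reference restrictions of the \emph{full} $H$ to $\Sint{i}^{\mathrm{c}}$, and since $H_0$ inside $S_r$ is untouched, the same Schrieffer--Wolff transformations and non-resonance hypotheses apply verbatim.
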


\begin{proof}
First, we Duhamel-expand the evolution under $H$:
\begin{align}
\e(\mathcal H ) - \e(\overlinebold P_{S_r^{\mathrm{c}}}\mathcal H) = \sum_{\substack{X \cap S_r^{\mathrm{c}} \neq \emptyset \\ X \cap S_r \neq \emptyset}}\e(\mathcal H)\ast\mathcal \mathcal H_{X} \e(\overlinebold P_{S_r^{\mathrm{c}}}\mathcal H)
\end{align}
Then we may treat $X  = \Gamma_1$ as the first term in $\Gamma(S_r^{\mathrm{c}} \to S)$ and continue the expansion, arriving at
\begin{align}
\Vert [\e(\mathcal H t) - \e(\overlinebold P_{S_r^{\mathrm{c}}}\mathcal H)]A_S \Vert
\leq 2 \Vert A_{S}\Vert\sum_{\Gamma \in \Gamma(S_r^{\mathrm{c}} \to S)}1 \ast C_{1}^{(\Gamma)} \ast \dots \ast C_{|\Gamma|}^{(\Gamma)}(t)
\end{align}
This is, of course, the sum which we just bounded in our main theorem. Applying the main theorem with $|\partial S_r| \leq B|S|r^d$ for appropriate constant $B$, we find
\begin{align}
\Vert A_S(t) - A^{(r)}_S(t) \Vert \leq C\Vert A_S \Vert|S|\qty[\exp(C'h\tilde \epsilon^{\alpha (\zeta R/8-k)}t)-1]\e^{-\mu r}
\end{align}
where we chose $C$ for any $\mu < \frac{\alpha \zeta}{8}$ to cancel the factor of $r^d$.
\end{proof}

This gives us a tool for simulating non-resonant systems by truncating the dynamics to within $S^r$. With it, we can show that classically, non-resonant systems are especially easy to simulate to late times:
\begin{cor}
Let $\rho_0$ be an initial state such that given any $S \subseteq \Lambda$, the marginal $\Tr_{S^{\mathrm{c}}}(\rho_0)$ can be obtained at cost $\e^{\mathrm O(|S|)}$. Consider the problem of computing the correlation function $\Tr[\rho_0 A(t) B]$. For any $\gamma > 0$, there exists an algorithm outputting this correlation function up to error 
\begin{align}
\qty|\Tr[A_S(t)B\rho] - \Tr[A_S^{(r)}(t)B\rho]| \leq \gamma
\end{align}
using $N \leq \exp[\mathrm O(|S|r^d)]$ classical resources,
where
\begin{align}
r = \log(\frac{|S|}{\gamma}) + \max\qty(\frac{\log(t)}{\log(1/\tilde \epsilon)}, \tilde \epsilon^{\alpha (\zeta[r_\ast+1]/8 - k)}t) \label{eq:simulation_r}
\end{align}
where the first term is due to the logarithmic lightcone for $r \leq r_\ast$ and the second term is due to the ballistic lightcone when $r \geq r_\ast$. Note that in this bound, we treat $\alpha$, $\zeta$, $k$ as constants.
\end{cor}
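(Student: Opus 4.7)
The plan is to apply the preceding approximation corollary to replace the Heisenberg evolution $A_S(t)$ by the strictly local $A_S^{(r)}(t)$, and then simulate the latter by brute-force exponentiation on the small region $S_r$.  For the approximation error, I would first use the trivial trace estimate
\begin{align}
\big|\Tr[(A_S(t) - A_S^{(r)}(t))B\rho_0]\big| \leq \Vert A_S(t) - A_S^{(r)}(t)\Vert \cdot \Vert B\Vert
\end{align}
to reduce to the operator-norm bound from the previous corollary, namely $C|S|\Vert A_S\Vert[\exp(C'h(\e\tilde\epsilon)^{\alpha[\zeta R/8-k]}t)-1]\exp(-\mu r)$ with $R = \min(r_\ast+1, r)$.

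Next, I would show that the $r$ prescribed in \eqref{eq:simulation_r} forces this bound below $\gamma$ by splitting the analysis into two regimes.  For $r \leq r_\ast$ one has $R = r$, and the inner exponent $\tilde\epsilon^{\alpha[\zeta r/8 - k]}t$ stays $O(1)$ as long as $r \gtrsim \log(t)/\log(1/\tilde\epsilon)$, which is exactly the first term inside the $\max$ and captures the logarithmic prethermal lightcone.  For $r \geq r_\ast + 1$, the quantity $R$ freezes at $r_\ast + 1$ and the inner exponent grows linearly in $t$ with the non-perturbative velocity $v \sim \tilde\epsilon^{\alpha[\zeta(r_\ast+1)/8 - k]}$, so the condition $r \gtrsim vt$, i.e., the second term in the $\max$, suffices.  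In either regime one additionally needs to beat the $|S|$ prefactor with $\e^{-\mu r}$, which is where the additive $\log(|S|/\gamma)$ in \eqref{eq:simulation_r} comes from.

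For the resource count, Def.~\ref{defn:d-dim} gives $|S_r| \leq M|S|r^d$.  Computing the marginal $\Tr_{S_r^{\mathrm{c}}}\rho_0$ costs $\e^{O(|S_r|)}$ by hypothesis; the truncated Heisenberg operator $A_S^{(r)}(t) = \e^{\ii H|_{S_r}t}A_S \e^{-\ii H|_{S_r}t}$ is obtained by direct exponentiation of the $D^{|S_r|}$-dimensional restricted Hamiltonian at cost $\e^{O(|S_r|)}$; and the final trace is evaluated entirely within $S_r$ at the same cost.  Combining these three contributions yields the advertised $\exp[O(|S|r^d)]$ classical resource bound.

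The main technical step is the bookkeeping across the crossover between the two regimes: one must check that both branches of the $\max$ in \eqref{eq:simulation_r} simultaneously dominate the corresponding growth rate, and absorb the fixed constants in $k, \zeta, \alpha$ into the $O(\cdot)$ in a way that does not spoil the non-perturbative scaling.  A secondary subtlety is that for the simulated trace to actually factor through $S_r$ one should either take $B$ supported in $S$ or on a region of constant size whose marginal can be precomputed at the claimed cost, or else enlarge the simulated region to include its support; once this is handled, the corollary follows directly from the approximation corollary combined with the standard cost of exact simulation on a small Hilbert space.
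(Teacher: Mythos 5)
Your proposal is correct and follows essentially the same route as the paper's proof: reduce to the operator-norm bound from the preceding approximation corollary via the trivial trace estimate, split into the $r \le r_\ast$ and $r > r_\ast$ regimes to choose $r$ as in \eqref{eq:simulation_r}, and then simulate exactly on $S_r$ at cost $\e^{\mathrm O(|S_r|)} = \e^{\mathrm O(|S| r^d)}$. The only difference is cosmetic: the paper solves the regime-1 equation $\gamma = |S|\exp(v(r)t - \mu r)$ explicitly via the Lambert $W$ function and the bound $W(x) \le \ln(1+x)$, whereas you argue the same scaling heuristically; your remark that the support of $B$ must be handled (either assumed small or by enlarging $S_r$) is a genuine subtlety that the paper leaves implicit.
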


\begin{proof}
First, by the previous corollary, we have the error estimate
\begin{align}
\gamma = \qty|\Tr[A_S(t)B \rho] - \Tr[A_S^{(r)}(t)B\rho]| \leq \Vert A_S(t) - A_S^{(r)}(t) \Vert \Vert B \Vert \leq \mathrm O(|S|\exp(vt - \mu r))
\end{align}
where $v(r) = C'h\tilde \epsilon^{\alpha (\zeta \min(r_\ast+1, r)/8-k)}$. In the region where $r < r_{\ast} + 1$, we express the solution to $\gamma = |S|\exp(v(r)t - \mu r)$ in terms of the Lambert $W$ function:
\begin{align}
r &= \frac{8}{\alpha \zeta\log(1/\tilde\epsilon)}W\qty(\frac{\alpha \zeta}{8\mu}\log(1/\tilde \epsilon)\tilde\epsilon^{-\alpha (k + \frac{\zeta}{8\mu}\log(|S|/\gamma))}t) + \frac{1}{\mu}\log(\frac{|S|}{\gamma}) \notag \\
&\leq 
\frac{8}{\alpha \zeta\log(1/\tilde\epsilon)}\qty[\log(\frac{\alpha \zeta c \e}{8}) + \alpha'(k + \frac{\zeta}{8\mu}\log(|S|/\gamma))\log(1/\tilde\epsilon) + \log(t)] + \frac{1}{\mu}\log(|S|/\gamma) \notag\\
&= \mathrm O\qty(\frac{\log(t)}{\log(\frac{1}{\tilde \epsilon})} + \log(|S|/\gamma))
\end{align}
where we applied the bound $W(x) \leq \ln(1+x)$ for $x > -\frac{1}{\e}$ \cite{Wfunction}, and chose $\alpha' < \alpha$ and $c$ so that $\log(1/\tilde \epsilon)\tilde\epsilon^{-\alpha (k + \frac{\zeta}{8\mu}\log(|S|/\gamma))} \leq c\tilde\epsilon^{-\alpha' (k + \frac{\zeta}{8\mu}\log(|S|/\gamma))}$.

When $r > r_\ast$, we can solve the equation directly:
\begin{align}
 r = \frac{1}{\mu}\log(|S|/\gamma) + \frac{1}{\mu}v(r_\ast+1)t = \mathrm O\qty(\log(\frac{|S|}{\gamma})+ h\tilde \epsilon^{\alpha (\zeta [r_\ast+1]/8-k)}t)
\end{align}

Since there are $\mathrm O(|S|r^d)$ sites in $S_r$ and the onsite Hilbert space dimension is uniformly bounded, such an evolution can be simulated at cost $\e^{\mathrm O(|S|r^d)}$.
\end{proof}

\begin{cor}
A discrete-time quantum algorithm can compute $\Tr[\rho_0 A(t)B]$ up to error $\gamma$ with $\mathrm O(tr^d\polylog(tr^d/\gamma))$ gates, 
where $r$ is given in \eqref{eq:simulation_r}.
\end{cor}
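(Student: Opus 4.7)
My plan is to combine the truncation result of the previous corollary with an off-the-shelf optimal quantum simulation algorithm for geometrically local Hamiltonians. The preceding corollary has already shown that $A_S(t)$ can be replaced by $A_S^{(r)}(t) = \mathcal{T}\e^{\ii \overline{\mathbb P}_{S_r^{\mathrm c}} H\, t} A_S$ with controlled error, where $r$ is given by \eqref{eq:simulation_r}. Since $\overline{\mathbb P}_{S_r^{\mathrm c}} H$ is a geometrically local Hamiltonian supported on the thickened region $S_r$, and $|S_r| = \mathrm O(|S|r^d)$ by the definition of a $d$-dimensional graph (Def.~\ref{defn:d-dim}), the task reduces to simulating a local Hamiltonian on $n := \mathrm O(|S|r^d)$ qubits (and treating $|S|$ as $\mathrm O(1)$, or absorbing it into constants, we may regard $n = \mathrm O(r^d)$) for time $t$ to precision $\gamma/2$.

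For the simulation itself, I would invoke a standard nearly-optimal quantum algorithm for time evolution under local Hamiltonians on finite-dimensional lattices — e.g., the algorithm of Haah–Hastings–Kothari–Low, which produces a gate sequence of length $\mathrm O(nt \polylog(nt/\gamma))$ implementing $\e^{-\ii H_{\mathrm{trunc}}t}$ to diamond-norm error $\gamma/4$. Substituting $n = \mathrm O(r^d)$ yields the claimed $\mathrm O(tr^d \polylog(tr^d/\gamma))$ gate count for preparing a unitary $\widetilde U$ approximating $\e^{-\ii \overline{\mathbb P}_{S_r^{\mathrm c}}H t}$. The desired correlator $\Tr[\rho_0 A(t) B]$ is then extracted by a Hadamard-test-style circuit: prepare $\rho_0$ on the support of $S_r$ (taking $\Tr_{S_r^{\mathrm c}}\rho_0$ as the initial state, which is possible by the hypothesis that marginals of $\rho_0$ are accessible) and evaluate $\Tr[A^{(r)}_S(t) B \rho_0]$ by standard amplitude-estimation or Hadamard-test routines, which contribute only polylogarithmic overhead in $1/\gamma$ to estimate the expectation value to additive error $\gamma/2$.

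The total error is bounded by $\|A_S(t)-A_S^{(r)}(t)\|\cdot\|B\|\le \gamma/2$ from the truncation corollary, plus $\gamma/2$ from the simulation and estimation, satisfying the claim. The only technically delicate step is justifying that the Hadamard-test/amplitude-estimation overhead is purely polylogarithmic for general bounded $A$, $B$ and mixed $\rho_0$; but this is standard provided one can block-encode $A$ and $B$ (which are local or at least bounded operators on $S_r$) and prepare a purification of the relevant marginal of $\rho_0$, both of which are granted by the hypothesis on access to $\rho_0$. Everything else is bookkeeping: plugging the expression \eqref{eq:simulation_r} for $r$ into the $\mathrm O(nt\polylog(nt/\gamma))$ simulation cost, so the main structural content of the proof is already contained in the previous corollary's Lieb–Robinson-based truncation estimate.
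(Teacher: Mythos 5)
Your approach---replace $A_S(t)$ by the truncated $A_S^{(r)}(t)$ via the previous corollary, then simulate the restricted Hamiltonian on $|S_r| = \mathrm O(|S|r^d)$ sites with the Haah--Hastings--Kothari--Low algorithm to get the $\mathrm O(tr^d\polylog(tr^d/\gamma))$ gate count---is exactly the paper's proof, which simply invokes HHKL applied to $S_r$.

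However, your final paragraph introduces a claim the paper does not make and which is, as stated, incorrect: extracting the scalar $\Tr[\rho_0 A(t)B]$ to additive precision $\gamma$ is \emph{not} a polylogarithmic-overhead step. Amplitude estimation on a block-encoding of the relevant observable applied to a purification of the marginal of $\rho_0$ requires $\Theta(1/\gamma)$ coherent uses of the simulation circuit, and naive Hadamard-test sampling is worse, requiring $\Theta(1/\gamma^2)$ independent shots. Neither is $\polylog(1/\gamma)$. The paper's gate count should therefore be read as the cost of building the time-evolution circuit on $S_r$---with $\gamma$ controlling the simulation fidelity and the truncation radius $r$---rather than as the total cost including the statistical estimation loop. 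If you strike or soften the sentence claiming polylogarithmic estimation overhead, the remainder of your argument is sound and coincides with the paper's.
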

\begin{proof}
This follows by applying the HHKL algorithm \cite{haah2021quantum} which has complexity $\mathrm O(nt\polylog(nt/\gamma))$, where $n$ is the system size, to $S_r$.
\end{proof}

\section{Non-resonance at all scales}
We have shown that imposing a non-resonance condition on finite regions of radius $r_\ast$ leads to a slow spread of correlations, with a LR velocity suppressed up to order $n=\mathrm{\Theta}(r_*)$. This begs the question of whether it is possible that a \emph{fixed} potential $H_0$ can obey a non-resonance condition at all scales, with $\Delta$ as large as possible at each scale.  This appendix answers this question affirmatively and demonstrates a number of models where such non-resonant conditions are obeyed at all scales. We begin by  exploring the consequences of non-resonance at all scales.

\subsection{Non-perturbatively small Lieb-Robinson velocity}
We begin by showing that if a non-resonance condition holds at every scale, then the Lieb-Robinson velocity is non-perturbatively small in $\epsilon$:

\begin{cor}
\label{cor:gobal_non_resonance}
Let $S_1, S_0 \subseteq \Lambda$ be fixed, and suppose that for $n_\ast = \lfloor (\mathsf d(S_0, S_1) - 1)/(r_\ast+1)\rfloor$ and each $r_{\ast}$ we have a $(h, r_\ast, \Delta, n_{\ast},
 \zeta)$ partial non-resonance condition at every $x \in \partial S_0$ of the form
\begin{equation}
\Delta(r_\ast) \geq h\times a\exp(-cr_\ast^\xi) \label{eq:xi_form}
\end{equation}
with $a,c>0$.  Then for any $\alpha < 1/(5k+1)$, there exist constants $b,C,C'$ and a
constant $b$ for any $c' > c$ such that 
\begin{align} \label{eq:eqcorD1}
\Vert [B_{S_0}, A_{S_1}(t)] \Vert \leq C\min(|\partial S_0|, |\partial S_1|)\Vert B_{S_0} \Vert \Vert A_{S_1} \Vert \qty[\exp(C'h(b\epsilon)^{\frac{\alpha}{2}\qty(\lambda\log^{\frac{1}{\xi}}([b\epsilon]^{-1})-k)}t)-1]\exp(- \mu\mathsf d(S_0, S_1))
\end{align}   
where $\lambda^{-1}= 8(4c')^{1/\xi}/\zeta$ and $\mu = \alpha \zeta/8$.
\end{cor}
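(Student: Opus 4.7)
The plan is to apply Theorem \ref{thm:main_thm} directly, treating $r_\ast$ as a free parameter and optimizing it against the rate $\epsilon$. Since the hypothesis supplies a non-resonance condition at every scale, we are free to pick any $r_\ast$ we like, and the game is to balance the smallness of the gap $\Delta(r_\ast)$ (which makes $\tilde\epsilon$ large) against the length of the Lieb--Robinson suppression factor $\tilde\epsilon^{\alpha(\zeta r_\ast/8-k)}$ (which we want to keep small).

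First I would substitute the lower bound $\Delta(r_\ast)\geq h a\exp(-cr_\ast^\xi)$ from \eqref{eq:xi_form} into $\tilde\epsilon = 32\pi^2(V_\ast h/\Delta)^2\epsilon$ from Theorem \ref{thm:quasilocal_generator}. Using $V_\ast\leq Mr_\ast^d$ from Definition \ref{defn:d-dim}, this gives $\tilde\epsilon\leq (32\pi^2 M^2/a^2)\,r_\ast^{2d}\exp(2cr_\ast^\xi)\,\epsilon$. The polynomial prefactor $r_\ast^{2d}$ can be absorbed into the exponential at the price of a slightly larger rate: for any $c'>c$ there is a constant $b>0$ (depending on $a,M,c,c',d$) such that
\begin{equation}
\tilde\epsilon\;\leq\; b\epsilon\,\exp(2c' r_\ast^\xi)\qquad\text{for all }r_\ast\geq 1.
\end{equation}

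Next I would choose $r_\ast = \lfloor\lambda_0\log^{1/\xi}((b\epsilon)^{-1})\rfloor$ with $\lambda_0 = (4c')^{-1/\xi}$. With this choice $2c' r_\ast^\xi\leq \tfrac12\log((b\epsilon)^{-1})$, so $\tilde\epsilon\leq (b\epsilon)^{1/2}$. In particular, for $\epsilon$ below a threshold $\epsilon_0(c',k,\zeta,b)$ we have both $e\tilde\epsilon\leq 1$ and $r_\ast>8k/\zeta$, so Theorem \ref{thm:main_thm} applies with this $r_\ast$. Substituting $\tilde\epsilon\leq (b\epsilon)^{1/2}$ into the suppression factor and using $r\geq r_\ast$,
\begin{equation}
(e\tilde\epsilon)^{\alpha(\zeta r/8-k)}\;\leq\; (b\epsilon)^{(\alpha/2)(\zeta r_\ast/8-k)}\;=\;(b\epsilon)^{(\alpha/2)(\lambda\log^{1/\xi}((b\epsilon)^{-1})-k)},
\end{equation}
where $\lambda = \zeta\lambda_0/8 = \zeta/(8(4c')^{1/\xi})$, after absorbing the factor $e$ into a redefinition of $b$. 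Plugging this back into Theorem \ref{thm:main_thm} and noting that the spatial decay factor is $\exp(-\alpha\zeta\, \mathsf d(S_0,S_1)/8)$ gives exactly \eqref{eq:eqcorD1} with $\mu=\alpha\zeta/8$.

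The main obstacle is purely bookkeeping: keeping track of how polynomial prefactors in $r_\ast$ (from the $d$-dimensional volume $V_\ast\leq Mr_\ast^d$, and from the $V_\ast^2/\Delta^2$ ratio inside $\tilde\epsilon$) are absorbed into the exponential by enlarging $c$ to $c'$, and verifying that the threshold $\epsilon_0$ below which the argument works is consistent with the joint constraints $\tilde\epsilon\leq e^{-1}$ and $r_\ast > 8k/\zeta$ demanded by Theorem \ref{thm:main_thm}. There is no genuinely new analytic ingredient beyond a careful choice of the scale $r_\ast$ as a specific function of $\epsilon$.
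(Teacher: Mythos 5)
Your proposal is correct and follows essentially the same route as the paper's proof: substitute $V_*\leq Mr_*^d$ and the stretched-exponential lower bound on $\Delta$ into $\tilde\epsilon\sim(V_*h/\Delta)^2\epsilon$, absorb the polynomial prefactor by enlarging $c$ to $c'$, choose $r_*^\xi=\frac{1}{4c'}\log\frac{1}{b\epsilon}$ so that $\tilde\epsilon\lesssim\sqrt{b\epsilon}$, and then plug into Theorem~\ref{thm:main_thm}. The only (cosmetic) difference is that you make the $\epsilon$-threshold guaranteeing $\tilde\epsilon\leq e^{-1}$ and $r_*>8k/\zeta$ explicit, which the paper leaves implicit in its inequality chain.
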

\begin{proof}
Since our bounds hold for any $r_\ast$, it suffices to optimize over $r_\ast$.
As we have shown, our bounds are valid for $\tilde \epsilon \leq \e^{-1}$. We
choose constants $c', b$ such that
\begin{align}
\tilde \epsilon \leq a^{-1}\epsilon 2\qty(4c\pi V_{\ast})^2\exp(2cr_\ast^\xi) < \exp(-2c'r_\ast^\xi + \log(\frac{1}{b\epsilon})) \leq \frac{1}{\e}
\end{align}
for any $c' > c$ and some constant $b$ dependent on $c'$ and $\xi$. Inverting this inequality, we see that we may choose
\begin{equation}
r_\ast^\xi = \frac{1}{4c'}\log \frac{1}{b\mathrm{e}\epsilon},
\end{equation}
implying $\tilde \epsilon = \sqrt{b \e\epsilon}$.  \eqref{eq:eqcorD1} follows
from plugging this $r_\ast$ and $\tilde\epsilon$ into Thm.~\ref{thm:main_thm}.
\end{proof}

\subsection{Constraints on non-resonance at all scales}
To explain why we chose the form \eqref{eq:xi_form} for the non-resonance condition, let us first show that this (stretched) exponential scaling is necessary, and that models with this scaling do exist.  In this appendix, we will focus on single-site potentials, where non-resonance conditions are a bit easier to prove: \begin{defn}[Non-interacting $H_0$]
    A classical Hamiltonian $H_0$ is non-interacting if it takes the form \begin{equation}
        H_0 = \sum_{n\in \Lambda}h_n Z_n.
    \end{equation}
\end{defn}

\begin{prop}
Suppose that $\Lambda$ is $d$-dimensional (Def. \ref{defn:d-dim}).  Then there exist non-interacting $H_0$ which are globally non-resonant, obeying \eqref{eq:xi_form} for any $c > M\log(3)+\e^{-1}$, $\xi = d$, and \begin{equation}
    a < \frac{d}{M}\frac{[c-M\log(3)-\e^{-1}]^{1/d}}{\mathrm{\Gamma}(d^{-1})}, \label{eq:aineq}
\end{equation} where $M$ is defined in \eqref{eq:Mdimension}.
\end{prop}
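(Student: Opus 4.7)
The plan is to construct $H_0 = \sum_{v \in \Lambda} h_v Z_v$ by a greedy, sequential procedure. Since $H_0$ is diagonal in the Pauli-$Z$ basis, its gaps on any subset $S \subseteq \Lambda$ take the form $|\vec\tau \cdot \vec h|$ with $\vec\tau \in \{-2,0,2\}^S \setminus \{\vec 0\}$. Defining $r(\vec\tau)$ to be the smallest $r \geq 1$ such that $\operatorname{supp}(\vec\tau) \subseteq B_r(x)$ for some $x \in \Lambda$, global non-resonance with $\Delta(r) = a\,\e^{-c r^d}$ is equivalent to requiring $|\vec\tau \cdot \vec h| \geq h\,a\,\e^{-c\,r(\vec\tau)^d}$ for every nonzero $\vec\tau$. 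Fix any ordering $v_1, v_2, \ldots$ of $\Lambda$; I will choose $h_{v_k} \in [0, h]$ inductively so that all constraints involving $\vec\tau$ with $\operatorname{supp}(\vec\tau) \subseteq \{v_1, \ldots, v_k\}$ hold.

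At step $k$, only constraints with $\tau_{v_k} \neq 0$ depend on $h_{v_k}$. Since $|\tau_{v_k}| = 2$, each such $\vec\tau$ with $r(\vec\tau) = r$ forbids $h_{v_k}$ from an interval of length $h\,a\,\e^{-cr^d}$. To enumerate these, note that $v_k \in \operatorname{supp}(\vec\tau) \subseteq B_r(x)$ forces $x \in B_r(v_k)$, so $x$ ranges over at most $|B_r(v_k)| \leq Mr^d$ anchors, and for each such anchor there are at most $3^{|B_r(x)|} \leq 3^{Mr^d}$ candidate $\vec\tau$. The total forbidden measure at step $k$ is therefore bounded by
\begin{equation}
    \mu_k \leq h\,a \sum_{r \geq 1} M r^d\, 3^{Mr^d}\, \e^{-cr^d} = h\,Ma\sum_{r \geq 1} r^d\, \e^{-\beta r^d}, \qquad \beta \equiv c - M\log 3.
\end{equation}

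The key estimate is a sharp control of this sum. Using the elementary inequality $r^d\,\e^{-\e^{-1}r^d} \leq 1$ for all $r \geq 0$, which follows since $x\,\e^{-x/\e}$ attains its maximum of $1$ at $x = \e$, one obtains $r^d\,\e^{-\beta r^d} \leq \e^{-(\beta - \e^{-1})r^d}$, and bounding the resulting sum by an integral gives
\begin{equation}
    \sum_{r \geq 1} r^d\,\e^{-\beta r^d} \leq \int_0^\infty \e^{-(\beta - \e^{-1})r^d}\,\mathrm dr = \frac{\Gamma(1/d)}{d\,(c - M\log 3 - \e^{-1})^{1/d}}
\end{equation}
whenever $c > M\log(3) + \e^{-1}$. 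Requiring $\mu_k < h$ (the length of $[0, h]$) is then exactly the hypothesis \eqref{eq:aineq}, so a valid choice of $h_{v_k}$ always exists and the induction proceeds to completion for any finite $\Lambda$.

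The main obstacle is the polynomial overhead $Mr^d$ from enumerating anchor points $x \in B_r(v_k)$; it is what forces the small shift $c > M\log 3 + \e^{-1}$ instead of the naive threshold $c > M\log 3$ one would obtain just from counting $3^{|B_r(x)|}$ candidate $\vec\tau$'s. A refined bookkeeping that canonically assigns each $\vec\tau$ to a unique anchor (for example, via the lex-smallest vertex in its support) would improve the constants but not the scaling $\xi = d$, which is the essential qualitative conclusion. The handful of single-site constraints corresponding to $r(\vec\tau) = 0$ merely forbid $h_{v_k}$ from a tiny neighborhood of zero and are easily absorbed into the above union bound.
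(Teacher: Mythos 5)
Your proposal is correct and takes essentially the same approach as the paper: a greedy sequential placement of the on-site fields, a union bound over radii $r$, anchor points $x\in B_r(v_k)$, and the at most $3^{Mr^d}$ candidate resonance vectors per anchor, followed by the inequality $r^d\,\e^{-\e^{-1}r^d}\leq 1$ and a sum-to-integral comparison yielding the $\Gamma(1/d)$ factor. The only cosmetic difference is bookkeeping: the paper phrases the step in terms of adding a single new level difference $\Delta E^{(n)}_i \pm 2 h_{n+1}$, whereas you index directly by the sign vector $\vec\tau$, but the resulting forbidden-measure estimate and the threshold on $c$ and $a$ are identical.
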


\begin{proof}
Pick an ordering of the sites in $\Lambda$: $x_1,x_2,\ldots$.  We will provide an nonconstructive existence proof by showing that, given a non-resonant potential $\{h_i\}$ on $n$ sites $x_1, \dots, x_n$, we may choose $h_{n+1}$ at $x_{n+1}$ so that the potential still obeys a non-resonance condition at all scales.

Consider a ball of radius $r$ surrounding $x_{i}$ for some fixed $r$ and $x_i$ such that $x_{n+1} \in B_{r_\ast}(x_{i})$. Since $|B_{r}(x_{i})| \leq Mr^d$, the number of possible resonances in this region is smaller than $ 3^{Mr^d}$: the energy level differences of $H_0$ are given by $(\vec \sigma_1 - \vec \sigma_2) \cdot \vec h$, where $\vec \sigma_{1,2} \in \{-1, 1\}^{n}$ and $\vec \sigma_1-\vec \sigma_2\in \lbrace -2,0,2\rbrace^n$.  After the addition of site $x_{n+1}$, each new energy level difference $\Delta E_i^{(n+1)}$ of  $H_0$ restricted to the first $n+1$ sites in $\Lambda$ can be written as $\Delta E_i^{(n+1)} = \Delta E_i^{(n)} \pm 2h_{n+1}$. Therefore, our choices for $h_{n+1}$ are restricted to those such that $|h_{n+1}-\Delta E_i^{(n)}| \geq \epsilon/2$ for some $\epsilon > 0$, which guarantees that $\Delta_{\rm{min}} \geq \epsilon$ within $B_{r}(x_i)$ after the addition of $h_{n+1}$.  Each $x_i \in B_{r}(x_{n+1})$ and $r > 0$ thus eliminates a portion of the interval where we are forbidden to place $h_{n+1}$. Summing over $r$ and $x_i$ and invoking \eqref{eq:xi_form}, the measure of this portion of the interval is less than
\begin{align}
\sum_{r = 1}^{\infty} aMr^d3^{Mr^d}\e^{-cr^d} = 
aM\sum_{r = 1}^{\infty}r^d\exp([M\log(3)-c]r^d) \leq aM\sum_{r = 1}^{\infty}\exp([M\log(3)-c+\e^{-1}]r^d)
\end{align}
Clearly we have chosen $c$ so that this sum will converge.  More concretely, we will bound this sum with an integral using an upper-sum:
\begin{align}
\sum_{r = 1}^{\infty}\exp([M\log(3)-c+\e^{-1}]r^d)  &< \int_0^{\infty}\dd r \ \exp(-r^d [c-M\log(3)-\e^{-1}]) \notag\\
&= \frac{1}{d}\int_0^\infty \dd u \  u^{\frac{1}{d}-1} \e^{-u[c-M\log(3)-\e^{-1}]} \notag \\
& = \frac{\Gamma(d^{-1})}{d (c-M\log(3)-\e^{-1})^{1/d}} < \frac{1}{Ma},
\end{align}
where we used \eqref{eq:aineq}.  Thus we see that
\begin{align}
\sum_{r = 1}^{\infty} aMr^d3^{Mr^d}\e^{-cr^d} < 1
\end{align}
which means that we are able to pick $x_{n+1}$ to satisfy our non-resonance condition.
\end{proof}

One might hope for a better global non-resonance condition, such as $\Delta(r) \sim \frac{1}{r^{\alpha}}$ for some $\alpha$, but this is not possible.
\begin{prop}
Suppose that $\{h_n\}_{n=1}^V$ is a set of numbers $h_n \in [0,1]$ describing an onsite potential $H_0 = \sum_{n=1}^V h_nZ_n$. Then for any $\alpha < 1$ there exists a constant $a$ such that the gap of $H_0$ is bounded above by $\Delta_{\rm{min}} \leq a2^{-\alpha V}$.
\end{prop}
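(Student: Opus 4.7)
The plan is a straightforward pigeonhole/packing argument: the Hamiltonian $H_0=\sum_{n=1}^V h_n Z_n$ has $2^V$ eigenvalues (with multiplicity), labelled by $\boldsymbol\sigma\in\{-1,+1\}^V$, namely $E_{\boldsymbol\sigma}=\sum_{n=1}^V \sigma_n h_n$. Since each $h_n\in[0,1]$, every $E_{\boldsymbol\sigma}$ lies in the interval $[-V,V]$, which has length $2V$. So we have $2^V$ points crammed into an interval of length $2V$, and we want to conclude that some two of them are very close.

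First, I would sort the multiset of eigenvalues in nondecreasing order as $E^{(1)}\le E^{(2)}\le \cdots\le E^{(2^V)}$, and observe the telescoping identity
\begin{equation}
\sum_{i=1}^{2^V-1}\bigl(E^{(i+1)}-E^{(i)}\bigr)=E^{(2^V)}-E^{(1)}\le 2V.
\end{equation}
Since this is a sum of $2^V-1$ nonnegative terms bounded by $2V$, the minimum term satisfies
\begin{equation}
\min_{1\le i<2^V}\bigl(E^{(i+1)}-E^{(i)}\bigr)\le \frac{2V}{2^V-1}.
\end{equation}
If any two distinct $\boldsymbol\sigma\ne\boldsymbol\sigma'$ yield the same energy, then $\Delta_{\min}=0$ and we are done trivially; otherwise this consecutive-difference minimum is exactly $\Delta_{\min}$ (for the interpretation in which $\Delta_{\min}$ is the smallest energy difference between distinct eigenstate labels). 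Either way we obtain $\Delta_{\min}\le 2V/(2^V-1)$.

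To finish, I would convert this bound into the claimed form. Fix $\alpha<1$ and set $\beta=1-\alpha>0$. Then
\begin{equation}
\Delta_{\min}\le \frac{2V}{2^V-1}\le \frac{4V}{2^V}=4V\cdot 2^{-\beta V}\cdot 2^{-\alpha V}.
\end{equation}
Because $V\mapsto V\cdot 2^{-\beta V}$ is bounded on $V\ge 1$, there is a finite constant $a=a(\alpha)$ with $4V\cdot 2^{-\beta V}\le a$ for all positive integers $V$, giving $\Delta_{\min}\le a\,2^{-\alpha V}$. There is no serious obstacle here; the only care required is to make the ``minimum gap'' convention explicit (interpreting $\Delta_{\min}$ as the minimum energy difference between distinct computational-basis labels, so that coincidences make the statement trivially true), and to separate the polynomial prefactor $V$ from the exponential by sacrificing an arbitrarily small exponent $\beta=1-\alpha$, which is precisely why the bound cannot be taken at $\alpha=1$ uniformly in $V$.
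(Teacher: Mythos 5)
Your proof is correct and follows essentially the same packing argument as the paper: you pack $2^V$ eigenvalues into the interval $[-V,V]$ to conclude $\Delta_{\min}\le 2V/(2^V-1)$, then absorb the polynomial prefactor into the constant $a(\alpha)$ by sacrificing the exponent from $1$ to $\alpha$. The paper phrases the packing step as a pigeonhole over $2^V-1$ equal subintervals rather than your sort-and-telescope averaging, but these are the same idea and yield the identical intermediate bound; you are slightly more careful than the paper in explicitly handling the degenerate case, and the paper additionally records an explicit admissible choice $a = 4[\mathrm{e}(1-\alpha)\ln 2]^{-1}$ where you merely assert boundedness of $V\,2^{-(1-\alpha)V}$.
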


\begin{proof}
The energy levels of $H_0$ are given by $E_{\sigma} = \sum_{n=1}^V h_n \sigma_n$ with $\sigma_n = \pm 1$. Since $h_n \in [0,1]$, we have $|E_{\sigma}| \leq V$ for all $\sigma$. Since there are two choices for each $\sigma$, the number of distinct energy levels is $2^V$. Dividing the interval $[-V, V]$ into $2^V - 1$ sections of size $\frac{2V}{2^V-1}$, by the pigeonhole principle, at least one pair $E_{\sigma}, E_{\sigma'}$ lie in the same section, so $|E_{\sigma} - E_{\sigma'}| \leq 2V(2^V-1)^{-1} \leq a2^{-\alpha V}$ where $a = 4[\e(1-\alpha)\ln(2)]^{-1}$.
\end{proof}

\subsection{Explicit construction of a globally non-resonant potential} 
This section gives an explicit, deterministic construction of a globally non-resonant potential, which we call the ``dyadic-triadic scheme''. We work on boxes in $\mathbb Z^d$ which we may shift without loss of generality to be of the form 
\begin{equation}
\Lambda_L=\{0,\ldots,L-1\}^d\subset \mathbb Z^d. \label{eq:LambdaL}
\end{equation}

We first give the explicit construction formula of the dyadic-triadic scheme and we then explain the underlying rationale afterwards.
Given a collection of non-negative integers $\vec{n}=(n_1,\ldots,n_d)$, we iteratively define 
\begin{subequations}\begin{align}
  h^{(1)}_{\vec{n}} :=& 3^{-\sum_{j=1}^d 2^{j-1}[n_j \text{ (mod 2)}]},
  \label{eq:h1}\\
        h^{(m+1)}_{\vec{n}} :=& 3^{-2^{dm}\left(1+\sum_{j=1}^d2^{j-1}\left[\left\lfloor 2^{-m}n_j\right\rfloor \text{ (mod 2)} \right]\right)} h^{(m)}_{\vec{n}},\qquad m\geq 1\label{eq:hm+1recursion}.\end{align} \end{subequations}
Then we define the on-site potential by summing over all $m$, i.e.,
  \begin{equation}
  \label{eq:hsumhm}
        h_{\vec{n}} := \sum_{m=1}^\infty h^{(m)}_{\vec{n}}. 
    \end{equation}
We remark that the iteration can be solved exactly and leads to a potential of the form $h_{\vec{n}} =\sum_{m\geq 1}3^{-p_m(\vec{n})}$ for suitable powers $p_m(\vec{n})$, cf.\ \eqref{eq:pmdefn} and \eqref{eq:hm_powersof3}.\\

We come to the main result of this subsection which says that \eqref{eq:hsumhm} manages to avoid resonances at all scales, i.e., it is globally resonant for every $r_*>0$.

    \begin{thm}[Globally non-resonant model]\label{prop:dyadic-triadic}
        The non-interacting Hamiltonian $H_0$ on $\Lambda=\mathbb{Z}^d$ with $h_{\vec{n}}$ defined by \eqref{eq:hsumhm}       
        satisfies the $(2,r_*,C_d 3^{-8^dr_*^d})$ non-resonance condition at every $x \in \Lambda$ for every $r_*>0$ and $C_d = 2\times 3^{-2^{d-1}}$.
    \end{thm}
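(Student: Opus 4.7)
The plan is to exploit an explicit base-$3$ representation of $h_{\vec n}$. Unfolding the recursion in \eqref{eq:hsumhm} gives the closed form
\[
h_{\vec n} = \sum_{m\ge 1} 3^{-p_m(\vec n)},\qquad p_m(\vec n) = q_0(\vec n) + \sum_{\ell=1}^{m-1}2^{d\ell}\bigl(1+q_\ell(\vec n)\bigr),
\]
with $q_\ell(\vec n)=\sum_{j=1}^d 2^{j-1}\bigl(\lfloor 2^{-\ell}n_j\rfloor \bmod 2\bigr)\in\{0,\ldots,2^d-1\}$. Three structural facts about $p_m$ will drive the rest of the argument: (i) $p_m(\vec n)$ depends on $\vec n$ only modulo $2^m$; (ii) at each scale $m$, the map $(q_0,\ldots,q_{m-1})\mapsto p_m$ is a bijection from $\{0,\ldots,2^d-1\}^m$ onto the consecutive integers $[a_m,b_m]$ with $b_m-a_m+1=2^{dm}$; (iii) adjacent scale windows tile $\mathbb Z_{\ge 0}$ with no gaps, $a_{m+1}=b_m+1$. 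Thus $h_{\vec n}$ is a lacunary base-$3$ expansion with $1$-digits at exactly the positions $\{p_m(\vec n)\}_m$, organized into disjoint consecutive scale blocks.

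Set $M=\lceil\log_2(2r_*+1)\rceil$, so $2^M>2r_*$. Any two distinct points of $B_{r_*}(x)$ differ by at most $2r_*$ coordinate-wise, so by (i) $p_m$ is injective on $B_{r_*}(x)$ for every $m\ge M$. Given a nonzero $\epsilon\in\{-1,0,1\}^S$ with $S\subseteq B_{r_*}(x)$, expand
\[
\sigma:=\sum_{\vec n\in S}\epsilon_{\vec n}h_{\vec n}=\sum_{k\ge 0}D_k\, 3^{-k},\qquad D_k=\sum_{\vec n\in S:\,p_{m(k)}(\vec n)=k}\epsilon_{\vec n}\in\mathbb Z,
\]
where $m(k)$ is the unique scale with $k\in[a_{m(k)},b_{m(k)}]$ furnished by (iii). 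Injectivity then produces the crucial uniform bound $|D_k|\le 1$ whenever $m(k)\ge M$. Multiplying through by $3^{b_M}$ splits $3^{b_M}\sigma=I+F$ with $I=\sum_{k\le b_M}D_k 3^{b_M-k}\in\mathbb Z$ and $F=\sum_{k>b_M}D_k 3^{b_M-k}$, and the tail obeys $|F|\le\sum_{j\ge 1}3^{-j}=\tfrac12$.

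Now run a mod-$3$ peeling. Assume for contradiction that $|\sigma|<\tfrac12\cdot 3^{-b_M}$; then $|3^{b_M}\sigma|<\tfrac12$, and combined with $|F|\le\tfrac12$ this forces $|I|<1$, so $I=0$. Reading $I=0$ modulo $3$ yields $D_{b_M}\equiv 0\pmod 3$; under (ii), the position $b_M$ corresponds to a single residue class modulo $2^M$, and because $2^M>2r_*$ at most one $\vec n\in S$ sits in that class, so $|D_{b_M}|\le 1$ and hence $D_{b_M}=0$, giving $\epsilon_{\vec n}=0$ for that point. Dividing the remaining identity by $3$ and repeating, peel downward through positions $b_M-1,\ldots,a_M$ (all lying in the scale-$M$ window $[a_M,b_M]$ by (ii)), killing $\epsilon$ on every residue class modulo $2^M$ present in $S$. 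Since each point of $S$ occupies its own residue class, this forces $\epsilon\equiv 0$ on $S$, contradicting the hypothesis. Therefore $|\sigma|\ge\tfrac12\cdot 3^{-b_M}$.

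It remains to compare $b_M$ to the target exponent. From (ii)–(iii) one gets $b_M=(2^{d(M+1)}-2^{d+1}+1)/(2^d-1)$, and the minimality of $M$ gives $2^M\le 4r_*+1$. A brief case split—$d=1$ directly from $b_M=2^{M+1}-3\le 8r_*-1$, and $d\ge 2$ via $2^d/(2^d-1)\le 4/3$ combined with the geometric factor $(5/8)^d$—shows $b_M\le 8^d r_*^d$ for all $r_*\ge 1$, with enough slack that $\tfrac12\cdot 3^{-b_M}\ge 2C_d\,3^{-8^d r_*^d}=4\cdot 3^{-2^{d-1}-8^d r_*^d}$. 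Since the spectral gap of $H_0|_S=\sum_{\vec n\in S}h_{\vec n}Z_{\vec n}$ equals $2\inf_{\epsilon\ne 0}|\sigma|$, the $(2,r_*,C_d\,3^{-8^d r_*^d})$-non-resonance condition follows at every $x\in\Lambda$. The principal obstacle is that the low-scale digits $D_k$ with $k<a_M$ are essentially unconstrained—they can be as large as $|S|$—so a naive mod-$3$ argument stalls; the remedy is to bury them wholesale inside the integer $I$ via the $3^{b_M}$ rescaling, so that only the high-scale tail (where injectivity yields $|D_k|\le 1$) needs a quantitative bound, with the no-gap tiling (iii) being indispensable to prevent fractional contributions from scales $>M$ from leaking into $I$.
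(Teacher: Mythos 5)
Your proof is correct for $r_*\ge 1$ and follows essentially the same strategy as the paper's: you isolate the scale-$\le M$ digits of $\sigma$ into a rescaled integer $I$, bound the scale-$>M$ tail by $\tfrac{1}{2}$ using the disjoint tiling of scale windows, and conclude $I\ne 0$ from injectivity of $p_M$ on $B_{r_*}(x)$ via a mod-$3$ argument, obtaining $|\sigma|\ge\tfrac{1}{2}\,3^{-b_M}$---the same intermediate bound the paper reaches, although the paper performs a single reduction modulo $3$ at the maximal occupied digit $p^*$ rather than your iterated contradiction-based peeling (the two are equivalent). The one gap is that the theorem asserts the result for every $r_*>0$, and your closing comparison $b_M\le 8^d r_*^d$ is verified only for $r_*\ge 1$; the single-site case $r_*<1$ should be added (the paper dispatches it by bounding $h_{\vec n}\ge h^{(1)}_{\vec n}$ directly).
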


The proof of Theorem \ref{prop:dyadic-triadic} is given below. Before we give it, we  explain the heuristic idea behind the dyadic-triadic scheme. First, since the minimal spectral gap of $H_0$ can be expressed through differences of local $h_{\vec{n}}$ (see \eqref{eq:propreduction} below for the precise statement), the main assertion to be proven is that  sums of the local fields $h_{\vec{n}}$ within balls $B_{r_*}$ cannot cancel exactly and, moreover, are bounded away from zero by $\gtrsim 3^{-r_*^d}$. The dyadic-triadic constructing achieves this non-resonance through an iterative definition, where lack of resonances is ensured in a multiscale way by iteratively ``activating'' higher digits in the base-3 (``triadic'') representation on each dyadic $2^m$-scale. Roughly speaking, the  $\sim {2^{md}}$th triadic digits of the collection of $h_{\vec{n}}$ on the $2^m$-scale form the sequence $(10\ldots0), (010\ldots0), (0010\ldots0),\ldots (0\ldots01)$ and thus avoid cancellation completely at that scale. Higher order digits cannot spoil this cancellation because the construction is such that they can be controlled by a geometric series.  Lower order digits may as well cancel, as we only aim for a gap of size $\sim 3^{-{2^{md}}}\sim 3^{-r_*^d}$ when working at scale $r_*=2^m$.

Let us see explicitly how it works in $d=1$.  In the first $m=1$ step of the iterative definition, we ensure that the smallest dyadic scale $r_*\leq 2^1=2$ (nearest-neighbors) is off resonance, by repeating the block $(*)=(1,3^{-1})$, i.e.,
\begin{equation}
h^{(1)}=(\underbrace{1,3^{-1}}_{(*)},\underbrace{1,3^{-1}}_{(*)},\underbrace{1,3^{-1}}_{\mathrm{etc.}},\ldots,1,3^{-1})
\end{equation}
Of course, this leaves the possibility of perfect cancellation on the next dyadic scale, e.g., between the $1$ in the first and third position of the above vector. This is prevented at the $m=2$ step by setting
\begin{equation}
h^{(2)}=(\underbrace{3^{-2},3^{-3},3^{-4},3^{-5}}_{3^{-2}((*),3^{-2}(*))},\underbrace{3^{-2},3^{-3},3^{-4},3^{-5}}_{\mathrm{etc.}},\ldots,3^{-2},3^{-3},3^{-4},3^{-5})
\end{equation}
where the underbrace displays the emerging iterative structure that eventually takes the general form \eqref{eq:hm+1recursion}. Taking $h^{(1)}+h^{(2)}$, we have achieved the off-resonance property up to the next dyadic scale $r_*\leq 2^{2}=4$, since differences are always at least $3^{-4}$ in size. Notice also that adding $h^{(2)}$ did not spoil the degree of being off-resonance on the previous $m=1$ scale because $3^{-2}+3^{-3}+3^{-4}+3^{-5}<\frac{1}{2}3^{-1}$ and so the minimal gap on the previous scale is only changed mildly from $3^{-1}$ to $\frac{1}{2}3^{-1}$. The fact that higher scales do not spoil prior ones clearly generalizes to higher scales by a geometric series bound. This iterative construction now continues on to higher dyadic scales $2^m$ as expressed by the recursive definition  \eqref{eq:hm+1recursion}. In higher dimensions, the construction is similar, but the basic pattern that is repeated on increasing scales is box-like instead of  interval-like. As the proof of Theorem \ref{prop:dyadic-triadic} confirms, this heuristic indeed rigorously achieves the desired global off-resonance condition in all dimensions.

\begin{rmk}
    Comparing ideas in the dyadic-triadic construction to ones in computer science, one finds that the triadic part ensuring lack of cancellation is related to the fact that the (ordinarily NP-complete) partition problem \cite{garey1990guide} becomes easy when the set to be partitioned is super-increasing, which is the case for the sequence $3^j$. The dyadic part is used for the spatial rearrangement of the triadic powers which is a version of the construction of  the Lebesgue space filling curve (also known as Morton's order in computer science). Roughly speaking, the full argument below combines these two constructions up to errors which can be controlled.
\end{rmk}

\begin{proof}[Proof of Theorem \ref{prop:dyadic-triadic}]
\textit{Step 1. Reductions.}
Fix a dimension $d$, a length $L\geq 1$ and consider the lattice $\Lambda_L$ given in \eqref{eq:LambdaL}.
Recall Definition \ref{defn:noresonance}. It is easy to check  that each $h_{\vec{n}}\leq 2$ by summing a geometric series. The main task in verifying Definition \ref{defn:noresonance} is thus  to bound the  minimal  gap over balls from below. We write $\mathsf{d}(x,y)$ for the graph distance on $\Lambda_L$, as before. Let $r_*>0$, $a\in \Lambda_L$ and recall that $ B_{r_*}(a)$ defines an open ball 
of radius $r_*$ centered at $a$.  
For any $r_*>0$, the relevant minimal gap to bound is \begin{equation}
    \Delta_{r_*}=\inf_{a\in\Lambda_L}\inf_{A\subset B_{r_*}(a)} \Delta_A
\end{equation}where \begin{equation}
    \Delta_A=\inf_{\substack{E,E'\in \mathrm{spec} H_0^A:\\ E\neq E'}} |E-E'|
\end{equation} and \begin{equation}
    H_0^A=\sum_{\vec{n}\in A} h_{\vec{n}} Z_{\vec{n}}
\end{equation}
is the Hamiltonian restricted to terms in subset $A$.  
Since all Pauli-$Z$'s commute, the  eigenvalues of $H_0^A$ are given by 
\begin{equation}
S^A_{\vec{\sigma}}=\sum_{\vec{n}\in A}\sigma_{\vec{n}} h_{\vec{n}}
\end{equation}
with $\vec{\sigma}\in\{\pm 1\}^{|A|}$. Taking differences, we obtain 
\begin{equation}
\Delta_{A}=\inf_{\substack{\vec{\sigma},\vec{\sigma}' \in\{\pm 1\}^{|A|}:\\ \vec{\sigma}\neq \vec{\sigma}'}}\left|S^A_{\vec{\sigma}}-S^A_{\vec{\sigma}'}\right|.
\end{equation}
By linearity, $S^A_{\vec{\sigma}}-S^A_{\vec{\sigma}'}=S^A_{\vec{\sigma}-\vec{\sigma}'}$ and $\vec{\sigma}-\vec{\sigma}'\in\{0,\pm 2\}^{|A|}$, although we note that not all $\vec \sigma - \vec \sigma^\prime$ are in $\lbrace 0,\pm2\rbrace^A$.  Next we take the infimum over $A\subset B_{r_*}$ and notice that we can extend any $\vec{\sigma}-\vec{\sigma}'$ from $A$ to $B_{r_*}$ by setting it equal to zero on $B_{r_*}\setminus A$. Factoring out $2$, we arrive at 
\begin{equation}
\Delta_{r_*}=2\inf_{a\in\Lambda_L}\inf_{\substack{\sigma \in\{0,\pm 1\}^{|B_{r_*}(a)|}:\\ \sigma\neq 0}}\left|S^{B_{r_*}(a)}_{\vec{\sigma}}\right|.
\end{equation}
At this stage, it is easy to verify the claim for $r_*<1$ when $|{B}_{r_*}(a)|=1$, since $h_{\vec{n}}\geq h^{(1)}_{\vec n} \geq 3^{-2^{d-1}}$.
Hence, we assume $r_*\geq 1$ in the following. 
 It is convenient to relax the problem.
For every $r_*\geq 1$ and $a\in \Lambda_L$, there exists $\tilde a\in \mathbb Z^d$ such that $
{B}_{r_*}(a)|_{\Lambda_L}= B_{r_*}(\tilde a)|_{\mathbb{Z}^d}\cap \Lambda_L$. Hence,
\begin{equation}
\Delta_{r_*}\geq 2\inf_{a\in\mathbb Z^d}\inf_{\substack{\sigma \in\{0,\pm 1\}^{| B_{r_*}(a)|}:\\ \sigma\neq 0}}\left|S^{ B_{r_*}(a)}_{\vec{\sigma}}\right|.
\end{equation}
Next, we note that for every ball $ B_{r_*}(\tilde a)\subset \mathbb{Z}^d$ there exists $a'\in \mathbb Z^d$ such that the ball is contained in a shifted square
\begin{equation}
 B_{r_*}(a)\subset a'+\Lambda_{2r_*-1}=a'+\{0,\ldots,2r_*-2\}^d.
\end{equation}
Hence, we may consider the relaxed problem
\begin{equation}\label{eq:propreduction}
\Delta_{r_*}\geq 2\inf_{a\in\mathbb Z^d}\inf_{\substack{\sigma \in\{0,\pm 1\}^{|a+\Lambda_{2r_*-1}|}:\\ \sigma\neq 0}}\left|S^{a+\Lambda_{2r_*-1}}_{\vec{\sigma}}\right|.
\end{equation}

Thanks to \eqref{eq:propreduction}, it suffices to bound $|S^{a+\Lambda_{2r_*-1}}_{\vec{\sigma}}|$ from below for all choices of $a\in\mathbb Z^d$ and of $\vec{\sigma} \in\{0,\pm 1\}^{|{a+\Lambda_{2r_*-1}}|}$ that are not identically zero.

We will achieve this through the following bound on dyadic scales 
\begin{equation}\label{eq:23dyadicclaim}
\inf_{a\in\Lambda_L}\inf_{\substack{\vec{\sigma} \in\{0,\pm 1\}^{|a+\Lambda_{2^M}|}:\\ \vec{\sigma}\neq 0}} |S^{a+\Lambda_{2^M}}_{\vec{\sigma}}|\geq  3^{- 2^{dM+1}},\qquad \forall M\geq 0.
\end{equation}
Assuming \eqref{eq:23dyadicclaim}, we can directly conclude  the claim for any $r_*\geq 1$ as follows. Given any $r_*\geq 1$, we find the unique $M_*\geq 1$ such that $2^{M_*-1}\leq 2r_*-1< 2^{M_*}$  and we apply \eqref{eq:23dyadicclaim} for $M=M_*$ to obtain
\begin{equation}
\inf_{\substack{\vec{\sigma} \in\{0,\pm 1\}^{|a+\Lambda_{2r_*-1}|}:\\ \vec{\sigma}\neq 0}}\left|S^{a+\Lambda_{2r_*-1}}_{\vec{\sigma}}\right| \geq \inf_{\substack{\vec{\sigma} \in\{0,\pm 1\}^{|a+\Lambda_{2^{M_*}}|}:\\ \vec{\sigma}\neq 0}} |S^{\mathcal B_{a+\Lambda_{2^{M_*}}}}_{\vec{\sigma}}|
\geq  3^{-2^{dM_*+1}}
\geq   3^{-8^d r_*^d},
\end{equation}
where the last step used $2^{dM_*+1}\leq 2^{d(M_*+1)}\leq  8^dr_*^d$. This yields the claim of Proposition \ref{prop:dyadic-triadic}. It thus suffices to prove the bound on dyadic scales \eqref{eq:23dyadicclaim}. 

We  fix $M\geq 0$, $a\in \mathbb Z^d$, and $\vec{\sigma} \in\{0,\pm 1\}^{|a+\Lambda_{2^M}|}$ not identically zero. We decompose and estimate
\begin{equation}\label{eq:23decompose}
    \begin{aligned}
\left|S^{ a+\Lambda_{2^M}}_{\vec{\sigma}}\right|
=&\left|\sum_{\vec{n}\in a+\Lambda_{2^M}}\vec{\sigma}_{\vec{n}} \sum_{m\geq 1} h^{(m)}_{\vec{n}}\right|
=\left| \sum_{\vec{n}\in a+\Lambda_{2^M}}\vec{\sigma}_{\vec{n}}\sum_{m=1}^M h^{(m)}_{\vec{n}}
+ \sum_{\vec{n}\in a+\Lambda_{2^M}}\vec{\sigma}_{\vec{n}} \sum_{m=M+1}^\infty h^{(m)}_{\vec{n}}\right|\\
\geq &\underbrace{\left|\sum_{\vec{n}\in a+\Lambda_{2^M}}\vec{\sigma}_{\vec{n}} \sum_{m=1}^M h^{(m)}_{\vec{n}}\right|}_{\mathrm{(I)}}
-\underbrace{ \sum_{\vec{n}\in a+\Lambda_{2^M}} \sum_{m=M+1}^\infty h^{(m)}_{\vec{n}}}_{\mathrm{(II)}}.
\end{aligned}
\end{equation}
Following the heuristic described before the proof, $\mathrm{(I)}$ is the main term to be analyzed in which we harness cancellations at the matching scale $m=M$. By contrast, $\mathrm{(II)}$ is an error term that we control afterwards by a geometric series.

\textit{Step 2: Main term  $\mathrm{(I)}$.} 
We perform another convenient simplification. Notice that each $h^{(m)}$ is periodic in the coordinate directions,
\begin{equation}\label{eq:23periodic}
    h^{(m)}_{\vec{n}}=h^{(m)}_{\vec{n}-2^m e_j},\qquad j=1,\ldots,d.
\end{equation}
Hence, the set of fields entering in term $\mathrm(I)$ is equal to the set of fields appearing on the  box $\Lambda_{2^M}=\{0,\ldots,2^M-1\}^d$ touching the origin, i.e.,
\begin{equation}
\{h^{(m)}_{\vec{n}}\,:\, 1\leq m\leq M,\; \vec{n}\in a+\Lambda_{2^M}\}=\{h^{(m)}_{\vec{n}}\,:\, 1\leq m\leq M,\; \vec{n}\in \Lambda_{2^M}\}.
\end{equation}
Since relabeling only changes the $\vec{\sigma}$ which is arbitrary anyway, it suffices to consider the case $a=0$ i.e.,
\begin{equation}\label{eq:23maintermrewrite}
\inf_{a\in\Lambda_L}\inf_{\substack{\vec{\sigma} \in\{0,\pm 1\}^{|a+\Lambda_{2^M}|}:\\ \vec{\sigma}\neq 0}} \mathrm{(I)}
=\inf_{\substack{\vec{\sigma} \in\{0,\pm 1\}^{|\Lambda_{{2^M}}|}:\\ \vec{\sigma}\neq 0}} \left| \sum_{\vec{n}\in  \Lambda_{2^M}}\vec{\sigma}_{\vec{n}} \sum_{m=1}^Mh^{(m)}_{\vec{n}}\right|.
\end{equation}
In order to bound this term from below, the key idea is to identify the smallest negative power of $3$ that is included in  $\sum_{m=0}^M \sum_{\vec{n}\in  \Lambda_R}\vec{\sigma}_{\vec{n}} h^{(m)}_{\vec{n}}$ and to prove, using elementary modular arithmetic, that it cannot be canceled by any signed combination of the other local fields. 
To this end, we now make explicit how the negative powers of $3$ enter in each $h^{(m)}_{\vec{n}}$'s; see \eqref{eq:hm_powersof3} below. Given $\vec{n}=(n_1,\ldots,n_d)\in\mathbb Z^d$, we define the associated binary digits
\begin{equation}
b_{k}(n_j):=\left\lfloor 2^{-k} n_j\right\rfloor\text{ (mod 2)},\qquad 
k\geq 0,
\end{equation}
so that we have the binary expansion
\begin{equation}
 n_j=\sum_{k=0}^{\infty} b_{k}(n_j) 2^k,
\end{equation}
where the sum is actually finite for each $n_j$.
We define 
\begin{equation}\label{eq:pmdefn}
    p_m(\vec{n})
\;:=\;
\sum_{k=0}^{m-1}
\;2^{dk} \left(1+\sum_{j=1}^d2^{\,j-1}\,b_k(n_j)\right)-1,\qquad (m\geq 1).
\end{equation}
The interpretation of the function $p_M$ is that it associates to each $\vec{n}\in \Lambda_{2^M}$ the highest (i.e., most negative) triadic digit that is present in the fields $\{h_{\vec{n}}^{(m)}\}_{m\leq M}$ that enter into term $\mathrm{(I)}$.

\begin{lem}\label{lem:triadicrep}
 \begin{itemize}
     \item[(i)]
 With the notation above, we have for every $\vec{n}\in\mathbb Z^d$ and every $m\geq 1$ that
\begin{equation}
\label{eq:hm_powersof3}
   h^{(m)}_{ \vec{n}}
=
3^{-\,p_m( \vec{n})}.
\end{equation}
\item[(ii)]
For every $M\geq 1$, the function $p_M:\Lambda_{2^M}\to\mathbb Z_+$ is injective. 
 \end{itemize}  
\end{lem}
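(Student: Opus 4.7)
My plan is to prove (i) by a direct induction on $m$ using the recursion \eqref{eq:hm+1recursion}, and then to prove (ii) by recognizing that $p_M$ encodes the standard Morton (Z-order) interleaving of the base-$2$ digits of $n_1,\ldots,n_d$ into a single binary string of length $dM$.

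For (i), the base case $m=1$ is immediate from \eqref{eq:h1} because $b_0(n_j) = n_j \bmod 2$, so the formula \eqref{eq:pmdefn} gives $p_1(\vec n) = \sum_{j=1}^d 2^{j-1} b_0(n_j)$, which matches the exponent in \eqref{eq:h1}. For the inductive step, I would assume $h^{(m)}_{\vec n} = 3^{-p_m(\vec n)}$ and apply \eqref{eq:hm+1recursion} to obtain
\[
h^{(m+1)}_{\vec n} = 3^{-p_m(\vec n) - 2^{dm}\left(1 + \sum_{j=1}^d 2^{j-1} b_m(n_j)\right)}.
\]
Comparing with \eqref{eq:pmdefn}, the bracketed addition is precisely the $k=m$ summand in the definition of $p_{m+1}$, so $p_{m+1}(\vec n) = p_m(\vec n) + 2^{dm}(1 + \sum_j 2^{j-1} b_m(n_j))$ and the induction closes. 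This step is purely bookkeeping.

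For (ii), the key rewriting is
\[
p_M(\vec n) + 1 - \sum_{k=0}^{M-1} 2^{dk} \;=\; \sum_{k=0}^{M-1}\sum_{j=1}^d b_k(n_j)\, 2^{\,dk+j-1}.
\]
The right-hand side is a non-negative integer whose binary digit at position $dk+j-1$ equals $b_k(n_j) \in \{0,1\}$. The index map $(k,j) \mapsto dk+j-1$ is a bijection from $\{0,\ldots,M-1\} \times \{1,\ldots,d\}$ onto $\{0,\ldots,dM-1\}$ (the $k$-th dyadic block of length $d$ houses the $d$ values of $j$), so every bit position in $\{0,\ldots,dM-1\}$ is assigned exactly once. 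Consequently, the integer on the right uniquely determines all of the bits $\{b_k(n_j)\}_{0\le k\le M-1,\,1\le j\le d}$, and for $\vec n \in \Lambda_{2^M}$ the $M$ binary digits $b_0(n_j),\ldots,b_{M-1}(n_j)$ determine $n_j \in \{0,\ldots,2^M-1\}$ completely. Hence $p_M$ is injective on $\Lambda_{2^M}$; in fact it is a bijection onto the interval $\{\sum_{k=0}^{M-1} 2^{dk} - 1,\ldots,\sum_{k=0}^{M-1} 2^{dk} + 2^{dM} - 2\}$.

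There is no genuine obstacle here: the entire content of the lemma is the recognition that $p_M$ is a base-$3$ shorthand for the Morton interleaving of binary expansions, after which both parts reduce to routine manipulations of binary digits. The only care needed is in verifying the bijectivity of the index map $(k,j) \mapsto dk+j-1$, which is what ensures that the bits of distinct coordinates never collide in the same position of the induced binary representation.
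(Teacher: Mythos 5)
Your proof is correct and takes essentially the same route as the paper: part (i) is the same recursion-comparison argument, and part (ii) is the same reduction to $Q_M(\vec n) = \sum_{k,j} 2^{dk+j-1} b_k(n_j)$ (which differs from $p_M$ by a constant) together with the observation that the bit-placement map $(k,j)\mapsto dk+j-1$ assigns each of the $dM$ binary digits $b_k(n_j)$ to a distinct position. You make the Morton-interleaving structure a bit more explicit than the paper does and note the slightly stronger bijectivity-onto-an-interval statement, but the underlying argument is identical.
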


\begin{proof}[Proof of Lemma \ref{lem:triadicrep}]
      For property (i), we notice that the iterative definitions
    \eqref{eq:h1} and \eqref{eq:hm+1recursion} can be rewritten as
    \begin{subequations}\begin{align}
  h^{(1)}_{\vec{n}} =& =3^{-\sum_{j=1}^d 2^{j-1}b_0(n_j)}= 3^{-p_1(\vec{n})},
  \label{eq:h1new}\\
        h^{(m+1)}_{\vec{n}} :=& 3^{-2^{dm}} 3^{-2^{dm}\sum_{j=1}^d2^{j-1} b_m(n_j)} h^{(m)}_{\vec{n}},\qquad m\geq 1\label{eq:hm+1recursionnew}.\end{align} \end{subequations}
 Conversely, we observe that
 \begin{equation}
 p_{m+1}(\vec{n})-p_m(\vec{n})=2^{dm}\left(1+\sum_{j=1}^d2^{j-1} b_m(n_j)\right).
 \end{equation}
 Property (i) now follows, because both sides satisfy the same recursion relation.

 For property (ii), we first note that it is equivalent to injectivity of $Q_M:\Lambda_{2^M}\to \mathbb Z_+$ given by
 \begin{equation}
Q_M(\vec{n})= \sum_{k=0}^{M-1}\sum_{j=1}^d2^{dk+j-1}\,b_k(n_j).
 \end{equation}
since $p_M$ and $Q_M$ only differ by a constant.  On $\Lambda_{2^M}$, we can realize $Q_M$ as a composition of the two injective maps $\vec{n}\mapsto
\{b_k(n_j)\}_{\substack{0\leq k\leq M-1\\ 1\leq j\leq d}}
\mapsto  Q_M(\vec n)$.
\end{proof}

With Lemma \ref{lem:triadicrep} at hand, we return to the task of lower-bounding \eqref{eq:23maintermrewrite}. We fix an arbitrary $\vec{\sigma}\in\{0,\pm 1\}^{|\Lambda_{2^M}|}$ with $\vec{\sigma}\neq 0$   and consider
\begin{equation}\label{eq:23afterlemma}
     \sum_{\vec{n}\in  \Lambda_{2^M}}\vec{\sigma}_{\vec{n}} \sum_{m=1}^M h^{(m)}_{\vec{n}}
= \sum_{\vec{n}\in  \Lambda_{2^M}}\vec{\sigma}_{\vec{n}} \sum_{m=1}^M 3^{-\,p_m( \vec{n})}.
\end{equation}
Among the $\vec{n}\in\Lambda_{2^M}$'s which  satisfy $\vec{\sigma}_{\vec{n}}\neq 0$, we consider the one which maximizes $p_M(\vec{n})$ (the highest digit), i.e.,
\begin{equation}
\vec{n}^*
=
\operatorname*{arg\,max}_{ \vec{n}\,:\, \vec{\sigma}_{\vec{n}}\neq 0}\,p_M( \vec{n}),
\end{equation}
and we write \(p^*=p_M( \vec{n}^*)\).  By injectivity of $p_M$, this maximizer is unique. Moreover,  \(\vec{\sigma}_{ \vec{n}^*}=\pm1\). Next, we multiply \eqref{eq:23afterlemma} by \(3^{p^*}\) and reduce modulo $3$.
Notice that $p_m<p_{M}$ for every $1\leq m<M$.  
Moreover, for the summands with \( \vec{n}\neq \vec{n}^*\), we have that \(p_m( \vec{n})< p_M( \vec{n})<p^*\) by maximality of $p_*$. Hence,
  \begin{equation}
    3^{p^*}\sum_{\vec{n}\in  \Lambda_{2^M}}\vec{\sigma}_{\vec{n}} \sum_{m=1}^M 3^{-\,p_m( \vec{n})}
    =3^{p^*}\left( \vec{\sigma}_{ \vec{n}^*}3^{-p^*}+\sum_{k<p^*} c_k 3^{-k}\right)
    =\vec{\sigma}_{ \vec{n}^*}
    \pmod3.
  \end{equation}
  Notice that $|x|\geq |x\pmod 3| $ for any $x\in\mathbb R$.
Recalling \eqref{eq:23maintermrewrite}, \eqref{eq:hm_powersof3}, and $|\vec{\sigma}_{ \vec{n}^*}|=1$,  this implies
\begin{equation}
\inf_{a\in\Lambda_L}\inf_{\substack{\vec{\sigma} \in\{0,\pm 1\}^{|a+\Lambda_{2^M}|}:\\ \vec{\sigma}\neq 0}} \mathrm{(I)}
=\inf_{\substack{\vec{\sigma} \in\{0,\pm 1\}^{|\Lambda_{{2^M}}|}:\\ \vec{\sigma}\neq 0}} \left| \sum_{\vec{n}\in  \Lambda_{2^M}}\vec{\sigma}_{\vec{n}} \sum_{m=1}^Mh^{(m)}_{\vec{n}}\right|\geq 3^{-p^*}.
\end{equation}
From the definition of $p_m$ in \eqref{eq:pmdefn}, we have
\begin{equation}
\label{eq:p*estimate}
p_*\leq \sum_{k=0}^{M-1}
\;2^{dk} \left(1+\sum_{j=1}^d2^{\,j-1}\right)-1
= \frac{2^{d(M+1)}-2^d}{2^d-1}-1
\end{equation}
and so 
\begin{equation}\label{eq:23Ifinal}
    \inf_{a\in\Lambda_L}\inf_{\substack{\vec{\sigma} \in\{0,\pm 1\}^{|a+\Lambda_{2^M}|}:\\ \vec{\sigma}\neq 0}} \mathrm{(I)}\geq  3^{1-\frac{2^{d(M+1)}-2^d}{2^d-1}}.
\end{equation}
This completes Step 2.

\textit{Step 3: Error term $\mathrm{(II)}$.}
We apply Lemma \ref{lem:triadicrep} to write
\begin{equation}
 \mathrm{(II)}=\sum_{\vec{n}\in a+\Lambda_{2^M}} \sum_{m=M+1}^\infty h^{(m)}_{\vec{n}}
 =\sum_{\vec{n}\in a+\Lambda_{2^M}} \sum_{m=M+1}^\infty 3^{-p_m(\vec{n})}.
\end{equation}
We may again reduce to the case $a=0$ with loss of generality. Indeed, recall the periodicity \eqref{eq:23periodic} and notice that within the period, the coordinate-wise maps $k \mapsto h^{(m)}_{\vec{n}_ke_j}$ defined on $\{0,\ldots,2^m-1\}$ are monotonically increasing for every $j=1,\ldots,d$. Together, these facts imply that we may estimate term $\mathrm{(II)}$ by the case $a=0$, i.e.,
\begin{equation}\label{eq:IIestimatea=0}
     \mathrm{(II)}\geq \sum_{\vec{n}\in \Lambda_{2^M}} \sum_{m=M+1}^\infty 3^{-p_m(\vec{n})}.
\end{equation}

We note the following  generalization of the injectivity property from Lemma \ref{lem:triadicrep}.

\begin{lem}\label{lem:23injective}
   Fix $M\geq 1$. Consider the family of functions $p_{m}:\Lambda_{2^M}\to\mathbb Z_+$ indexed by $m\geq M+1$. This is an injective family in the sense that
   \[
   p_{m}(\vec{n})=p_{m'}(\vec{n}') \;\Rightarrow\; m=m'\text{ and } \vec{n}=\vec{n}'.
   \]
\end{lem}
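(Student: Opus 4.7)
The plan is to exploit that $\vec n\in\Lambda_{2^M}$ means $n_j<2^M$, so the binary digits satisfy $b_k(n_j)=0$ for all $k\geq M$. This causes the $\vec n$-dependent portion of $p_m(\vec n)$ to truncate at $k=M-1$ and cleanly decouple from the $m$-dependence. Thus I would first rewrite, for every $m\geq M+1$ and every $\vec n\in\Lambda_{2^M}$,
\[
p_m(\vec n)=A_m+Q_M(\vec n)-1,\qquad A_m:=\sum_{k=0}^{m-1}2^{dk}=\frac{2^{dm}-1}{2^d-1},
\]
where $Q_M(\vec n)=\sum_{k=0}^{M-1}\sum_{j=1}^d 2^{dk+j-1}b_k(n_j)$ is exactly the auxiliary function introduced in the proof of Lemma~\ref{lem:triadicrep}(ii). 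The key observation is that $A_m$ depends only on $m$ and $Q_M$ only on $\vec n$, which is the heart of the separation of scales we need.

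Next I would bound the range of $Q_M$ by the geometric-series computation
\[
0\leq Q_M(\vec n)\leq \sum_{k=0}^{M-1}2^{dk}(2^d-1)=2^{dM}-1,
\]
and compare it to the jump $A_{m+1}-A_m=2^{dm}$. For $m\geq M+1$ this jump is at least $2^{d(M+1)}$, which strictly exceeds $\max_{\vec n} Q_M(\vec n)=2^{dM}-1$. Consequently, the value intervals $\bigl\{A_m+q-1:\,0\leq q\leq 2^{dM}-1\bigr\}$ are pairwise disjoint as $m$ ranges over $\{M+1,M+2,\ldots\}$. From this disjointness, any equality $p_m(\vec n)=p_{m'}(\vec n')$ forces $m=m'$; then $Q_M(\vec n)=Q_M(\vec n')$, and the injectivity of $Q_M$ on $\Lambda_{2^M}$ already proved inside Lemma~\ref{lem:triadicrep}(ii) (as a composition of the map $\vec n\mapsto\{b_k(n_j)\}_{k,j}$ and a base-$2$ digit recombination) yields $\vec n=\vec n'$.

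There is no substantial obstacle; the proof is a scale-separation argument that reduces entirely to the geometric identity $A_{m+1}-A_m=2^{dm}$ and the bound $\max Q_M=2^{dM}-1$, together with the already-established injectivity of $Q_M$. The only small point worth making explicit in the writeup is the truncation $b_k(n_j)=0$ for $k\geq M$ when $\vec n\in\Lambda_{2^M}$, which is what makes the $\vec n$-dependent part of $p_m$ be the same function $Q_M$ for every $m\geq M+1$ and thereby permits the clean bucket decomposition.
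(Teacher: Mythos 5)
Your proof is correct and follows essentially the same route as the paper: the paper's one-line observation that $\min p_{m+1}-\max p_m\geq 1$ (for $m\geq M+1$) is exactly what your explicit decomposition $p_m=A_m+Q_M(\vec n)-1$ with the bound $\max Q_M=2^{dM}-1<2^{d(M+1)}\leq A_{m+1}-A_m$ verifies, and the paper likewise reduces fixed-$m$ injectivity to the injectivity of $Q_M$ established in Lemma~\ref{lem:triadicrep}(ii). Your writeup is just a more explicit version of the same scale-separation argument.
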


\begin{proof}
   Starting from \eqref{eq:pmdefn}, elementary estimates show that   $\min p_{m+1}-\max p_{m}\geq 1$. This implies $m=m'$. For a fixed $m\geq M+1$, the injectivity of $p_{m}:\Lambda_{2^M}\to\mathbb Z_+$ follows by the same argument as the injectivity of $p_{M}:\Lambda_{2^M}\to\mathbb Z_+$ shown in Lemma \ref{lem:triadicrep}. 
\end{proof}

By Lemma \ref{lem:23injective}, the sequence of powers $p_m(\vec{n})$ appearing in \eqref{eq:IIestimatea=0} is a disjoint subset of $\mathbb Z_+$.
Moreover, we have for  every $m\geq M+1$ and $\vec{n}\in\Lambda_{2^M}$,
  \begin{equation}
   p_m(\vec{n})\geq \sum_{k=0}^{m-1}
\;2^{dk}-1
\geq \frac{2^{dm}-1}{2^d-1}-1
\geq  \frac{2^{d(M+1)}-2^d}{2^d-1}.
   \end{equation}

Hence, we can compare to a geometric series: 
\begin{equation}\label{eq:23IIfinal}
   \mathrm{(II)}\leq \sum_{\vec{n}\in \Lambda_{2^M}} \sum_{m=M+1}^\infty 3^{-p_m(\vec{n})}
 \leq \sum_{j= \frac{2^{d(M+1)}-2^d}{2^d-1}}^\infty 3^{-j}
 =\frac{3}{2} 3^{-\frac{2^{d(M+1)}-2^d}{2^d-1}}. 
\end{equation}
This completes Step 3.

We are now ready to conclude \eqref{eq:23dyadicclaim}. Indeed, combining \eqref{eq:23decompose}, \eqref{eq:23Ifinal}, and \eqref{eq:23IIfinal}, we obtain 
\begin{equation}
\left|S^{ a+\Lambda_{2^M}}_{\vec{\sigma}}\right|
\geq \mathrm{(I)}-\mathrm{(II)}\geq 3^{1-\frac{2^{d(M+1)}-2^d}{2^d-1}}-\frac{3}{2} 3^{-\frac{2^{d(M+1)}-2^d}{2^d-1}}=
\frac{3}{2} 3^{-\frac{2^{d(M+1)}-2^d}{2^d-1}}
\geq 
 3^{- \frac{2^{d(M+1)}}{2^d-1}}
 \geq 3^{-2^{dM+1}}.
\end{equation}
where the last step used $\frac{3^{\frac{2^d}{2^d-1}}}{2}\geq 1$ and $2^d-1\geq 2^{d-1}$. This establishes \eqref{eq:23dyadicclaim}.
\end{proof}

\subsection{Disordered potentials}
The dyadic-triadic scheme has global non-resonance at all scales, but is finely tuned.  We now show that disordered models also obey the non-resonance condition with high probability.  For simplicity we focus on Gaussian disorder below, which simplifies the discussion.   Similar facts are very important in previous efforts to establish slow dynamics and/or MBL in disordered one-dimensional spin chains \cite{imbrie2016many,absenceofconduction}.

\begin{prop}
\label{prop:probabilistic_noresonance}
Let $h_i$ be i.i.d. Gaussian zero-mean random variables with  variance $\sigma$, and let $V_\ast = \sup_{x \in \Lambda}|B_{r_\ast}(x)|$ (as before). Let $x \in \Lambda$ be fixed.
Then for any $p < 1$,$\delta > 0$, and $\delta' > \delta$ and any $r_\ast$ there exists a constant $c$ such that $\{h_i\}$ satisfies the $(h, r_\ast, \Delta)$-no resonance condition at $x$ with probability at least $1-p_{\mathrm{viol}}$, where: \begin{subequations}
    \begin{align}
    h &= \sigma\sqrt{2\log(4V_\ast 3^{\delta V_{\ast}}/p)}, \\
        \Delta &= p^{1+\delta}c3^{-(1+\delta')V_\ast}, \\
       p_{\mathrm{viol}} &\le  p3^{-\delta V_\ast}. \label{eq:pviol}
    \end{align}
\end{subequations}
\end{prop}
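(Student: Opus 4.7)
The plan is to control two failure modes independently by a union bound and then allocate the failure probability equally between them.

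First I would bound the single-site fields. Since each $h_i$ is a zero-mean Gaussian of variance $\sigma^2$, the standard Gaussian tail bound gives $\mathbb{P}(|h_i|>t)\leq 2\e^{-t^2/(2\sigma^2)}$, and a union bound over the at most $V_\ast$ sites in $B_{r_\ast}(x)$ yields $\mathbb{P}(\sup_i|h_i|>t)\leq 2V_\ast \e^{-t^2/(2\sigma^2)}$. Setting this equal to $\tfrac{1}{2}p\,3^{-\delta V_\ast}$ and solving for $t$ yields exactly the claimed value $h=\sigma\sqrt{2\log(4V_\ast 3^{\delta V_\ast}/p)}$, with probability $\leq \tfrac12 p\,3^{-\delta V_\ast}$ of failure.

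Next I would bound the spectral gap. As in the reductions in the proof of Theorem \ref{prop:dyadic-triadic}, the eigenvalue differences of any $H_0|_S$ with $S\subseteq B_{r_\ast}(x)$ are of the form $\sum_i \tau_i h_i$ for some $\vec\tau\in\{-2,0,2\}^{V_\ast}$, where we have extended $\vec\tau$ to $B_{r_\ast}(x)$ by zero on $S^{\mathrm c}$. So it suffices to bound $\min_{\vec\tau\ne 0}|\sum_i \tau_i h_i|$ from below. For a fixed nonzero $\vec\tau$, the sum $\sum_i\tau_i h_i$ is Gaussian with variance $4|\tau|_0\sigma^2\geq 4\sigma^2$, so the standard anti-concentration estimate for Gaussian densities gives
\begin{equation}
\mathbb{P}\!\left(\left|\sum_i \tau_i h_i\right|<\epsilon\right)\leq \frac{\epsilon}{2\sigma\sqrt{2\pi}}.
\end{equation}
A union bound over the at most $3^{V_\ast}$ nonzero $\vec\tau$'s and requiring the result to be at most $\tfrac12 p\,3^{-\delta V_\ast}$ lets one take $\epsilon=\sigma\sqrt{2\pi}\,p\,3^{-(1+\delta)V_\ast}$, which is the smallest gap guaranteed with the target probability.

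Finally, I would solve for $\Delta$ via $\Delta=\epsilon/h$, giving
\begin{equation}
\Delta\geq \frac{\sqrt{\pi}\,p\,3^{-(1+\delta)V_\ast}}{\sqrt{\log(4V_\ast 3^{\delta V_\ast}/p)}},
\end{equation}
and then absorb the residual $\sqrt{V_\ast+\log(1/p)}$ factor into the small additional exponential slack $3^{(\delta'-\delta)V_\ast}$ and the small additional power $p^{\delta}$. Concretely, for any $\delta'>\delta$ the inequalities $3^{(\delta'-\delta)V_\ast}\geq c_1\sqrt{V_\ast}$ and $p^{-\delta}\geq c_2\sqrt{\log(1/p)}$ (for appropriate constants $c_1,c_2$ depending only on $\delta'-\delta$ and $\delta$) yield $\Delta\geq p^{1+\delta}c\,3^{-(1+\delta')V_\ast}$ for a suitable $c=c(\delta,\delta')$. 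Adding the two failure probabilities then gives the stated $p_{\mathrm{viol}}\leq p\,3^{-\delta V_\ast}$.

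The only genuinely nontrivial step is the anti-concentration and the accompanying combinatorial union bound, but both are immediate because (i) the density of any nondegenerate Gaussian is uniformly bounded by $1/(2\sigma\sqrt{2\pi})$ in the worst case $|\tau|_0=1$, and (ii) every eigenvalue difference of $H_0|_S$ for any $S\subseteq B_{r_\ast}(x)$ lifts to a nonzero $\vec\tau\in\{-2,0,2\}^{V_\ast}$, so one does not need to separately union-bound over subsets $S$. The remaining work is purely bookkeeping to rewrite the logarithmic factors in the clean closed form stated in the proposition.
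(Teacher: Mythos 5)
Your proposal is correct and follows essentially the same route as the paper: a union bound over the $3^{V_\ast}$ sign vectors $\vec\tau\in\{-2,0,2\}^{V_\ast}$ with a Gaussian anti-concentration estimate, a separate Gaussian tail bound for $\max_i|h_i|$, a split of the failure budget between the two events, and absorption of the residual $\sqrt{\log(\cdot)}$ factor (arising from dividing the minimum gap by $h$) into the $\delta'>\delta$ and $p^{1+\delta}$ slacks. The only slip is a dropped factor of $2$ in the anti-concentration step: the probability of landing in an interval of width $2\epsilon$ is at most $2\epsilon$ times the maximal density, giving $\epsilon/(\sigma\sqrt{2\pi})$ rather than $\epsilon/(2\sigma\sqrt{2\pi})$; this is inconsequential since it is absorbed into the constant $c$.
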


The need to take $h = \mathrm O\qty(\sqrt{V_\ast})$ is a consequence of controlling the maximum value of all the random Gaussian potentials within the ball. However, the velocity $v$ scales with (roughly) $v \sim h\epsilon^{\lambda r_\ast}$, so we will absorb the factor of $\sqrt{V_\ast}$ into a reduction in the exponent $\lambda$.
\begin{proof}
Consider $\{h_i\}_{i \in B_{r_\ast}(x)}$, and let $V \equiv |B_{r_\ast}(x)|$. The resonances are $(\vec \sigma - \vec \sigma') \cdot \vec h$ where $\vec \sigma, \vec \sigma' \in \{-1, 1\}^{V}$, and so $\vec \sigma - \vec \sigma' \in \{-2, 0, 2\}^{V}$. Note that this includes any subset $A \subseteq B_{r_\ast}(x)$ by simply setting $\sigma_i = \sigma_i' = 0$ for $i \in B_{r_\ast}(x) \backslash A$. Let $S_{\vec\sigma} \equiv \vec \sigma \cdot \vec h$, for $\vec \sigma \in \{-2, 0, 2\}^{V}$. Then the minimum resonance is
\begin{align}
\Delta_{\rm{min}} = \min_{\vec \sigma  \in \{-2,0,2\}^{|A|},\;  \vec\sigma \ne \vec 0}\qty|S_{\vec \sigma}^A|
\end{align}
Let $n(\vec \sigma)$ denote the number of nonzero entries in $\vec \sigma$. We then bound the minimum with a union-bound argument:
\begin{align}\label{eq:anti-concentration}
\mathbb P\qty[\min_{\vec \sigma \in \{-2, 0,2\}^{V}}\qty|S_{\vec \sigma}| < \Delta] &\leq \sum_{\vec \sigma \in \{-2, 0, 2\}}\mathbb P\qty[\qty|S_{\vec \sigma}| < \Delta] \notag\\
&= \sum_{\vec \sigma \in \{-2, 0, 2\}^{V}}\int\limits_{-\Delta}^{\Delta}\frac{\mathrm{d}x}{\sqrt{8\pi \sigma^2 n(\vec \sigma)}} \e^{-x^2/(8\sigma^2n(\vec \sigma))} &\leq
\sum_{\vec \sigma \in \{-2, 0, 2\}^{V}}\frac{2\Delta}{\sqrt{8\pi \sigma^2 n(\vec \sigma)}} 
\leq \frac{3^{V}\Delta}{\sigma}
\end{align}
Since $V_\ast \geq V$, taking $\Delta = \frac{p\sigma}{2}3^{-(1+\delta)V_\ast}$ for some $p < 1$ and any $\delta > 0$, the potential constructed is non-resonant with probability less than $\frac{p}{2}3^{-\delta V_\ast}$. However, we may not indeed have $|h_i| \leq 1$ for all $i \in S$. By a Chernoff bound for $h_i$, 
\begin{align}
p' = \mathbb P[\max |h_i| > h] \leq V\mathbb P[|h_i| > h] \leq 2V \exp(-\frac{h^2}{2\sigma^2})
\end{align}
for any $h$. Solving for $h$, we find that with probability less than $p'$ we have $\max |h_i| > h$ where $h = \sigma\sqrt{2\log(2V_\ast/p')}$. Setting $p' = \frac{p}{2}3^{-\delta V_{\ast}}$, we have $\max |h_{i}|\leq \sigma \sqrt{2\log(4V_\ast 3^{\delta V_{\ast}}/p)}$ \textit{and} non-resonance with 
\begin{equation}
\Delta/h \geq \frac{p}{2}3^{-(1+\delta)V_\ast}(2\log(4V_\ast 3^{\delta V_{\ast}}))^{-1/2} \geq p^{1+\delta}c3^{-(1+\delta')V_{\ast}}
\end{equation}
for an appropriate constant $c$ depending on $\delta, \delta'$, with probability $p_{\rm{viol}} \leq p+p' = p3^{-\delta V_\ast}$.
\end{proof}

While the union bound might appear to sacrifice too much, numerics indicate that the estimated scaling of the smallest resonance is qualitatively realistic with a slightly sub-optimal exponent. This is shown in Fig.~\ref{fig:gaussian-numerics}. Now we can crucially apply the above estimate to bound the probability for Gaussian disorder to fail the $(h, \Delta, r_\ast, n_\ast, \zeta)$-no resonance condition at a point $x$. Since a union-bound argument allows us to neglect correlations between different regions in each path, the problem is reduced to the combinatorics of estimating the failure probability within a single path.

\begin{figure}[t]
    \centering
    \def\svgwidth{0.45\linewidth} 
    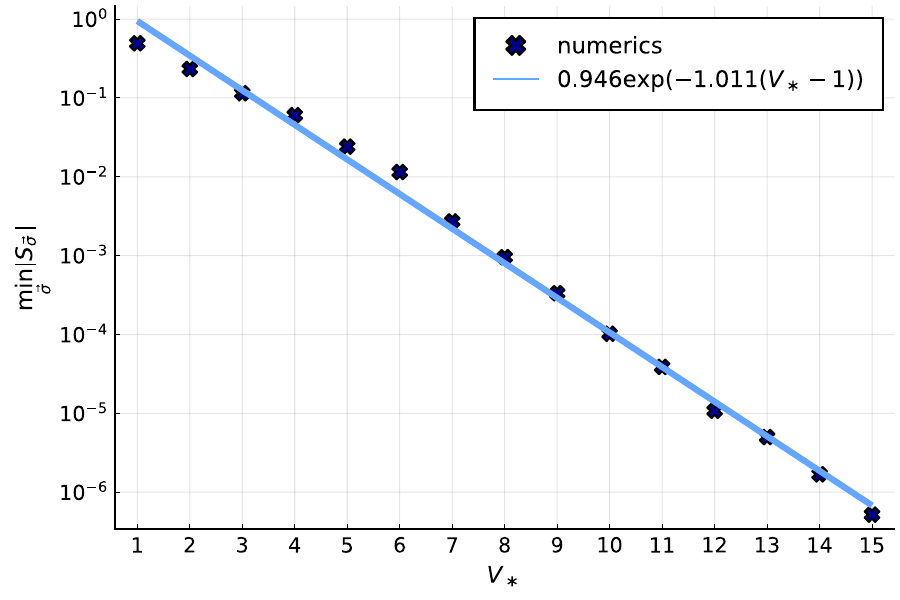
    \def\svgwidth{0.45\linewidth} 
    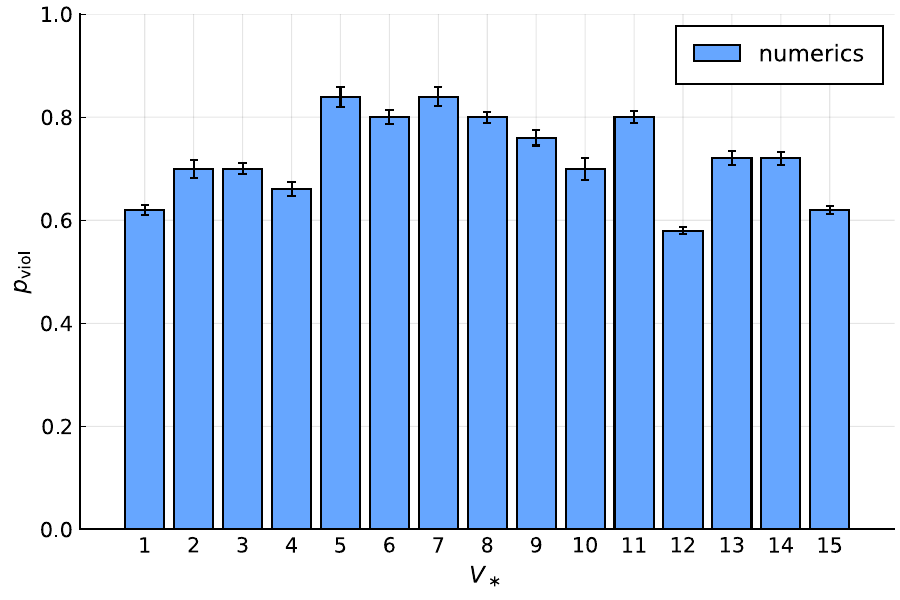

    \caption{\textbf{Left:} An average of the minimum gap over disorder of
      strength $\sigma = 1$. The average-case non-resonance condition appears to
      be $\Delta \sim \e^{-(V_\ast-1)}$ for random Gaussian potentials. The data
      in the plot represent a single region of size $V_\ast$, and the minimum
      resonance is averaged over 50 realizations of disorder. This suggests the
      union bound is actually close to optimal. \textbf{Right:} Probability of
      failing the non-resonance condition $\Delta_{\rm{min}} \geq
      \frac{1}{4}\exp(-(V_\ast - 1))$. The probability remains finite as $V_\ast$ is
      increased. Error bars are estimated using resampling.}
    \label{fig:gaussian-numerics}
\end{figure}

\begin{cor}
\label{lem:no_percolating_paths}
Suppose that on a $d-$dimensional graph $\Gamma$, an on-site potential $H_0$ fails the $(h, r_\ast, \Delta)$ non-resonance condition at arbitrary $x^\prime \in \Lambda$ with probability no larger than $p_{\rm{viol}}$. 
Then for any $\zeta < 1$
the $(h, \Delta, r_\ast, n_\ast, \zeta)$  partial non-resonance condition (Def.~\ref{def:partialnonresonance}) is failed at any (other) point $x\in \Lambda$ with probability
\begin{align}
\mathbb P[\neg \operatorname{NR}(x)] \leq 2p_{\rm{viol}}^{1-\zeta}(2V_\ast p_{\rm{viol}}^{1-\zeta})^{n_\ast-1}
\end{align}
\end{cor}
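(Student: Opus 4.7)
The plan is to apply a union bound over the pairs $(\Gamma,(i_1,\ldots,i_k))$ of paths and failure subsequences witnessing a violation of the partial non-resonance condition at $x$. I would first enumerate paths $\Gamma\in\Gamma^{(n_\ast)}(x)$ by their sequence of anchor points $(x_1,\ldots,x_{n_\ast+1})$ coming from Def.~\ref{def:generate_paths}. Since each anchor is constrained to lie within a neighborhood of the previous one of cardinality at most $V_\ast$ (because $\mathsf{d}(x_i,x_{i+1})\le r_\ast+1$, and $|B_{r_\ast+1}(y)|\le V_\ast$ for every $y$), while the starting anchor near $x$ provides one essentially fixed position, the number of distinct paths is bounded by $V_\ast^{n_\ast-1}$.

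Next, for each such path, I would sum over the subsequences $(i_1,\ldots,i_k)$ of ``small'' positions of length $k>(1-\zeta)n_\ast$ where non-resonance would have to fail. By the standard binomial-sum identity, the number of such subsequences is bounded by $2^{n_\ast}$, accounting for the powers of $2$ in the stated bound. For each fixed subsequence, I would then bound the joint failure probability by $p_{\rm{viol}}^{k}\le p_{\rm{viol}}^{(1-\zeta)n_\ast}$, using that under the assumption that $\{h_v\}$ are i.i.d.\ the non-resonance events are independent across distinct anchor centers whose $B_{r_\ast}$-balls are disjoint.

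Assembling the three factors and regrouping yields the stated inequality
\begin{align*}
\mathbb{P}[\neg\operatorname{NR}(x)]\le 2^{n_\ast}V_\ast^{n_\ast-1}p_{\rm{viol}}^{(1-\zeta)n_\ast}=2p_{\rm{viol}}^{1-\zeta}\left(2V_\ast p_{\rm{viol}}^{1-\zeta}\right)^{n_\ast-1}.
\end{align*}

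The main technical obstacle is the factorization of the joint failure probability in the third step, since in the path construction consecutive small anchors $x_i,x_{i+1}$ sit at graph distance $r_\ast+1$, so their $r_\ast$-balls need not be disjoint and the failure events need not be strictly independent. The cleanest resolution is to reveal the i.i.d.\ variables $\{h_v\}$ ball-by-ball along the subsequence and argue inductively: conditional on the fields already exposed, the anti-concentration estimate \eqref{eq:anti-concentration} used in the proof of Proposition~\ref{prop:probabilistic_noresonance} still yields a factor of $p_{\rm{viol}}$ per new ball, because each new resonance equation involves at least one fresh Gaussian coordinate with continuous density and hence the same union bound over $\{-2,0,2\}^{V_\ast}$ sign patterns goes through. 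This completes the union-bound computation and gives the claimed estimate.
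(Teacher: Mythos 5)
Your high-level decomposition---union bound over the at most $V_\ast^{n_\ast-1}$ anchor tuples $(x_1,\ldots,x_{n_\ast})$, a binomial sum $\le 2^{n_\ast}$ over subsequences of small positions, and a per-position factor of $p_{\rm{viol}}$---is exactly the paper's strategy, and you have also correctly flagged the one genuinely nontrivial step: consecutive anchors $x_i,x_{i+1}$ sit at graph distance $r_\ast+1$, so the balls $B_{r_\ast}(x_i)$ and $B_{r_\ast}(x_{i+1})$ can share sites, and the failure events are not obviously independent.

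However, your proposed resolution does not close this gap. The claim that ``each new resonance equation involves at least one fresh Gaussian coordinate'' is false: a subset $A\subseteq B_{r_\ast}(x_{i_{j+1}})$ can lie entirely in the overlap $B_{r_\ast}(x_{i_j})\cap B_{r_\ast}(x_{i_{j+1}})$, in which case the gaps of $H_0|_A$ are already determined by the exposed fields, and conditional on the previous ball being resonant via such an $A$, the new ball is resonant with conditional probability $1$, not $p_{\rm{viol}}$. The paper sidesteps this entirely by observing that, for a fixed path $\Gamma$, the Schrieffer--Wolff transformations are constructed not on the full balls but on the regions $\Sint{i}(\Gamma)\backslash\Sint{i-1}(\Gamma)=B_{r_\ast}(x_i)\cap\Sext{i-1}$, which are pairwise disjoint by Def.~\ref{def:generate_paths}. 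Each such region is a subset of $B_{r_\ast}(x_i)$, so its failure probability is still $\le p_{\rm{viol}}$, and under i.i.d.\ on-site disorder the failures on disjoint regions are literally independent---no conditioning or anti-concentration argument is needed. (To make this strictly match Def.~\ref{defn:partialnr}, the ``non-resonance fails in $B_{r_\ast}(x_{i_j})$'' clause must be read as referring to these effective subregions, which is precisely what the proof of Thm.~\ref{thm:main_thm} uses.) This disjointness route is also more general than your fresh-Gaussian argument: it works for any i.i.d.\ on-site distribution with local failure probability bounded by $p_{\rm{viol}}$, matching the hypothesis of the corollary rather than invoking Gaussian structure.
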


Note that for Gaussian disorder,  we have $p_{\rm{viol}} \to 0$ with $\epsilon$ much faster than $V_\ast \sim \log^d(\epsilon^{-1}) \to \infty$, so the above bound improves as either $\epsilon \to 0$ or $\mathsf d(S_0, S_1) \to \infty$. 

\begin{proof}
Consider the set of tuples $(x_1, \dots, x_{n_\ast})$ with $x_1 = x$ and $\mathsf d(x_i, x_{i+1}) \leq r_\ast + 1$. Each path $\Gamma \in \Gamma^{(n_\ast)}(x)$ corresponds to a choice of $\vec \ell = (y_1, \dots, y_k)$ as described in Def.~\ref{def:generate_paths} which indicate where a SW is applied if $B_{r_\ast}(x_i)$ is non-resonant. By construction, $k \leq n_\ast$.
$\Gamma$ will be associated to $\vec x = (x_1, \dots, x_{n_\ast})$ if there is an ordering of $\vec \ell$ for which it appears as a subsequence of $\vec x$. Clearly, if non-resonance holds in least $\zeta n_\ast$ choices of $x_i \in x$, then each $\Gamma$ associated with $\vec x$ is non-resonant. Since the regions $\Sint{i}(\Gamma)\backslash \Sint{i-1}(\Gamma)$ for a fixed choice of $\Gamma$ and each $i$ within which we construct the SW transformations are disjoint, we can treat these probabilities as independent:
\begin{align}
\mathbb P[\neg\operatorname{NR}(\Gamma)] \leq \sum_{q = 0}^{(1-\zeta)n_\ast}\binom{n_\ast}{q}p_{\rm{viol}}^{q}(1-p_{\rm{viol}})^{1-q} \leq 2^{n_\ast}p_{\rm{viol}}^{(1-\zeta)n_\ast}
\end{align}
Applying the union bound over the possible choices of $(x_1, \dots, x_{n_\ast})$, of which there are less than $V_\ast^{n_\ast-1}$, the probability of failing the $(h, \Delta, r_\ast, n_\ast, \zeta)$-non-resonance condition is bounded by
\begin{align}
\mathbb P[\neg \operatorname{NR}(x)] \leq  2p_{\rm{viol}}^{1-\zeta}(2V_\ast p_{\rm{viol}}^{1-\zeta})^{n_\ast-1}.
\end{align}
We remark that the correlations between the failure of the non-resonance condition between different choices of $(x_1,\ldots , x_{n_*})$ does not affect the simple union bound argument above.
\end{proof}

As an immediate corollary, we obtain a probabilistic constraint on when we can apply the non-perturbative Lieb-Robinson bound in Cor.~\ref{cor:gobal_non_resonance}. We will see that this constraint is strong in that the probability becomes unity as $\mathsf d(S_0, S_1)$ becomes large, showing that Gaussian disorder generically produces this behavior up to fluctuations at small distances.

\begin{cor}
Let $S_0,S_1\subset\Lambda$ and \begin{equation}
    H_0 = \sum_{x_n \in \Lambda}h_nZ_n
\end{equation} 
where each $h_i$ is an i.i.d. zero-mean Gaussian variable with variance $\sigma$. Then for any $\delta > 0$, $\zeta < 1$, and $\delta' < \delta(1-\zeta)$, there exist constants $a, b, C, C'$ such that
\begin{align}
\label{eq:prob_nores_viol}
\mathbb P\qty[\Vert [A_{S_0}(t), B_{S_1}]\Vert \geq C\qty[\exp(C'\sigma h\qty[\frac{b\epsilon}{\sigma}]^{\lambda r}t)-1]\exp(-\mu \mathsf d(S_0, S_1))] \leq a\min(|\partial S_0|, |\partial S_1|)3^{-2\delta'\mathsf d(S_0, S_1)/3}
\end{align}
for all $b\epsilon \leq 1$, where $\lambda < \zeta [2^{2+3d}(1+\delta)\log 3]^{1/d}$  is arbitrary and $r = \max(r_\ast+1, \mathsf d(S_0, S_1))$.
\end{cor}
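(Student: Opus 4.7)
The strategy is to combine three earlier results: the single-site anti-concentration estimate for Gaussian fields (Proposition~\ref{prop:probabilistic_noresonance}), its path-based amplification to a single-point failure probability for partial non-resonance (Corollary~\ref{lem:no_percolating_paths}), and the non-perturbative Lieb-Robinson bound when the gap satisfies the stretched-exponential form \eqref{eq:xi_form} (Corollary~\ref{cor:gobal_non_resonance}). My plan is to fix the scale $r_\ast$ based on $\epsilon$ as in Corollary~\ref{cor:gobal_non_resonance}, verify that Gaussian disorder realizes the partial non-resonance hypothesis with a quantitatively controlled failure probability, union-bound this failure probability over the boundary of the smaller of $S_0$ and $S_1$, and invoke the non-perturbative LR bound on the complementary ``good'' event.

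For the concrete parameter tuning, I would set $r_\ast^d$ proportional to $\log(1/(b\mathrm{e}\epsilon))/4c'$ with $c'>M(1+\delta)\log 3$, which matches the bound $V_\ast\leq Mr_\ast^d$ against the hypothesis $\Delta\geq ha\mathrm{e}^{-cr_\ast^d}$ with $c<c'$ required by Corollary~\ref{cor:gobal_non_resonance}. Applying Proposition~\ref{prop:probabilistic_noresonance} with a small parameter $p$ (to be tuned polynomially in $\epsilon$) then gives, at any fixed site, an $(h,r_\ast,\Delta)$-non-resonance with failure probability $p_{\mathrm{viol}}\leq p\cdot 3^{-\delta V_\ast}$, gap $\Delta\sim p^{1+\delta}3^{-(1+\delta')V_\ast}$, and field amplitude $h\sim \sigma\sqrt{V_\ast}\sim\sigma\sqrt{\log(1/\epsilon)}$. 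The $\sqrt{V_\ast}$ growth of $h$ is absorbed into a mild reduction of the exponent $\lambda$, which is why the stated admissible $\lambda$ matches the $\zeta/[8(4c')^{1/d}]$ rate from Corollary~\ref{cor:gobal_non_resonance} only up to the margin in $\delta$.

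Next I would apply Corollary~\ref{lem:no_percolating_paths} at each boundary site $x$ in whichever of $\partial S_0,\partial S_1$ is smaller, obtaining
\begin{equation}
\mathbb P[\neg\operatorname{NR}(x)]\leq 2p_{\mathrm{viol}}^{1-\zeta}\bigl(2V_\ast p_{\mathrm{viol}}^{1-\zeta}\bigr)^{n_\ast-1},
\end{equation}
with $n_\ast=\lfloor(\mathsf d(S_0,S_1)-1)/(r_\ast+1)\rfloor$. Because $p_{\mathrm{viol}}^{1-\zeta}\leq p^{1-\zeta}\cdot 3^{-\delta(1-\zeta)V_\ast}$ and the hypothesis $\delta'<\delta(1-\zeta)$ leaves a strictly positive margin, $p$ can be chosen so that the single-step contribution $2V_\ast p_{\mathrm{viol}}^{1-\zeta}$ is dominated by $3^{-2\delta'(r_\ast+1)/3}$ (the polynomial $V_\ast$ is swallowed by the margin); raising this to the $(n_\ast-1)$-th power and using $n_\ast(r_\ast+1)\geq\mathsf d(S_0,S_1)/3$ yields the advertised decay $3^{-2\delta'\mathsf d(S_0,S_1)/3}$, and a final union bound over boundary sites introduces the prefactor $\min(|\partial S_0|,|\partial S_1|)$.

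On the complementary good event, every relevant boundary site satisfies the full $(h,r_\ast,\Delta,n_\ast,\zeta)$ partial non-resonance condition, so Corollary~\ref{cor:gobal_non_resonance} applies with $\xi=d$ and delivers the stated bound, with velocity $C'\sigma h(b\epsilon/\sigma)^{\lambda r}$ (the factor $\sigma h\sim\sigma^2\sqrt{V_\ast}$ comes from the Gaussian amplitude) and spatial decay $\exp(-\mu\mathsf d(S_0,S_1))$ at rate $\mu=\alpha\zeta/8$. I expect the main technical obstacle to be the simultaneous tuning of $p$, $\delta$, $\delta'$, $\zeta$, $c'$, and $r_\ast$ so that all three ingredients fit together with quantitatively compatible exponents; the sharp constraint $\delta'<\delta(1-\zeta)$ emerges naturally as the consistency condition guaranteeing that the per-step path contribution beats the combinatorial prefactor $2V_\ast$.
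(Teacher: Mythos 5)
Your overall strategy matches the paper's: combine Prop.~\ref{prop:probabilistic_noresonance}, Cor.~\ref{lem:no_percolating_paths}, and the non-perturbative LR bound, absorb the $\sqrt{V_\ast}$ growth of the Gaussian field into a shift of $\lambda$, and union-bound over $\min(|\partial S_0|,|\partial S_1|)$ boundary sites. The parameter roles you identify — $p$ polynomially small in $\epsilon$, the margin $\delta'<(1-\zeta)\delta$ beating the combinatorial prefactor $2V_\ast$, and $\lambda$ matching the $\zeta/[8(4c')^{1/d}]$ rate up to slack — are the same as the paper's.

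There is one genuine gap: you fix a single scale $r_\ast$ (optimized in $\epsilon$) and verify partial non-resonance only at that scale, then invoke Cor.~\ref{cor:gobal_non_resonance}. But the hypothesis of Cor.~\ref{cor:gobal_non_resonance} is that the partial non-resonance condition holds at \emph{every} $r_\ast$, not one chosen scale, and more concretely, when $\mathsf d(S_0,S_1)\leq r_\ast+1$ your own bound breaks: then $n_\ast=\lfloor(\mathsf d-1)/(r_\ast+1)\rfloor=0$, Cor.~\ref{lem:no_percolating_paths} gives $\mathbb P[\neg\operatorname{NR}(x)]\leq 1/V_\ast$ rather than anything exponentially small, and the step ``$n_\ast(r_\ast+1)\geq\mathsf d/3$'' that you rely on fails. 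Internally Thm.~\ref{thm:main_thm} handles short separations by replacing $r_\ast$ with an effective $\tilde r_\ast<\mathsf d(S_0,S_1)-1$, so the non-resonance condition must be established at that smaller scale too. The paper addresses this by \emph{summing} the failure probability over all scales — $\sum_{r_\ast\leq\mathsf d}3^{-2\delta''\mathsf d/3}+\sum_{r_\ast>\mathsf d}3^{-2\delta''r_\ast}\leq a\,3^{-\delta'\mathsf d}$ — so that the good event is a simultaneous non-resonance over all relevant scales, and the cost is only a harmless factor of $\mathsf d(S_0,S_1)$ (absorbed by lowering $\delta'$ slightly). You should either replicate this union bound over $r_\ast$, or split explicitly on whether $\mathsf d(S_0,S_1)$ exceeds the optimal $r_\ast$, applying Thm.~\ref{thm:main_thm} directly at the appropriately reduced scale in the short-distance regime.
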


\begin{proof}
Applying the estimates from Prop.~\ref{prop:probabilistic_noresonance} and 
Cor.~\ref{lem:no_percolating_paths}, along with \eqref{eq:pviol}, the non-resonance condition is failed at $x \in \Lambda$ with probability less than
\begin{align}
(2V_\ast p_{\rm{viol}}^{1-\zeta})^{n_\ast} \leq(2pV_\ast 3^{-\delta V_\ast})^{n_\ast} \leq 3^{-2\delta'r_\ast n_\ast}
\end{align}
where for any $\delta'' < (1-\zeta)\delta$, we chose $p^{1-\zeta} = [2([1-\zeta]\delta - \delta'')\log(3)]^{-1}3^{-1/\log(3)}$ so that $2pV_\ast e^{-\delta V_\ast} \leq 3^{-\delta'' V_{\ast}}$, and for convenience, we used $V_\ast > 2r_\ast$.  Next we use that $r_\ast n_\ast \geq \mathsf d(S_0, S_1)/3$ if $\mathsf d(S_0, S_1) \leq r_\ast$ and $r_\ast n_\ast = r_\ast$ if $\mathsf d(S_0, S_1) > r_\ast$. Summing over these possibilities, we find
\begin{align}
\sum_{r_\ast \leq \mathsf d(S_0, S_1)} 3^{-2\delta''\mathsf d(S_0, S_1)/3} + \sum_{r_\ast > \mathsf d(S_0, S_1)} 3^{-2\delta'' r_{\ast}}
&= \mathsf d(S_0, S_1)3^{-2\delta''\mathsf d(S_0, S_1)/3} + \frac{3^{-\delta''[\mathsf d(S_0, S_1)+1]}}{1-3^{-2\delta'}} \notag \\
&\leq a3^{-\delta'\mathsf d(S_0, S_1)}
\end{align}
where $\delta' < \delta''$ is arbitrary and the constant $a$ is chosen for the above to be true: \begin{equation}
    a > \left(\frac{2}{3}(\delta''-\delta')\right)^{-2} + \frac{3^{-\delta''}}{1-3^{-2\delta''}}.
\end{equation}

Then we can apply Cor.~\ref{cor:gobal_non_resonance} by recognizing that $\Vert V_{\epsilon = 1} \Vert_{\kappa = 0} \leq h$ and $\Vert H_0 \Vert_{\kappa = 0} \leq \sigma h\mathrm O(\sqrt{V_\ast})$. We can choose $C'$ for any $\lambda < \zeta[2^{2+3d}(1+\delta)\log(3)]^{1/d}$ to absorb the factor of $\sqrt{V_{\ast}}$. Next, assuming that $|\partial S_0| \leq |\partial S_1|$ without loss of generality, we union bound the probability of failing the partial non-resonance condition at each $x \in \partial S_0$, arriving at the prefactor of $\min(|\partial S_0|, |\partial S_1|)$. This proves the claim.
\end{proof}

\subsection{Incommensurate lattices}
We now prove that generic incommensurate lattices, which have been studied previously in the context of MBL \cite{sarang_quasiperiodic,mastropietro2015localization,huse_quasiperiodic},  also obey a non-resonance condition.  We note that previous authors sometimes used multiple incommensurate potentials, but only one is needed to obtain a non-resonance condition.

\begin{thm}
\label{prop:incommensurate_potentials}
Let $\Lambda \subset \mathbb{Z}^d$ be a finite subset.   Consider the non-interacting potential
\begin{align} \label{eq:incommensuratelattice}
H_0 = \sum_{\vec k \in \Lambda}\cos \left(2\pi \vec \alpha \cdot \vec k\right)Z_{\vec k} \ .
\end{align}
For any $r > 0$, there is a set $S \subseteq [0,1]^d$ with Lebesgue measure $\mu(S) \geq 1-\frac{1}{2}3^{-2r}$ such that for any $\vec \alpha \in S$ and $\delta > 0$ there is a constant $C$ for which $H_0$ satisfies a $(r_\ast, \Delta, n_\ast, \frac{1}{4})$ partial non-resonance condition for every $r_\ast \geq r$ and $n_\ast \geq 4\times 3^{V_\ast}$ with
\begin{align}
\Delta \geq C3^{-(2+\delta)Mr_{\ast}^{d+1}} \ .
\end{align}
where $M = \min[3\times 2^{d-1}, (d+1)(d+2)]$.
\end{thm}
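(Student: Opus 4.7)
The plan is to reduce the problem to analyzing the smallness of trigonometric polynomials in $\vec{\alpha}$ and then exploiting the translation-invariant structure of the incommensurate potential. As in Prop.~\ref{prop:probabilistic_noresonance}, non-resonance in the ball $B_{r_\ast}(\vec{x})$ fails iff there exist $A\subseteq B_{r_\ast}(\vec{x})$ and $\vec{\sigma}\in\{-2,0,2\}^A$, $\vec{\sigma}\ne 0$, such that
\[
\left|P_{\vec{x},A,\vec{\sigma}}(\vec{\alpha})\right|:=\left|\sum_{\vec{k}\in A}\sigma_{\vec{k}}\cos(2\pi\vec{\alpha}\cdot\vec{k})\right|<h\Delta.
\]
Shifting $\vec{k}=\vec{x}+\vec{m}$ with $\vec{m}\in B_{r_\ast}(0)$ and applying the angle-addition identity, $P_{\vec{x},A',\vec{\sigma}'}(\vec{\alpha})=\mathrm{Re}\bigl[e^{2\pi i\vec{\alpha}\cdot\vec{x}}\,Q_{A',\vec{\sigma}'}(\vec{\alpha})\bigr]$ where
\[
Q_{A',\vec{\sigma}'}(\vec{\alpha})=\sum_{\vec{m}\in A'}\sigma'_{\vec{m}}\,e^{2\pi i\vec{\alpha}\cdot\vec{m}},
\]
is a trigonometric (Laurent) polynomial in $(z_1,\dots,z_d)=(e^{2\pi i\alpha_1},\dots,e^{2\pi i\alpha_d})$ of coordinate degree $\le r_\ast$, independent of $\vec{x}$. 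Writing $Q_{A',\vec{\sigma}'}(\vec{\alpha})=R_{A',\vec{\sigma}'}(\vec{\alpha})e^{i\theta_{A',\vec{\sigma}'}(\vec{\alpha})}$, non-resonance failure at $\vec{x}$ for the pattern $(A',\vec{\sigma}')$ is equivalent to $R_{A',\vec{\sigma}'}(\vec{\alpha})\,|\cos(\theta_{A',\vec{\sigma}'}(\vec{\alpha})+2\pi\vec{\alpha}\cdot\vec{x})|<h\Delta$.

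First, I would apply a small-divisor (Remez-type) estimate to each $Q_{A',\vec{\sigma}'}$. Since $Q$ has integer coefficients, at most $V_\ast\sim Mr_\ast^d$ nonzero Fourier modes supported in $[-r_\ast,r_\ast]^d$, and $\|Q\|_2\ge 2$, one obtains an anti-concentration bound of the form
\[
\mu\bigl(\{\vec{\alpha}\in[0,1]^d:R_{A',\vec{\sigma}'}(\vec{\alpha})<q_0\}\bigr)\;\le\; c_d\,q_0^{\,1/(c_d r_\ast^{d})},
\]
where the exponent reflects the algebraic complexity of a $d$-variate trigonometric polynomial of coordinate degree $r_\ast$. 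Union-bounding over the at most $3^{V_\ast}$ patterns $(A',\vec{\sigma}')$ and choosing $q_0$ so that $3^{V_\ast}c_d q_0^{\,1/(c_d r_\ast^d)}\le \tfrac14 3^{-2r}$ forces $q_0\sim 3^{-c'Mr_\ast^{d+1}}$; this is the source of the $r_\ast^{d+1}$ scaling in the statement. Let $S$ denote the complement of this bad set; then for $\vec{\alpha}\in S$, $R_{A',\vec{\sigma}'}(\vec{\alpha})\ge q_0$ uniformly across all patterns.

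Next, for $\vec{\alpha}\in S$, failure at $\vec{x}$ for a given pattern reduces to $|\cos(\theta_{A',\vec{\sigma}'}(\vec{\alpha})+2\pi\vec{\alpha}\cdot\vec{x})|<h\Delta/q_0$. For a fixed $\vec{\alpha}$ this is a constraint that the phase $\{2\pi\vec{\alpha}\cdot\vec{x}\bmod 2\pi\}$ lies in an arc of length $\lesssim h\Delta/q_0$. Along an irreducible path $\Gamma=(\vec{x}_1,\dots,\vec{x}_{n_\ast+1})$ with consecutive steps of length $\le r_\ast+1$, the fraction of centers $\vec{x}_j$ for which any one pattern produces a failure is therefore bounded by this arc length, up to an equidistribution error controlled by the Erdős--Turán discrepancy of $(\vec{\alpha}\cdot\vec{x}_j)_{j=1}^{n_\ast+1}$. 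Possibly restricting $\vec{\alpha}$ further to a set of the same measure to enforce a Diophantine condition (so the discrepancy is $O(n_\ast^{-c})$), the assumption $n_\ast\ge 4\cdot 3^{V_\ast}$ ensures the discrepancy is negligible compared to the main term. Summing over the $3^{V_\ast}$ patterns yields a total per-center failure density along $\Gamma$ bounded by $3^{V_\ast}h\Delta/q_0+o(1)<\zeta=\tfrac14$, which is precisely the partial non-resonance condition of Def.~\ref{defn:partialnr}.

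The main obstacle is the quantitative anti-concentration bound for multivariate trigonometric polynomials of coordinate degree $r_\ast$ with integer coefficients. Sharp estimates here essentially arise from a Remez or Markov--Bernstein inequality applied on the torus, possibly combined with inductive reduction to one variable by slicing; reducing to algebraic polynomials via $\cos(2\pi k\alpha)=T_k(\cos 2\pi\alpha)$ is convenient in $d=1$ but requires care in higher dimensions, and handling the coupling between the $d$ variables is what produces the factor $r_\ast^{d+1}$ (rather than $r_\ast^d$) in the final exponent of $\Delta$.
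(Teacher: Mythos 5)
Your high-level decomposition — reduce to trigonometric polynomials $Q_{A',\vec\sigma'}(\vec\alpha)$ of coordinate degree $\le r_\ast$, show most $\vec\alpha$ avoid simultaneous smallness of all $Q$, then handle translations by a Diophantine argument — follows the skeleton of the paper's proof, and your observation that $h_{\vec n+\vec\tau}=e^{2\pi\ii\vec\alpha\cdot\vec\tau}h_{\vec n}$ is exactly the key structural fact the paper uses. But there are two concrete gaps that would derail the argument as written.

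\textbf{Anti-concentration exponent.} You posit $\mu(\{R_{A',\vec\sigma'}<q_0\})\lesssim q_0^{1/(c_d r_\ast^d)}$, attributing the exponent to the ``coordinate degree $r_\ast$ in $d$ variables.'' The paper gets the sharper exponent $1/r_\ast$ by factoring $z_{\vec\sigma}$ iteratively in each $\xi_j=e^{2\pi\ii\alpha_j}$: because every monomial comes from a lattice site $\vec m$ with $\Vert\vec m\Vert_1\le r_\ast$, the total degrees satisfy $n_1+\dots+n_d\le r_\ast$, so $z_{\vec\sigma}=N\prod_{j,k}(\xi_j-f^{(j)}_k)$ has at most $r_\ast$ factors in total. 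Avoiding each of these $\le r_\ast$ roots by $\sim\epsilon\cdot 3^{-V_\ast}$ removes measure $\sim\epsilon\cdot 3^{-V_\ast}$ but leaves $|z_{\vec\sigma}|\gtrsim(\epsilon\cdot 3^{-V_\ast})^{r_\ast}$. Your exponent $1/r_\ast^d$ ignores the Manhattan constraint; after the union bound over $\sim 3^{V_\ast}\sim 3^{Mr_\ast^d}$ sign patterns you would land at $\Delta\sim 3^{-\Theta(r_\ast^{2d})}$ rather than $3^{-\Theta(r_\ast^{d+1})}$, which is strictly weaker than the stated theorem for $d>1$. The $r_\ast^{d+1}$ in the exponent comes from $V_\ast\cdot r_\ast$, i.e.\ (number of modes) $\times$ (Manhattan degree), and this is lost without the factoring argument.

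\textbf{Equidistribution along adversarial paths.} The partial non-resonance condition quantifies over \emph{every} path $\Gamma\in\Gamma^{(n_\ast)}(x)$, so the centers $\vec x_1,\dots,\vec x_{n_\ast}$ are chosen adversarially (subject only to $\mathsf d(\vec x_i,\vec x_{i+1})\le r_\ast+1$). Erd\H{o}s--Tur\'an discrepancy bounds control the number of phases $\{\vec\alpha\cdot\vec x_j\}$ falling in a small arc for a \emph{fixed} equidistributed sequence; they do not apply to a path that an adversary designs to repeatedly hit the bad arc. So the claim ``failure density along $\Gamma$ is at most arc length $+ o(1)$'' does not hold without additional structure. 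The paper sidesteps equidistribution entirely: it observes that if $|\Re(e^{2\pi\ii\vec\alpha\cdot\vec\tau_0}z_{\vec\sigma})|<\Delta$ and $|\Re(e^{2\pi\ii\vec\alpha\cdot(\vec\tau_0+\vec\tau')}z_{\vec\sigma})|<\Delta$ for the same $\vec\sigma$, then $\Vert 2\vec\alpha\cdot\vec\tau'\Vert\lesssim\Delta/|z_{\vec\sigma}|$, and the Diophantine bound (Lemma~\ref{lem:rational_approximation}) then forces $\Vert\vec\tau'\Vert_1\gtrsim 2r_\ast 3^{V_\ast}$. This turns the problem into a deterministic packing statement: for each $\vec\sigma$, the failing centers along the path are $\ge 2r_\ast 3^{V_\ast}$ apart, so after tiling by boxes of side $r_\ast 3^{V_\ast}$ one counts at most $\tfrac35\lceil n_\ast 3^{-V_\ast}\rceil$ failures per $\vec\sigma$; summing over $3^{V_\ast}$ patterns and using $n_\ast\ge 4\cdot 3^{V_\ast}$ yields the $\zeta=\tfrac14$ bound. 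This pairwise-separation argument is both quantitatively correct and robust to an adversarial $\Gamma$, and is the step your proposal is missing.

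Finally, you explicitly leave the anti-concentration bound as ``the main obstacle'' without supplying a proof — but this is precisely where the nontrivial content (the factoring and the degree accounting) lives.
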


\begin{proof}
Our proof strategy will be to show the claim at $x = \vec 0$, and then to use the discrete translation of the incommensurate lattices to show that resonances imply a rational approximation to $\vec \alpha$, which will only occur infrequently for most $\vec \alpha$.

First, consider $r_\ast$ to be fixed. The resonances within a ball $B_{r_\ast}$ of volume $V_\ast$ arise when the real part of
\begin{align}
z_{\vec \sigma} = \sum_{\vec k \in B_{r_*}}\sigma_{\vec{k}} \e^{2\pi \ii \vec \alpha \vdot \vec k}
\end{align}
for a suitable choice of $\vec{\sigma}_{\vec{k}}\in\{0,\pm 1\}^{V_\ast} $ is sufficiently small. Notice that 
where $z_{\vec \sigma}$ is a multivariate polynomial in the variables $\xi_1 =
\e^{2\pi \ii \alpha_1}, \dots, \xi_d = \e^{2\pi \ii \alpha_d}$ of degree less than or equal to $n_1,
\dots, n_d$, with $n_1 + \dots + n_d \leq r_\ast$, precisely because this is the
Manhattan distance on the square lattice. Then by the fundamental theorem of algebra,  
\begin{align}
z_{\vec\sigma}(\xi_1, \dots, \xi_d) = p(\xi_2, \dots, \xi_d)\prod_{k=1}^{n_1}(\xi_1 - f_k^{(1)}(\xi_2, \dots, \xi_d))
\end{align}
such that $p(\xi_2, \dots, \xi_d)$ is another multivariate polynomial of degrees less than $n_2, \dots, n_d$ and $f_k^{(1)}$ is a (likely non-analytic) function. Since $p$ is the coefficient of $\xi_1^{n_1}$ (with $n_1$ the largest power of $\xi_1$) in $z_{\vec\sigma}$, $p$ has integer coefficients as well.
Carrying on in this manner, we arrive at the factorized form
\begin{equation}
z_{\vec \sigma}(\xi_1, \dots, \xi_d) = N\prod_{j=1}^{d}\prod_{k=1}^{n_j}(\xi_j - f_k^{(j)}(\xi_{j+1}, \dots, \xi_d))
\end{equation}
where $N$ is an integer.
We fix a parameter $\epsilon >0$ to be chosen later. 
We may then choose $\xi_d$ such that $|\xi_d - f_k^{(d)}| \geq 3^{-V_\ast}\epsilon/r_\ast^3$ for each $k$, and then iteratively choose $\xi_j$ for $j < d$ such that $|\xi_j - f_j^{(k)}(\xi_{j+1}, \dots, \xi_d)| \geq 3^{-V_\ast}\epsilon/r_\ast^3$. In this way, we find that there is a set
\begin{equation}
    A_{\sigma,r_*}:=\left\lbrace \vec \xi \in \mathrm{U}(1)^d :  \min_k |\xi_1 - f_k^{(1)}(\xi_2,\ldots \xi_d)| \ge \frac{3^{-V_*}\epsilon}{r_*^3}, \min_k |\xi_2 - f_k^{(2)}(\xi_3,\ldots \xi_d)| \ge \frac{3^{-V_*}\epsilon}{r_*^3}, \ldots \right\rbrace 
\end{equation}
where $\mathrm{U}(1)$ denotes the unit circle in the complex plane, and for every $\vec \xi \notin A_{\sigma, r_\ast}$,
\begin{align}
|z_{\vec\sigma}| \geq \left(\frac{3^{-V_*}\epsilon}{r_*^3}\right)^{n_1}\left(\frac{3^{-V_*}\epsilon}{r_*^3}\right)^{n_2}\dots\left(\frac{3^{-V_*}\epsilon}{r_*^3}\right)^{n_d} \geq \left(\frac{3^{-V_*}\epsilon}{r_*^3}\right)^{r_*}
\end{align}
with
\begin{equation}
    \mu(A_{\sigma, r_\ast}) \leq
\qty(\frac{2\times 3^{-V_\ast} \epsilon}{r_\ast^2})^d \leq \frac{2 \times 3^{-V_\ast} \epsilon}{r_\ast^2}.
\end{equation}

Since there are $V_\ast$ possible monomials corresponding to each of the
sites in $B_{r_\ast}$, there are $3^{V_\ast}-1$ nonzero choices for $\vec\sigma$, so summing over these
eliminates a set of measure \begin{equation}\mu(A_{r_\ast}) = \mu\qty(\bigcup_{\vec \sigma \in \{-2, 0, 2\}^{V_\ast}}A_{\sigma, r_\ast}) \leq 2\sum_{\vec \sigma \in \{-2, 0, 2\}, \vec \sigma \ne 0}\frac{3^{-V_\ast}\epsilon}{r_\ast^2} < \frac{2\epsilon}{r_\ast^2} \ .\end{equation}
Summing over each $r_\ast$, we then find
\begin{align}
\mu(A) = \mu\qty(\bigcup_{r_\ast \geq 1}A_{r_\ast}) \leq \frac{\pi^2 \epsilon}{3}
\end{align}
Then we take $\epsilon = \frac{3}{2\pi^2}3^{-2r}$ to get $\mu(A) \leq \frac{1}{2}3^{-2r}$, and put $S = [0,1]^d \backslash A$. 

Second, translating by $\tau$ multiplies the complex potential, i.e. $h_{\vec n + \vec \tau} = \e^{2\pi \ii \vec \alpha \cdot \vec \tau}h_{\vec n}$.  The resonances in a ball of radius $r_\ast$ around $\tau$ are given by $\Re(\e^{2\pi \ii \vec \alpha \cdot \vec \tau}z_{\vec \sigma})$. 
We then estimate
\begin{align}
\qty|\Re(\e^{2\pi \ii \vec \alpha \cdot \vec \tau}z_{\vec \sigma})| \geq \frac{\pi}{2}\qty|2(\theta_{\vec\sigma} + \vec \alpha \cdot \vec \tau )\bmod 1 - \frac{1}{2}||z_{\vec \sigma}| \label{eq:cos2linear}
\end{align}
where $\theta_{\vec \sigma} = \arg(z_{\vec \sigma})/2\pi$. Suppose that
\begin{align}
\qty|2(\theta_{\vec \sigma} + \vec \alpha \cdot \vec \tau )\bmod 1 - \frac{1}{2}| \leq \frac{\gamma}{2}
\label{eq:adjacent_resonances}
\end{align}
for $\vec \tau = \vec \tau_0$ and $\vec \tau = \vec \tau_0 + \vec\tau'$, where $\gamma$ is a parameter which will be taken small. This implies that
\begin{align}
\Vert 2\vec \alpha \cdot \vec \tau' \Vert \leq \gamma \ .
\end{align}
We now invoke the following small-denominator result, which is used in the proof of the KAM Theorem \cite{kolmogorov2005preservation}:
\begin{lem}
\label{lem:rational_approximation}
  For Lebesgue-a.e. $\vec \alpha \in [0,1]^d$ and $\delta > 1$, there exists a constant
  $C_{\alpha, \delta}$ such that
  \begin{align}
    |\vec q \cdot \vec \alpha - p| > \frac{C_{\alpha,\delta}}{\Vert \vec q \Vert_1^{\delta d}}.
  \end{align}
\end{lem}
By Lem.~\ref{lem:rational_approximation}, we have $\Vert 2\vec \alpha \cdot \vec \tau'\Vert \geq C_{\alpha, \delta}\Vert \vec \tau'\Vert_1^{-d\delta}$ for Lebesgue-almost-all $\alpha$, so we remove a measure-zero set from $S$. We choose $\gamma = C_{\alpha, \delta}(2r_\ast)^{-d\delta}3^{-d\delta V_\ast}$ so that $\vec \tau '$ must satisfy $\Vert \vec \tau' \Vert \geq 2r_\ast 3^{V_\ast}$. Plugging in the resulting bound to \eqref{eq:cos2linear}, we set
\begin{align}
\Delta = \frac{\pi}{4}C_{\alpha, \delta}(2r_\ast 3^{V_\ast})^{-d\delta}\qty(\frac{3}{2\pi^2} \times \frac{3^{-V_\ast-2r}}{r^3})^{r_\ast}
\end{align}

As in the last section, we identify $\Gamma \in \Gamma^{(n_\ast)}(x)$ with a tuple $x_1, \dots, x_{n_\ast}$ where $\mathsf d(x_i, x_{i+1}) = r_\ast+1$, and if non-resonance fails in $B_{r_\ast}(x_i)$ for at most $3n_\ast/4$ choices of $i$, then this establishes a $\zeta = \frac{1}{4}$ non-resonance condition.
Dividing the lattice into boxes of side length $r_\ast \times 3^{V_\ast}$, for $n_\ast \geq 4\times 3^{V_\ast}$, the path $x_1, \dots, x_{n_\ast}$ lies within a connected set of at least $4$ boxes. Of the $n \leq \lceil n_\ast 3^{-V_\ast}\rceil$ boxes in this set, any subset consisting entirely of boxes which are non-adjacent contains at most $3n/5$ boxes. By our choice of $\gamma$ and Lem.~\ref{lem:rational_approximation}, we see that $|\Re(\e^{2\pi \ii \vec \alpha \cdot \vec \tau}z_{\vec\sigma})| < \Delta$ may not hold for a single choice of $\vec \sigma$ within two adjacent boxes, because then $\Vert \vec \tau'\Vert_1 \leq 2r_\ast 3^{V_\ast}$. This means that each choice of $\vec \sigma$ can lead to at most $\frac{3}{5}\lceil n_\ast 3^{-V_\ast}\rceil$ resonances. Since there are $3^{V_\ast}$ choices of $\vec \sigma$, the maximal number of choices of $i$ in which the non-resonance condition is violated is
\begin{align}
\frac{3}{5}\left\lceil \frac{n_\ast}{3^{V_\ast}}\right\rceil 3^{V_\ast} \leq \frac{3n_\ast}{4}
\end{align}
This shows that the $\zeta = \frac{1}{4}$ non-resonance condition is obeyed with
\begin{equation}
\Delta > C3^{-(2+\delta')Mr_\ast^{d+1}}
\end{equation}
for any $\delta' > 0$, where $m = \min((d+1)(d+2), 3\times 2^{d-1})$ and we used Proposition \ref{prop:Msquarelattice} to relate $V_*$ to $r_*$ for the lattice $\mathbb{Z}^d$ or any subset thereof.
\end{proof}

\begin{rmk}Numerical evidence suggests that the additional power of $r$ in the exponent and the requirement that $|x-y| \gtrsim \frac{1}{\epsilon^d}$ is sub-optimal. By the Lindeman-Weierstrass theorem, $\e^{2\pi \ii \alpha}$ is transcendental whenever $\pi\alpha$ is algebraic. We pick $\alpha = 1/\pi$, and find that the potential satisfies $\Delta(r) \geq \frac{1}{4}\e^{-2.5r}$ on at least one of every two adjacent segments of length $r$ for each $r \leq 9$: see Figure \ref{fig:irrationalnumerics}.
\end{rmk}
\begin{figure}[t]
    \centering
    \def\svgwidth{0.45\linewidth} 
    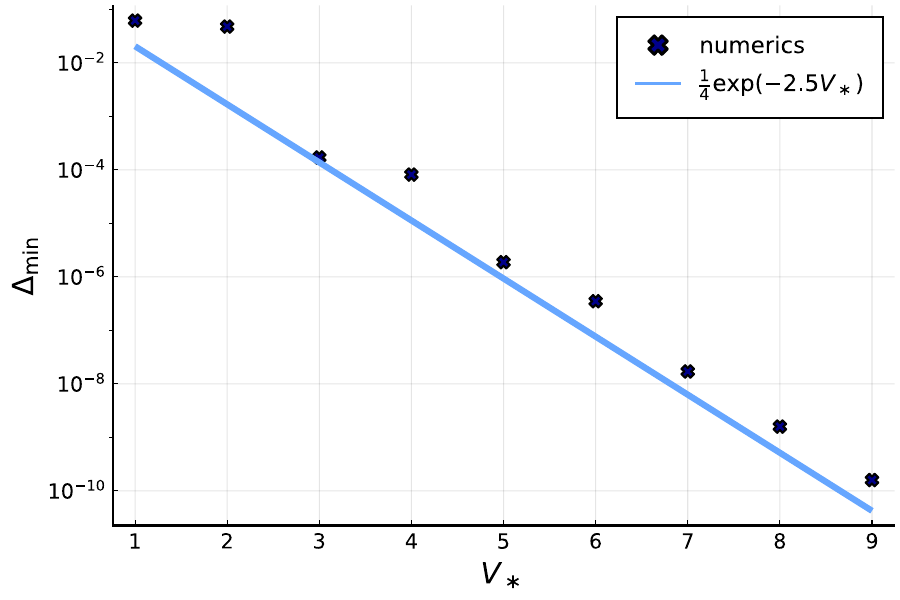
    \caption{Figure showing a numerical non-resonance condition established for a 1-dimensional quasiperiodic potential. On the y-axis is plotted $\min_{\vec\sigma}|z_{\vec\sigma}| \times \frac{\pi}{4}\min_{\vec\sigma, \vec\sigma'}\Vert 2(\theta_{\vec\sigma} - \theta_{\vec\sigma'}-r_\ast \alpha)\Vert$. Following the argument above, this establishes a non-resonance condition of the form $\Delta_{\rm{min}} > \frac{1}{4}\e^{-2.5 V_{\ast}}$ for at least $\frac{1}{3}$ of the regions of radius $r_\ast$ separating points $x$ and $y$ in $d=1$ for $V_\ast \leq 9$.}
    \label{fig:irrationalnumerics}
\end{figure}

   Previous literature \cite{hur2025stabilitymanybodylocalizationdimensions} has replaced \eqref{eq:incommensuratelattice} with \begin{equation}
        H_0 = \sum_{\mathbf{n}\in\Lambda}\left[\sum_{j=1}^d W_j \cos\left(\varphi_j + 2\pi \alpha_j n_j\right)\right]Z_{\mathbf{n}} \label{eq:incommensuratelatticebad}
    \end{equation}
    where, as before, one picks a typical $\vec\alpha$ which is linearly independent over $\mathbb{Q}$.  Unfortunately, this Hamiltonian does not obey the non-resonance condition for any choice of $\vec \alpha$:
    \begin{prop}\label{prop:incommensuratebad}
    Fix spatial dimension $d\ge 2$.  Define \begin{equation}
        Z_{n_1} := \sum_{n_2, \dots, n_{d}\in\mathbb{Z}^{d-1}}Z_{(n_1,n_2,\ldots,n_d)}, \label{eq:Zn2nd}
    \end{equation}
    and define $Z_{n_j}$ etc. similarly for the remaining $d-1$ dimensions. 
    Let $A$ be an arbitrary local operator that obeys \begin{equation}
       \left[A, Z_{n_j}\right] = 0\label{eq:Asubsystem}
    \end{equation}
    for every choice of $n_j\in\mathbb{Z}$ and $1 \leq j \leq d$.
        Then $H_0$ defined in \eqref{eq:incommensuratelatticebad} obeys $[H_0, A] = 0$. In \eqref{eq:Zn2nd} the sum over sites can be restricted to the sites in $\Lambda\subset \mathbb{Z}^d$.
    \end{prop}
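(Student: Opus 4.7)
The proof will be essentially a one-line rearrangement of sums. The plan is to exhibit $H_0$ as a linear combination of the ``hyperplane magnetization'' operators $Z_{n_j}$ and then invoke linearity of the commutator together with the hypothesis \eqref{eq:Asubsystem}.

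First I would interchange the two sums in the definition of $H_0$: since each summand is a product of a scalar coefficient and a Pauli-$Z$ string, Fubini applies trivially and we may write
\begin{equation}
H_0 = \sum_{j=1}^d W_j \sum_{\mathbf{n}\in\Lambda} \cos(\varphi_j + 2\pi \alpha_j n_j)\, Z_{\mathbf{n}}.
\end{equation}
The key observation is that, for fixed $j$, the coefficient $\cos(\varphi_j + 2\pi \alpha_j n_j)$ depends on $\mathbf{n}$ only through its $j$-th component. Grouping the sum over $\mathbf{n}\in\Lambda$ by the value of $n_j$ and using the definition \eqref{eq:Zn2nd} of $Z_{n_j}$ as the sum of $Z_{\mathbf{n}}$ over the hyperplane $\{\mathbf{n}\in\Lambda : (\mathbf{n})_j = n_j\}$, this rewrites as
\begin{equation}
H_0 = \sum_{j=1}^d W_j \sum_{n_j\in\mathbb{Z}} \cos(\varphi_j + 2\pi \alpha_j n_j)\, Z_{n_j}.
\end{equation}

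Then I would simply observe that by hypothesis $[A, Z_{n_j}] = 0$ for every $j\in\{1,\ldots,d\}$ and every $n_j\in\mathbb{Z}$, so by bilinearity of the commutator each term in the above decomposition of $H_0$ commutes with $A$, and therefore $[A, H_0]=0$.

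There is no real obstacle here: the whole content of the proposition is that the additive (rather than multiplicative) structure of the on-site potential in \eqref{eq:incommensuratelatticebad} causes it to be a function only of the $d$ commuting extensive operators $\{Z_{n_j}\}_{j,n_j}$, each of which is a ``layer magnetization''. The conceptual point worth emphasizing after the proof is that this produces a massive exact degeneracy: any bit flip can be compensated by a second bit flip anywhere in the same $(d{-}1)$-dimensional hyperplane without changing the energy contribution from the $j$-th term, so $H_0$ has extensive exact degeneracies and cannot satisfy Definition \ref{defn:noresonance} for any $r_*$ large enough to contain two sites sharing a coordinate. This is why the single-cosine model \eqref{eq:incommensuratelattice} used in Theorem \ref{prop:incommensurate_potentials} cannot be straightforwardly replaced by \eqref{eq:incommensuratelatticebad}.
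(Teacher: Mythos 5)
Your proof is correct and is essentially identical to the paper's: both rewrite $H_0$ as a linear combination of the hyperplane operators $Z_{n_j}$ by grouping the sum over $\mathbf{n}$ according to its $j$-th component, and then conclude by linearity of the commutator and the hypothesis \eqref{eq:Asubsystem}. The closing remarks about extensive degeneracy are a nice supplement but mirror the discussion the paper gives immediately after the proposition.
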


\begin{proof}
    Rewriting \eqref{eq:incommensuratelatticebad} in the form \begin{equation}
        H_0 = W_1 \left[\sum_{n_1\in\mathbb{Z}} \sum_{(n_2,\ldots, n_d)\in\mathbb{Z}^{d-1}} \cos (\varphi_1 + 2\pi\alpha_1n_1)Z_{\mathbf{n}}\right]+ \cdots  = W_1 \sum_{n_1\in\mathbb{Z}}\cos (\varphi_1 + 2\pi\alpha_1n_1)Z_{n_1} + \cdots,
    \end{equation}
    we see that $[H_0,A]=0$ follows from \eqref{eq:Asubsystem}, so long as $A$ only acts within $\Lambda\subset\mathbb{Z}^d$, defined as in Theorem \ref{prop:incommensurate_potentials}.
\end{proof}

 Proposition \ref{prop:incommensuratebad} implies that in $d\ge 2$, there are a very large number of resonant clusters starting at a finite size in the classical Hamiltonian $H_0$.  Indeed, a minimal operator $A$ that flips a resonant cluster corresponds to the spin flip operator \begin{equation}
     A = \prod_{\vec n \in\lbrace 0,1\rbrace^d} X_{\vec n}\frac{1+(-1)^{n_1+\cdots + n_d} Z_{\vec n}}{2}.
 \end{equation}This pattern of spin flips is related to having a linear subsystem symmetry on a square lattice, see e.g. \cite{Iaconis:2019hab,Gromov:2020yoc}.  Our theorems do not prove slow dynamics when $H_0$ is given by \eqref{eq:incommensuratelatticebad} beyond a fixed value of $r_*=d$, when $d>1$.

\end{appendix}

\bibliography{thebib}
\end{document}